\newcommand{\indep}{\perp \!\!\! \perp}
\newcommand{\exindep}{\perp_e}
\newtheorem{theorem}{Theorem}
\newtheorem{lemma}[theorem]{Lemma}
\newtheorem{proposition}[theorem]{Proposition}
\newtheorem{corollary}[theorem]{Corollary}
\newtheorem{definition}{Definition}
\newtheorem{assumption}{Assumption}
\newtheorem{remark}{Remark}
\newcommand{\R}{\mathbb{R}}
\newcommand{\proj}{\mathcal{P}}
\newcommand{\Gl}{\mathcal{J}}
\newcommand{\Gla}{\mathcal{J}^+}
\newcommand{\I}{\mathbb{I}^\star}
\newcommand{\Hs}{\mathbb{H}}
\newcommand{\M}{\mathcal{M}}
\newcommand{\HR}{H\"usler--Reiss}
\newcommand{\ones}{\mathbf{1}_p\mathbf{1}_p^\top}
\newcommand{\inc}{\mu}
\DeclareMathOperator{\Var}{Var}
\DeclareMathOperator*{\argmin}{argmin} % no space, limits underneath in displays
\title{Extremal graphical modeling with latent variables via convex optimization}
\author[S.~Engelke]{Sebastian Engelke$^1$}
\email{sebastian.engelke@unige.ch}
\address{$^1$Research Center for Statistics, GSEM, University of Geneva, Switzerland}
\author[A.~Taeb]{Armeen Taeb$^{2}$}
\email{ataeb@uw.edu}
\address{$^2$Department of Statistics, University of Washington, U.S.}
\date{\today}
\begin{document}
\maketitle

\begin{abstract}
Extremal graphical models encode the conditional independence structure of multivariate extremes and provide a powerful tool for quantifying the risk of rare events. Prior work on learning these graphs from data has focused on the setting where all relevant variables are observed. For the popular class of \HR{} models, we propose the \texttt{eglatent} method, a tractable convex program for learning extremal graphical models in the presence of latent variables. Our approach decomposes the \HR{} precision matrix into a sparse component encoding the graphical structure among the observed variables after conditioning on the latent variables, and a low-rank component encoding the effect of a few latent variables on the observed variables. We provide finite-sample guarantees of \texttt{eglatent} and show that it consistently recovers the conditional graph as well as the number of latent variables. We highlight the improved performances of our approach on synthetic and real data.
\end{abstract}

\begingroup
\def\uppercasenonmath#1{} % this disables uppercasing title
\let\MakeUppercase\relax % this disables uppercasing authors
\section{Introduction}
\endgroup
\label{sec:intro}

Floods, heat waves, and financial crashes illustrate the environmental and economic hazards primarily influenced by rare, yet significant, events.
Such catastrophic scenarios often result from the simultaneous occurrence of extreme values across multiple variables \citep{Zhou2009DependenceSO,Asadi2015ExtremesOR,Zscheischler17}.
To effectively measure and mitigate these disasters, it is essential to understand the dependencies between the various risk factors. From a mathematical perspective, this requires examining the tail dependence between the components of the random vector $X = (X_1, \dots, X_d)$.
Extreme value theory provides the theoretical foundation for extrapolations to the distributional tail of the random vector $X$. Within the multivariate setting, there are two different yet closely related approaches for modeling extremal data. The first method considers component-wise maxima of independent copies of $X$ and leads to the notion of max-stable distributions \citep{deHaan1977}. The second method relies on multivariate Pareto distributions that describe the random vector $X$ conditioned on the event that there is an extreme in one of the coordinates of~$X$ \citep{rootzen2006}.

Given the increasing complexity and dimensionality of contemporary data sets, identifying sparse representations for distributions of extreme events is critical for accurate modeling and risk assessment \citep{engelke2021a}. Graphical models serve as powerful tools in achieving such sparse representations, offering clear and interpretable models for understanding dependencies among variables \citep{lauritzen1996}. However, in the framework of max-stable distributions, \cite{papastathopoulos2016} highlighted limitations in developing non-trivial graphical models for their densities.
On the other hand, multivariate Pareto distributions do not face these limitations. Indeed, \cite{engelke2020} introduced extremal graphical models that factorize according to multivariate Pareto distributions and encode extremal conditional independence relationships, and \cite{segers2020} showed that extremal trees naturally arise as limits of Markov trees. For the popular \HR{} family \citep{hueslerReiss1989}, \cite{hen2022} showed that, similarly to the Gaussian case, the sparsity pattern of an extremal graphical model can be read off from a positive semi-definite precision matrix $\Theta$ with the all-ones vector in its null space. This precision matrix $\Theta$ is derived from a transformation of the variogram matrix $\Gamma$ that parameterizes a \HR{} distribution.
 Several recent papers have proposed methods to learn the extremal graphical structure from data \citep{engelke2022b, hu2022,engelke2022a, chang_allen_2023subbotin, wan2023graphical, lederer2023extremes}.

The study and techniques for modeling extremes have so far concentrated on scenarios where all relevant variables are directly observable. However, in many real-world situations, there exist latent variables that are not observable due to prohibitive costs or other practical constraints. Mathematically, the overall system of variables is then given by $X = (X_O, X_H)$, where $X_O$ are the observed and $X_H$ the latent variables, with $(O,H) = \{1,2,\dots,d\}$. The importance of accounting for latent factors becomes apparent in the example of a single latent variable $X_H = \{X_c\}$, where the data is generated through the one-factor model
\[ X_j = X_c + \varepsilon_j, \quad j\in O.\]
Here, $X_c$ is the common (unobserved) factor influencing all observed variables, and  $\varepsilon_j$, $j\in O$, are independent noise terms. Suppose that the exceedances of the random vector $X$ converge in distribution to a multivariate Pareto distribution $Y = (Y_O, Y_H)$; a concrete example where this is satisfied is when $X_H$ is standard exponential and the noise variables are normally distributed, in which case $Y$ has a \HR{} distribution, but many other combinations are possible \citep{eng2018a}. 
The joint vector $Y$ can be shown to be an extremal graphical model with respect to the star graph on the left-hand side of Figure~\ref{fig:graphs_mot}, where the observed variables $Y_O$ are conditionally independent given the latent variable $Y_H$. However, the sub-model model of $Y$ corresponding to the observed variables, that is, the limiting multivariate Pareto distribution arising from threshold exceedances of $X_O$, induces, in general, the fully connected extremal graph on the right-hand side of Figure~\ref{fig:graphs_mot}, where are all the variables are conditionally dependent.

This simple example illustrates that ignoring the effect of latent variables induces confounding dependencies among the observed variables: even for a sparse joint graph of observed and latent variables, any two observed variables are dependent when conditioning on the remaining observed variables. This phenomenon also appears in many real-world applications. In such cases, a latent extremal graphical model with possibly more than one latent variable $Y_H$ serves multiple purposes: i) it obtains the number of latent variables $h = |H|$ that summarize the effect of external phenomena on the observed variables, (ii) it identifies the residual graph structure among the observed variables after extracting away the effect of these external factors, (iii) it often yields a more sparsely represented and accurate statistical model than a model that ignores the latent variables. Latent extremal graphical models have only been studied when the graphical structure among the observed and latent variables is a tree, and where the tree structure is assumed to be known \citep{asenova2021extremes,roettger2023}.

\captionsetup[figure]{font=footnotesize}
\begin{figure}
\begin{center}
    \begin{minipage}{0.4\linewidth}
    \resizebox{4.2cm}{4.2cm}{
\begin{tikzpicture}[>=stealth,scale=0.20]
\node[draw, circle, fill=gray!20] (latent) at (0,0) {$H$};
% Observed Nodes
\node[draw, circle, above=of latent] (obs1) {$O_1$};
\node[draw, circle, right=of latent] (obs2) {$O_2$};
\node[draw, circle, below=of latent] (obs3) {$O_3$};
\node[draw, circle, left=of latent] (obs4) {$O_4$};
% Arrows with dashed lines between observed nodes
\draw[-, dashed] (latent) -- (obs1);
\draw[-, dashed] (latent) -- (obs2);
\draw[-, dashed] (latent) -- (obs3);
\draw[-, dashed] (latent) -- (obs4);
\end{tikzpicture}
}
\end{minipage}
\begin{minipage}{0.4\linewidth}
\centering
% Latent Node
   \resizebox{4.2cm}{4.2cm}{
\begin{tikzpicture}[>=stealth]
%\invisible{\node[draw, circle, fill=gray!20] (latent) at (0,0) {$L$};}
% Observed Nodes
\node[draw, circle] (obs1) {$O_1$};
\node[draw, circle, below right= of obs1] (obs2) {$O_2$};
\node[draw, circle, below left=of obs1] (obs4) {$O_4$};
\node[draw, circle, below right=of obs4] (obs3) {$O_3$};
\draw[-] (obs1) -- (obs2);
\draw[-] (obs1) -- (obs3);
\draw[-] (obs2) -- (obs4);
\draw[-] (obs1) -- (obs2);
\draw[-] (obs2) -- (obs3);
\draw[-] (obs3) -- (obs4);
\draw[-] (obs4) -- (obs1);
\end{tikzpicture}}
\end{minipage}
\caption{One-factor graph with one latent variable with four observed variables $O_1,\dots, O_4$ and one latent variable $H$ (left) and its marginalization on the observed variables (right).}
\label{fig:graphs_mot}
\end{center}
\end{figure}
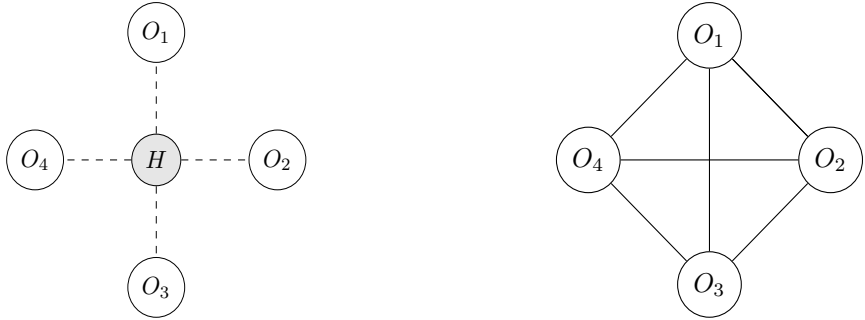

\subsection{Our contributions}
We introduce a general latent \HR{} graphical model where the graphical structure among the observed and latent variables as well as the number of latent variables may be arbitrary. Letting $\Theta \in \mathbb{R}^{d \times d}$ be the precision matrix, a key result that we establish is that the marginal precision matrix $\tilde{\Theta} \in \mathbb{R}^{p \times p}$ over the observed variables can be expressed in terms of blocks of $\Theta$ as
\begin{align*}
    \tilde{\Theta} = \Theta_O - \Theta_{OH}\Theta_{H}^{-1}\Theta_{HO}, \quad \text{where} \quad  \Theta = \begin{pmatrix}\Theta_{O}& \Theta_{OH} \\ \Theta_{HO} & \Theta_H\end{pmatrix}~,~ {\tilde{\Theta}\textbf{1}_p = 0, \text{ and } \Theta\textbf{1}_d = 0}.
\end{align*}
Here, $\textbf{1}_r$ is the all-ones vector with $r$ coordinates. The representation of $\Theta$ resembles the Schur complement in Gaussian latent variable graphical models \citep{Chand2012}. However, in the \HR{} case, the matrices $\Theta$ and $\tilde{\Theta}$ are not invertible since they have the all-one vector in their kernel, and the link between our representation and the Schur complement is therefore non-trivial. 

Assuming that the conditional graph among the observed variables is sparse and that there are a few latent variables influencing the observed variables, the marginal precision matrix $\tilde{\Theta}$ is decomposed as the sum of a sparse and a low-rank matrix, i.e., $\tilde{\Theta}^\star = S^\star-L^\star$. The sparse component $S^\star:= \Theta_O$ encodes the conditional graphical structure among the observed variables after conditioning on the latent variables and the low-rank component $L^\star := \Theta_{OH}\Theta_{H}^{-1}\Theta_{HO}$ encodes the effect of a few latent variables on the observed variables. Using this decomposition, we propose a \emph{convex optimization procedure} named \texttt{eglatent} that provides estimates $(S,L)$ for each term in the decomposition without knowledge of the underlying graphical structure or the number of latent variables. {Compared to the latent variable graphical modeling estimator in \cite{Chand2012}, \texttt{eglatent} has the additional constraint that the matrix $S-L$ has the all-ones vector in its kernel. Due to this structural constraint that arises in extremal models, in addition to assuming that the  
 number of latent variables is small (compared to the observed variables) and they affect many observed variables, we require new identifiability assumptions for recovering $S^\star$ and $L^\star$}. Under these identifiability assumptions, we provide finite-sample consistency guarantees for our estimator, showing that our procedure recovers the conditional graph and the number of latent variables. %Our identifiability conditions assume that the number of latent variables is small (compared to the observed variables) and they affect many observed variables. 

Figure~\ref{fig:intro} highlights the advantage of our method \texttt{eglatent} over the existing extremal graph learning method \texttt{eglearn} \citep{engelke2022b}, which does not account for latent variables. In this synthetic example, we generated $2000$ approximate observations from an extremal graphical model with $h=2$ latent variables and a cycle graph among $p=30$ observed variables, and fitted both methods for different values of the regularization parameters; see Section~\ref{sec:struc_recov} for details on the setup. Compared to \texttt{eglearn}, our \texttt{eglatent} produces a better model fit on validation data and more accurate graph estimates among the observed variables in terms of $F$-score. Indeed, due to the latent confounding, the marginal graph among the observed variables, encoded by the zero pattern in $\tilde{\Theta}$, is dense, and thus the sparsity that \texttt{eglearn} exploits is not appropriate: the best validated \texttt{eglearn} model has $252$ edges while the true graph has $30$ edges. On the other hand, conditional on the latent variables, the conditional graph among the observed variables, encoded by the zero pattern in $\Theta_O$, is sparse, and \texttt{eglatent} exploits this structure. Furthermore, \texttt{eglatent} estimates the correct number of latent variables and a near-perfect graph among the observed variables for regularization parameters with high validation likelihood. Note that in the left plot, the crosses for \texttt{eglearn} mean that the estimated graphical model is disconnected and therefore does not lead to a valid \HR{} model. In contrast, \texttt{eglatent} always yields a valid \HR{} model. More simulations and an application to large flight delays in the U.S.~are presented in Section~\ref{sec:experiments} that demonstrate the utility of our approach. 

{In summary, compared to the previous literature in extremal graphical modeling and Gaussian latent variable graphical modeling, our contributions are threefold. From a methodological perspective, we provide the first method to learn general extremal graphical models with latent variables. Our approach \texttt{eglatent} is based on a tractable convex optimization procedure that resembles the estimator in \cite{Chand2012} but involves an additional constraint due to the structural properties of extremal models. From a practical perspective, compared to existing extremal graphical modeling approaches that do not account for latent variables, \texttt{eglatent} often yields sparser and thus more interpretable graphical models with better fit to data. Theoretically, to arrive at our estimator \texttt{eglatent}, we prove a non-trivial Schur decomposition of the observed precision matrix of the observed variables. Further, since \texttt{eglatent} differs from the estimator in \cite{Chand2012}, we require new identifiability assumptions and conduct a more involved analysis to establish finite-sample consistency guarantees.} 

Our \texttt{eglatent} method is implemented in the R package \texttt{graphicalExtremes} \citep{graphicalExtremes2022} and all numerical results and figures can be reproduced using the code on \url{https://github.com/sebastian-engelke/extremal_latent_learning}. 

\begin{figure}
    \centering
    \includegraphics[width = .9\textwidth]{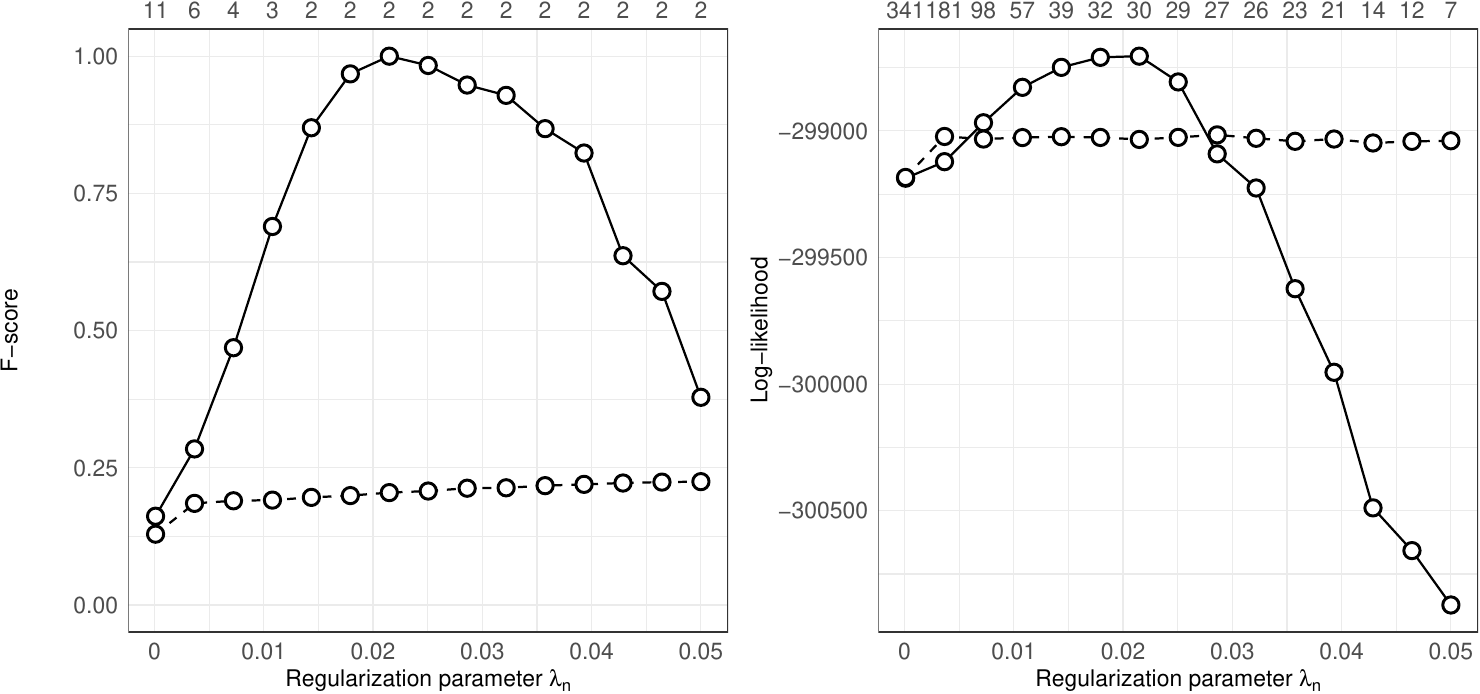}
    \caption{Left: $F$-score of our proposed method \texttt{eglatent} (solid line) and \texttt{eglearn} (dashed line) as function of the regularization parameter with larger $F$-scores being better; top axis shows the number of estimated latent variables. Right: the likelihood of the same methods evaluated on a validation data set; the top axis shows the number of estimated edges in the latent model.}
    \label{fig:intro}
\end{figure}

\subsection{Notation}
We denote $I_{r}$ as an $r \times r$ identity matrix and denote $\mathbf{1}_r$ as the all-ones vector with $r$ coordinates. The collection of $r \times r$ symmetric matrices is denoted by $\mathbb{S}^r$. The following matrix norms are employed throughout this paper: $\|M\|_2$ denotes the spectral norm, or the largest singular value of $M$; $\|M\|_\infty$ denotes the largest entry in the magnitude of $M$; $\|M\|_\star$ denotes the nuclear norm,
or the sum of the singular values of $M$ (this reduces to the trace for positive semidefinite matrices); and $\|M\|_1$ denotes the sum of the absolute values of the entries of $M$. Finally, we will denote $\sigma_\mathrm{min}(M)$ as the largest non-zero singular value of $M$.

\section{Background}\label{sec:background}

\subsection{Multivariate extreme value theory}
\label{sec:mevt}
Multivariate extreme value theory studies asymptotically motivated models for the largest observations of a random vector $X = (X_j: j \in V)$ with index set $V = \{1,\dots, d\}$. Since we concentrate on models for the extremal dependence structure, we assume that the marginal distributions of $X$ have been standardized to standard exponential distributions. In practice, this standardization can be achieved by using the marginal empirical distribution functions; see Section~\ref{sec:exmpirical_variogram}. 

A multivariate Pareto distribution models the multivariate tail of the distribution of $X$. It is defined as the limit in the distribution of the conditional exceedances over a high threshold $u$, that is, 
\begin{align}\label{def_pareto}
    Y = \lim_{u \to \infty} \left( X - u  \mid \max(X_1,\dots, X_d) > u \right),
\end{align}
if the limit exists \citep{rootzen2006}. Here the simple normalization by subtracting $u$ in each component of $X$ is due to the exponential marginals. The random vector $X$ is said to be in the domain of attraction of the multivariate Pareto distribution $Y$, which is supported on the space $\mathcal L = \{y \in \mathbb R^d : \max(y_1,\dots, y_d) > 0\}.$ Multivariate Pareto distributions are the only possible limits of threshold exceedances \citep{rootzen2018} and therefore a canonical model for extremes. 
If the convergence in~\eqref{def_pareto} holds, it is easy to see that for any non-empty subset $I \subset V$, the sub-vector $X_I = (X_j:j\in I)$ is itself in the domain of attraction of a $|I|$-dimensional Pareto distribution, which we call the $I$th sub-model of $Y$.

We now introduce the \HR{} model, which is the most popular parametric sub-class of multivariate Pareto distributions. It can be seen as the analog of Gaussian distributions in multivariate extreme value theory, a fact, that will become apparent when studying extremal graphical models in the next section.
\begin{definition}\label{def:HR}
    A multivariate Pareto distribution $Y = (Y_1,\dots, Y_d)$ is called a \HR{} distribution parameterized by the variogram matrix $\Gamma$ in the space of conditionally negative definite matrices
    \begin{align}
    \mathcal C^d
    =\{\Gamma \in [0,\infty)^{d\times d} :  \Gamma = \Gamma^\top,\;\;
        \rm{diag}({\Gamma}) = \mathbf 0
        ,\;\;
        v^\top\Gamma v < 0
        \;\forall\,
        \mathbf 0 \neq v \perp \mathbf 1
    \},
\end{align}
    if its density has the form 
    \begin{align}\label{HR_density}
        f(y; \Gamma)
        & = c_{\Gamma}
	\exp\left\{ -\frac12 (y - \mu_\Gamma)^\top \Theta (y - \mu_\Gamma) - \frac1d \sum_{i=1}^d y_i \right\}, \quad y \in \mathcal L,
\end{align}
where $c_\Gamma > 0$ is a normalizing constant, $\mu_\Gamma = \Pi(-\Gamma/2) \mathbf 1_d$, and $\Pi = I_d - \mathbf{1}_d\mathbf{1}_d^\top /d$ is the projection matrix onto the orthogonal complement of the all-ones vector in d-dimensions.
The matrix $\Theta = (\Pi (-\Gamma/2) \Pi)^+$ is the positive semi-definite \HR{} precision matrix \citep{hen2022}, where $A^+$ is the Moore--Penrose pseudoinverse of a matrix~$A$.    
\label{defn:mpd}
\end{definition}    

The \HR{} distribution is stable under marginalization, in the sense that for $I \subset V$, the \HR{} sub-model corresponding to the $I$th marginal is again \HR{} distributed with parameter matrix $\Gamma_I$.
While the density in~\eqref{HR_density} resembles the density of a multivariate normal distribution, we note that there are important differences. First, this function would not have finite integral on $\mathbb R^d$ because of the second term in the exponential, and the restriction to the subset $\mathcal L$ is crucial. Second, the precision matrix $\Theta$ is of rank $d-1$, which complicates theoretical and practical considerations.  

An important summary statistic of the dependence structure in multivariate Pareto distributions is the extremal variogram \citep{engelke2022a}. It takes a similar role as the covariance matrix in the non-extremal world.
\begin{definition}
    For a multivariate Pareto distribution $Y = (Y_j: j\in V)$ the extremal variogram rooted at node $m\in V$ is defined as the matrix $\Gamma^{(m)}$ with entries
    \[ \Gamma^{(m)}_{ij} = \Var\left\{Y_i - Y_j \mid Y_m > 1 \right\}, \quad i,j \in V,\]
    whenever the right-hand side is finite.
    \label{defn:rooted_variogram}
\end{definition}

If $Y$ follows a \HR{} distribution with parameter matrix $\Gamma$, it can be checked that the extremal variogram matrices coincide for all $m\in V$, and that they satisfy
\begin{align}
    \label{HR_vario} \Gamma = \Gamma^{(1)} = \dots = \Gamma^{(d)}.
\end{align} 
We use this fact later to combine empirical estimators of the extremal variograms rooted at the different nodes to obtain a more efficient joint estimator of $\Gamma$.

\subsection{Extremal graphical models}

Conditional independence for multivariate Pareto distributions $Y$ is non-standard since it is defined on the space $\mathcal L$, which is not a product space. \cite{engelke2020} therefore define a new notion of extremal conditional independence using the auxiliary vectors $Y^{(m)}$, for $m\in\{1,\dots, d\}$, defined as $Y$ conditioned on the event that $\{Y_m > 0\}$. For non-empty subsets $A, B, C \subset V$, we say that $Y_A$ is conditionally independent of $Y_B$ given $Y_C$, denoted by $Y_A \exindep Y_B \mid Y_C$, if for all auxiliary random vectors, we have the corresponding statement in the usual sense, that is,
\[ Y^{(m)}_A \indep Y^{(m)}_B \mid Y^{(m)}_C \quad \text{ for all } m\in V. \]
It can be shown that requiring the relation above is equivalent to requiring the existence of a single $m\in V$ for which $Y^{(m)}_A \indep Y^{(m)}_B \mid Y^{(m)}_C$  \citep{eng_iva_kir}.

Let $\mathcal{G}=(V,E)$ be an undirected graph with nodes $V = \{1,\dots, d\}$ and edge set $E \subset V\times V$.
Using the new notion of conditional independence, an {extremal graphical model} on $\mathcal{G}$ is a multivariate Pareto distribution $Y$ that satisfies the extremal pairwise Markov property on $\mathcal{G}$, that is,
\[Y_i \exindep Y_j \mid Y_{V\setminus \{i,j\}} \quad \text{ if } (i,j) \notin E.  \]
\cite{engelke2020} show that this definition is natural in the sense that it enables a Hammersley--Clifford theorem showing that densities factorize into lower-dimensional terms on the cliques of the graph. 

For a multivariate Gaussian distribution with covariance matrix $\Sigma$, the conditional dependence relationships, or equivalently the edges in the Gaussian graphical model can be identified from the nonzeros of the precision matrix $\Sigma^{-1}$. A similar property holds for extremal graphical models if $Y$ follows a \HR{} distribution, where the matrix $\Theta$ in Definition~\ref{defn:mpd} plays a key role.    
\begin{proposition}[Lemma 1 and Proposition 3 of \cite{engelke2020}]
   \label{prop:Theta_extreme} Let $Y \in \mathbb{R}^d$ follow a \HR{} distribution with precision matrix $\Theta$. Then,
   \begin{equation} Y_i \exindep Y_j \mid Y_{V\setminus \{i,j\}} \, \Leftrightarrow \, \Theta_{ij} = 0.
   \label{eqn:CI_prec}
   \end{equation}
\end{proposition}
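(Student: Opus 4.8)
The plan is to deduce the extremal conditional independence statement from an ordinary Gaussian conditional independence, exploiting the fact that conditioning a \HR{} vector on one large coordinate produces a genuine, nondegenerate Gaussian law on a product domain. Assume first $d \ge 3$, so that there is a node $m \in V \setminus \{i,j\}$, and fix such an $m$. As recalled in Section~\ref{sec:background}, the extremal relation $Y_i \exindep Y_j \mid Y_{V\setminus\{i,j\}}$ holds as soon as $Y^{(k)}_i \indep Y^{(k)}_j \mid Y^{(k)}_{V\setminus\{i,j\}}$ holds for a \emph{single} $k \in V$; it is therefore enough to prove the equivalence
\[
Y^{(m)}_i \indep Y^{(m)}_j \mid Y^{(m)}_{V\setminus\{i,j\}} \quad\Longleftrightarrow\quad \Theta_{ij} = 0 .
\]

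I would first pin down the relevant conditional law. Since $Y$ has positive density on $\mathcal L$ and $\{y_m > 0\} \subset \mathcal L$, the auxiliary vector $Y^{(m)}$ is supported on the product set $\{y \in \R^d : y_m > 0\}$, on which its density is proportional to the \HR{} density $f(\cdot\,;\Gamma)$ of~\eqref{HR_density}. Conditioning further on $Y^{(m)}_{V\setminus\{i,j\}}$, i.e.\ fixing $y_k$ for all $k \ne i,j$ (in particular $y_m > 0$ is fixed, so the constraint defining the support is inactive), and writing $z = (y_i, y_j)^\top$, one reads off~\eqref{HR_density} as a function of $z$ alone: the exponent equals $-\tfrac12\, z^\top \Theta_{\{i,j\},\{i,j\}}\, z$ plus an affine function of $z$, the affine part collecting the cross terms with the fixed coordinates, the shift by $\mu_\Gamma$, and the linear term $-\tfrac1d \sum_k y_k$. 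Hence the conditional density of $(Y_i, Y_j)$ is a bivariate Gaussian kernel whose precision matrix is the $2\times2$ principal submatrix $\Theta_{\{i,j\},\{i,j\}}$, the affine part only relocating the mean.

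It remains to check that this Gaussian kernel is proper and nondegenerate, which is where the rank deficiency of $\Theta$ must be handled. Since $\Theta$ is positive semidefinite with $\ker \Theta = \mathrm{span}\{\mathbf 1_d\}$ (see Definition~\ref{defn:mpd}), every principal submatrix $\Theta_{S,S}$ with $S \subsetneq V$ is in fact positive definite: a nonzero $v$ with $v^\top \Theta_{S,S} v = 0$, padded by zeros on $V \setminus S$, would be a null vector of $\Theta$ vanishing on the nonempty set $V \setminus S$, contradicting $\ker \Theta = \mathrm{span}\{\mathbf 1_d\}$. In particular $\Theta_{\{i,j\},\{i,j\}} \succ 0$, so the conditional law of $(Y_i, Y_j)$ given $Y^{(m)}_{V\setminus\{i,j\}}$ is a genuine bivariate normal with precision $\Theta_{\{i,j\},\{i,j\}}$; its two coordinates are independent if and only if the off-diagonal entry $\Theta_{ij}$ vanishes. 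This establishes the displayed equivalence, and hence the proposition for $d \ge 3$. For $d = 2$ the conditioning set $V \setminus \{i,j\}$ is empty and the statement should be read in the corresponding degenerate sense.

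The step I expect to be the real obstacle is conceptual rather than computational: connecting the extremal conditional independence, which is defined on the non-product cone $\mathcal L$ through the entire family $\{Y^{(k)}\}_{k\in V}$, with the textbook precision-matrix characterization of Gaussian conditional independence. The resolution is the twofold observation that (i) it suffices to verify the ordinary relation for one auxiliary vector $Y^{(m)}$ with $m \notin \{i,j\}$, and (ii) conditioning on $\{Y_m > 0\}$ restores a product domain on which~\eqref{HR_density} is literally a truncated Gaussian density. Apart from that, the only nonroutine ingredient is the linear-algebra bookkeeping that all $\le (d-1)$-dimensional principal submatrices of the rank-$(d-1)$ matrix $\Theta$ are invertible; translation invariance of conditional independence and the standard nondegenerate-Gaussian criterion then finish the argument.
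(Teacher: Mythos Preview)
The paper does not give its own proof of this proposition; it is simply quoted as Lemma~1 and Proposition~3 of \cite{engelke2020}. Your argument is correct and is essentially the standard one for the \HR{} family: choose $m\notin\{i,j\}$, use that on the product half-space $\{y_m>0\}$ the density~\eqref{HR_density} is proportional to a Gaussian density, and read off that the conditional law of $(Y_i,Y_j)$ given $Y^{(m)}_{V\setminus\{i,j\}}$ is bivariate normal with precision $\Theta_{\{i,j\},\{i,j\}}$, so that independence is equivalent to $\Theta_{ij}=0$. Both nonroutine points---the strict positive definiteness of proper principal submatrices of $\Theta$ via $\ker\Theta=\mathrm{span}\{\mathbf 1_d\}$, and the reduction to a single auxiliary vector $Y^{(m)}$ cited in Section~\ref{sec:background}---are handled correctly.
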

 A consequence of Proposition~\ref{prop:Theta_extreme} is that for a \HR{} graphical model on an arbitrary connected graph $\mathcal{G}$, we can read off the graph structure from the zero pattern of the precision matrix $\Theta$.

Finally, we note that an important property of an extremal graphical model is that if $Y$ possesses a density that factorizes on the graph $\mathcal{G}$, then $\mathcal{G}$ must \emph{necessarily be connected} \citep{engelke2020}. The state-of-the-art structure learning methods for extremal data \citep{engelke2022b, wan2023graphical} can yield disconnected graphs that thus do not always yield a valid distribution (see the example in Figure~\ref{fig:graphs_mot}). 
For a detailed review of recent progress on extremal graphical models, we refer to \cite{engelke2024graphical}.
In the next section, we present our approach for structure learning, which can handle latent variables and always yields a valid distribution.

\section{Latent H\"usler--Reiss models and the \texttt{eglatent} method}
\label{sec:model_estimation}
\subsection{Latent \HR{} models}
\label{sec:latent_hr_graphs}
In the illustrative example in the introduction, we presented a \HR{} model with a single latent variable and a very simple graphical structure among the observed variables. We next introduce a latent \HR{} model with a general extremal graphical structure and any number of latent variables. In what follows, let $X_O \in \mathbb{R}^{p}$ be the collection of observed variables, $X_{H} \in \mathbb{R}^{h}$ be a collection of latent variables, and put $d := p+h$.
\begin{definition}[Latent \HR{} models]\label{def:LHR}
    Suppose that the random vector $X= (X_O\allowbreak, X_H) \in\mathbb R^{d}$, indexed by $V =(O,H)$, is in the domain of attraction of a \HR{} distribution $Y  \in \mathbb{R}^{d}$ in the sense of~\eqref{def_pareto} with variogram and precision matrices, and corresponding extremal graphical structure
    \begin{align*}\label{joint_gamma}
     \Gamma = \begin{pmatrix}\Gamma_{O}& \Gamma_{OH} \\ \Gamma_{HO} & \Gamma_H\end{pmatrix}, \quad \quad \Theta = \begin{pmatrix}\Theta_{O}& \Theta_{OH} \\ \Theta_{HO} & \Theta_H\end{pmatrix}, \quad \text{ and }\quad \mathcal{G} = (V,E),
\end{align*}
respectively.
Here $\Theta = (\Pi(-\Gamma/2)\Pi)^{+}$, $\Gamma_O$ and $\Theta_O$ are $p \times p$-dimensional symmetric matrices, and $E = \{(i,j): i,j \in V, i\neq j, \Theta_{ij}\neq 0\}$. We then say that $Y$ is a latent \HR{} model, and we note that the observed variables $X_O$ are in the domain of attraction of a \HR{} model with variogram $\Gamma_O$.
\end{definition}

Note that $\Theta_O$ and $\Theta_H$ in the above definition are positive definite since $\Gamma \in \mathcal C^d$; see Definition~\ref{def:HR} and \citet[][Appendix B]{engelke2020}. Latent \HR{} models have been studied only for very simple graphs, namely tree structures \citep{asenova2021inference, roettger2023} and block graphs \citep{asenova2021extremes}. All of the above methods assume the underlying graph structure among the observed and latent variables and the number of latent variables to be known, which is rarely realistic in practice. To handle more general graphs, we establish the following theorem, which relates the marginal distribution of the observed variables to components of the precision matrix $\Theta$. 

\begin{theorem}
Let $\tilde{\Pi} = I_{p}-\mathbf{1}_p\mathbf{1}_p^T/p$ be the projection matrix onto the orthogonal complement of the all-ones vector in $p$ dimensions. Then, the precision matrix $\tilde{\Theta} \in \mathbb{R}^{p \times p}$ of the observed variables of a latent \HR{} model with variogram matrix $\Gamma$ satisfies
\begin{equation}\label{precision_obs}
\begin{aligned}
    \tilde{\Theta} = (\tilde{\Pi}(-\Gamma_O/2)\tilde{\Pi})^+ &= \Theta_O - \Theta_{OH}\Theta_{H}^{-1}\Theta_{HO}.
\end{aligned}
\end{equation}
\label{thm:main1}
\end{theorem}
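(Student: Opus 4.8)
The plan is to obtain the first equality for free and to reduce the second one to a single min--max identity. The first equality $\tilde{\Theta} = (\tilde{\Pi}(-\Gamma_O/2)\tilde{\Pi})^+$ is immediate: the \HR{} family is stable under marginalization, so the $O$th sub-model of $Y$ is \HR{} with variogram $\Gamma_O$, and by Definition~\ref{def:HR} its precision matrix is exactly $(\tilde{\Pi}(-\Gamma_O/2)\tilde{\Pi})^+$. It therefore remains to prove $(\tilde{\Pi}(-\Gamma_O/2)\tilde{\Pi})^+ = M$, where I abbreviate $M := \Theta_O - \Theta_{OH}\Theta_H^{-1}\Theta_{HO}$.

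I first record structural facts. Since $\Gamma \in \mathcal{C}^d$ we have $\Theta \succeq 0$, $\operatorname{rank}\Theta = d-1$, $\Theta\mathbf{1}_d = 0$, and $\Theta_H \succ 0$ (the last of these is stated just below Definition~\ref{def:LHR}). Splitting $\Theta\mathbf{1}_d = 0$ into its $O$- and $H$-blocks gives $\mathbf{1}_h = -\Theta_H^{-1}\Theta_{HO}\mathbf{1}_p$, whence $M\mathbf{1}_p = 0$; also $M \succeq 0$ by the standard Schur-complement inequality. The reduction I will use is this: a symmetric matrix $B$ with $\mathbf{1}_p$ in its kernel is uniquely recovered from the array $\Delta_{ij}(B) := B_{ii} + B_{jj} - 2B_{ij}$ via $B = -\tfrac12 \tilde{\Pi}[\Delta(B)]\tilde{\Pi}$, and $\Delta$ is invariant under double centering, i.e.\ $\Delta(\tilde{\Pi} C \tilde{\Pi}) = \Delta(C)$ in $p$ dimensions and $\Delta(\Pi C \Pi) = \Delta(C)$ in $d$ dimensions. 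Because $\tilde{\Pi}(-\Gamma_O/2)\tilde{\Pi}$ annihilates $\mathbf{1}_p$ and satisfies $\Delta(\tilde{\Pi}(-\Gamma_O/2)\tilde{\Pi}) = \Delta(-\Gamma_O/2) = \Gamma_O$, it suffices to show that $M^+$ annihilates $\mathbf{1}_p$ and satisfies $\Delta_{ij}(M^+) = \Gamma_{ij}$ for all $i,j \in O$; then $M^+ = \tilde{\Pi}(-\Gamma_O/2)\tilde{\Pi}$, and taking pseudoinverses (an involution on symmetric matrices) gives $M = (\tilde{\Pi}(-\Gamma_O/2)\tilde{\Pi})^+ = \tilde{\Theta}$.

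The heart of the argument --- and the step I expect to be the main obstacle --- is pinning down $M^+$ in spite of the rank deficiency. One would like the Gaussian block-inversion identity $(\Theta^{-1})_{OO} = (\Theta/\Theta_H)^{-1}$, but $\Theta$ is singular and, crucially, $\ker\Theta = \operatorname{span}(\mathbf{1}_d)$ does not lie within the $O$-block, so the identity has to be recovered in pseudoinverse form together with a recentering by $\tilde{\Pi}$. I would do this variationally. For $x \perp \mathbf{1}_p$ one has $(x,0) \perp \mathbf{1}_d$, i.e.\ $(x,0) \in \operatorname{range}(\Theta)$; partially minimizing the \HR{} quadratic form over the latent coordinates --- legitimate since $\Theta_H \succ 0$ --- gives $z^\top M z = \min_{w}(z,w)^\top \Theta (z,w)$, so that
\[
\sup_{z}\bigl\{2x^\top z - z^\top M z\bigr\} = \sup_{z,w}\bigl\{2(x,0)^\top(z,w) - (z,w)^\top\Theta(z,w)\bigr\} = (x,0)^\top\Theta^+(x,0) = x^\top (\Theta^+)_{OO}\,x,
\]
using the standard identity $\sup_v\{2b^\top v - v^\top\Theta v\} = b^\top\Theta^+ b$, which holds for $b \in \operatorname{range}(\Theta)$ and equals $+\infty$ otherwise. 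Finiteness of the right-hand side for every $x \perp \mathbf{1}_p$ therefore forces $x \in \operatorname{range}(M)$ for all such $x$, i.e.\ $\ker M = \operatorname{span}(\mathbf{1}_p)$, and then the left-hand side equals $x^\top M^+ x$. Since $M^+$ and $\tilde{\Pi}(\Theta^+)_{OO}\tilde{\Pi}$ are symmetric, both annihilate $\mathbf{1}_p$, and agree as quadratic forms on $\mathbf{1}_p^\perp$, they coincide: $M^+ = \tilde{\Pi}(\Theta^+)_{OO}\tilde{\Pi}$.

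It only remains to evaluate $\Delta(M^+)$. Since pseudoinversion is an involution, $\Theta^+ = \Pi(-\Gamma/2)\Pi$, and so for $i,j \in O$, using double-centering invariance of $\Delta$,
\[
\Delta_{ij}(M^+) = \Delta_{ij}\bigl((\Theta^+)_{OO}\bigr) = (\Theta^+)_{ii} + (\Theta^+)_{jj} - 2(\Theta^+)_{ij} = \Delta_{ij}(\Pi(-\Gamma/2)\Pi) = \Delta_{ij}(-\Gamma/2) = \Gamma_{ij} = (\Gamma_O)_{ij},
\]
which is exactly the identity required to conclude $M = \tilde{\Theta}$. The only genuinely delicate points in this plan are (i) verifying that the suprema above are finite so that the pseudoinverse identities apply --- this is where $\operatorname{rank}\Theta = d-1$ is used --- and (ii) the partial minimization over the latent block, which is valid precisely because $\Theta_H$ is positive definite.
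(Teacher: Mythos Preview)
Your proof is correct, and it follows a genuinely different route from the paper's. The paper regularizes the singular $\Theta$ by adding $t\mathbf 1_d\mathbf 1_d^\top$, applies the ordinary block-inversion (Schur complement) formula together with the Woodbury identity to compute $[(\Theta + t\mathbf 1_d\mathbf 1_d^\top)^{-1}]_{OO}$ explicitly, sandwiches with $\tilde\Pi$ via a separate lemma, and then sends $t\to 0$; the fact that $\ker M = \operatorname{span}(\mathbf 1_p)$ is established in a standalone lemma by a direct null-space computation. Your argument instead uses the Fenchel/variational characterization of the Moore--Penrose quadratic form to obtain $M^+ = \tilde\Pi(\Theta^+)_{OO}\tilde\Pi$ in one step, getting the kernel statement as a free byproduct of finiteness, and then closes with the variogram (double-centering) recovery formula rather than a limiting argument. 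The advantage of your approach is conceptual economy: no perturbation parameter, no Woodbury, and no limit; the paper's approach, by contrast, is entirely algebraic and makes the dependence on each block of $\Theta$ explicit at every stage, which some readers may find easier to verify line by line.
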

While it is not possible to observe the joint precision matrix $\Theta$ or any of its components directly, Theorem~\ref{thm:main1} provides a useful decomposition of the observable precision matrix $\tilde{\Theta}$ into the difference of two terms, each term involving the components of $\Theta$. By the property in \eqref{eqn:CI_prec}, we have for any  $i,j \in O$ that  
\[ Y_i \perp_e Y_j \mid Y_H,Y_{O\setminus\{i,j\}} \quad \Leftrightarrow \quad [\Theta_O]_{i,j} = 0.\] 
Thus, the first term $\Theta_O$ in decomposition \eqref{precision_obs} specifies the conditional independencies among the observed variables after conditioning on the latent variables. Moreover, the sparsity pattern of $\Theta_O$ encodes the residual graph $\mathcal{G}_O$ among the observed variables after extracting the influence of the latent variables. Here, $\mathcal{G}_O = (O,E_O)$ is a subgraph of $\mathcal{G}$ restricted to the observed variables where $E_O = \{(i,j) \in E, i,j \in O\}$. The second term $\Theta_{OH}\Theta_{H}^{-1}\Theta_{HO}$ in decomposition \eqref{precision_obs} serves as a summary of the marginalization of the latent variables $Y_H$ and encodes their effect on the observed variables. The rank of this matrix is equal to the number of latent variables. The overall term $\Theta_O - \Theta_{OH}\Theta_{H}^{-1}\Theta_{HO}$ is a Schur complement with respect to $\Theta_H$.

As an illustration, consider the extremal graph on the left-hand side of Figure~\ref{fig:graphs_mot}. Here, the matrix $\Theta_O$ is diagonal. Furthermore, the matrix $\Theta_{OH}\Theta_{H}^{-1}\Theta_{HO}$ has rank equal to one with all of its entries being nonzero. Note that $\tilde{\Theta}$ generally consists of all nonzero entries and hence the marginal graphical structure among the observed variables on the right-hand side of Figure~\ref{fig:graphs_mot} is fully connected.\\

\subsection{Sparse plus low-rank decomposition} In this paper, we consider a latent \HR{} graphical model where the subgraph $\mathcal{G}_O$ among the observed variables is sparse and the number of latent variables is small relative to the number of observed variables, that is, $h \ll p$. This modeling assumption is often natural in real-world applications. For example, \cite{Chand2012} and \cite{Taeb2016InterpretingLV} showed that a large fraction of the conditional dependencies among stock returns can be explained by a small number of latent variables and interpreted these to be correlated to exchange rate and government expenditures. In a similar spirit, \cite{Taeb2017ASG} demonstrated that the California reservoir network is sparsely connected after accounting for a few latent factors, and interpreted these latent factors to be highly correlated to environmental variables such as drought level and precipitation. %\se{or what did you mean with "correlated with environmental variables"}.%\se{more examples here? Say more about intepretation of latent variables? Maybe general economic indices in the stock market example?}

In the case of extremes, a sparse subgraph $\mathcal{G}_0$ and the presence of only a few latent variables in the model translate to a latent \HR{} model with matrix $\Theta_O$ being sparse, the matrix $\Theta_{OH}\Theta_{H}^{-1}\Theta_{HO}$ being low-rank, and thus the observed precision matrix $\tilde{\Theta}$ being decomposed as a sparse plus low-rank matrix having zero row sums. Notice that the matrix $\tilde{\Theta}$ will generally be dense due to the additional low-rank term $\Theta_{OH}\Theta_{H}^{-1}\Theta_{HO}$, highlighting how the latent variables induce many confounding dependencies among the observed variables (see Figure~\ref{fig:graphs_mot}), and how structure learning procedures that impose sparsity on the precision matrix $\tilde{\Theta}$ will generally not perform well. 

In summary, we can cast the problem of learning a latent \HR{} graphical model as obtaining a sparse plus low-rank decomposition of the precision matrix $\tilde{\Theta}$ of the observed variables. The sparse component provides the residual graphical structure of the observed variables after accounting for the latent variables, the rank of the low-rank component provides the number of latent variables, and the overall sum provides a compact model of the observed variables that can be used for downstream tasks. In the following section, we propose a convex optimization procedure to accurately estimate each of these components from data. 

Finally, we note that in the setting where the observed and latent variables are jointly Gaussian, \cite{Chand2012} also models the precision matrix among the observed variables as a sum of a sparse and a low-rank matrix. Analogous to our setting, the sparse component encodes the subgraph of the observed variables and the low-rank component encodes the effect of the latent variables on the observed variables. An important distinguishing feature with our extremal setting however is that in the Gaussian context, the resulting sum is {not constrained} to have zero row sum. As we describe in Section~\ref{sec:inference}, the additional subspace constraint in our extremal setting results in a different estimation procedure and assumptions for statistical consistency. 

\subsection{Inference for latent \HR{} graphical models}
\label{sec:inference}
Let $X = (X_O,X_H)$ be a collection of observed and latent variables in the domain of attraction of a latent \HR{} graphical model with a sparse subgraph among the observed variables and a small number of latent variables; we will specify the sparsity level and the number of latent variables in our theoretical results. Let $\Gamma^\star$ be the underlying population variogram matrix and $\Theta^\star$ be the population precision matrix with components $\Theta_O^\star,\Theta_{OH}^\star$ and $\Theta_{H}^\star$. Let $\tilde{\Theta}^\star$ be the precision matrix among the observed variables. From Theorem~\ref{thm:main}, we have that $\tilde{\Theta}^\star = S^\star - L^\star$ where $S^\star := \Theta_O^\star$ is a sparse matrix and $L^\star := \Theta_{OH}^\star{\Theta_{H}^\star}^{-1}\Theta_{HO}^\star$ is a low-rank matrix. Here, the support of $S^\star$ encodes the subgraph among the observed variables and the rank of $L^\star$ encodes the number of latent variables.  We will propose a convex optimization procedure to estimate the matrices $(S^\star,L^\star)$ from data. 

\subsubsection{Empirical extremal variogram matrix}
\label{sec:exmpirical_variogram}
An important ingredient of our procedure is an empirical estimate for the extremal variogram matrix $\Gamma^\star_O$. To arrive at our estimate, define for any $m \in O$, the population extremal variogram matrix $\Gamma^{\star,(m)}_O$ rooted at the node $m$; see Definition~\ref{defn:rooted_variogram}. Suppose we have $n$ independent and identically distributed samples $\{X_O^{(t)}\}_{t=1}^n \subseteq \mathbb{R}^p$  of the observed variables $X_O$. Then, a natural estimate $\hat{\Gamma}^{(m)}_O$ for $\Gamma^{\star,(m)}_O$ is given by
\begin{align*}
\hat \Gamma_{ij}^{(m)} &:= \widehat{\Var}\Big(\log (1 - \hat F_i(X_{i}^{(t)})) - \log(1 - \hat F_j(X^{(t)}_{j})) : \hat F_m(X_{m}^{(t)}) \geq 1 - k/n  \Big),\quad i,j \in O.
\end{align*}
Here, $\widehat{\Var}$ denotes the sample variance, and $k$ is the number of extreme samples considered in the conditioning event, which can be viewed as the \emph{effective sample size}. Since in Section~\ref{sec:background} we assumed that $X$ has standard exponential margins, for $i \in O$, $t\in\{1,\dots, n\}$, inside the variance we normalize the $i$-th entry of the $t$-th observation empirically by $-\log (1 - \hat F_i(X^{(t)}_i))$, where $\hat F_i$ denotes the empirical distribution function of $X_{i}^{(1)},\dots,X_{i}^{(n)}$. As \eqref{HR_vario} establishes that the empirical variogram matrix rooted at node $m$ coincides with the true variogram matrix $\Gamma^{\star}_O$ for every $m$, a natural empirical estimator of this matrix is
\begin{align}
    \label{emp_vario} \hat \Gamma_O := \frac1p \sum_{m=1}^p \hat \Gamma^{(m)}_O.
\end{align}
Under the assumption that $k\to \infty$ and $k/n \to 0$, and mild conditions on the underlying data generation, this estimator can be shown to be consistent for $\Gamma^\star_O$ \citep{engelke2022a}. Moreover, \cite{engelke2022b} derive finite sample concentration bounds for $\hat \Gamma_O$ that can be used for high-dimensional consistency results. We refer to Appendix \ref{sec:finite_sample_variogram}
 for details on the assumptions and results.

\subsubsection{Parameter estimation and structure learning} 
For structure learning in \HR{} models, formulating optimization problems in the precision domain leads to computationally efficient procedures. Indeed, the precision matrix estimate obtained from plugging in the empirical extremal variogram $\hat \Gamma_O$ in place of $\Gamma^\star_O$ in the expression $\tilde{\Theta}^\star = (\tilde{\Pi}(-\Gamma_O^\star/2)\tilde{\Pi})^+$ is the minimizer of the convex problem
\begin{equation}
\begin{aligned}
    \hat{\Theta} = \argmin_{\Theta \in \mathbb{S}^{p}} &~~-\log{\det}\left(U^T \Theta U\right) - \frac12 \mathrm{tr}(\Theta\hat{\Gamma}_O),\\
    \text{s.t.}&~~~\Theta \succeq 0~~,~~\Theta\mathbf{1}_p = 0,
\end{aligned}
\label{opt_prob}
\end{equation}
where the matrix $U \in \mathbb{R}^{p \times (p-1)}$ consists of the first $p-1$ left singular vectors of $\tilde{\Pi}$ so that $UU^T = \tilde{\Pi}$; see Appendix \ref{appendix:arriving} for a formal proof. The constraint $\succeq 0$ imposes positive semi-definiteness, $\mathbb{S}^p$ denotes the space of symmetric $p \times p$ matrices, and the constraint $\Theta\mathbf{1}_p=0$ ensures that $\Theta$ has zero row sum. The above optimization problem corresponds to the surrogate maximum likelihood estimator of the \HR{} distribution; for more details on this justification we refer to \cite{roettger2023}.

The formulation in terms of the precision matrix $\Theta$ opens the door to various regularized estimation methods.
\cite{roettger2023} solve \eqref{opt_prob} under the additional constraint that $\Theta_{ij} \leq 0$ for all $i,j\in V$ to ensure a from of positive dependence.
For a graph $\mathcal{G}= (V,E)$, in order to obtain a graph structured estimate of $\Gamma$, \cite{hen2022} solve a matrix completion problem that corresponds to~\eqref{opt_prob} under the constraint $\Theta_{ij} = 0$ for $(i,j)\notin E$. In the context of structure learning without latent variables, \cite{engelke2022b} and \cite{wan2023graphical} add an $\ell_1$ penalty to the loss function akin to the graphical Lasso. 

In the setting with latent variables, we rely on the sparse plus low-rank decomposition described in Section~\ref{sec:latent_hr_graphs}. We, therefore, search over the space of precision matrices $\Theta$ that can be decomposed as $\Theta = S-L$ to identify a sparse matrix $S$ and a low-rank matrix $L$, whose difference has zero row sum and yields a small surrogate negative likelihood. Motivated by the estimator for Gaussian latent variable graphical modeling \citep{Chand2012}, we introduce the \texttt{eglatent} method that solves the following regularized \emph{convex} likelihood problem for some $\lambda_n,\gamma \geq 0$:
\begin{equation}
\begin{aligned}
    (\hat{S},\hat{L}) = \argmin_{S \in \mathbb{S}^{p},L \in \mathbb{S}^{p}} &~~-\log{\det}(U^T(S-L)U) - \mathrm{tr}((S-L)\hat{\Gamma}_O/2) + \lambda_n(\|S\|_{1} + \gamma\mathrm{tr}(L)),\\
    \text{s.t.}&~~~S-L \succeq 0, L \succeq 0, (S-L)\mathbf{1}_p = 0.
\end{aligned}
\label{eqn:estimator}
\end{equation}
Here, $\hat{S}$ and $\hat{L}$ are estimates for the population quantities $S^\star$ and $L^\star$, respectively. The matrix $\hat{S}-\hat{L}$ represents an estimated precision matrix among the observed variables. By the constraints in \eqref{eqn:estimator} and the property of logdet functions, $\text{span}(\ones)$ is the null space of $\hat{S}-\hat{L}$ and $\hat{S}-\hat{L}$ always specifies a valid \HR{} model.

The function $\|\cdot\|_{1}$ denotes the $\ell_1$ norm that promotes sparsity in the matrix $S$ \citep{friedmanEtAl2007}. The role of the trace penalty on $L$ is to promote low-rank structure \citep{Fazel2004RankMA}. The regularization parameter $\gamma$ provides a trade-off between the graphical model component and the latent component. In particular, for very large values of $\gamma$, \texttt{eglatent} produces $\hat{L} = 0$ so that no latent variables are included in the model. As $\gamma$ decreases, the number of latent variables increases and correspondingly the number of edges in the residual graphical structure decreases. The regularization parameter $\lambda_n$ provides overall control of the trade-off between the fidelity of the model to the data and the complexity of the model, and thus naturally depends on the sample size. For $\lambda_n,\gamma \geq 0$, \texttt{eglatent} is a convex program that can be solved efficiently. 

{While our \texttt{eglatent} estimator \eqref{eqn:estimator} resembles the one in \cite{Chand2012}, there is an important distinguishing feature. Specifically, in contrast to the estimator in \cite{Chand2012}, our estimator imposes the constraint $(S-L)\textbf{1}_p = 0$ so that the resulting model is a valid \HR{} model. As a result of this additional constraint, the log-determinant term in our objective is also different in that it projects $S-L$ onto the space of matrices that have zero row/column sum. Since our estimator is different, we need additional assumptions and more involved analysis to establish consistency guarantees; see Section~\ref{sec:consistency} for more details.}

\begin{remark}
A challenge with the optimization in \eqref{opt_prob}, both theoretically and numerically, is the fact that the matrices range in the space of positive semi-definite matrices with zero row sum. This factor indeed seems to prohibit direct structure learning without latent variables (i.e., setting $L = 0$ in \eqref{eqn:estimator} to obtain a graphical lasso analog) where the estimated graphical structure can be rather different than the true graphical structure; see the discussion in \citet[][Section 7]{engelke2022b}. To circumvent this issue, \cite{engelke2022b} and \cite{wan2023graphical} solve slightly different problems to obtain accurate graph estimation, although their estimated graphs do not always yield valid \HR{} models. Remarkably, the addition of the low-rank component $L$ in the \texttt{eglatent} estimator \eqref{eqn:estimator} solves these issues. Indeed, we will show that \texttt{eglatent} consistently recovers the subgraph among the observed variables and the number of latent variables, and matches the performance of existing procedures \citep{engelke2022b,wan2023graphical} for learning an accurate model when no latent variables are present.
\label{remark:no_latent}
\end{remark}

\section{Consistency guarantees for \texttt{eglatent}}
\label{sec:consistency}
Recall from Section~\ref{sec:inference} that we denote by $S^\star$ the population matrix encoding the graphical structure among the observed variables conditioned on the latent variables, and by $L^\star$ the population matrix encoding the effect of a few latent variables on the observed variables. Further, $\tilde{\Theta}^\star = S^\star-L^\star$ represents the marginal precision matrix in the \HR{} model over the observed variables. In this section, we state a theorem to prove that the estimates of \texttt{eglatent} in \eqref{eqn:estimator} provide, with high probability, the correct graphical structure among the observed variables, the correct number of latent variables, and an accurate extremal model. Stated mathematically, we show with high probability that (i) the sign-pattern of $\hat{S}$ is the same as that of $S^\star$, i.e., $\text{sign}(\hat{S}) = \text{sign}(S^\star)$, where $\text{sign}(0)=0$; (ii) the rank of $\hat{L}$ is the same as that of $L^\star$, i.e.,  $\text{rank}(\hat{L}) = \text{rank}(L^\star)$; and (iii) the estimated precision model $\hat{S}-\hat{L}$ closely approximates the true precision matrix $\tilde{\Theta}^\star$, i.e., $\hat{S}-\hat{L} \approx \tilde{\Theta}^\star$. Our analysis requires assumptions on the population model so that the matrices $S^\star$ and $L^\star$ are identifiable from their sum, and that the number of effective samples $k$ is of order $k \gtrsim  p^2\log(p)$. 

\subsection{Technical setup}
\label{sec:technical_setup}
 As \texttt{eglatent} is solved in the precision matrix parameterization, the conditions for our theorems are naturally stated in terms of the precision matrix $S^\star-L^\star$. The assumptions are similar in spirit to convex relaxation methods for Gaussian latent-variable graphical model selection \citep{Chand2012}, {although some conditions are new due to the zero row and column sum structure of the observed precision matrix $S^\star-L^\star$}.

To ensure correct graph recovery and correct number of latent variables, we seek an estimate $(\hat{S},\hat{L})$ from \texttt{eglatent} such that $\text{support}(\hat{S}) = \text{support}(S^\star)$ and $\text{rank}(\hat{L}) = \text{rank}(L^\star)$. Building on both classical statistical estimation theory, as well as the recent literature on high-dimensional statistical inference, a natural set of conditions for accurate parameter estimation, is to assume that the curvature of $S^\star-L^\star$ is bounded in certain directions. The curvature is governed by the modified Hessian of the surrogate log-likelihood loss at $S^\star-L^\star$:
$$ \I := \left(S^\star-L^\star + \frac{1}{p}\ones\right)^{-1} \otimes \left(S^\star-L^\star + \frac{1}{p}\ones\right)^{-1},$$
where $\otimes$ denotes a Kronecker product between matrices, and $\I$ may be viewed as a map from $\mathbb{S}^{p}$ to $\mathbb{S}^p$. The matrix $\I$ modifies the Hessian of the surrogate log-likelihood loss $(S^\star - L^\star)^{+} \otimes (S^\star - L^\star)^{+}$, where the addition of the term $\frac{1}{p}\ones$ (a dual parameter of the program \eqref{eqn:estimator}) helps to compactify the assumptions we place in our population model.

We impose conditions so that $\I$ is well-behaved when applied to matrices of the form ${S}-S^\star-({L}-L^\star+t\ones)$. Here, $S$ is in the neighborhood of $S^\star$ restricted to sparse matrices, $L$ is in the neighborhood of $L^\star$ restricted to low-rank matrices, and $t\ones$ is a dual parameter for some $t \in \mathbb{R}$ due to the constraint $(S-L)\textbf{1}_p = 0$ that appears in the analysis of \eqref{eqn:estimator}. These local properties of $\I$ around $S^\star-(L^\star+t\ones)$ are conveniently stated in terms of tangent spaces to algebraic varieties of sparse and low-rank matrices. In particular, the tangent space of a matrix $M$ with $r$ non-zero entries with respect to the algebraic variety of $p \times p$ matrices with at most $r$ non-zeros is given by
$$\Omega(M) := \{N \in \mathbb{R}^{p \times p}: \text{support}(N) \subseteq \text{support}(M)\}.$$
Moreover, the tangent space at a rank-$r$ matrix $M$ with respect to the algebraic variety of $p \times p$ matrices with rank less than or equal to $r$ is given by:
\begin{eqnarray*}
    \begin{aligned}
    T(M) &:= \{N_R+N_C: N_R,N_C \in \mathbb{R}^{p \times p},\\& \quad\quad  \text{row-space}(N_R)\subseteq \text{row-space}(M), \text{col-space}(N_C)\subseteq \text{col-space}(M)\}.
    \end{aligned}
\end{eqnarray*}
For more discussion on the tangent spaces of sparse and low-rank matrices, see \cite{Chand2012}. In the next section, we describe conditions on the population Hessian $\I$ in terms of tangent spaces $\Omega(S^\star)$ and $T(L^\star)$. Under these conditions, we present a theorem in Section~\ref{sec:theoretical_results} showing that the convex program provides accurate estimates. For notational simplicity, we let $\Omega^\star := \Omega(S^\star)$ and $T^\star:= T(L^\star)$. Finally, the linear operators $\mathcal{A}: \mathbb{S}^p \times \mathbb{S}^p \to \mathbb{S}^p$ and its adjoint $\mathcal{A}^\dagger: \mathbb{S}^p \to \mathbb{S}^p \times \mathbb{S}^p$ are defined as:
\begin{eqnarray}
\mathcal{A}(M,N) := (M-N),~~~~ \mathcal{A}^\dagger(Q) := (Q,Q).
\label{eqn:linear_maps}
\end{eqnarray}
{\subsection{Conditions on the Hessian $\mathbb{I}^\star$}}
\label{sec:fisher_conds}
Given a norm $\|\cdot\|_{\Psi}$ on $\mathbb{S}^p \times \mathbb{S}^p$, we first consider a classical condition in statistical estimation literature, which is to control the minimum gain of the Hessian $\mathbb{I}^\star$ restricted to a subspace $\mathbb{Q} \subseteq \mathbb{S}^p \times \mathbb{S}^p$ as follows:
\begin{eqnarray}
\chi(\mathbb{Q},\|\cdot\|_{\Psi}) := \min_{\substack{Z \in \mathbb{H}\\\|Z\|_{\Psi}=1}} \|\mathcal{P}_{\mathbb{Q}}\mathcal{A}^\dagger\mathbb{I}^\star\mathcal{A}\mathcal{P}_{\mathbb{Q}}(Z)\|_{\Psi},
\label{eqn:chi}
\end{eqnarray}
where $\mathcal{P}_{\mathbb{Q}}$ denotes the projection operator onto the subspace $\mathbb{Q}$ and the linear maps $\mathcal{A}$ and $\mathcal{A}^\dagger$ are defined in \eqref{eqn:linear_maps}. The quantity $\chi(\mathbb{Q},\|\cdot\|_{\Psi})$ insures that the Hessian is well-conditioned restricted to the image $\mathcal{A}\mathbb{Q}$. The remaining condition we impose on $\mathbb{I}^\star$ are in the spirit of irrepresentability-type conditions that are frequently employed in high-dimensional estimation problems \citep{Meinshausen2006HighdimensionalGA,Wainwright2009SharpTF,Zhao2006OnMS,Ravikumar2008HighdimensionalCE,rechtCandesMatrix,Chand2012}. Specifically, we control the inner-product between elements in $\mathcal{A}\mathbb{Q}$ and $\mathcal{A}\mathbb{Q}^\perp$ as quantified by the metric induced by $\mathbb{I}^\star$ via the following quantity:
\begin{equation}
\varphi(\mathbb{Q},\|\cdot\|_{\Psi}) := \max_{\substack{Z \in \mathbb{Q}\\\|Z\|_{\Psi}=1}}\|\mathcal{P}_{\mathbb{Q}^\perp}\mathcal{A}^\dagger\mathbb{I}^\star\mathcal{A}\mathcal{P}_{\mathbb{Q}}(\mathcal{P}_{\mathbb{Q}}\mathcal{A}^\dagger\mathbb{I}^\star\mathcal{A}\mathcal{P}_{\mathbb{Q}})^{-1}(Z)\|_{\Psi}.
\label{eqn:varphi}
\end{equation}
The operator $(\mathcal{P}_{\mathbb{Q}}\mathcal{A}^\dagger\mathbb{I}^\star\mathcal{A}\mathcal{P}_{\mathbb{Q}})^{-1}$ in \eqref{eqn:varphi} is well-defined if $\chi(\mathbb{Q},\|\cdot\|_{\Psi}) > 0$, since this latter condition implies that $\mathbb{I}^\star$ is injective restricted to $\mathcal{A}\mathbb{Q}$. The quantity $\varphi(\mathbb{Q},\|\cdot\|_{\Psi})$ being small implies that any element of $\mathcal{A}\mathbb{Q}$ and any element of $\mathcal{}\mathbb{Q}^\perp$ have a small inner-product (in the metric induced by $\mathbb{I}^\star$). 

A natural approach to controlling the condition of the Hessian $\mathbb{I}^\star$ around $S^\star-L^\star+t\ones$ is to bound the quantities $\chi(\mathbb{Q}^\star,\|\cdot\|_{\Psi})$ and $\varphi(\mathbb{Q}^\star,\|\cdot\|_{\Psi})$ for $\mathbb{Q}^\star = \Omega^\star \times (T^\star\oplus\mathrm{span}(\ones))$. However, a complication that arises with tangent spaces to low-rank varieties is that they are locally smooth. To account for this curvature, we bound distances of nearby tangent spaces via the following induced norm:
$$ \rho(T_1,T_2) :=\max_{\|N\|_2 \leq 1} \|(\proj_{T_1}-\proj_{T_2})(N)\|_2.$$
The quantity $\rho(T_1,T_2)$ measures the sine of the largest angle between $T_1$ and $T_2$. Using this approach for bounding nearby tangent spaces, we consider subspaces $\mathbb{Q}' = \Omega^\star \times (T'\oplus \mathrm{span}(\ones))$ for all $T'$ close to $T^\star$ as measured by $\rho$. For $\omega \in (0,1)$, we bound $\chi(\mathbb{Q}',\|\cdot\|_{\Psi})$ and $\varphi(\mathbb{Q}',\|\cdot\|_{\Psi})$ in the sequel for all subspaces $\mathbb{Q}'$ in the following set:
\begin{eqnarray*}
U(\omega) = \{\Omega^\star \times (T'\oplus \mathrm{span}(\ones)) ~|~ \rho(T',T^\star) \leq \omega\}.
\end{eqnarray*}
We control the quantities $\chi(\mathbb{Q}',\|\cdot\|_{\Psi})$ and $\varphi(\mathbb{Q}',\|\cdot\|_{\Psi})$ using the dual norm of the regularizer $\|S\|_1 + \gamma\mathrm{tr}(L^\star)$:
$$\Phi_\gamma(S,L) := \max\left\{\|S\|_1,\frac{\|L\|_2}{\gamma}\right\}.$$
As the dual norm $\max\{\|S\|_1,\frac{\|L\|_2}{\gamma}\}$ plays a central role in the optimality conditions of \eqref{eqn:estimator}, controlling the quantities $\chi(\mathbb{Q}',\|\cdot\|_{\Phi_\gamma})$ and $\varphi(\mathbb{Q}',\|\cdot\|_{\Phi_\gamma})$ leads to a natural set of conditions that guarantee the consistency of the estimates produced by \eqref{eqn:estimator}. In summary, given a fixed set of parameters $(\omega,\gamma) \in (0,1) \times \mathbb{R}_{+}$, we assume that $\mathbb{I}^\star$ satisfies the following conditions, where $F = \proj_{{T^\star}^\perp}(1/p\ones)/\|\proj_{{T^\star}^\perp}(1/p\ones)\|_2$ and $\|\mathbb{I}^\star\|_2$ denotes the spectral norm of the operator $\mathbb{I}^\star$. 

\begin{assumption} $\inf_{\mathbb{Q}'\in U(\omega)} \chi(\mathbb{Q}',\Phi_\gamma) \geq \alpha$ for some $\alpha> 8\omega\max\{\gamma,1\}(\|\mathbb{I}^\star(F)\|_2+\|\mathbb{I}^\star\|_2\omega+1)$.
\label{ass:1}
\end{assumption}
\begin{assumption} $\sup_{\mathbb{Q}'\in U(\omega)} \varphi(\mathbb{Q}',\Phi_\gamma) \leq 1-\nu$ for some $\nu \in [4\omega,1)$.
\label{ass:2}
\end{assumption}
\cite{Chand2012} impose a sufficient set of conditions, and prove that they imply conditions similar to Assumptions~\ref{ass:1}-\ref{ass:2} (see Proposition 3.3 in \cite{Chand2012}). A key distinction between our conditions and the implied conditions in \cite{Chand2012} is that our subspace $\mathbb{Q}'$ also contains the directions $\mathrm{span}(\ones)$. This distinction arises from the additional zero row-sum constraint in our estimator which introduces the dual parameter $t\ones$. Moreover, we require the following condition for how far $\mathrm{span}(\ones)$ deviates from $T^\star$:
\begin{assumption} $\kappa^\star:= \|\proj_{{T^\star}^\perp}(\ones/p)\|_2 \in \left(\omega,\min\left\{4\nu, \frac{\alpha}{8\max\{\gamma,1\}(\|\mathbb{I}^\star(F)\|_2+\|\mathbb{I}^\star\|_2\omega+1)}-\omega\right\}\right)$.
\label{ass:3}
\end{assumption}
Assumption~\ref{ass:3} is also a new condition relative to \cite{Chand2012}. This assumption ensures that $k^\star$ not so small so that $L^\star$ and the dual parameter $t\ones$ can be distinguished from one another. Assumption~\ref{ass:3} also ensures that $\kappa^\star$ is not too large. This condition comes from the optimality conditions of \eqref{eqn:estimator}, which involve controlling the size of the inner product of elements in $\text{span}(\ones)$ and in ${T^\star}^\perp$. Further, bounding $\kappa^\star$ allows the size of $t$ to be controlled. 

\begin{remark}[Dependency on $h$ and graph structure] The dependence on the number of latent variables $h$ and the density of the graphical structure among the observed variables conditioned on the latent variables does not appear explicitly in Assumptions~\ref{ass:1}--\ref{ass:3}, but is implicit in the quantities $\alpha,\nu$. Indeed, as larger $h$ and denser graphical structures increase the dimensions of the tangent spaces $T^\star$ and $\Omega^\star$, respectively, they result in smaller $\alpha,\nu$. In Appendix \ref{sufficient_hessian}, we provide conditions on the Hessian $\mathbb{I}^\star$ that do not depend on $\gamma$ and measure the behavior of $\mathbb{I}^\star$ restricted to individual subspaces $\Omega^\star$ and $T^\star$ (rather than their coupling as in Assumptions 1--2). With these conditions and when the latent variables affect most of the observed variables (see also the discussion in Section \ref{section:numerical_assumptions}), we prove in Appendix~\ref{sufficient_hessian} that as long as $d^\star\sqrt{{h}/{p}} = \mathcal{O}(1)$, there exists a choice of $\gamma$ that satisfies Assumptions~\ref{ass:1}-\ref{ass:3}. 
{Here, 
\begin{align}\label{d_star}
    d^\star := \max_{i}\sum_{j}\mathbb{I}[S^\star_{ij} \neq 0]
\end{align}
is the maximum degree of the graphical structure among the observed variables.} For instance, for the following  nontrivial classes of models the above condition holds:
\begin{itemize}
    \item {Polynomial degree}: the maximum degree $d^\star$ grows at most polynomially with $p$, that is, $d^\star = \mathcal{O}(p^q)$, and the number of latent variables satisfies $h = \mathcal{O}\left({p}^{1-q}\right),$ where $q \in (0,1)$. Here, consistent estimation is possible even when the graph structure is complex.
    \item {Bounded degree}: we have $d^\star = \mathcal{O}(1)$ so that $h = \mathcal{O}(p).$ Here again, consistent estimation of the underlying graphical structure among the observed variables is possible even when the number of latent variables is in the same order as the number of observed variables. 
\end{itemize}

\end{remark}
\begin{remark}[Choice of $\gamma$] We make two observations. First, a smaller range of values of $\gamma$ naturally leads to larger $\alpha$ and $\nu$. Second, intuitively, the choice of $\gamma$ should decrease with a larger $h$ so that less penalty is imposed on the rank of $\hat{L}$, and it should increase with larger $d^\star$ so that $\hat{L}$ does not contain some of the components of $S^\star$. To formalize this intuition, we consider the setting described in the previous paragraph. We show in Appendix~\ref{sufficient_hessian} that the lower-bound on the range of values of $\gamma$ that satisfy Assumptions 1--3 scales with $d^\star$, and the upper-bound is in the order $\sqrt{{p}/{h}}$. 
% Furthermore, recalling that $\inc^\star \geq \sqrt{h/p}$, and that the upper-bound for $\gamma \sim 1/\inc^\star$ (according to Theorem~\ref{thm:main}), as expected, the upper-bound on appropriate values of $\gamma$ decreases when $h$ increases.
\label{remark:choice_gamma}
\end{remark}

\subsection{When do Assumptions~\ref{ass:1}--\ref{ass:3} on the Hessian hold? Connections to identifiability}
\label{section:numerical_assumptions}
In this section, we provide concrete examples of latent extremal models that satisfy Assumptions~\ref{ass:1}--\ref{ass:3} for some choices of $\alpha > 0$, $\nu \in [0,1) $, $\omega \in (0,1)$ and $\gamma > 0$. To arrive at such models, we must intuitively understand when the matrices $S^\star$, $L^\star$, and $t\ones$ are identifiable from their sum for some $t \in \mathbb{R}$ (recall, the term $t\ones$ arises from the zero row-sum constraint). Since the matrix $t\ones$ has rank equal to one and is thus also low-rank, we consider a combined term $L^\star+t\ones$. Two identifiability issues arise: the first is to distinguish $S^\star$ from $L^\star+t\ones$ and the second is to distinguish $L^\star$ from $L^\star+t\ones$. 

To address the first identifiability issue, we appeal to the previous literature on sparse-plus-low rank decompositions, which states that the matrices $S^\star$ and $L^\star+t\ones$ are identifiable from their sum if the row and columns of the matrix $S^\star$ are sufficiently sparse and the matrix $L^\star+t\ones$ is sufficiently low-rank with most of its entries non-zero and similar in magnitude \citep{Candes11RobustPCA,Chandrasekaran2011RankSparsityIF,Recht2010GuaranteedMS}. {Sparsity of $S^\star$ corresponds to small $d^\star$ in~\eqref{d_star} so that no observed variable is directly connected to ``many'' other observed variables.}
Since the matrix $t\ones$ has equal entries and is rank one, the structural constraint on $L^\star+t\ones$ can be interpreted as the number of latent variables being small (as compared to the ambient dimension {$p$}) with their effects spread across all the observed variables. To measure the ``diffuseness" of the latent effects, we consider the following quantity for any linear subspace $Z \subseteq \mathbb{R}^p$ \citep{Candes11RobustPCA,rechtCandesMatrix,Chandrasekaran2011RankSparsityIF,Chand2012}: $\mu[Z] :=\max_i \|\mathcal{P}_{Z}({\mathbf{e}}_i)\|_2$, 
where $\mathcal{P}_{Z}$ is the projection onto the subspace $Z$ and $\mathbf{e}_i$ is a standard coordinate basis. The quantity $\inc[Z]$ is also known as the ``incoherence parameter'' \citep{rechtCandesMatrix,Chandrasekaran2011RankSparsityIF}. It measures how aligned the subspace $Z$ is with respect to standard basis elements and is lower-bounded by $\sqrt{\text{dim}(Z)/p}$ and upper-bounded by one. In our setting, the relevant subspace is the row or column space of $L^\star$ and {so we define 
\begin{align}\label{mu_star}
    \inc^\star:=\mu[\text{col-space}(L^\star)].
\end{align}}
Thus, a lower bound for $\inc^\star$ is $\sqrt{{h}/{p}}$, which is achieved when the effect of latent variables on the observed variables is equally spread out. A small value of $\inc^\star$ ensures the matrix $L^\star$ has a small rank and is far from being sparse. 

To address the second identifiability issue of disentangling  $L^\star$ and $t\ones$ from their sum, as described earlier, we want the deviation of the subspaces $T^\star$ and $\mathrm{span}(\ones)$ to not be too large (i.e., the lower-bound condition in Assumption~\ref{ass:3}). This deviation can be conveniently measured by $\kappa^\star:-\|\proj_{{T^\star}^\perp}(\ones/p)\|_2$ which is equivalent to $\|\proj_{\text{col-space}(L^\star)^\perp}({1}/{\sqrt{p}}\textbf{1}_p)\|_F^2$.

Having these identifiability concerns in mind, we give stylized extremal graphical models and numerically check that the Hessian conditions in Assumptions~\ref{ass:1}--\ref{ass:3} are satisfied for appropriate choice of parameters. Specifically, we set $p = 30$, $h = 1$ and specify the sub-graph $\mathcal{G}_O = (E_O,O)$ among the observed variables to be an Erd\H{o}s--R{\"e}nyi graph with edge probability $\tau \in \{0.001,0.005\}$ and set $\Theta_{ij}^\star$ to $0.2$ for every $(i,j) \in E_O$ and zero otherwise. We connect the latent variable to each observed variable and select the corresponding entries $\Theta^\star_{p+1,j}$ uniformly at random from the interval $[1/\sqrt{j},1.1/\sqrt{j}]$ for all $j \in O$. Notice that larger values of $\tau$ lead to larger sparsity parameter $d^\star$. We let $\omega = 0.003$ so that the largest angle between tangent spaces $T'$ and $T^\star$ is less than $0.0005$ degrees. Employing a numerical procedure described in Appendix~\ref{sec:numerical_approach}, we obtain a range of values of $\gamma,\alpha,\nu$ that satisfy Assumptions~\ref{ass:1}--\ref{ass:3}. The values of $\alpha$ and $\nu$ that are computed using this procedure serve as a lower bound for the optimal $\alpha,\nu$, respectively. Indeed, an exciting direction for future research is to develop numerical or analytical techniques to precisely characterize the optimal values of $\alpha,\nu$. Table~\ref{conditions} illustrates $d^\star$ and the corresponding values of $\gamma,\alpha,\nu$ that satisfy Assumptions~\ref{ass:1}--\ref{ass:3}. Examining Table~\ref{conditions}, we can make two observations. First, for each value of $\tau$, a larger range of $\gamma$ results in smaller $\alpha$ and $\nu$. Second, larger graph density (i.e., larger $\tau$) reduces the range of values of $\gamma$ that satisfy Assumptions~\ref{ass:1}-\ref{ass:3}. These two observations are consistent with theory; see Remark~\ref{remark:choice_gamma}.

\FloatBarrier
\begin{table}[ht]
\centering
\begin{tabular}{c c c c c } % centered columns (4 columns)
%inserts double horizontal lines
$\tau$ & $d^\star$ & $\gamma$ & $\alpha \geq $ & $\nu\leq $  \\[0.2ex]
\hline % inserts single horizontal line
0.001&1&(1.7,3.6)& 0.91 & 0.004 \\[0.1ex]
0.001&1&(2.15,3)& 1.14 & 0.150\\[0.1ex]
0.005&2&(2.8,3.45)& 1.26 & 0.007 \\ [0.1ex]% inserting body of the table
0.005&2&(3,3.3)& 1.3 & 0.04 \\ [0.1ex] 
\hline %inserts single line
\end{tabular}
\caption{{Different values of the edge probability $\tau$, the maximum node degree~\eqref{d_star}, and the corresponding ranges of the regularization $\gamma$ in~\eqref{eqn:estimator} and values of $\alpha$, $\nu$ that satisfy Assumptions~\ref{ass:1}--\ref{ass:3}.}}
\label{conditions}
\end{table}
\FloatBarrier

%\newpage
%In addition to the quantities $(\mu^\star,d^\star,\kappa^\star)$, we must control the behavior of the Hessian $\I$ restricted to elements in the direct sums  $\Omega^\star\oplus T^\star$ as well as $\Omega^\star\oplus (T^\star \oplus \mathrm{span}(\ones))$. A more interpretable set of assumptions is to measure the behavior of $\I$ restricted to the spaces $\Omega^\star$ and $T^\star$ \emph{separately} and use the bound on the quantities $(\mu^\star,d^\star,\kappa^\star)$ to ``couple" them to control the behavior of $\I$ restricted to the combined spaces. We next present the decoupled conditions and describe in Appendix \ref{sec:coupled_conds} how they combine.

\subsection{{Theorem statement}}
\label{sec:theoretical_results}
We now describe the performance of \texttt{eglatent} under suitable conditions on the quantities from the previous section. We state the theorem based on essential aspects of the conditions required for the success of our convex relaxation (i.e., the Hessian conditions) and omit complicated constants. We specify these constants in Appendix \ref{sec:proof_thm2}. Our results depend on a second-order parameter $\xi > 0$ that determines the rate of convergence of a random vector $X$ in the domain of attraction of a \HR{} distribution to its limit, with larger values corresponding to faster convergence; see Appendix~\ref{sec:finite_sample_variogram}.

\begin{theorem}
Suppose that we have $n$ independent and identically distributed samples in the domain of attraction of a latent \HR{} model as described in Section~\ref{sec:inference} with second-order parameter $\xi >0$ in  Assumption~\ref{assumption2} in Appendix~\ref{sec:finite_sample_variogram}. Assume that there exists $\alpha >0$, $\nu \in (0,1]$, $\omega\in (0,1)$ and the choice of the parameter $\gamma$ so that the Hessian $\mathbb{I}^\star$ corresponding to this latent \HR{} model satisfies Assumptions~\ref{ass:1}--\ref{ass:3}. Let $m := \max\{1,{1}/{\gamma}\}$ and $\bar{m} := \max\{1,\gamma\}$. Let the effective sample size $k$ be chosen such that $k < n^{2\xi/(2\xi+1)}$. Let $h := \mathrm{rank}(L^\star)$ be the true number of latent variables and {$d^\star$ the maximum degree of the true graph structure among the observed variables conditioned on the latent variables as in~\eqref{d_star}}. Suppose:
\begin{enumerate}
    \item $k \gtrsim \frac{m^5h{d^\star}^2}{\alpha^6}{p}^2\log({p})$, i.e.,  effective sample size is sufficiently large; 
    \item $\lambda_n \sim \frac{m}{\nu}\sqrt{\frac{p^2\log(p)}{k}}$, i.e.,  $\lambda_n$ is appropriately chosen; 
    \item $\sigma_\mathrm{min}(L^\star) \gtrsim \frac{m^4\bar{m}h}{\nu\alpha^4}\sqrt{\frac{p^2\log(p)}{k}}$, i.e., the minimum nonzero singular value of $L^\star$ is sufficiently bounded away from zero; 
    \item $|S^\star_{ij}| \gtrsim \frac{m^3\bar{m}\sqrt{h}}{\nu\alpha^2}\sqrt{\frac{p^2\log(p)}{k}}$ for every $(i,j) \text{ with }$ $|S^\star_{ij}|>0$, i.e., the minimum nonzero entry of $S^\star$ is sufficiently bounded away from zero. 
\end{enumerate}
Then, the estimate $(\hat{S},\hat{L})$ defined as the unique minimizer of \texttt{eglatent} in~\eqref{eqn:estimator} with empirical variogram in~\eqref{emp_vario} satisfies
$$ \mathbb P\left(\mathrm{sign}(\hat{S}) = \mathrm{sign}(S^\star), \mathrm{rank}(\hat{L}) = \mathrm{rank}(L^\star), \|(\hat{S}-\hat{L})-\tilde{\Theta}^\star\|_2 \lesssim \frac{m^3\sqrt{h}}{\nu\alpha^2}\sqrt{\frac{p^2\log(p)}{k}} \right) \geq 1-\frac{1}{p}.$$
\label{thm:main}
\end{theorem}

\begin{remark}\label{rem:max_stable}
    The class of distributions $X$ in the domain of attraction of a \HR{} distribution is very large. For instance, the max-stable \HR{} distribution is one member of this class. Note that since we are considering threshold exceedances, the right-hand side of~\eqref{defn:mpd} changes with the threshold $u$ even if $X$ is a max-stable distribution. Indeed, \citet[][Proposition S.6]{engelke2022b} showed that the rate of convergence is governed by a second-order parameter $\xi$ that can be chosen as any value in $(0,1)$. In this case, the effective sample size $k$ in Theorem~\ref{thm:main} must satisfy $k = o(n^{0.66})$. Thus, in our simulations with max-stable distribution, we use $k = n^{0.65}$.    
\end{remark}

We prove Theorem~\ref{thm:main} in Appendix~\ref{sec:proof_thm2}. Due to the zero row-sum constraint in the \texttt{eglatent} estimator \eqref{eqn:estimator}, the proof of Theorem~\ref{thm:main} is more involved than the consistency analysis in \cite{Chand2012}. Specifically, we need additional technical arguments to deal with the dual parameter $t\ones$ that arises from the zero row-sum constraint. We highlight these technical arguments in Appendix~\ref{sec:lemmas_zero_row_sum_constraint}.

 {Theorem~\ref{thm:main} essentially states that if Assumptions 1--3 hold, $(\lambda,\gamma)$ are chosen appropriately, the effective sample size $k$ is sufficiently large, the minimum nonzero singular value of the low-rank term $L^\star$ and the minimum nonzero entry of the sparse piece $S^\star$ are bounded away from zero, then, with high probability, \texttt{eglatent} provides accurate estimates for the subgraph among the observed variables, the number of latent variables, and a marginal extremal model.}

The quantities $(\alpha,\nu,\omega)$ as well as the choices of the parameters $\lambda_n$ and $\gamma$ play a prominent role in the result. Indeed, larger values of $\alpha,\omega,\nu$ lead to a better conditioned Hessian $\I$ around the tangent spaces $\Omega^\star$ and $T^\star \oplus \mathrm{span}(\ones)$. The better conditioning of the Hessian $\I$ then results in less stringent requirements on sample complexity, the minimum nonzero singular value of $L^\star$, and the magnitude of the minimum nonzero entry of $S^\star$. Notice that the complexity of the true graph structure among the observed variables $d^\star$ and the true number of latent variables $h$ appears explicitly in the bounds in Theorem~\ref{thm:main}. We also note the dependence
on $d^\star$ and $h$ is implicit in the dependence on $\alpha,\nu$ and $\gamma$. Indeed, as larger $d^\star$ and $h$ increases the dimension of the tangent space $\Omega^\star$ and $T^\star$, respectively,  they result in smaller $\alpha,\nu$. Furthermore, as described in Remark~\ref{remark:choice_gamma}
, the range of values of $\gamma$ decrease with larger graph complexity and number of latent variables.

{
\begin{remark} \cite{engelke2022b} prove that $k \geq \mathcal{O}(\log(p))$ suffices for consistent estimation of extremal graphical models without latent variables. Further, \cite{Chand2012} prove that $k \geq \mathcal{O}(p)$ suffices for consistent estimation of Gaussian latent variable graphical model. According to Theorem~\ref{thm:main}, we require $k \geq \mathcal{O}(p^2\log(p))$ in our setting. This requirement is determined by the deviation $\|\hat{\Gamma}_O - \Gamma^\star_O\|_2$, namely how fast the empirical variogram matrix $\hat{\Gamma}_O$ converges in \emph{spectral norm} to the true variogram matrix $\Gamma^\star_O$. \cite{engelke2022b} carried out extensive mathematical arguments to obtain the following concentration of the empirical variogram matrix in $\ell_\infty$ norm $\|\hat{\Gamma}_O - \Gamma^\star_O\|_\infty \leq \mathcal{O}(\sqrt{\log(p)/k})$. In our analysis, we use this result and the equivalence of norms relation 
\[\|\hat{\Gamma}_O - \Gamma^\star_O\|_2 \leq p\|\hat{\Gamma}_O - \Gamma^\star_O\|_\infty \leq \mathcal{O}(\sqrt{p^2\log(p)/k})\]
to obtain a convergence rate in the spectral norm. We suspect that a tighter convergence result of $\|\hat{\Gamma}_O - \Gamma^\star_O\|_2 \leq \mathcal{O}(\sqrt{p/k})$ holds. Such a tighter convergence guarantee would then imply $k \geq \mathcal{O}(p)$ is sufficient for consistency guarantees of our estimator. 

Finally, we should expect a more stringent sample size requirement for the latent extremal model than for the extremal model without latent variables. In particular, a larger sample size allows us to guarantee spectral norm consistency of the low-rank component, ensuring accurate estimates for the number of latent variables and their effects; see also the discussion in \cite{Chand2012}.

\end{remark}}

\section{Experimental demonstrations}
\label{sec:experiments}
In our numerical experiments, we use \texttt{eglatent} as a model selection procedure and perform a second refitting step on the selected model structure to estimate the model parameters; see Appendix~\ref{sec:refit} for details. Code to reproduce our results can be found at \url{https://github.com/sebastian-engelke/extremal_latent_learning}.

\subsection{Synthetic simulations}
We illustrate the utility of our method for recovering the subgraph among the observed variables and the number of latent variables on synthetic data. We compare the performance of our \texttt{eglatent} method to \texttt{eglearn} by \cite{engelke2022b} for learning extremal graphical models. {(In Appendix~\ref{sec:additional_exp_gaussian}, we provide comparisons with the Gaussian latent variable estimator in \cite{Chand2012}. As expected, our estimator is better at capturing dependency structure in the extremes and outperforms the Gaussian estimator.)} To evaluate the accuracy of the estimated graphs with edges $\hat E$ relative to the true subgraph among the observed variables with edges $E = E_O$, we use the $F$-score 
\begin{align*}\label{Fscore}
  F = \frac{|E \cap \hat E| }{|E \cap \hat E| + \frac12 (|E^c \cap \hat E| + |E \cap \hat E^c|)}.
\end{align*}
Larger $F$-scores thus indicate more accurate graph recovery.
\begin{figure}
    \centering {\includegraphics[width = 1\textwidth]{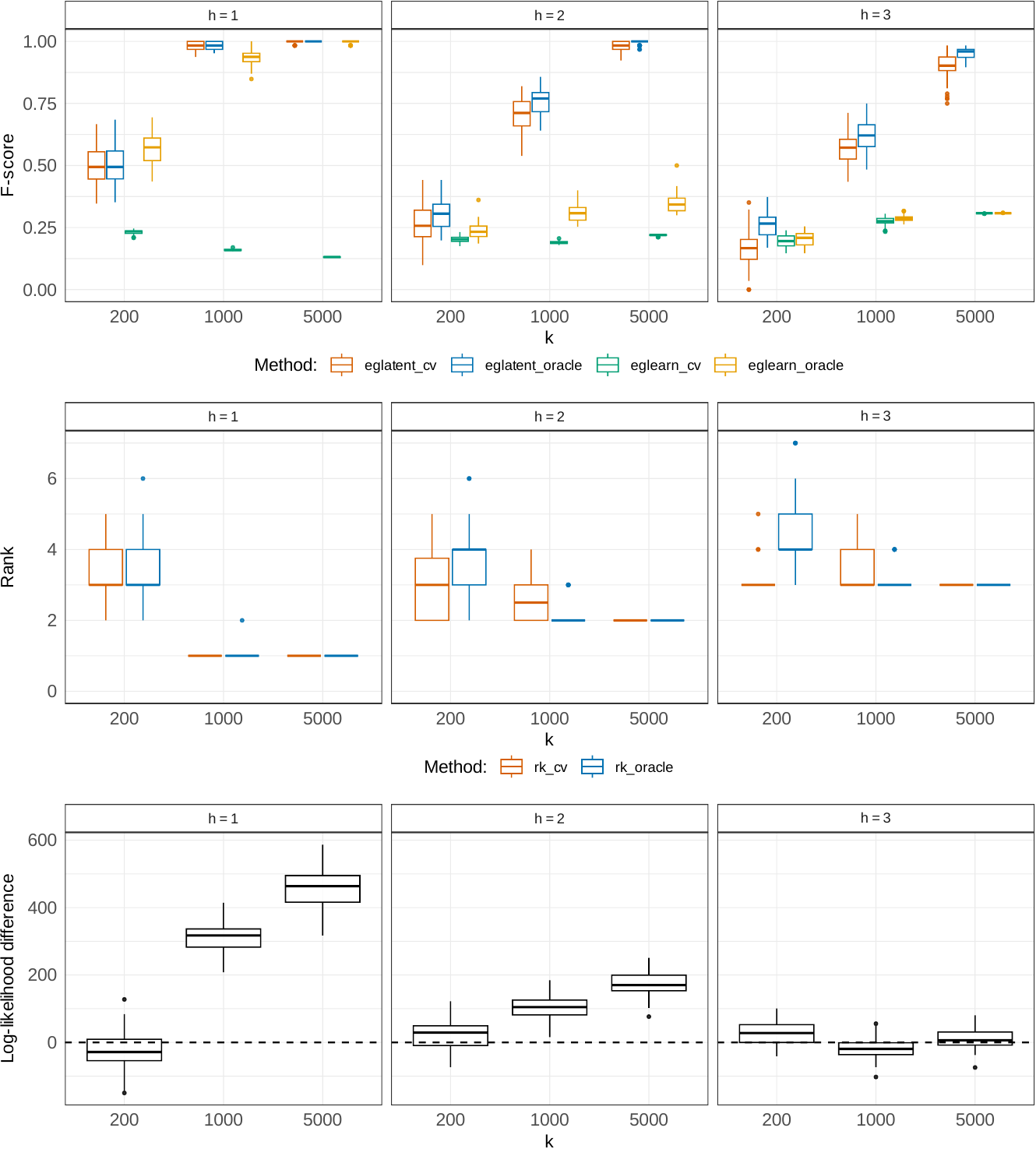}}
    \caption{$F$-score (top row) and estimated number of latent variables (middle row) of \texttt{eglatent} method with the selection of the tuning parameter based on the oracle and validation on the $F$-score for the cycle graph with $h=1,2,3$ latent variables and different effective sample sizes $k=200,1000,5000$. The bottom row shows the difference between best \texttt{eglatent} and best \texttt{eglearn} log-likelihoods on the validation set.}
    \label{fig:ads2}
\end{figure}

\subsubsection{Structure recovery}\label{sec:struc_recov}
In order to evaluate the performance of our new method, we generate data from a random vector $X = (X_O, X_H)$ in the domain of attraction of a latent \HR{} multivariate Pareto distribution $Y$ with the precision matrix $\Theta^\star \in \mathbb{R}^{p+h \times p+h}$, $p$ observed variables $O = \{1,2,\dots,p\}$ and $h$ latent variables $H = \{p+1,\dots,p+h\}$.
We choose to simulate $X$ from the \HR{} max-stable distribution with the same precision matrix $\Theta^\star$, which is well-known to be in the domain of attraction of $Y$; see \cite{res2008} for details. The simulation can be done efficiently with the method in \cite{dombry2016}.
 
 We specify the sub-graph $\mathcal{G}_O = (E_O,O)$ among the observed variables to be a cycle graph and set $\Theta_{ij}^\star$ to $-2$ for every $(i,j) \in E_O$ and zero otherwise. The latent variables are not connected in the joint graph, so $\Theta^\star_{ij} = 0$ for every $i,j \in H, i\neq j$. We connect each latent variable node $i \in H$ to every $k \in O$ satisfying $k = i-(p+1)+\zeta{h}$ for some positive integer $\zeta$ (thus every latent variable is connected to a distinct set of observed variables in the graph). The corresponding entries $\Theta^\star_{ik}$ in the precision matrix are chosen uniformly at random from the interval $[50/\sqrt{p+h},75/\sqrt{p+h}]$. Finally, we set the diagonal entries of $\Theta^\star$ to have the all-ones vector in its null space. Appendix \ref{sec:additional_exp_different_graph_structure} shows results for a setting where the subgraph among the observed variables is generated according to an Erd\H{o}s--R\'enyi graph.

We let $p = 30$, $h \in\{1,2,3\}$, and we set the number of marginal exceedances to $k = \lfloor n^{0.65} \rfloor$. Following Remark~\ref{rem:max_stable}, this choice satisfies the assumptions of Theorem~\ref{thm:main} since we simulate from a max-stable distribution. Altering $k$ in a reasonable range does not change the qualitative results of the simulation study. A more detailed discussion of the choice of $k$ can be found in the real data application in Section~\ref{sec:application}.
We generate $n$ samples from the max-stable \HR{} distribution parameterized by $\Theta^\star$ so that we obtain $k \in \{200,1000,5000\}$ effective extreme samples. {When deploying our \texttt{eglatent} estimator in \eqref{eqn:estimator}, we fix $\gamma = 4$ to a reasonable default value; In Appendix~\ref{sec:additional_exp_gamma}, we demonstrate the robustness of our results to different values of $\gamma$}. Concerning the regularization parameter $\lambda_n$, which also appears in the \texttt{eglearn} method, in both methods, it is chosen either by validation likelihood on a separate dataset of size $n$ or by an oracle approach maximizing the $F$-score for the sub-graph among observed variables. 

Figure \ref{fig:ads2} summarizes the performance of the methods on 50 independent trials for the different sample sizes and different numbers of latent variables. We observe that our proposed approach outperforms \texttt{eglearn} in several ways. Indeed, the top row shows that the graph learned by $\texttt{eglearn}$ only poorly recovers the graphical structure among observed variables. This reveals a limitation of this method, namely that in the presence of latent variables, the marginal graph of observed variables is dense and sparsity cannot be well detected by methods that ignore this fact. Clearly, this problem becomes more pronounced with a larger number of latent variables. On the other hand, our new \texttt{eglatent} method exploits the latent structure for learning the sparse graph among the observed variables conditional on the latent variables. It recovers the graphical structure among the observed variables increasingly well with a growing sample size. In fact, the results for the tuning parameter $\lambda_n$ chosen through validation likelihood are almost as good as those based on the oracle. The middle row of Figure \ref{fig:ads2} shows that \texttt{eglatent} is able to identify the correct number of latent variables, especially for larger sample sizes. 

We can also compare the model in terms of their likelihood on the validation data. Again, our \texttt{eglatent} method generally attains a better validation likelihood and is thus more representative of the data. As an exception, we observe that if the effective sample size is small ($k=100$), then \texttt{eglearn} performs better. The reason is that \texttt{eglatent} is a more flexible model with more parameters to learn, and it therefore benefits more from additional data.

\subsubsection{Robustness to zero latent variables}\label{sec:robust} We now evaluate the performance of \texttt{eglatent} when there are no latent variables present and compare its performance to \texttt{eglearn}. We first specify a graph structure using a Barab\'asi--Albert model denoted by $\mathrm{BA}(d,m)$, which is a preferential attachment model with $d$ notes and a degree parameter $m$ \citep{Albert2001StatisticalMO}. We set $d = 20$ and $m = 2$. We then define a \HR{} precision matrix $\Theta^\star \in \mathbb{R}^{d \times d}$ with entries sampled uniformly at random from the interval $[-5,-2]$. The diagonal entries of $\Theta^\star$ are chosen so that it has the all-ones vector in its null space. We generate $n$ samples from the max-stable \HR{} distribution parameterized by $\Theta^\star$ such that there are $k = \lfloor n^{0.65} \rfloor = 200$ effective marginal extreme samples. We also generate a separate dataset of size $n$ for validation. For the method \texttt{eglatent}, for each value of the regularization parameter $\gamma= 1,4,8,20$ the regularization parameter $\lambda_n$ is chosen based on the validation set. The regularization parameter $\lambda_n$ in \texttt{eglearn} is chosen similarly.

Figure~\ref{fig:zero_latent} presents the $F$-scores and validation log-likelihood scores of \texttt{eglatent} as $\gamma$ varies and for  50 independent trials. We also display the average numbers of edges and latent variables, as well as the performance of \texttt{eglearn}. As expected, larger values of $\gamma$ lead to smaller estimates for the number of latent variables. We observe that when $\gamma = 4$, \texttt{eglatent} obtains an accurate graphical structure ($F$-score close to one) with a similar validation likelihood as \texttt{eglearn}. Here, \texttt{eglearn} yields a sparse graph since, unlike the previous settings, there are no unobserved confounding. Interestingly, the average number of estimated latent variables in this case is not close to zero. In particular, we observe that when $\gamma$ is chosen so that \texttt{eglatent} yields nearly zero latent variables (i.e., $\hat{L} \approx 0$), the $F$-scores scores obtained by \texttt{eglatent} drop significantly. For such $\gamma$, our estimator \eqref{eqn:estimator} resembles the analog of the graphical lasso which is known to yield inaccurate models \citep{engelke2022b}; see also Remark~\ref{remark:no_latent}.  

In summary, when the sample size is sufficiently large, \texttt{eglatent} yields a similar model fit and graph recovery as \texttt{eglearn} even when there are no latent variables. It is worth emphasizing that \texttt{eglatent} achieves this favorable performance by estimating some latent variables. This shows the robustness of our method to model misspecification. %{{In Appendix~\ref{sec:additional_zero_latent_vars}, We show similar behaviors for the smaller effective sample size of $k = 200$.}}

\begin{figure}
    \centering
    \includegraphics[width = .8\textwidth]{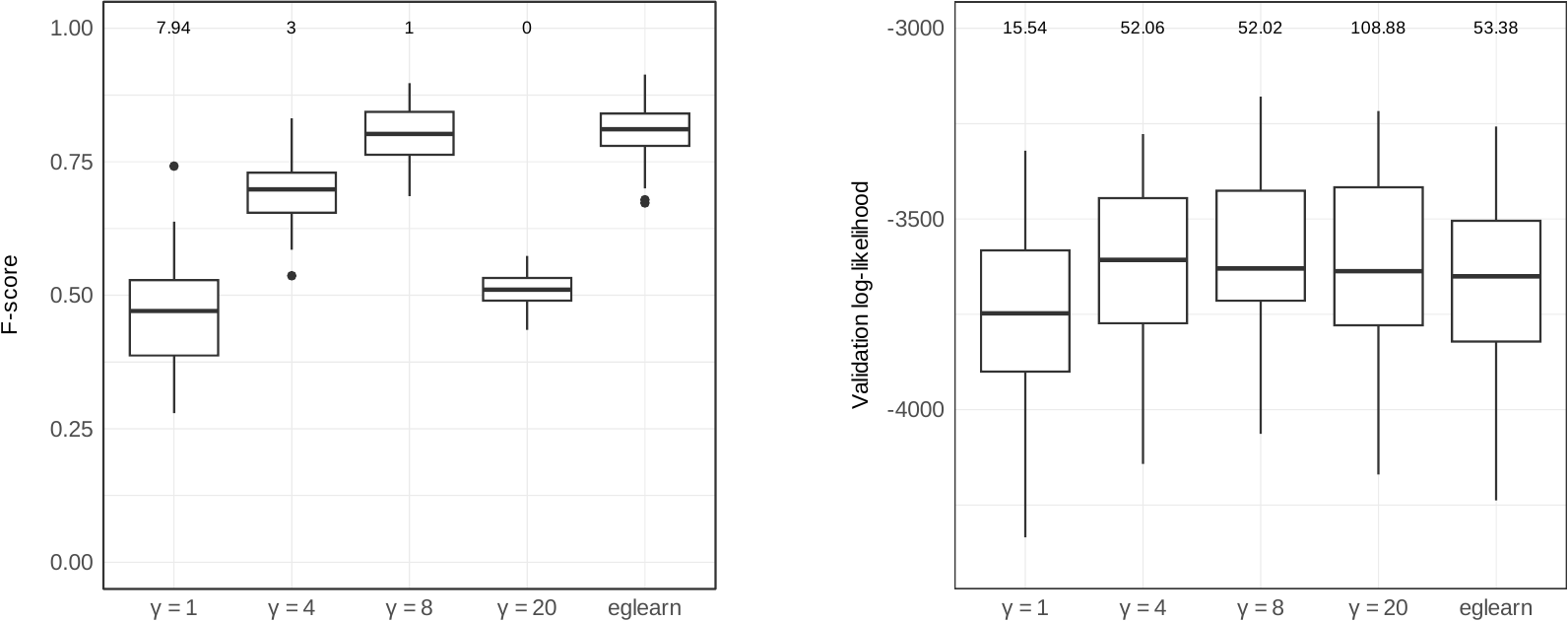}
    \caption{Left: $F$-score of \texttt{eglatent} for different regularization parameters $\gamma \in \{1,4,8,20\}$ and \texttt{eglearn}; top axis shows the average number of estimated latent variables in \texttt{eglatent}. Right: the log-likelihood of the same methods evaluated on a validation data set; the top axis shows the average number of estimated edges in each model.}
    \label{fig:zero_latent}
\end{figure}

\subsection{Real data application}\label{sec:application}

We apply our latent \HR{} model to analyze large flight delays. We use a data set from the R package \texttt{graphicalExtremes} \citep{graphicalExtremes2022} with $p = 29$ airports in the southern U.S.~shown in the left panel of Figure~\ref{fig:graphs}.
Large flight delays cause huge financial losses and lead to congestion of critical airport infrastructure. Our method provides an improved model for the dependence of such excessive delays at different airports, and can eventually be used for stress testing of the system; see \cite{hen2022} for details on this application. 
Unless otherwise noted, we fit the models in the whole dataset consisting of $n = 3603$ observations from 2005-01-01 to 2020-12-31.
We compare our \texttt{eglatent} method for latent \HR{} models with the \texttt{eglearn} algorithm by \citep{engelke2022a} that estimates a graphical structure without latent variables. 
We report here the results for the exceedance threshold of be $q = 0.90$ (i.e., $1-k/n = 0.90$) resulting in $k = 360$ marginal exceedances for the computation of the empirical variogram $\hat \Gamma_O$; see Section~\ref{sec:exmpirical_variogram}. The latter is the input for the different structure learning methods. Different choices of the threshold, or equivalently, of $k$, are discussed below.

\begin{figure}
    \centering
    \includegraphics[width = .32\textwidth]{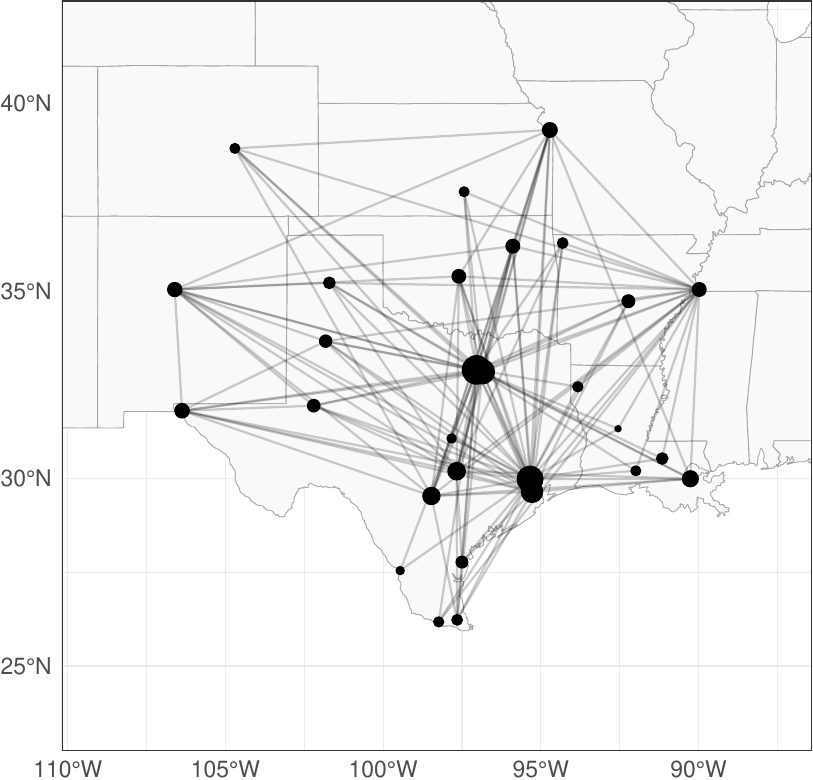}
    \includegraphics[width = .32\textwidth]{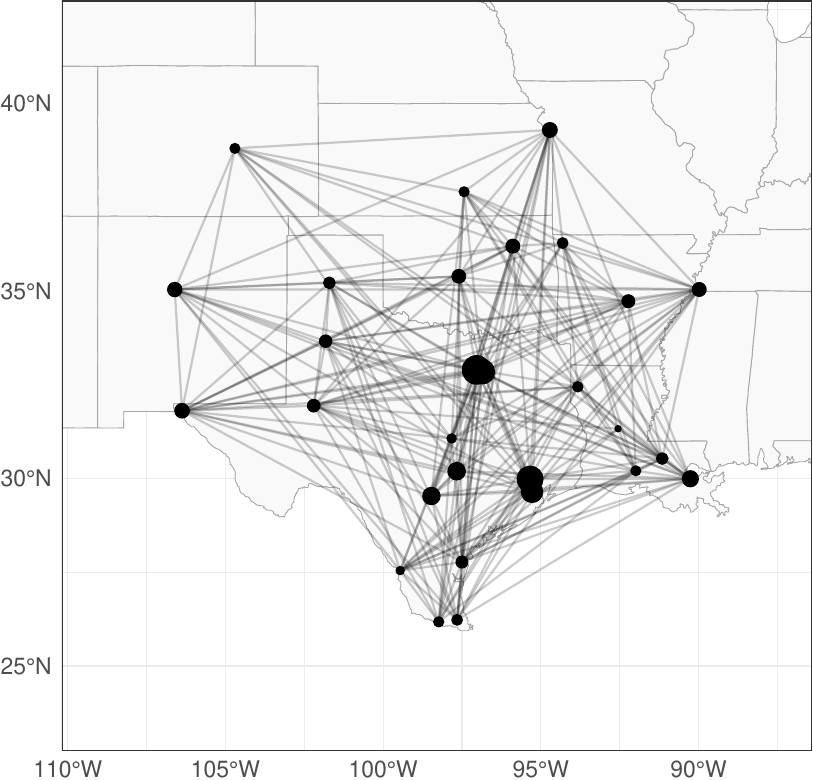}
    \includegraphics[width = .32\textwidth]{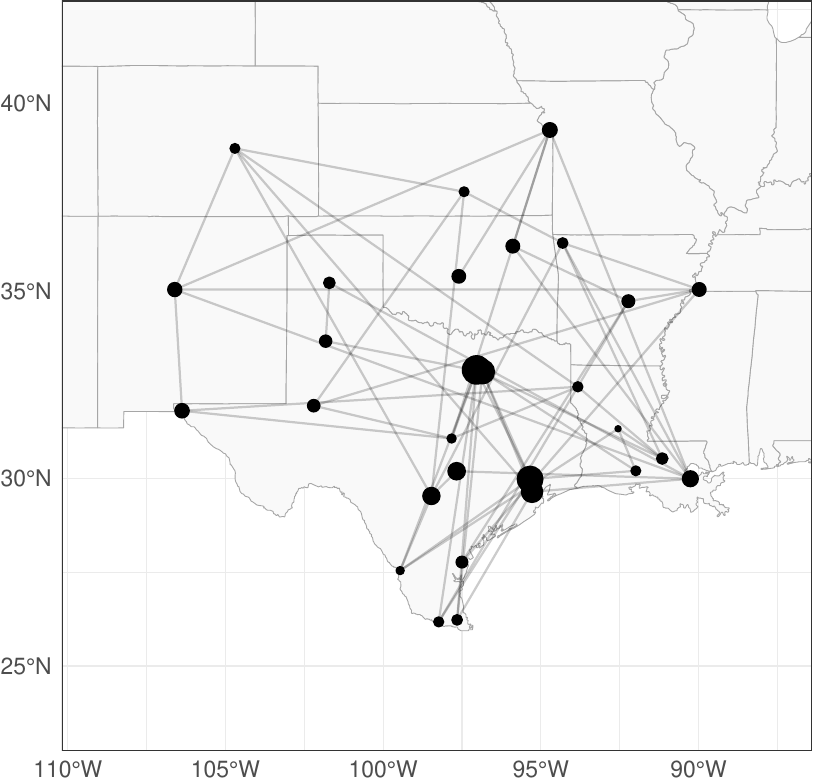}
    \caption{Airports in the Southern U.S.~(dots) and flight connections, where the thickness of the nodes indicates the average number of daily flights at the airports. Left: flight connection graph with an edge between any pair of airports with daily flights. Center: estimated graph of optimal \texttt{eglearn} model. Right: estimated sub-graph corresponding to observed variables of optimal \texttt{eglatent} model.}
    \label{fig:graphs}
\end{figure}

The left-hand side of Figure~\ref{fig:likelihood} shows the number of edges of \texttt{eglatent} and of \texttt{eglearn} as a function of the tuning parameter $\lambda_n$, where the parameter $\gamma$ related to the latent variable selection in \texttt{eglatent} is fixed to the default choice $\gamma = 4$; different values of $\gamma$ give similar results and are omitted here. We see that for both methods, larger values of $\lambda_n$ result in sparser graphs. It is important to note that for \texttt{eglearn}, we count the edges of the usual estimated graph. For our \texttt{eglatent} method we count the edges of the residual graph among the observed variables. The latent graphs generally have fewer edges and are therefore more easily interpretable.

To compare the different model fits and to select the optimal value for the tuning parameter~$\lambda_n$, we must compute the likelihood of the fitted models on an independent validation set. To this end, we split the data chronologically into five equally large folds and perform cross-validation by leaving one fold out (validation data) and fitting on the remaining four folds (training data). The results for model performance on the validation sets are then averaged. 
The right-hand side of Figure~\ref{fig:likelihood} shows the averaged log-likelihood values on the validation sets that were not used for model fitting. For both methods, we see that for too small values of $\lambda_n$, the graphs are too dense and overfit to the training data.
In fact, for $\lambda_n =0$, both models correspond to the fully connected graph whose performance (horizontal line) is much worse than the models enforcing sparsity. For too large values of $\lambda_n$, the graph becomes too sparse and the model is not flexible enough. Clearly, the latent model outperforms \texttt{eglearn}, indicating that latent variables are present in this data set. In this particular application, they can be thought of as confounding factors such 
as meteorological variables or strikes in the aviation industry that affect many airports simultaneously.

Figure~\ref{fig:graphs} compares the estimated graphs of \texttt{eglatent} (center) and \texttt{eglearn} (right) fitted on the whole data set, where the regularization parameter $\lambda_n$ in both methods is chosen as the maximizers of the respective validation likelihoods. We observe that the latent graph is much sparser and therefore highlights more clearly certain features of the system. For instance, it seems that hubs, such as the Fort Worth International Airport in Dallas (the thickest point on the map), are more central in the graph since they have more connections than smaller airports.  

\begin{figure}
    \centering
    \includegraphics[width = .9\textwidth]{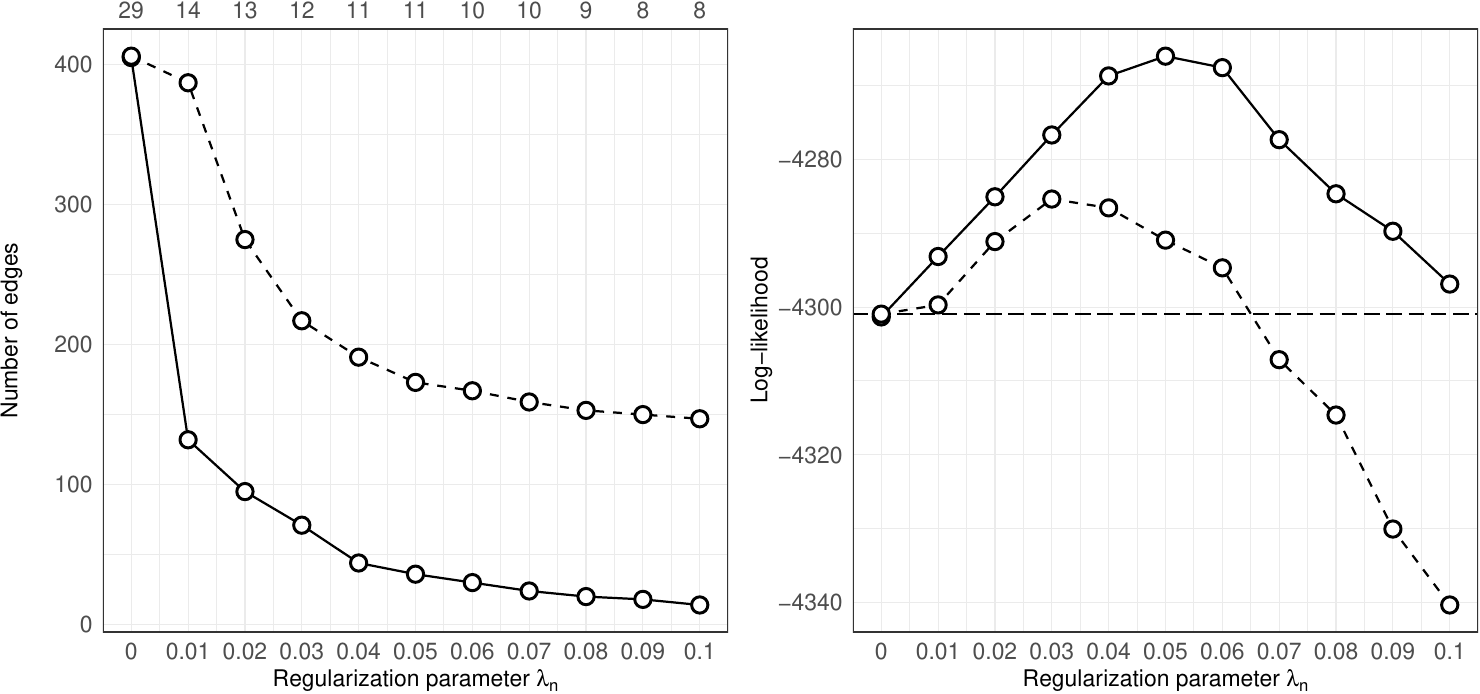}
    \caption{Left: number of edges of the estimated graph of \texttt{eglearn} (dashed line) and the estimated sub-graph of observed variables of \texttt{eglatent} (solid line) as functions of the regularization parameter $\rho$; top axis shows the number of latent variables in \texttt{eglatent}. Right: corresponding log-likelihoods; horizontal line is the validation log-likelihood of the fully connected graph.}
    \label{fig:likelihood}
\end{figure}

{
The number of exceedances $k$ used in the analysis, or equivalently, the probability threshold $q = 1-k/n$, is a tuning parameter appearing in virtually all extreme value analyses. In theory and simulation studies with knowledge of the underlying distributions, there is an optimal choice of the asymptotic order of $k$ compared to the sample size $n$; see for instance Remark~\ref{rem:max_stable}. In real data, we typically neither know the data generating distribution nor the second-order parameter $\xi$ that determines the rate of $k$ in Theorem~\ref{thm:main}. Therefore, it is common practice to run the analysis for different choices of reasonable values of $k$ and compare the results in terms of stability.
In addition to the threshold $q = 0.90$ ($k = 360$), we rerun the above application with thresholds $q = 0.85$ ($k = 540$) and $q = 0.95$ ($k = 180$); see Appendix~\ref{sec:additional_application} for the results. Similarly to Figure~\ref{fig:likelihood}, Figures~\ref{fig:likelihood_85} and~\ref{fig:likelihood_95} show that also for these threshold choices, \texttt{eglatent} outperforms \texttt{eglearn} significantly. Moreover, Figure~\ref{fig:graphs_comparison} compares the different estimated graphs among the observed variables for the three thresholds. We see that the results are very stable and the graphs only have a few edges that differ.
}

\section{Future work}
\label{sec:discussion}
Our work on latent variables in the analysis of extremal dependence opens several future research directions. First, as described in Section \ref{sec:theoretical_results}, our sample size requirement is driven by a spectral norm concentration result on the empirical variogram matrix. This result was derived by translating the $\ell_\infty$ concentration result of \cite{engelke2022b} to the spectral norm setting using equivalence of norms. To obtain tighter convergence results, one must obtain direct concentration bounds on the spectral norm; such a result would be of independent interest in the multivariate extremes literature. Second, solving \texttt{eglatent} can be challenging for large problems. Building on the work of \cite{Ma2012AlternatingDM} in the Gaussian setting, faster solvers can be developed using alternating direction method of multipliers \citep{Boyd2011DistributedOA}. Moreover, we observed in Section~\ref{sec:robust} that \texttt{eglatent} estimates a few latent variables to accurately recover the underlying graphical structure when there are no latent variables present. It would be of interest to develop a theoretical justification for this phenomenon. Also additional structure on the dependency structures among the observed and latent variables, such as multivariate total positivity of order 2 \citep{roettger2023, rod2024} or colored graphs \citep{roettger2023parametric}, may be exploited to develop more powerful extremal graphical models with latent variables. The recent connection between extremal graphical models and graphical models for L\'evy processes could allow us to use \texttt{eglatent} also for modeling latent variables in stochastic processes dependence structure \citep{engelke2024levygraphicalmodels}.

\section{Acknowledgments}
We thank the referees for their valuable comments that improved the paper. We thank Nicola Gnecco and Manuel Hentschel for their help with creating the figures in this paper.
The authors acknowledge funding from the Swiss National Science Foundation (Sebastian Engelke), NSF grant DMS-2413074 and the Royalty Research Fund at the University of Washington (Armeen Taeb).

\bibliographystyle{elsarticle-harv}
\bibliography{main.bib}

\newpage
\appendix
\section*{supplementary material}

\section{Useful lemmas for proving Theorem~\ref{thm:main1}}
Our analysis of Theorem~\ref{thm:main1} relies on some lemmas.

\begin{lemma} Let $A \in \mathbb{S}^{d}$ and $B \in \mathbb{S}^{d}$ be two symmetric matrices with $A + B$ being nonsingular and row/column spaces of $A$ and $B$ being orthogonal to one another. Then, $(A+B)^{-1} = A^+ + B^+$.
\label{lemma:pseudo}
\end{lemma}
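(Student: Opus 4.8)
\textbf{Proof plan for Lemma~\ref{lemma:pseudo}.}

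The plan is to exploit the orthogonality hypothesis to reduce the claim to verifying the four Moore--Penrose conditions directly. First I would note that the hypothesis ``row/column spaces of $A$ and $B$ are orthogonal'' means $\mathrm{col}(A) \perp \mathrm{col}(B)$ (and by symmetry the same for row spaces), so in particular $AB = BA = 0$, and also $A B^+ = B^+ A = 0$ and $A^+ B = B A^+ = 0$, since $\mathrm{col}(B^+) = \mathrm{row}(B^+)^\top \subseteq \mathrm{row}(B) = \mathrm{col}(B)$ and likewise for $A^+$. These vanishing cross terms are the workhorse of the whole argument.

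Next I would set $C := A + B$ and $D := A^+ + B^+$ and check that $D = C^{-1}$ by verifying $CD = I$. Expanding, $CD = (A+B)(A^+ + B^+) = AA^+ + AB^+ + BA^+ + BB^+ = AA^+ + BB^+$, using the vanishing cross terms. Now $AA^+$ is the orthogonal projection onto $\mathrm{col}(A)$ and $BB^+$ is the orthogonal projection onto $\mathrm{col}(B)$; since these subspaces are orthogonal, $AA^+ + BB^+$ is the orthogonal projection onto $\mathrm{col}(A) \oplus \mathrm{col}(B)$. The final point is to argue that this direct sum is all of $\mathbb{R}^d$: because $A+B$ is nonsingular, $\mathrm{col}(A+B) = \mathbb{R}^d$, and since $\mathrm{col}(A+B) \subseteq \mathrm{col}(A) + \mathrm{col}(B) = \mathrm{col}(A)\oplus\mathrm{col}(B)$ (the sum being direct by orthogonality), we get $\mathrm{col}(A)\oplus\mathrm{col}(B) = \mathbb{R}^d$. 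Hence $AA^+ + BB^+ = I$, so $CD = I$, and since $C$ is nonsingular this gives $C^{-1} = D$, i.e. $(A+B)^{-1} = A^+ + B^+$.

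Alternatively, if one prefers not to invoke nonsingularity of $A+B$ until the end, one can verify all four Moore--Penrose axioms for $D$ as a pseudoinverse of $C$: symmetry of $CD$ and $DC$ is immediate since both equal $AA^+ + BB^+$ which is symmetric; $CDC = (AA^+ + BB^+)(A+B) = AA^+A + BB^+B = A + B = C$ using the cross-term vanishing again; and similarly $DCD = D$. This shows $C^+ = D$ unconditionally, and then the nonsingularity hypothesis upgrades $C^+$ to $C^{-1}$.

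I do not expect a genuine obstacle here; the only point requiring a little care is the bookkeeping of which cross products vanish and why --- specifically that orthogonality of the column spaces forces not just $AB = 0$ but also the mixed products involving pseudoinverses, which follows from $\mathrm{col}(A^+) = \mathrm{col}(A^\top) = \mathrm{col}(A)$ for symmetric $A$. Once those identities are in hand, the computation is a two-line expansion.
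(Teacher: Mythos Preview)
Your argument is correct. The paper takes a slightly different route: it writes the reduced spectral decompositions $A = U_A D_A U_A^\top$ and $B = U_B D_B U_B^\top$, observes that orthogonality of the column spaces together with nonsingularity of $A+B$ forces $(U_A\ U_B)$ to be an orthogonal $d\times d$ matrix, and then reads off
\[
A+B=(U_A\ U_B)\begin{pmatrix}D_A&0\\0&D_B\end{pmatrix}(U_A\ U_B)^\top,
\qquad
(A+B)^{-1}=U_A D_A^{-1}U_A^\top+U_B D_B^{-1}U_B^\top=A^++B^+.
\]
Your approach instead multiplies out $(A+B)(A^++B^+)$, kills the cross terms via $\mathrm{col}(A^+)=\mathrm{col}(A)\subseteq\ker(B)$ (and symmetrically), and identifies $AA^++BB^+$ as the projection onto $\mathrm{col}(A)\oplus\mathrm{col}(B)=\mathbb{R}^d$. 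Both are one-step proofs; yours is basis-free and even yields $(A+B)^+=A^++B^+$ without the nonsingularity assumption (as you note via the four Moore--Penrose conditions), while the paper's version makes the block-diagonal structure explicit.
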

\begin{proof}[Proof of Lemma~\ref{lemma:pseudo}] Let $U_A{D}_A{U}_A^T$ and $U_B{D}_B{U}_B^T$ be the reduced SVD of $A$ and $B$. Then, since $A+B$ is non-singular, and the subspaces spanned by the columns of $U_A$ and $U_B$ are orthogonal, we have that $(U_A,U_B)$ forms an orthogonal matrix. Therefore, 
$$(A+B) = \begin{pmatrix}U_A & U_B\end{pmatrix}\begin{pmatrix}D_A & 0\\0 & D_B\end{pmatrix}\begin{pmatrix}U_A & U_B\end{pmatrix}^T,$$ 
and thus $(A+B)^{-1} = U_A{D}_A^{-1}U_A^T +U_B{D}_B^{-1}U_B^T = A^+ + B^+.$
\end{proof}

\begin{lemma} Suppose that $UU^TMUU^T = M$. Then, $U(U^TMU)^{-1}U^T = M^+$.
\label{lemma:new_2}
\end{lemma}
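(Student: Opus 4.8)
The plan is to reduce the claim to Lemma~\ref{lemma:pseudo} by exhibiting $M$ as the top-left block of a block-diagonal matrix whose other block lives on the orthogonal complement. Concretely, write $P = UU^\top$ for the orthogonal projection onto $\mathrm{col}(U)$, and note the hypothesis $UU^\top M UU^\top = M$ says precisely that $PMP = M$, i.e.\ that $\mathrm{row}(M)$ and $\mathrm{col}(M)$ are both contained in $\mathrm{col}(U)$. Since $U \in \mathbb{R}^{p \times (p-1)}$ has orthonormal columns here, $\mathrm{col}(U)^\perp$ is one-dimensional; but in fact the argument works for any $U$ with orthonormal columns, so I will not use that $U$ has exactly $p-1$ columns.

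First I would set $A := M$ and $B := Q$ where $Q$ is any fixed symmetric positive definite operator on $\mathrm{col}(U)^\perp$ extended by zero to $\mathbb{R}^p$ — for instance $B := I_p - P$. Because $A$ acts trivially off $\mathrm{col}(U)$ and $B$ acts trivially on $\mathrm{col}(U)$, their row and column spaces are mutually orthogonal, and $A + B$ is nonsingular provided $A$ restricted to $\mathrm{col}(U)$ is nonsingular. This last point is where a small argument is needed: I would observe that $U^\top M U$ being invertible is exactly the statement that $M|_{\mathrm{col}(U)}$ is a bijection of $\mathrm{col}(U)$, and under the hypothesis $PMP = M$ this is the only way $M$ can have full rank on its support; more precisely, $M = U (U^\top M U) U^\top$ follows by writing $M = PMP = U U^\top M U U^\top = U(U^\top M U)U^\top$, so the Moore--Penrose inverse of $M$ is $M^+ = U(U^\top M U)^{-1}U^\top$ exactly when $U^\top M U$ is invertible. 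So the statement as phrased presupposes invertibility of $U^\top M U$, which I would make explicit.

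Given that, I would apply Lemma~\ref{lemma:pseudo} to $A + B = M + (I_p - P)$, which is nonsingular, to conclude $(M + I_p - P)^{-1} = M^+ + (I_p - P)^+ = M^+ + (I_p - P)$. Restricting this identity to $\mathrm{col}(U)$ by pre- and post-multiplying with $U^\top$ and $U$, and using $U^\top(I_p - P)U = 0$ together with $U^\top (M + I_p - P) U = U^\top M U$, gives $U^\top M^+ U = (U^\top M U)^{-1}$. Finally, since $\mathrm{col}(M^+) = \mathrm{col}(M) \subseteq \mathrm{col}(U)$ (the pseudoinverse has the same range as $M^\top = M$), we have $M^+ = P M^+ P = U(U^\top M^+ U)U^\top = U(U^\top M U)^{-1} U^\top$, which is the desired conclusion.

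The only genuine subtlety — and the step I would be most careful about — is the bookkeeping around ranges and the implicit nonsingularity hypothesis: one must verify that $PMP = M$ really does force $\mathrm{col}(M) \subseteq \mathrm{col}(U)$ (so that the pseudoinverse computations are legitimate), and that $U^\top M U$ being invertible is both necessary for the statement to make sense and sufficient for $A+B$ to be nonsingular in the application of Lemma~\ref{lemma:pseudo}. Everything else is a routine manipulation of orthogonal projections and the block-diagonalization already packaged in Lemma~\ref{lemma:pseudo}. An alternative, perhaps cleaner, route avoids Lemma~\ref{lemma:pseudo} entirely: directly verify the four Moore--Penrose axioms for the candidate $N := U(U^\top M U)^{-1} U^\top$, namely $MNM = M$, $NMN = N$, and symmetry of $MN$ and $NM$ — each reduces to the identity $M = U(U^\top M U)U^\top$ plus $U^\top U = I_{p-1}$ — but I would present the Lemma~\ref{lemma:pseudo} version since it reuses machinery the paper has already set up.
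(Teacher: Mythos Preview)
Your proposal is correct. The primary route you take, via Lemma~\ref{lemma:pseudo}, is different from the paper's argument: the paper simply observes that $M = U(U^\top M U)U^\top$ (which you also derive), so diagonalizing $U^\top M U$ gives the reduced SVD of $M$, whence $M^+ = U(U^\top M U)^{-1}U^\top$ immediately. In fact you state this identity in passing and note it already gives the result, which is essentially the paper's entire proof; the subsequent detour through $A+B = M + (I_p - P)$ and Lemma~\ref{lemma:pseudo} is correct but unnecessary. Your alternative sketch (verifying the four Moore--Penrose conditions directly for $N = U(U^\top M U)^{-1}U^\top$) is equally clean and arguably the most transparent route. One minor caveat: the Lemma~\ref{lemma:pseudo} path implicitly requires $M$ to be symmetric, since that lemma is stated for symmetric matrices; the direct SVD or Penrose-axiom arguments do not, though in all of the paper's uses $M$ is symmetric anyway.
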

\begin{proof}[Proof of Lemma~\ref{lemma:new_2}] Let $UDU^T$ be the reduced-SVD of $M$. Then, $U(U^MU)^{-1}U^T = UD^{-1}U^T$, which is equivalent to $M^+$.  
\end{proof}

\begin{lemma} \label{lemma: schur.Comp}
 Let $\Tilde{\Pi} = (I_p-\ones/p)$. Let $\Theta = \begin{pmatrix}\Theta_{O} & \Theta_{OH} \\ \Theta_{HO} & \Theta_{H} \end{pmatrix} \in \mathbb{R}^{d\times{d}}$ with $\Theta_O \in \mathbb{R}^{p\times{p}}$, $\Theta_H \in \mathbb{R}^{h\times{h}}$ and $d = h+p$.  Suppose $\Theta$ is a positive semi-definite matrix with its null-space being the span of the all-ones vector. Then:   
    \begin{equation*}
         \Tilde{\Pi}(\Theta_O - \Theta_{OH}{\Theta_H}^{-1}\Theta_{HO})\Tilde{\Pi} = \Theta_O - \Theta_{OH}{\Theta_H}^{-1}\Theta_{HO}.
    \end{equation*}
\end{lemma}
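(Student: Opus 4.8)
The plan is to show that the matrix $M := \Theta_O - \Theta_{OH}\Theta_H^{-1}\Theta_{HO}$ has the all-ones vector $\mathbf{1}_p$ in its kernel, both as a right and left null vector; once this is established, the identity $\tilde{\Pi} M \tilde{\Pi} = M$ follows immediately because $\tilde{\Pi} = I_p - \ones/p$ acts as the identity on the orthogonal complement of $\mathbf{1}_p$, and $M$ maps into (and is supported on) that complement. Concretely, $\tilde\Pi M = M - \mathbf{1}_p\mathbf{1}_p^\top M/p = M$ if $\mathbf{1}_p^\top M = 0$, and similarly $M\tilde\Pi = M$ if $M\mathbf{1}_p = 0$; combining the two gives the claim. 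Since $\Theta$ is symmetric, $M$ is symmetric, so it suffices to prove $M\mathbf{1}_p = 0$.

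To prove $M\mathbf{1}_p = 0$, I would use the hypothesis that $\Theta \mathbf{1}_d = 0$, where $\mathbf{1}_d = (\mathbf{1}_p, \mathbf{1}_h)$ in block form. Writing this out in blocks gives the two relations $\Theta_O \mathbf{1}_p + \Theta_{OH}\mathbf{1}_h = 0$ and $\Theta_{HO}\mathbf{1}_p + \Theta_H \mathbf{1}_h = 0$. From the second relation, since $\Theta_H$ is invertible (it is positive definite, as noted in the excerpt because $\Gamma \in \mathcal{C}^d$ forces the principal submatrix structure — or simply because $\Theta_H$ is a principal submatrix of a PSD matrix whose kernel is spanned by $\mathbf 1_d$, which has no vector supported on $H$ alone), we get $\mathbf{1}_h = -\Theta_H^{-1}\Theta_{HO}\mathbf{1}_p$. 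Substituting into the first relation yields $\Theta_O\mathbf{1}_p - \Theta_{OH}\Theta_H^{-1}\Theta_{HO}\mathbf{1}_p = 0$, which is exactly $M\mathbf{1}_p = 0$.

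I do not anticipate a serious obstacle here; this is essentially a direct block computation. The only point requiring a little care is justifying the invertibility of $\Theta_H$ so that $\Theta_H^{-1}$ is well-defined — but this is already guaranteed by the standing assumptions (Definition~\ref{def:HR} and the discussion after Definition~\ref{def:LHR}), or can be argued from scratch: if $v \in \mathbb{R}^h$ with $\Theta_H v = 0$, then the vector $(\mathbf 0_p, v) \in \mathbb{R}^d$ satisfies $(\mathbf 0_p, v)^\top \Theta (\mathbf 0_p, v) = v^\top \Theta_H v = 0$, so $(\mathbf 0_p, v)$ lies in the kernel of the PSD matrix $\Theta$; since that kernel is $\mathrm{span}(\mathbf 1_d)$ and $(\mathbf 0_p, v)$ is a multiple of $\mathbf 1_d$ only if $v = 0$, we conclude $v = 0$. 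With $\Theta_H$ invertible, the block elimination above goes through verbatim, and the proof is complete.
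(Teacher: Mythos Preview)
Your proof is correct and follows essentially the same block computation as the paper: use $\Theta\mathbf 1_d=0$ to obtain $\Theta_O\mathbf 1_p+\Theta_{OH}\mathbf 1_h=0$ and $\Theta_{HO}\mathbf 1_p+\Theta_H\mathbf 1_h=0$, solve the second for $\mathbf 1_h$, and substitute into the first to get $M\mathbf 1_p=0$; symmetry then gives $\tilde\Pi M\tilde\Pi=M$. The paper's proof carries out this same argument and then goes further to show $\dim(\ker M)=1$, which is not needed for the lemma as stated but is used later in the proof of Theorem~\ref{thm:main1}; your explicit justification of the invertibility of $\Theta_H$ is a nice addition that the paper's proof leaves implicit.
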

\begin{proof}
    Since $\Theta\mathbf{1}_{d}=0$, we have
    \begin{align}
        \Theta_O\mathbf{1}_{p} + \Theta_{OH}\mathbf{1}_{h} & = 0, \label{subequa: OOH}\\
        \Theta_{HO}\mathbf{1}_{p} + \Theta_{H}\mathbf{1}_{h} & = 0. \label{subequa: HOH}
    \end{align}
    Consider $\Theta_O- \Theta_{OH}{\Theta_H}^{-1}\Theta_{HO}$, we have 
    \begin{align*}
        (\Theta_O - \Theta_{OH}{\Theta_H}^{-1}\Theta_{HO})\mathbf{1}_{p} & = 
        \Theta_O\mathbf{1}_{p} - \Theta_{OH}{\Theta_H}^{-1}\Theta_{HO}\mathbf{1}_{p}, \\
        &\stackrel{by \eqref{subequa: HOH}}{=}\Theta_O \mathbf{1}_{p}+\Theta_{OH}{\Theta_H}^{-1} (\Theta_{H}\mathbf{1}_{h}), \\
        &= \Theta_O \mathbf{1}_{p}+\Theta_{OH}\mathbf{1}_{h}
         \stackrel{by \eqref{subequa: OOH}}{=} 0.
    \end{align*}
    Thus, $\mathbf{1}_{p} \in \text{ker}(\Theta_O - \Theta_{OH}{\Theta_H}^{-1}\Theta_{HO})$. To complete the proof, we will show that $\text{dim}(\text{ker}(\Theta_O- \Theta_{OH}{\Theta_H}^{-1}\Theta_{HO})) = 1$. Suppose there exist non-zero vector $v \in \text{ker}(\Theta_O - \Theta_{OH}{\Theta_H}^{-1}\Theta_{HO})$, and let $u = {\Theta_H}^{-1}\Theta_{HO} v$. Since $v \ne 0$, $u \ne 0$, and then it follows that $\Theta_O v - \Theta_{OH}{u}  = 0$ and $\Theta_{HO}v - \Theta_{H}{u} = 0$ yielding $       \Theta^\star\begin{pmatrix}
            v \\
            - u
        \end{pmatrix} = 0$.
    Since $\begin{pmatrix}
            v \\
            - u
        \end{pmatrix} \in \text{ker}(\Theta)$, $v = \alpha' \mathbf{1}_p$ for some $\alpha' \in R$, which implies that $\text{dim}(\text{ker}(\Theta_O- \Theta_{OH}{\Theta_H}^{-1}\Theta_{HO})) = 1$.
\end{proof}

\begin{lemma} \label{lemma:ptildep} Let $\Tilde{\Pi} = I_p - \ones/p$  and  ${\Pi} = I_{d}-\textbf{1}_{d}\textbf{1}_{d}^T/d$ with $d = p+h$. For any matrix $M \in \R^{d\times d}$, $\Tilde{\Pi} (\Pi M \Pi)_{1:p, 1:p} \Tilde{\Pi} = \Tilde{\Pi} M_{1:p, 1:p}\Tilde{\Pi}$
\end{lemma}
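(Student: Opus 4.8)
The plan is to expand the centering projection $\Pi$, pass to the top-left $p\times p$ block, and observe that all of the ``correction'' terms produced by $\Pi$ are rank-one matrices that are annihilated when sandwiched by $\tilde{\Pi}$. Concretely, I would first write
\begin{equation*}
\Pi M \Pi = M - \tfrac1d \mathbf{1}_d\mathbf{1}_d^\top M - \tfrac1d M \mathbf{1}_d\mathbf{1}_d^\top + \tfrac1{d^2}\,\mathbf{1}_d\mathbf{1}_d^\top M \mathbf{1}_d\mathbf{1}_d^\top .
\end{equation*}
Since taking the $(1{:}p,1{:}p)$ submatrix is a linear operation, it suffices to understand the top-left block of each of the last three summands.

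The key elementary observation I would use is that for any $v \in \R^d$ one has $(\mathbf{1}_d v^\top)_{1:p,\,1:p} = \mathbf{1}_p (v_{1:p})^\top$ and $(v\,\mathbf{1}_d^\top)_{1:p,\,1:p} = (v_{1:p})\,\mathbf{1}_p^\top$. Applying the first identity with $v = M^\top\mathbf{1}_d$ to $\mathbf{1}_d\mathbf{1}_d^\top M = \mathbf{1}_d(M^\top\mathbf{1}_d)^\top$, the second with $v = M\mathbf{1}_d$ to $M\mathbf{1}_d\mathbf{1}_d^\top = (M\mathbf{1}_d)\mathbf{1}_d^\top$, and noting $\mathbf{1}_d\mathbf{1}_d^\top M \mathbf{1}_d\mathbf{1}_d^\top = (\mathbf{1}_d^\top M\mathbf{1}_d)\,\mathbf{1}_d\mathbf{1}_d^\top$, I obtain
\begin{equation*}
(\Pi M \Pi)_{1:p,\,1:p} = M_{1:p,\,1:p} - \tfrac1d\,\mathbf{1}_p a^\top - \tfrac1d\, b\,\mathbf{1}_p^\top + \tfrac{s}{d^2}\,\ones
\end{equation*}
for suitable vectors $a,b \in \R^p$ and a scalar $s \in \R$ (all obtained by restricting the corresponding column/row sums of $M$).

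Finally I would use that $\tilde{\Pi} = I_p - \ones/p$ is symmetric and satisfies $\tilde{\Pi}\mathbf{1}_p = 0$, hence also $\mathbf{1}_p^\top\tilde{\Pi} = 0$. Each correction term then vanishes upon left and right multiplication by $\tilde{\Pi}$: for instance $\tilde{\Pi}(\mathbf{1}_p a^\top)\tilde{\Pi} = (\tilde{\Pi}\mathbf{1}_p)(a^\top\tilde{\Pi}) = 0$, and analogously $\tilde{\Pi}(b\,\mathbf{1}_p^\top)\tilde{\Pi} = 0$ and $\tilde{\Pi}\,\ones\,\tilde{\Pi} = 0$. Therefore $\tilde{\Pi}(\Pi M\Pi)_{1:p,\,1:p}\tilde{\Pi} = \tilde{\Pi}\,M_{1:p,\,1:p}\,\tilde{\Pi}$, which is the claim. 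I do not expect a genuine obstacle here; the only point requiring slight care is the bookkeeping of which correction term has its column space, respectively row space, contained in $\mathrm{span}(\mathbf{1}_p)$, so that the identity $\tilde{\Pi}\mathbf{1}_p=0$ is applied on the correct side. Everything else is a direct computation.
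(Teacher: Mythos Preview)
Your proof is correct and proceeds by essentially the same idea as the paper's: both arguments rest on the single fact $\tilde{\Pi}\mathbf{1}_p = 0$ to kill the rank-one correction terms introduced by $\Pi$. The paper organizes the computation slightly differently---it writes the block extraction as $\begin{pmatrix}I_p & \mathbf{0}\end{pmatrix}(\cdot)\begin{pmatrix}I_p \\ \mathbf{0}\end{pmatrix}$ and shows directly that $\begin{pmatrix}\tilde{\Pi} & \mathbf{0}\end{pmatrix}\Pi = \begin{pmatrix}\tilde{\Pi} & \mathbf{0}\end{pmatrix}$, thereby avoiding any reference to $M$---but this is a cosmetic rearrangement of the same calculation.
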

\begin{proof}
    Note that $(\Pi M \Pi)_{1:p, 1:p} = \begin{pmatrix}
        I_p & \mathbf{0} 
    \end{pmatrix} 
    \Pi M \Pi \begin{pmatrix}
        I_p \\ 
        \mathbf{0} 
    \end{pmatrix}$. Then it follows that
    \begin{equation}
        \begin{aligned}
        \Tilde{\Pi} (\Pi M \Pi)_{1:p, 1:p} \Tilde{\Pi} & = \Tilde{\Pi} \begin{pmatrix}
        I_p & \mathbf{0} 
    \end{pmatrix} 
    \Pi M \Pi \begin{pmatrix}
        I_p \\ 
        \mathbf{0} 
    \end{pmatrix} \Tilde{\Pi}= \begin{pmatrix}
        \Tilde{\Pi} & \mathbf{0} 
    \end{pmatrix} 
    \Pi M \Pi \begin{pmatrix}
        \Tilde{\Pi} \\ 
        \mathbf{0} 
    \end{pmatrix}.
    \end{aligned}
    \label{result_1}
    \end{equation}
Notice that:
\begin{eqnarray}
\begin{aligned}
\begin{pmatrix}
        \Tilde{\Pi} & \mathbf{0} 
    \end{pmatrix} 
    \Pi &=  \begin{pmatrix}(I_p - \ones/p)(I_p-\ones/d) & (I_p - \ones/p)\textbf{1}_{p}/d\end{pmatrix} \\
    &= \begin{pmatrix}(I_p - \ones/p)(I_p-\ones/p + \ones/(p)-\ones/d) &\mathbf{0}\end{pmatrix} \\
    &= \begin{pmatrix}\tilde{\Pi}(\tilde{\Pi} + \ones/(p)-\ones/d) &\mathbf{0}\end{pmatrix} = \begin{pmatrix} \tilde{\Pi} & \mathbf{0}\end{pmatrix}.
\end{aligned}
\label{result_2}
\end{eqnarray}   
Putting \eqref{result_1} and \eqref{result_2} together, we have the desired result.

\end{proof}

\section{Proof of Theorem~\ref{thm:main1}}

\begin{proof}[Proof of Theorem~\ref{thm:main1}] For notational simplicity, we let $M = -\Gamma^\star/2$. Let $\Pi = I_d - \textbf{1}_d\textbf{1}_d^T/d$. We have from \cite{hen2022} that $(\Pi{M}\Pi)^+ = \Theta^\star$ or equivalently $\Pi{M}\Pi = (\Theta^\star)^+$. Since $\Theta^\star$ has zero row/column sums and thus its row/column spaces are orthogonal to the all-ones vector,  we have by Lemma~\ref{lemma:pseudo} that for any $t > 0$, $({\Theta^\star}+t\textbf{1}_d\textbf{1}_d^T)^{-1} = {\Theta^\star}^{+} + (t\textbf{1}_d\textbf{1}_d^T)^+ = {\Theta^\star}^{+} + \frac{1}{td^2}(\textbf{1}_d\textbf{1}_d^T)$. As ${\Pi}\textbf{1}_d\textbf{1}_d^T{\Pi} = 0$, we have that:
$${\Pi}M{\Pi} = \Pi(\Theta^\star + t\textbf{1}_d\textbf{1}_d^T)^{-1}\Pi.$$
The equation above implies $\tilde{\Pi}[{\Pi}M{\Pi}]_{1:p,1:p}\tilde{\Pi} = \tilde{\Pi}[\Pi(\Theta^\star + t\textbf{1}_d\textbf{1}_d^T)^{-1}\Pi]_{1:p,1:p}\tilde{\Pi}.$
Using Lemma~\ref{lemma:ptildep}, we have that:
\begin{equation}\tilde{\Pi}M_{1:p,1:p}{\tilde{\Pi}} = \tilde{\Pi}[(\Theta^\star + t\textbf{1}_d\textbf{1}_d^T)^{-1}]_{1:p,1:p}\tilde{\Pi}.
\label{eqn:important}
\end{equation}
We will now analyze the term $[(\Theta^\star + t\textbf{1}_d\textbf{1}_d^T)^{-1}]_{1:p,1:p}$ inside \eqref{eqn:important}. From Schur's complement, we have that:
\begin{eqnarray}
\begin{aligned}
    &[(\Theta^\star + t\textbf{1}_d\textbf{1}_d^T)^{-1}]_{1:p,1:p} =\\& \left[\Theta^\star_{O}+t\textbf{1}_p\textbf{1}_p^T - (\Theta_{OH}^\star+t\textbf{1}_{p}\textbf{1}_{h}^T)(\Theta_{H}^\star+t\textbf{1}_{h}\textbf{1}_{h}^T)^{-1}(\Theta_{HO}^\star+t\textbf{1}_{h}\textbf{1}_{p}^T)\right]^{-1}.
\end{aligned}
\label{eqn:temp_1}
\end{eqnarray}
By the Woodbury inversion lemma, we have that:
\begin{eqnarray}
(\Theta^\star_{H}+t\textbf{1}_{h}\textbf{1}_{h}^T)^{-1} = (\Theta_H^\star)^{-1} - (\Theta_H^\star)^{-1}\textbf{1}_{h}\left(\frac{1}{t}+\textbf{1}_h^T(\Theta_H^\star)^{-1}\textbf{1}_h\right)^{-1}\textbf{1}_h^{T}(\Theta_H^\star)^{-1}.
\label{eqn:temp_2}
\end{eqnarray}
Plugging the result of \eqref{eqn:temp_2} into \eqref{eqn:temp_1}, we have that:
\begin{eqnarray*}
\begin{aligned}
   &[(\Theta^\star + t\textbf{1}_d\textbf{1}_d^T)^{-1}]_{1:p,1:p}\\&=\Theta^\star_{O}+t\textbf{1}_p\textbf{1}_p^T - (\Theta_{OH}^\star+t\textbf{1}_{p}\textbf{1}_{h}^T)(\Theta_{H}^\star+t\textbf{1}_{h}\textbf{1}_{h}^T)^{-1}(\Theta_{HO}^\star+t\textbf{1}_{h}\textbf{1}_{p}^T)=  A+B+C
   \end{aligned}
\end{eqnarray*}
where 
\begin{equation*}
\begin{aligned}
A &= \Theta^\star_{O}-\Theta^\star_{OH}(\Theta_H^\star)^{-1}\Theta^\star_{HO},\\
B &=  t\textbf{1}_p\textbf{1}_p^T + t^2\textbf{1}_p\textbf{1}_h^T(\Theta^\star_{H}+t\textbf{1}_{h}\textbf{1}_{h}^T)^{-1}\textbf{1}_h\textbf{1}_p^T,\\
C &= t\textbf{1}_{p}\textbf{1}_{h}^T(\Theta_{H}^\star+t\textbf{1}_{h}\textbf{1}_{h}^T)^{-1}\Theta_{HO}^\star+t\Theta_{OH}^\star(\Theta_{H}^\star+t\textbf{1}_{h}\textbf{1}_{h}^T)^{-1}\textbf{1}_h\textbf{1}_{p}^T\\&+\Theta^\star_{OH}(\Theta_H^\star)^{-1}\textbf{1}_{h}\left(\frac{1}{t}+\textbf{1}_h^T(\Theta_H^\star)^{-1}\textbf{1}_h\right)^{-1}\textbf{1}_h^{T}(\Theta_H^\star)^{-1}\Theta^\star_{HO}.
\end{aligned}
\end{equation*}

From Lemma~\ref{lemma: schur.Comp}, we have that: $\tilde{\Pi}A\tilde{\Pi} = A$. Furthermore, notice that $B$ lies in the all-ones subspace, i.e. $\tilde{\Pi}B\tilde{\Pi}=0$ and is a positive semi-definite matrix for $t > 0$. Thus, the matrix $A + B$ is invertible.  Notice
\begin{eqnarray*}
\begin{aligned}
&\lim_{t \to 0}t\textbf{1}_{p}\textbf{1}_{h}^T(\Theta_{H}^\star+t\textbf{1}_{h}\textbf{1}_{h}^T)^{-1}\Theta_{HO}^\star = \lim_{t \to 0}t\textbf{1}_{p}\textbf{1}_{h}^T(\Theta_{H}^\star)^{-1}\Theta_{HO}^\star = 0, \\
&\lim_{t\to{0}}t\Theta_{OH}^\star(\Theta_{H}^\star+t\textbf{1}_{h}\textbf{1}_{h}^T)^{-1}\textbf{1}_h\textbf{1}_{p}^T = \lim_{t\to{0}}t\Theta_{OH}^\star(\Theta_{H}^\star)^{-1}\textbf{1}_h\textbf{1}_{p}^T = 0,\\
&\lim_{t\to{0}}\Theta^\star_{OH}(\Theta_H^\star)^{-1}\textbf{1}_{h}\left(\frac{1}{t}+\textbf{1}_h^T(\Theta_H^\star)^{-1}\textbf{1}_h\right)^{-1}\textbf{1}_h^{T}(\Theta_H^\star)^{-1}\Theta^\star_{HO} = \lim_{t\to{0}}t\Theta^\star_{OH}(\Theta_H^\star)^{-1}\textbf{1}_{h}(\Theta_H^\star)^{-1}\Theta^\star_{HO} = 0,
\end{aligned}
\end{eqnarray*}
so that $\lim_{t\to 0}C = 0$. Notice on the other hand that $\lim_{t \to 0}A+B \neq 0$. By the Woodbury inversion lemma, we have that: $(A+B+C)^{-1} = (A+B)^{-1}-(A+B)^{-1}C(I + (A+B)^{-1}C)^{-1}(A+B)^{-1}$. Thus:
$$\lim_{t \to 0}\tilde{\Pi}(A+B+C)^{-1}\tilde{\Pi} = \tilde{\Pi}\lim_{t \to 0}(A+B)^{-1}\tilde{\Pi} - 
\lim_{t \to 0}\tilde{\Pi}(A+B)^{-1}C(I + A^{-1}C)^{-1}A^{-1}\tilde{\Pi}.$$
Since $\lim_{t \to 0}{C} = 0$, we have that:
$$ \lim_{t\to \infty}\tilde{\Pi}[(\Theta^\star + t\textbf{1}_d\textbf{1}_d^T)^{-1}]_{1:p,1:p}\tilde{\Pi} = \lim_{t \to 0}\tilde{\Pi}(A+B+C)^{-1}\tilde{\Pi} = \lim_{t \to 0}\tilde{\Pi}(A+B)^{-1}\tilde{\Pi} = \tilde{\Pi}A^{+}\tilde{\Pi} = A^{+}.$$
Here, the second equality follows from noting that the row/column spaces of $A$ and $B$ are orthogonal to one another and so by Lemma~\ref{lemma:pseudo}, $(A+B)^{-1} = A^{+} + B^{+}$. Furthermore, since $B$ is a multiple of all-ones matrix, $\tilde{\Gamma}B^{+}\tilde{\Gamma} = 0$. The last equality follows from Lemma~\ref{lemma: schur.Comp}. Noting that $M_{1:p,1:p} =-\Gamma_O^\star/2$ and plugging in $A^+$ for $\tilde{\Pi}[(\Theta^\star + t\textbf{1}_d\textbf{1}_d^T)^{-1}]_{1:p,1:p}\tilde{\Pi}$ in \eqref{eqn:important}, we conclude that:
$$(\tilde{\Pi}(-{\Gamma_O^\star}/2)\tilde{\Pi})^{+} = \Theta_O^\star - \Theta_{OH}^\star(\Theta_{H}^\star)^{-1}\Theta_{HO}^\star.$$
Taking pseudo-inverses of both sides, we have the desired result. In Lemma~\ref{lemma: schur.Comp}, we also showed that $\Theta_O^\star - \Theta_{OH}^\star(\Theta_{H}^\star)^{-1}\Theta_{HO}^\star = \tilde{\Pi}(\Theta_O^\star - \Theta_{OH}^\star(\Theta_{H}^\star)^{-1}\Theta_{HO}^\star)\tilde{\Pi}$.
 
\end{proof}

\section{Arriving at estimator \eqref{eqn:estimator}}
\label{appendix:arriving}
Recall that $\tilde{\Theta}^\star = (\tilde{\Pi}(-\Gamma_O^\star/2)\tilde{\Pi})^{+}$, where $\tilde{\Pi} = UU^T$. Furthermore, the null-space of $\tilde{\Theta}^\star$ is the subspace $\text{span}(\ones)$. In other words, $UU^T\tilde{\Theta}^\star{UU}^T = \tilde{\Theta}^\star$. We arrive at our estimator by noting that $\tilde{\Theta}^\star$ is the unique minimizer of the convex program:
\begin{equation}
\label{eqn:pop_opt}
\begin{aligned}
    \hat{\Theta} = \argmin_{\Theta \in \mathbb{S}^{p}} &~~-\log{\det}\left(U^T \Theta U\right) - \frac12 \mathrm{tr}(\Theta{\Gamma}^\star_O),\\
    \text{s.t}&~~~\Theta \succeq 0~~,~~\Theta\mathbf{1}_p = 0.
\end{aligned}\end{equation}
To see why that is, first note that the constraint $\Theta \succeq 0$ can be removed since the log-det function forces $U^T\Theta{U}$ to be positive definite and together with the constraint $\Theta\mathbf{1}_p$ forces $\Theta \succeq 0$ and additionally $UU^T\Theta{UU}^T = \Theta$. Note that $\mathrm{tr}(\Theta{\Gamma}^\star_O) = \mathrm{tr}(UU^T\Theta{UU}^T\Gamma^\star_O) = \mathrm{tr}(\Theta{UU}^T\Gamma^\star_OUU^T)$. Thus, an equivalent optimization to \eqref{eqn:pop_opt} is 
\begin{equation}
\label{eqn:pop_opt_equiv}
\begin{aligned}
    \hat{\Theta} = \argmin_{\Theta \in \mathbb{S}^{p}} &~~-\log{\det}\left(U^T \Theta U\right) - \frac12 \mathrm{tr}(\Theta{UU^T}{\Gamma}^\star_OUU^T),\\
    \text{s.t}&~~~\Theta \in \mathrm{span}(\ones)^\perp.
\end{aligned}\end{equation}
Using Lagrangian duality theory, we have that $\hat{\Theta}$ must satisfy for some $t\in \mathbb{R}$
$$-U(U^T\hat{\Theta}U^{-1})U^T - \frac12{UU^T}{\Gamma}^\star_OUU^T + t\ones = 0.$$
Note that $t = 0$ since the first two terms live in the space spanned by the columns of $U$ and the last term lies in the orthogonal subspace. Similarly, $-U(U^T\hat{\Theta}U^{-1})U^T - \frac12{UU^T}{\Gamma}^\star_OUU^T = 0$. Since $UU^T\hat{\Theta}UU^T = \hat{\Theta}$, we appeal to Lemma~\ref{lemma:new_2} to conclude that $\hat{\Theta}^+ = -\frac12 \frac12{UU^T}{\Gamma}^\star_OUU^T$. Some simple manipulations allow us to conclude that $\hat{\Theta} = \tilde{\Theta}^\star$. 

\section{Useful lemmas for proof of consistency}
Our analysis will depend on the following quantities for any pair of subspaces $\Omega,T \subseteq \mathbb{R}^{p \times p}$:
\begin{eqnarray*}
\begin{aligned}
\theta(\Omega) :=\max_{N \in \Omega, \|N\|_\infty = 1}\|N\|_2~~~;~~~ \xi(T) :=\max_{N \in T, \|N\|_2 = 1}\|N\|_\infty.
\end{aligned}
\end{eqnarray*}
When $\Omega = \Omega^\star$ and $T = T^\star$, these quantities are closely connected to the maximal degree $d^\star$ and the incoherence parameter $\inc^\star$ (defined in Section~\ref{sec:fisher_conds}). In particular, \cite{Chand2012} showed that $\mu(\Omega^\star) \in [0,d^\star]$ and $\xi(T^\star) \in [\inc^\star,2\inc^\star]$. 

\subsection{Some auxillary lemmas}

\begin{lemma}[Lemma 3.1 of \cite{Chand2012}] For any tangent spaces $T_1,T_2$ of same dimension with $\rho(T_1,T_2)<1$, we have that: $\xi(T_2) \leq \frac{\xi(T_1)+\rho(T_1,T_2)}{1-\rho(T_1,T_2)}$.
\label{lemma:1}
\end{lemma}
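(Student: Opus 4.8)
The plan is to prove the bound by a direct triangle-inequality argument that uses only the definitions of $\rho$ and $\xi$ together with the elementary fact that $\|M\|_\infty \le \|M\|_2$ for every matrix $M$ (indeed $|M_{ij}| = |\mathbf{e}_i^\top M \mathbf{e}_j| \le \|M\|_2$). Fix an arbitrary $N \in T_2$ with $\|N\|_2 = 1$; it suffices to show $\|N\|_\infty \le (\xi(T_1)+\rho(T_1,T_2))/(1-\rho(T_1,T_2))$ and then take the supremum over all such $N$. Set $M := \proj_{T_1}(N) \in T_1$ and decompose $N = M + (N - M)$.

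First I would control the residual $N-M$. Since $N \in T_2$ we have $\proj_{T_2}(N) = N$, hence $N - M = \proj_{T_2}(N) - \proj_{T_1}(N) = (\proj_{T_2} - \proj_{T_1})(N)$, and by the definition of $\rho$ this gives $\|N-M\|_2 \le \rho(T_1,T_2)\|N\|_2 = \rho(T_1,T_2)$. Consequently $\|N - M\|_\infty \le \|N-M\|_2 \le \rho(T_1,T_2)$, and also $\|M\|_2 \le \|N\|_2 + \|N-M\|_2 \le 1 + \rho(T_1,T_2) \le 1/(1-\rho(T_1,T_2))$, where the last step is the trivial inequality $(1+\rho)(1-\rho)=1-\rho^2\le 1$ (using $\rho(T_1,T_2)<1$).

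Next, since $M \in T_1$, the homogeneity of the ratio defining $\xi(T_1)$ gives $\|M\|_\infty \le \xi(T_1)\|M\|_2 \le \xi(T_1)/(1-\rho(T_1,T_2))$. Combining with the triangle inequality for $\|\cdot\|_\infty$,
\[
\|N\|_\infty \le \|M\|_\infty + \|N-M\|_\infty \le \frac{\xi(T_1)}{1-\rho(T_1,T_2)} + \rho(T_1,T_2) \le \frac{\xi(T_1) + \rho(T_1,T_2)}{1-\rho(T_1,T_2)},
\]
where the final inequality uses $\rho(T_1,T_2) \le \rho(T_1,T_2)/(1-\rho(T_1,T_2))$. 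Taking the maximum over $N \in T_2$ with $\|N\|_2=1$ yields the claim.

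I do not expect a real obstacle here; the one point that needs a little care is that orthogonal projections onto tangent spaces to the low-rank variety are \emph{not} contractions in the spectral norm, so one cannot simply assert $\|M\|_2 = \|\proj_{T_1}(N)\|_2 \le \|N\|_2$. This is exactly why the bound on $\|M\|_2$ is routed through the identity $N - M = (\proj_{T_2}-\proj_{T_1})(N)$, which is available precisely because $N$ lies in $T_2$; everything else is norm-equivalence bookkeeping.
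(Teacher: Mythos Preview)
Your argument is correct. The paper does not supply its own proof of this lemma; it simply cites it as Lemma~3.1 of \cite{Chand2012}. Your elementary triangle-inequality route via $M=\proj_{T_1}(N)$ and the identity $N-M=(\proj_{T_2}-\proj_{T_1})(N)$ is exactly the kind of short self-contained argument one would expect, and your caution about $\proj_{T_1}$ not being a spectral-norm contraction (hence the detour through $\|M\|_2\le 1+\rho\le 1/(1-\rho)$ rather than $\|M\|_2\le 1$) is well placed.
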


\begin{lemma}Consider a tangent space $T'$ of a symmetric matrix with $\rho(T^\star,T') \leq \omega$ with $\omega<1$. Let $\mathcal{C}'$ and $\mathcal{C}^\star$ be the column spaces that form the tangent spaces $T'$ and $T^\star$ respectively. Then, we have that:
$\|\proj_{\mathcal{C}'}-\proj_{\mathcal{C}^\star}\|_2 \leq \omega$. 
\label{lemma:2}
\end{lemma}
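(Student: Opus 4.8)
## Proof proposal for Lemma~\ref{lemma:2}

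The plan is to produce a single test matrix $N$ with $\|N\|_2 \le 1$ for which $\|(\proj_{T'}-\proj_{T^\star})(N)\|_2$ already equals $\|\proj_{\mathcal{C}'}-\proj_{\mathcal{C}^\star}\|_2$; the bound then follows at once from the definition of $\rho$, together with the symmetry $\rho(T',T^\star)=\rho(T^\star,T')\le\omega$. Write $P^\star=\proj_{\mathcal{C}^\star}$, $P'=\proj_{\mathcal{C}'}$, $Q^\star=I_p-P^\star$ and $Q'=I_p-P'$. Two elementary facts will be used. First, for a symmetric matrix with column space $\mathcal{C}$ and orthogonal projector $P$, the orthogonal complement of its tangent space is exactly $\{(I_p-P)M(I_p-P)\}$, so $\proj_{T^\perp}(M)=(I_p-P)M(I_p-P)$, and $\proj_{T}(M)=M$ as soon as the row space of $M$ is contained in $\mathcal{C}$. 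Second, $\rho(T^\star,T')<1$ forces $\dim\mathcal{C}^\star=\dim\mathcal{C}'$ (unequal dimensions would put a unit-norm element of the larger tangent space in the kernel of the projection onto the smaller, forcing $\rho\ge1$), and we may assume this common dimension $r$ satisfies $r<p$, since otherwise $P'=P^\star=I_p$ and the claim is trivial.

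Next I would invoke standard facts about principal angles. Put $\sigma:=\|\proj_{\mathcal{C}'}-\proj_{\mathcal{C}^\star}\|_2=\|P'-P^\star\|_2$. Since $\mathcal{C}^\star$ and $\mathcal{C}'$ have the same dimension, $\sigma=\sin\theta_{\max}$ where $\theta_{\max}$ is the largest principal angle between them, and moreover $\cos\theta_{\max}=\min_{u\in\mathcal{C}^\star,\,\|u\|_2=1}\|P'u\|_2$. Hence some unit vector $u\in\mathcal{C}^\star$ satisfies $\|Q'u\|_2^2=1-\|P'u\|_2^2=\sin^2\theta_{\max}=\sigma^2$. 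Because $r<p$, I can also pick a unit vector $z\in\mathcal{C}'^{\perp}$, so that $Q'z=z$.

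Finally I would take $N:=zu^\top$, a rank-one matrix with $\|N\|_2=\|z\|_2\|u\|_2=1$, ``row-aligned'' with $\mathcal{C}^\star$ and ``column-aligned'' with $\mathcal{C}'^{\perp}$. Since the row space of $N$ is $\mathrm{span}(u)\subseteq\mathcal{C}^\star$, we have $N\in T^\star$ and $\proj_{T^\star}(N)=N$, so
\[
(\proj_{T'}-\proj_{T^\star})(N)=\proj_{T'}(N)-N=-\proj_{{T'}^{\perp}}(N)=-Q'NQ'=-(Q'z)(Q'u)^\top=-z(Q'u)^\top,
\]
a rank-one matrix of spectral norm $\|z\|_2\|Q'u\|_2=\sigma$. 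Therefore
\[
\|\proj_{\mathcal{C}'}-\proj_{\mathcal{C}^\star}\|_2=\sigma\le\max_{\|M\|_2\le1}\|(\proj_{T'}-\proj_{T^\star})(M)\|_2=\rho(T',T^\star)=\rho(T^\star,T')\le\omega,
\]
as claimed. The only points requiring care are the two linear-algebra inputs — the identification of $\proj_{T^\perp}$ with the two-sided compression by $I_p-P$, and the principal-angle identities $\|P'-P^\star\|_2=\sin\theta_{\max}$ and $\cos\theta_{\max}=\min_{u\in\mathcal{C}^\star,\,\|u\|_2=1}\|P'u\|_2$; both are standard (and consistent with the tangent-space calculus of \cite{Chand2012}), so the substance of the argument is merely checking that the cleverly chosen rank-one $N$ lies in $T^\star$ while being maximally displaced by $\proj_{{T'}^{\perp}}$.
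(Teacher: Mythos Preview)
Your proof is correct and takes a genuinely different route from the paper. The paper argues via singular-value identities: it writes $\|\proj_{\mathcal{C}'}-\proj_{\mathcal{C}^\star}\|_2$ in terms of the smallest relevant singular value of $\proj_{{\mathcal{C}'}^\perp}\proj_{{\mathcal{C}^\star}^\perp}$, then uses the tensor structure $\proj_{T^\perp}(M)=QMQ$ to identify the singular values of $\proj_{{T'}^\perp}\proj_{{T^\star}^\perp}$ as products of those of $\proj_{{\mathcal{C}'}^\perp}\proj_{{\mathcal{C}^\star}^\perp}$, and finally converts back to $\|\proj_{T'}-\proj_{T^\star}\|_2$. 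You instead produce an explicit rank-one witness $N=zu^\top\in T^\star$ whose image under $\proj_{{T'}^\perp}$ has spectral norm exactly $\|\proj_{\mathcal{C}'}-\proj_{\mathcal{C}^\star}\|_2$, so the inequality drops out of the definition of $\rho$. Your argument is more elementary and self-contained (it needs only the principal-angle identity $\|P'-P^\star\|_2=\sin\theta_{\max}$ and the formula $\proj_{T^\perp}(M)=QMQ$), whereas the paper's chain of singular-value equalities is terser but leans on identities that are only sketched. Either approach is fine; yours has the advantage of making the extremal test matrix explicit.
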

\begin{proof}[Proof of Lemma~\ref{lemma:2}]
Since $\omega<1$, $T^\star$ and $T'$ are of the same dimension. Let $\sigma_{s}(\cdot)$ be the $s$-th largest singular value of the input matrix. Notice that 
\begin{eqnarray*}
\begin{aligned}
\|\proj_{\mathcal{C'}}-\proj_{\mathcal{C}^\star}\|_2 = \|\proj_{\mathcal{C'}^\perp}-\proj_{{\mathcal{C}^\star}^\perp}\|_2 = \sqrt{1-\sigma_{p-k}\left(\proj_{{\mathcal{C}'}^\perp}\proj_{{\mathcal{C}^\star}^\perp}\right)^2} &= \sqrt{1-\sigma_{(p-k)^2}\left(\proj_{{{T}'}^\perp}\proj_{{T^\star}^\perp}\right)}\\
&\leq \sqrt{1-\sigma_{(p-k)^2}\left(\proj_{{{T}'}^\perp}\proj_{{T^\star}^\perp}\right)^2}\\
&= \|\proj_{{T'}^\perp}-\proj_{{T^\star}^\perp}\|_2=\|\proj_{{T'}}-\proj_{{T^\star}}\|_2.
\end{aligned}
\end{eqnarray*}

\end{proof}

{\subsection{Lemmas to account for the zero row-sum constraint}}
\label{sec:lemmas_zero_row_sum_constraint}
To deal with the additional dual parameter $t\ones$ introduced by the zero row-sum constraint $(S-L)\textbf{1}_p$, our analysis requires the following lemmas.

\begin{lemma}Let $\mathcal{C}_1,\mathcal{C}_2 \subseteq \mathbb{R}^p$ be a pair of subspaces. Then, for any $z\in \mathbb{R}^p$:
\begin{eqnarray*}
    \begin{aligned}\max_{v \in \mathcal{C}_1 \oplus \mathcal{C}_2, \|v\|_2=1} \langle z,v\rangle &\leq 2\min\left\{\max_{v \in \mathcal{C}_1, \|v\|_2=1} \langle z,v\rangle, \max_{v \in \mathcal{C}_2, \|v\|_2=1} \langle z,v\rangle\right\}\\&+\max\left\{\max_{v \in \mathcal{C}_1, \|v\|_2=1} \langle z,v\rangle, \max_{v \in \mathcal{C}_2, \|v\|_2=1} \langle z,v\rangle\right\}.
    \end{aligned}
    \end{eqnarray*}
\label{lemma:3}
\end{lemma}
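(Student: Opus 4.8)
The plan is to recognize both sides of the claimed inequality as Euclidean norms of orthogonal projections of $z$, and then conclude by the Pythagorean identity together with a trivial scalar estimate; the factor $2$ in front of the minimum (where $1$ would already do) simply builds in slack for the way this bound is later chained with the tangent-space perturbation and incoherence estimates. Throughout I read $\mathcal{C}_1\oplus\mathcal{C}_2$ as an orthogonal direct sum, $\mathcal{C}_1\perp\mathcal{C}_2$, so that $\proj_{\mathcal{C}_1\oplus\mathcal{C}_2}=\proj_{\mathcal{C}_1}+\proj_{\mathcal{C}_2}$; this is the regime in which the inequality holds with the stated absolute constants. First I would record the elementary variational identity that for any subspace $\mathcal{C}\subseteq\mathbb{R}^p$ and any $z$ one has $\max_{v\in\mathcal{C},\,\|v\|_2=1}\langle z,v\rangle=\|\proj_{\mathcal{C}}(z)\|_2$: the bound ``$\le$'' is Cauchy--Schwarz after using $\langle z,v\rangle=\langle\proj_{\mathcal{C}}(z),v\rangle$ for $v\in\mathcal{C}$, and equality holds at $v=\proj_{\mathcal{C}}(z)/\|\proj_{\mathcal{C}}(z)\|_2$ when $\proj_{\mathcal{C}}(z)\neq 0$ (the case $\proj_{\mathcal{C}}(z)=0$ being trivial, both sides vanishing). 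Applying this to $\mathcal{C}=\mathcal{C}_1$, $\mathcal{C}=\mathcal{C}_2$ and $\mathcal{C}=\mathcal{C}_1\oplus\mathcal{C}_2$ reduces the lemma to the scalar statement
\[
\|\proj_{\mathcal{C}_1\oplus\mathcal{C}_2}(z)\|_2\;\le\;2\min\{a,b\}+\max\{a,b\},\qquad a:=\|\proj_{\mathcal{C}_1}(z)\|_2,\quad b:=\|\proj_{\mathcal{C}_2}(z)\|_2 .
\]

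Next I would exploit orthogonality: since the ranges of $\proj_{\mathcal{C}_1}$ and $\proj_{\mathcal{C}_2}$ are orthogonal, $\proj_{\mathcal{C}_1\oplus\mathcal{C}_2}(z)=\proj_{\mathcal{C}_1}(z)+\proj_{\mathcal{C}_2}(z)$ is an orthogonal decomposition, whence the Pythagorean theorem gives $\|\proj_{\mathcal{C}_1\oplus\mathcal{C}_2}(z)\|_2=\sqrt{a^2+b^2}$. It then remains to check $\sqrt{a^2+b^2}\le 2\min\{a,b\}+\max\{a,b\}$ for $a,b\ge 0$, which follows from $\sqrt{a^2+b^2}\le a+b=\max\{a,b\}+\min\{a,b\}\le\max\{a,b\}+2\min\{a,b\}$; if a tight constant were wanted one could instead square and note that the cross terms on the right-hand side are nonnegative.

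I do not expect a genuine obstacle: once both sides are identified as projection norms the argument is a one-line Pythagoras computation, and the only points requiring care are bookkeeping. The first is the meaning of $\oplus$ --- with a merely transverse (non-orthogonal) direct sum the inequality fails with universal constants, as one sees by taking $\mathcal{C}_2$ a line almost inside $\mathcal{C}_1$ while $z$ is orthogonal to $\mathcal{C}_1$, so that $a=0$ and $b$ is small yet $\mathcal{C}_1\oplus\mathcal{C}_2$ contains $z$; hence the orthogonal reading is the intended one. The second is the degenerate case $\proj_{\mathcal{C}_j}(z)=0$, covered by the convention in the variational identity above. The deliberately loose factor $2$ is precisely what makes the bound convenient downstream, where it is combined with the tangent-space estimates of Lemmas~\ref{lemma:1}--\ref{lemma:2} and the incoherence couplings used in the proof of Theorem~\ref{thm:main}.
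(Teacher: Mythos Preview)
Your reading of $\oplus$ as an orthogonal direct sum is not the paper's: look at how the lemma is actually invoked downstream --- in the proof of Lemma~\ref{lemma:4} with $\mathcal{C}_1=\mathcal{C}'$ (the column space of a rank-$h$ matrix close to $L^\star$) and $\mathcal{C}_2=\mathrm{span}(\mathbf{1}_p)$, in Lemma~\ref{lemma:6} with $\mathcal{C}_1=\mathcal{C}'$ and $\mathcal{C}_2=\mathcal{C}^\star$ two \emph{nearby} column spaces, and in Lemma~\ref{lemma:3p} again with $\mathcal{C}'$ and $\mathrm{span}(\mathbf{1}_p)$. None of these pairs is orthogonal; the parameter $\kappa^\star=\|\proj_{{T^\star}^\perp}(\ones/p)\|_2$ exists precisely because $\mathbf{1}_p$ is neither in nor orthogonal to the column space of $L^\star$. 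So your Pythagorean argument, while internally correct under the extra hypothesis you impose, does not prove the lemma in the generality the paper requires.

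The paper instead attacks the non-orthogonal case by a Gram--Schmidt step inside each two-dimensional slice: for unit $u_1\in\mathcal{C}_1$, $u_2\in\mathcal{C}_2$ it sets $u_3=u_2-(u_2^Tu_1)u_1\perp u_1$, rewrites $v=c_1u_1+c_2u_3$, and bounds $|u_3^Tz|\le |u_2^Tz|+|u_1^Tz|$, which is where the factor $2$ comes from. That said, your counterexample is genuine, and it in fact exposes a slip in the paper's own argument: since $\|u_3\|_2=\sqrt{1-(u_1^Tu_2)^2}<1$, the implicit identification $\|v\|_2=\sqrt{c_1^2+c_2^2}$ used in the paper's third displayed line is incorrect --- one only has $\|v\|_2=\sqrt{c_1^2+c_2^2\|u_3\|_2^2}\le\sqrt{c_1^2+c_2^2}$, and the inequality goes the wrong way. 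Concretely, with $\mathcal{C}_1=\mathrm{span}(e_1)$, $\mathcal{C}_2=\mathrm{span}(\cos\theta\,e_1+\sin\theta\,e_2)$ and $z=e_2$, the left-hand side equals $1$ while the right-hand side equals $\sin\theta$. In short: your proof is correct for the hypothesis you assume, but that hypothesis does not match the paper's usage; conversely, the paper's intended non-orthogonal statement (and its proof) is flawed as written.
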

\begin{proof}[Proof of Lemma~\ref{lemma:3}] 
Suppose without loss of generality that $\max_{\substack{u_1 \in \mathcal{C}_1,\|u_1\|_2=1}} u_1^Tz \leq \max_{\substack{u_2 \in \mathcal{C}_2,\|u_2\|_2=1}}u_2^Tz$. Thus 
\begin{eqnarray*}
\begin{aligned}
\max_{v \in \mathcal{C}_1 \oplus \mathcal{C}_2, \|v\|_2=1}\langle z,v\rangle &=\max_{\substack{u_1 \in \mathcal{C}_1,u_2\in\mathcal{C}_2,\|u_1\|_2=\|u_2\|_2 = 1\\v = c_1u_1+c_2u_2}}  |v^Tz|/\|v\|_2, \\
&= \max_{\substack{u_1 \in \mathcal{C}_1,u_2\in\mathcal{C}_2,\|u_1\|_2=\|u_2\|_2 = 1\\ u_3 = u_2 - (u_2^Tu_1)u_1 \\ v = c_1u_1+c_2u_3}}  |v^Tz|/\|v\|_2, \\
&\leq \max_{\substack{u_1 \in \mathcal{C}_1,u_2\in\mathcal{C}_2,\|u_1\|_2=\|u_2\|_2 = 1\\ u_3 = u_2 - (u_2^Tu_1)u_1 \\ v = c_1u_1+c_2u_3}}  \frac{|c_1|}{\sqrt{c_1^2+c_2^2}}|u_1^Tz| + \frac{|c_2|}{\sqrt{c_1^2+c_2^2}}|u_3^Tz|, \\
&\leq \max_{\substack{u_1 \in \mathcal{C}_1,\|u_1\|_2=1}} 2|u_1^Tz|+\max_{\substack{u_2 \in \mathcal{C}_2,\|u_2\|_2=1}}|u_2^Tz|.\\
    \end{aligned}
\end{eqnarray*}
\end{proof}

\begin{lemma}Let $Z \in T' \oplus \mathrm{span}(\ones)$ with $\rho(T',T^\star)\leq\omega$ and $\|Z\|_2 = 1$. Then, $1+2(\kappa^\star+\omega)\geq \|\proj_{T'}(Z)\|_2 \geq 1-2(\kappa^\star+\omega)$ and thus $\|\proj_{{T'}^\perp}(Z)\|_2 \leq 2({\kappa^\star}+\omega)$.
\label{lemma:4}
\end{lemma}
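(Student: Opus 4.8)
The plan is to decompose an arbitrary unit-norm $Z \in T' \oplus \mathrm{span}(\ones)$ into its $T'$-component and the part coming from $\ones$, and then control the cross terms using $\kappa^\star$ and $\omega$. Write $Z = W + t\,\ones/p$ where $W \in T'$ and $t \in \mathbb{R}$; note $\ones/p$ has unit Frobenius norm (since $\|\ones/p\|_2^2 = \mathrm{tr}(\ones\ones/p^2) = \mathrm{tr}(\ones/p) = 1$, using that $\ones = \mathbf{1}_p\mathbf{1}_p^\top$ is idempotent up to the factor $p$). The quantity $\|\proj_{{T'}^\perp}(Z)\|_2 = |t| \cdot \|\proj_{{T'}^\perp}(\ones/p)\|_2$, so the key is to bound $\|\proj_{{T'}^\perp}(\ones/p)\|_2$ and also $|t|$.

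First I would bound $\|\proj_{{T'}^\perp}(\ones/p)\|_2 \leq \kappa^\star + \omega$. Indeed, $\kappa^\star = \|\proj_{{T^\star}^\perp}(\ones/p)\|_2$ by definition, and by the triangle inequality
\[
\|\proj_{{T'}^\perp}(\ones/p)\|_2 \leq \|\proj_{{T^\star}^\perp}(\ones/p)\|_2 + \|(\proj_{{T'}^\perp} - \proj_{{T^\star}^\perp})(\ones/p)\|_2 \leq \kappa^\star + \rho(T',T^\star) \leq \kappa^\star + \omega,
\]
where the middle step uses $\|\proj_{{T'}^\perp} - \proj_{{T^\star}^\perp}\|_2 = \|\proj_{T'} - \proj_{T^\star}\|_2 = \rho(T',T^\star)$ and $\|\ones/p\|_2 = 1$. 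Next, since $\proj_{{T'}^\perp}(Z) = t\,\proj_{{T'}^\perp}(\ones/p)$ and $\proj_{{T'}}(Z) = W + t\,\proj_{T'}(\ones/p)$, I would get a handle on $|t|$: from $\|Z\|_2 = 1$ and the fact that $W$ and $t\ones/p$ need not be orthogonal, I'd argue $|t| \le 1 + |t|\|\proj_{{T'}}(\ones/p)\|_2$ is the wrong direction, so instead use $\|Z\|_2 \ge \|\proj_{{T'}^\perp}(Z)\|_2 = |t|\|\proj_{{T'}^\perp}(\ones/p)\|_2$ — this only bounds $|t|$ from above when $\kappa^\star + \omega$ is bounded below, which we don't want to assume. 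The cleaner route: observe $1 = \|Z\|_2 \ge \|\proj_{T'}(Z)\|_2$ trivially is also not enough. I would instead write $\|\proj_{T'}(Z) - Z\|_2 = \|\proj_{{T'}^\perp}(Z)\|_2 = |t|\,\|\proj_{{T'}^\perp}(\ones/p)\|_2 \le |t|(\kappa^\star+\omega)$, and separately bound $|t|$: since $Z - W = t\ones/p$ with $W \in T'$, projecting onto ${T'}^\perp$ gives nothing new, but projecting the identity $\langle Z, \ones/p\rangle = \langle W,\ones/p\rangle + t$ and using Cauchy--Schwarz $|\langle Z,\ones/p\rangle|\le 1$ together with $|\langle W, \ones/p\rangle| = |\langle W, \proj_{T'}(\ones/p)\rangle| \le \|W\|_2 \le \|Z\|_2 + |t|\|\ones/p\|_2$... this is getting circular.

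The main obstacle, and the step I expect to require the most care, is precisely controlling $|t|$: decomposing $Z$ into $W + t\ones/p$ does \emph{not} give an orthogonal decomposition, so $\|Z\|_2^2 \ne \|W\|_2^2 + t^2$. The resolution is to decompose orthogonally instead: write $Z = \proj_{T'}(Z) + \proj_{{T'}^\perp}(Z)$, and note $\proj_{{T'}^\perp}(Z) = \proj_{{T'}^\perp}(W + t\ones/p) = t\,\proj_{{T'}^\perp}(\ones/p)$, hence $\|\proj_{{T'}^\perp}(Z)\|_2 = |t|\,\|\proj_{{T'}^\perp}(\ones/p)\|_2$. Meanwhile $1 = \|Z\|_2^2 = \|\proj_{T'}(Z)\|_2^2 + \|\proj_{{T'}^\perp}(Z)\|_2^2$, so $\|\proj_{{T'}^\perp}(Z)\|_2 \le 1$ immediately — but to get the sharper factor $2(\kappa^\star+\omega)$, I need $|t| \le 2$. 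For that, I would use that $Z \in T' \oplus \mathrm{span}(\ones)$ means $Z = W + t\ones/p$, and the angle between $\mathrm{span}(\ones/p)$ and $T'$ is governed by $\|\proj_{T'}(\ones/p)\|_2 \le 1$; writing $\ones/p = \proj_{T'}(\ones/p) + \proj_{{T'}^\perp}(\ones/p)$ and substituting, $Z = (W + t\proj_{T'}(\ones/p)) + t\proj_{{T'}^\perp}(\ones/p)$ is the orthogonal decomposition, giving $t^2\|\proj_{{T'}^\perp}(\ones/p)\|_2^2 \le 1$, i.e. $|t| \le 1/\|\proj_{{T'}^\perp}(\ones/p)\|_2$. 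Then $\|\proj_{{T'}^\perp}(Z)\|_2 = |t|\,\|\proj_{{T'}^\perp}(\ones/p)\|_2 \le 1$ again — so the factor-$2$ bound must come from a hypothesis I'm missing, likely that the relevant $t$ in the analysis of \eqref{eqn:estimator} already satisfies $|t| \le 2$ by construction (it is a dual variable with an a priori bound), or from combining with a normalization where $\|W\|_2$ is close to $1$. I would check the surrounding lemmas for the bound on $|t|$ and then the statement follows: $\|\proj_{{T'}^\perp}(Z)\|_2 = |t|\,\|\proj_{{T'}^\perp}(\ones/p)\|_2 \le 2(\kappa^\star + \omega)$, and consequently $\|\proj_{T'}(Z)\|_2 \in [1 - 2(\kappa^\star+\omega), 1 + 2(\kappa^\star+\omega)]$ by the reverse/forward triangle inequality $\big|\|\proj_{T'}(Z)\|_2 - \|Z\|_2\big| \le \|\proj_{{T'}^\perp}(Z)\|_2$.
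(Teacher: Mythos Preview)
Your approach has a genuine gap, and it is exactly where you sense it: there is no reason to expect $|t|\le 2$. The lemma is stated for an \emph{arbitrary} unit-spectral-norm $Z\in T'\oplus\mathrm{span}(\ones)$; $t$ is not a dual variable here and carries no a priori bound from the surrounding analysis. In fact, in the interesting regime where $\|\proj_{{T'}^\perp}(\ones/p)\|_2$ is small (i.e.\ $\kappa^\star+\omega\ll 1$), the coefficient $|t|$ in the decomposition $Z=W+t\,\ones/p$ can be of order $1/\|\proj_{{T'}^\perp}(\ones/p)\|_2$, so your product $|t|\cdot\|\proj_{{T'}^\perp}(\ones/p)\|_2$ only yields the trivial bound $\le 1$. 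The additive matrix-level splitting $Z=W+t\,\ones/p$ cannot by itself produce the factor $2(\kappa^\star+\omega)$.

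The paper's argument is structurally different and never bounds $|t|$. It uses that $T'\oplus\mathrm{span}(\ones)$ is itself a tangent space to a low-rank variety, with column space $\tilde{\mathcal C}=\mathcal C'\oplus\mathrm{span}(\mathbf 1_p)$ and row space $\tilde{\mathcal R}=\mathcal R'\oplus\mathrm{span}(\mathbf 1_p)$. Writing $Z=\proj_{\tilde{\mathcal C}}Z\proj_{\tilde{\mathcal R}^\perp}+Z\proj_{\tilde{\mathcal R}}$ and using $\proj_{{T'}^\perp}(Z)=\proj_{{\mathcal C'}^\perp}Z\proj_{{\mathcal R'}^\perp}$ gives
\[
\|\proj_{{T'}^\perp}(Z)\|_2\;\le\;\|\proj_{{\mathcal C'}^\perp}\proj_{\tilde{\mathcal C}}\|_2+\|\proj_{\tilde{\mathcal R}}\proj_{{\mathcal R'}^\perp}\|_2.
\]
The factor of $2$ comes from summing the column-side and row-side terms, and each operator norm is bounded (via Lemma~\ref{lemma:3} and Lemma~\ref{lemma:2}) by $\|\proj_{{\mathcal C'}^\perp}(\mathbf 1_p/\sqrt p)\|_2\le\kappa^\star+\omega$. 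The key point is that the spectral-norm estimate exploits the \emph{multiplicative} structure $\proj_{{T'}^\perp}(\cdot)=\proj_{{\mathcal C'}^\perp}(\cdot)\proj_{{\mathcal R'}^\perp}$, which makes the principal-angle quantity $\|\proj_{{\mathcal C'}^\perp}\proj_{\tilde{\mathcal C}}\|_2$ appear directly, independently of how large the scalar in front of $\ones/p$ is. Your additive decomposition at the matrix level loses exactly this structure. The sandwich on $\|\proj_{T'}(Z)\|_2$ then follows from $\big|\,\|\proj_{T'}(Z)\|_2-\|Z\|_2\,\big|\le\|\proj_{{T'}^\perp}(Z)\|_2$, which you already have.
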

\begin{proof}[Proof of Lemma~\ref{lemma:4}] %Reverse triangle inequality yields the bound $\|Z_2\|_2 \leq 1+\|Z_1\|$. Some algebraic manipulations yield $\|Z_1+\proj_{T'}(Z_2)\|_2 \leq1$. Using reverse triangle inequality and the definition of $\kappa$, we have that $\|Z_1\|_2 \leq 1+(1-\kappa+\min\{\kappa,\xi(T^\star)\}/2)\|Z_2\|_2$. Plugging this into the earlier bound, we have that $\|Z_2\|_2 \leq \frac{1}{\kappa-\min\{\kappa,\xi(T^\star)\}/2}$ and consequently $\|Z_1\|_2 \leq \frac{1}{\kappa-\min\{\kappa,\xi(T^\star)\}/2}$. \\
Note that $\|Z\|_2+\|\proj_{{T'}^\perp}(Z)\|_2\geq \|\proj_{T'}(Z)\|_2 \geq \|Z\|_2-\|\proj_{{T'}^\perp}(Z)\|_2$. Let $T'$ be a tangent space with associated row and column spaces $\mathcal{C}'$ and $\mathcal{R}'$. Let $\tilde{C}= \mathcal{C}' \oplus \mathrm{span}(\textbf{1}_p)$ and $\tilde{R}= \mathcal{R}' \oplus \mathrm{span}(\textbf{1}_p)$. Since $Z \in T' \oplus \mathrm{span}(\textbf{1}_p)$, it is straightforward to show that $Z = \proj_{\tilde{\mathcal{C}}}Z\proj_{\tilde{\mathcal{R}}^\perp} + Z\proj_{\tilde{\mathcal{R}}}$. Therefore, we have that $\proj_{{T'}^\perp}(Z) = \proj_{{\mathcal{C}'}^\perp}\left[\proj_{\tilde{\mathcal{C}}}Z\proj_{\tilde{\mathcal{R}}^\perp} + Z\proj_{\tilde{\mathcal{R}}}\right]\proj_{{\mathcal{R}'}^\perp}$.  Thus, $\|\proj_{{T'}^\perp}(Z)\|_2 \leq \|\proj_{{\mathcal{C}'}^\perp}\proj_{\tilde{\mathcal{C}}}\|_2 + \|\proj_{\tilde{\mathcal{R}}}\proj_{{\mathcal{R}'}^\perp}\|_2.$ Letting $\mathcal{C}_1 = \mathcal{C}'$ and $\mathcal{C}_2 = \mathrm{span}(\textbf{1}_p)$, we appeal to Lemma~\ref{lemma:3} to conclude that:

\begin{eqnarray*}
\begin{aligned}
    \max_{v\in \tilde{\mathcal{C}},\|v\|_2=1} \|\proj_{{\mathcal{C}'}^\perp}(v)\|_2  &\leq \max_{\substack{z\in {\mathcal{C}'}^\perp\\\|z\|_2=1}}\max_{\substack{u_1\in \mathcal{C}',\|u_1\|_2=1}}2|\langle z, u_1\rangle| + \max_{\substack{z\in {\mathcal{C}'}^\perp\\\|z\|_2=1}}\max_{\substack{u_2\in \mathrm{span}(\textbf{1}),\|u_2\|_2=1}}|\langle z, u_2\rangle|\\&= \|\proj_{{\mathcal{C}'}^\perp}(\textbf{1}_p/\sqrt{p})\|_2.
\end{aligned}
\end{eqnarray*}
Again, appealing to Lemma~\ref{lemma:3}, 
\begin{eqnarray*}
\begin{aligned}
    \max_{v\in {\mathcal{C}'}^\perp, \|v\|_2=1} \|\proj_{\tilde{\mathcal{C}}}(z)\|_2 &\leq \max_{\substack{z\in {\mathcal{C}'}^\perp\\\|z\|_2=1}}\max_{\substack{u_1\in \mathcal{C}',\|u_1\|_2=1}}2|\langle z, u_1\rangle| + \max_{\substack{z\in {\mathcal{C}'}^\perp\\\|z\|_2=1}}\max_{\substack{u_2\in \mathrm{span}(\textbf{1}),\|u_2\|_2=1}}|\langle z, u_2\rangle|\\&= \|\proj_{{\mathcal{C}'}^\perp}(\textbf{1}_p/\sqrt{p})\|_2.
\end{aligned}
\end{eqnarray*}
So we have concluded that $\|\proj_{{\mathcal{C}'}^\perp}\proj_{\tilde{\mathcal{C}}}\|_2 \leq  \|\proj_{{\mathcal{C}'}^\perp}(\textbf{1}/\sqrt{p})\|_2$. Thus, appealing to Lemma~\ref{lemma:2}
, $\|\proj_{{\mathcal{C}'}^\perp}\proj_{\tilde{\mathcal{C}}}\|_2 \leq \kappa^\star + \omega$. Similarly, we have that: $\|\proj_{{\mathcal{R}'}^\perp}\proj_{\tilde{\mathcal{R}}}\|_2 \leq  \|\proj_{{\mathcal{R}'}^\perp}(\textbf{1}/\sqrt{p})\|_2$ and thus $\|\proj_{{\mathcal{R}'}^\perp}\proj_{\tilde{\mathcal{R}}}\|_2 \leq \kappa^\star + \omega$. Putting things together, we have the desired bound. 

\end{proof}

\begin{lemma}Let $T' \subseteq \mathbb{R}^{p \times p}$ be a tangent space to a low-rank variety. Then,\\ $\|\proj_{({T' \oplus \mathrm{span}(\ones))}^\perp}(L)\|_2\leq \|\proj_{{T'}^\perp}(L)\|_2$ for any matrix $L \in \mathbb{R}^{p \times p}$.
\label{lemma:5}
\end{lemma}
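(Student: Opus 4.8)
The plan is to reduce the claim to a statement about orthogonal projections onto nested subspaces. Write $W' := T' \oplus \mathrm{span}(\ones)$. Since $T' \subseteq W'$, the orthogonal complements satisfy ${W'}^\perp \subseteq {T'}^\perp$, and hence $\proj_{{W'}^\perp} = \proj_{{W'}^\perp}\proj_{{T'}^\perp}$ (projecting onto a subspace of ${T'}^\perp$ factors through $\proj_{{T'}^\perp}$). Applying this identity to an arbitrary $L \in \mathbb{R}^{p\times p}$ and using that an orthogonal projection is a contraction in spectral norm, $\|\proj_{{W'}^\perp}(L)\|_2 = \|\proj_{{W'}^\perp}(\proj_{{T'}^\perp}(L))\|_2 \leq \|\proj_{{T'}^\perp}(L)\|_2$, which is exactly the desired inequality.

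The one point worth spelling out is the nesting ${W'}^\perp \subseteq {T'}^\perp$ and the induced factorization of projections. This is elementary linear algebra: if $U \subseteq V$ are subspaces of a finite-dimensional inner product space, then $V^\perp \subseteq U^\perp$, and for any $x$, $\proj_{V^\perp}(x) \in V^\perp \subseteq U^\perp$, so $\proj_{U^\perp}$ fixes $\proj_{V^\perp}(x)$; writing $x = \proj_{U^\perp}(x) + \proj_{U}(x)$ and noting $\proj_{V^\perp}(\proj_U(x)) $ need not vanish, the cleaner route is simply the operator identity $\proj_{V^\perp} = \proj_{V^\perp}\proj_{U^\perp}$, valid because $\mathrm{range}(\proj_{V^\perp}) = V^\perp \subseteq U^\perp = \mathrm{range}(\proj_{U^\perp})$ and $\proj_{U^\perp}$ acts as the identity on its own range. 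I would state this as a one-line observation rather than a displayed lemma.

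I do not anticipate any real obstacle here; the only mild subtlety is making sure $T' \oplus \mathrm{span}(\ones)$ is genuinely a subspace of $\mathbb{R}^{p\times p}$ (the direct sum notation already presupposes this, and it is a subspace regardless of whether the sum is direct), so the argument does not require $\ones \notin T'$. The contraction property $\|\proj_{H}(M)\|_2 \leq \|M\|_2$ for an orthogonal projection $\proj_H$ on $\mathbb{S}^p$ (or $\mathbb{R}^{p\times p}$) with respect to the trace inner product, is standard, following from the fact that $\proj_H$ is self-adjoint and idempotent hence has operator norm at most one in any unitarily invariant norm, in particular the spectral norm. Assembling these two facts gives the result in two lines.
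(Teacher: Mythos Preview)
Your factorization $\proj_{{W'}^\perp}=\proj_{{W'}^\perp}\proj_{{T'}^\perp}$ (with $W'=T'\oplus\mathrm{span}(\ones)$) is correct, but the step that follows is not: the assertion that an orthogonal projection on $\mathbb{R}^{p\times p}$ with respect to the trace inner product is a contraction for the spectral norm $\|\cdot\|_2$ is \emph{false} in general. You are conflating the operator norm of $\proj_H$ induced by the Frobenius norm (which is indeed $1$ for any orthogonal projection) with the operator norm induced by the matrix spectral norm; these are different operator norms on $\mathrm{End}(\mathbb{R}^{p\times p})$. A simple counterexample: let $A=\mathrm{diag}(a,b)$ with $a^2+b^2=1$ and $a>b>0$, and let $H=\mathrm{span}(A)\subset\mathbb{S}^2$. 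Then $\proj_H(I)=(a+b)A$ and $\|\proj_H(I)\|_2=(a+b)a=a^2+ab>a^2+b^2=1=\|I\|_2$. So the ``standard'' contraction you invoke does not hold without further structure.

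The paper's proof supplies exactly that missing structure: it observes that $W'=T'\oplus\mathrm{span}(\ones)$ is itself the tangent space to a low-rank variety, with column space $\tilde{\mathcal C}=\mathcal C'\oplus\mathrm{span}(\mathbf 1_p)$ and row space $\tilde{\mathcal R}=\mathcal R'\oplus\mathrm{span}(\mathbf 1_p)$, so that $\proj_{{W'}^\perp}(M)=\proj_{\tilde{\mathcal C}^\perp}M\proj_{\tilde{\mathcal R}^\perp}$. Projections of this particular form \emph{are} spectral-norm contractions, since $\|PMQ\|_2\le\|P\|_2\|M\|_2\|Q\|_2\le\|M\|_2$ for orthogonal projections $P,Q$ on $\mathbb{R}^p$. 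With this observation in hand, your nested-subspace argument goes through (and is essentially equivalent to the paper's use of $\mathcal C'\subseteq\tilde{\mathcal C}$). The fix is therefore easy, but as written your justification of the contraction step is wrong.
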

\begin{proof}[Proof of Lemma~\ref{lemma:5}] Let $\mathcal{R}',\mathcal{C}'$ be row/column space pair that form the tangent space $T'$. Let $\tilde{C} = \mathrm{span}(\mathcal{C}',\textbf{1})$ and $\tilde{\mathcal{R}} = \mathrm{span}(\mathcal{R}',\textbf{1})$. Then, it is straightforward to see that $T' \oplus \mathrm{span}(\ones)$ is itself a tangent space formed by column space $\tilde{C}$ and row space $\tilde{R}$. Thus, $\|\proj_{({T' \oplus \mathrm{span}(\ones)})^\perp}(L^\star)\|_2 = \|\proj_{\tilde{\mathcal{C}}^\perp}L^\star\proj_{\tilde{\mathcal{R}}^\perp}\|_2$. Since $\mathcal{C}' \subseteq \tilde{\mathcal{C}}$, we have that: $\|\proj_{\tilde{\mathcal{C}}^\perp}L\proj_{\tilde{\mathcal{R}}^\perp}\|_2 \leq \|\proj_{{\mathcal{C}'}^\perp}L\proj_{{\mathcal{R}'}^\perp}\|_2 = \|\proj_{{T'}^\perp}(L)\|_2$.
\end{proof}

\begin{lemma} Suppose that $\kappa^\star > \omega$. Then, $\mathrm{span}(\ones) \cap (T' \oplus T^\star)= \{0\}$ for every tangent space $T'$ with $\rho(T',T^\star)\leq \omega$. 
\label{lemma:6}
\end{lemma}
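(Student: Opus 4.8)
The plan is to argue by contradiction: assume $0\neq Z\in\mathrm{span}(\ones)\cap(T'\oplus T^\star)$ and deduce $\kappa^\star\leq\omega$. Since $\mathrm{span}(\ones)$ is one–dimensional, we may rescale so that $Z=\ones/p=vv^\top$ with $v:=\mathbf 1_p/\sqrt p$ a unit vector, so that $\|Z\|_2=1$ and $\kappa^\star=\|\proj_{{T^\star}^\perp}(vv^\top)\|_2$. A first, easy observation — which already settles the weaker assertions $\mathrm{span}(\ones)\cap T'=\{0\}$ and $\mathrm{span}(\ones)\cap T^\star=\{0\}$ — is that
\[
\|\proj_{{T'}^\perp}(vv^\top)\|_2\;\geq\;\|\proj_{{T^\star}^\perp}(vv^\top)\|_2-\|(\proj_{{T'}^\perp}-\proj_{{T^\star}^\perp})(vv^\top)\|_2\;\geq\;\kappa^\star-\rho(T',T^\star)\;\geq\;\kappa^\star-\omega\;>\;0,
\]
using $\|vv^\top\|_2=1$, $\rho(T',T^\star)\leq\omega$ and $\kappa^\star>\omega$; hence $vv^\top\notin T'$, and $vv^\top\notin T^\star$ since $\kappa^\star>0$.

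For the full statement I would reduce to a question about column spaces. Let $\mathcal C^\star$ and $\mathcal C'$ be the column spaces of $L^\star$ and of the low–rank matrix whose tangent space is $T'$. From the definition of the tangent space to the low–rank variety, ${T^\star}^\perp=\{\proj_{{\mathcal C^\star}^\perp}M\proj_{{\mathcal C^\star}^\perp}:M\in\R^{p\times p}\}$ and likewise for ${T'}^\perp$; hence
\[
(T'\oplus T^\star)^\perp={T'}^\perp\cap{T^\star}^\perp=\{\proj_{\mathcal D}M\proj_{\mathcal D}:M\in\R^{p\times p}\},\qquad \mathcal D:={\mathcal C'}^\perp\cap{\mathcal C^\star}^\perp=(\mathcal C'+\mathcal C^\star)^\perp .
\]
Therefore $\proj_{(T'\oplus T^\star)^\perp}(vv^\top)=(\proj_{\mathcal D}v)(\proj_{\mathcal D}v)^\top$, so $Z\in T'\oplus T^\star$ is equivalent to $\proj_{\mathcal D}(v)=0$, i.e.\ to $v\in\mathcal C'+\mathcal C^\star$. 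Assuming the latter, write $v=c'+c^\star$ with $c'\in\mathcal C'$, $c^\star\in\mathcal C^\star$, and choose $c'$ of minimal norm. Since $\proj_{{\mathcal C'}^\perp}(c')=0$ we get $\proj_{{\mathcal C^\star}^\perp}(v)=(\proj_{{\mathcal C^\star}^\perp}-\proj_{{\mathcal C'}^\perp})(c')$, and Lemma~\ref{lemma:2} (applicable since $\rho(T',T^\star)\leq\omega<1$, because $\kappa^\star\leq1$) gives $\|\proj_{\mathcal C'}-\proj_{\mathcal C^\star}\|_2\leq\omega$, whence $\sqrt{\kappa^\star}=\|\proj_{{\mathcal C^\star}^\perp}(v)\|_2\leq\omega\|c'\|_2$.

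The step I expect to be the main obstacle is converting this last one–sided bound into a contradiction, i.e.\ producing an \emph{upper} bound on $\|c'\|_2$ (any bound below $\sqrt{\kappa^\star}/\omega$ suffices; e.g.\ $\|c'\|_2\leq1$ forces $\kappa^\star\leq\omega^2<\omega$). The difficulty is that the minimal–norm preimage $c'$ of $\proj_{{\mathcal C^\star}^\perp}(v)$ under $\proj_{{\mathcal C^\star}^\perp}\!\mid_{\mathcal C'}$ can have large norm when $\mathcal C'$ has a very small nonzero principal angle with $\mathcal C^\star$. Controlling $\|c'\|_2$ is exactly where $\rho(T',T^\star)\leq\omega$ has to be used quantitatively, together with the paper's standing geometric assumptions (smallness of the number of latent variables $h$ relative to $p$, equivalently of the incoherence $\mu^\star$), ensuring that $T'\oplus T^\star$ is a proper, well–conditioned subspace transverse to $\mathrm{span}(\ones)$; with that bound in hand the contradiction $\kappa^\star\leq\omega$ follows and the lemma is proved.
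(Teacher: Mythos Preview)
Your reduction to column spaces is correct and matches the paper exactly: $\ones/p\in T'\oplus T^\star$ is equivalent to $v:=\mathbf 1_p/\sqrt p\in\mathcal C'+\mathcal C^\star$, and $\kappa^\star=\|\proj_{{\mathcal C^\star}^\perp}(v)\|_2^2$. The gap is precisely where you locate it, but your diagnosis of the cure is wrong. The decomposition $v=c'+c^\star$ with $c'\in\mathcal C'$, $c^\star\in\mathcal C^\star$ is the source of the difficulty: when $\mathcal C'$ and $\mathcal C^\star$ have small nonzero principal angles, $\|c'\|_2$ genuinely can be arbitrarily large, and no incoherence or small-$h$ assumption in the paper rescues this. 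The lemma is stated and proved using only $\kappa^\star>\omega$; nothing else is needed.

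The paper's route sidesteps the problem by never decomposing $v$ along $\mathcal C'$ and $\mathcal C^\star$. Instead it bounds $\|\proj_{\mathcal C'+\mathcal C^\star}(v)\|_2$ directly via the \emph{orthogonal} splitting $v=\proj_{\mathcal C^\star}(v)+\proj_{{\mathcal C^\star}^\perp}(v)$:
\[
\|\proj_{\mathcal C'+\mathcal C^\star}(v)\|_2\;\le\;\|\proj_{\mathcal C^\star}(v)\|_2\;+\;\|\proj_{\mathcal C'+\mathcal C^\star}\proj_{{\mathcal C^\star}^\perp}\|_2.
\]
The first term is controlled by $\kappa^\star$. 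For the second, Lemma~\ref{lemma:3} with $\mathcal C_1=\mathcal C'$, $\mathcal C_2=\mathcal C^\star$ and $z\in{\mathcal C^\star}^\perp$ (so $\proj_{\mathcal C^\star}(z)=0$) yields $\|\proj_{\mathcal C'+\mathcal C^\star}(z)\|_2\le\|\proj_{\mathcal C'}(z)\|_2$, hence
\[
\|\proj_{\mathcal C'+\mathcal C^\star}\proj_{{\mathcal C^\star}^\perp}\|_2\;\le\;\|\proj_{\mathcal C'}\proj_{{\mathcal C^\star}^\perp}\|_2\;\le\;\|\proj_{\mathcal C'}-\proj_{\mathcal C^\star}\|_2\;\le\;\omega
\]
by Lemma~\ref{lemma:2}. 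Thus $\|\proj_{\mathcal C'+\mathcal C^\star}(v)\|_2<1$ once $\omega$ is small enough relative to $\kappa^\star$, and $v\notin\mathcal C'+\mathcal C^\star$. The point is that projecting $\proj_{{\mathcal C^\star}^\perp}(v)$ onto the \emph{sum} $\mathcal C'+\mathcal C^\star$ is no worse than projecting onto $\mathcal C'$ alone, because the $\mathcal C^\star$ summand contributes nothing to vectors already orthogonal to it; this is what Lemma~\ref{lemma:3} delivers and what your decomposition cannot see.
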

\begin{proof}[Proof of Lemma~\ref{lemma:6}] It suffices to show that $\|\proj_{({T' \oplus T^\star})^\perp}(\ones/p)\|_2>0$. Let $\mathcal{C}'$ be the column space associated with the tangent space $T'$ at a symmetric matrix. Note that $T' \oplus T^\star$ is another tangent space with column space $\mathcal{C}' \oplus \mathcal{C}^\star$. Then, $\|\proj_{({T' \oplus T^\star})^\perp}(\ones/p)\|_2 = \|\proj_{(\mathcal{C}'\oplus \mathcal{C}^\star)^\perp}(\textbf{1}/\sqrt{p})\|_2^2$. So it suffices to show that $\|\proj_{\mathcal{C}'\oplus \mathcal{C}^\star}(\textbf{1}/\sqrt{p})\|_2 < 1$. Note additionally that $\|\proj_{\mathcal{C}'\oplus \mathcal{C}^\star}(\textbf{1}/\sqrt{p})\|_2 \leq \|\proj_{\mathcal{C}'\oplus \mathcal{C}^\star}\proj_{\mathcal{C}^\star}(\textbf{1}/\sqrt{p})\|_2 + \|\proj_{\mathcal{C}'\oplus \mathcal{C}^\star}\proj_{{\mathcal{C}^\star}^\perp}(\textbf{1}/\sqrt{p})\|_2 \leq \|\proj_{\mathcal{C}^\star}(\textbf{1}/\sqrt{p})\|_2 + \|\proj_{\mathcal{C}'\oplus \mathcal{C}^\star}\proj_{{\mathcal{C}^\star}^\perp}\|_2$. We have that: $\|\proj_{\mathcal{C}^\star}(\textbf{1}/\sqrt{p})\|_2 = 1-\kappa^\star$. Using Lemma \ref{lemma:3}, it is straightforward to conclude that $\|\proj_{\mathcal{C}'\oplus \mathcal{C}^\star}\proj_{{\mathcal{C}^\star}^\perp}\|_2 \leq \|\proj_{\mathcal{C}'}\proj_{{\mathcal{C}^\star}^\perp}\|_2 \leq \|\proj_{\mathcal{C}'}-\proj_{\mathcal{C}^\star}\|_2$. Appealing to Lemma~\ref{lemma:2}
, and putting everything together, we conclude that: $\|\proj_{\mathcal{C}'\oplus \mathcal{C}^\star}(\textbf{1}/\sqrt{p})\|_2 \leq (1-\kappa^\star)+\omega$. As $\kappa^\star > \omega$, we have the desired result.  
\end{proof}

\begin{lemma}Let $Z = T' \oplus \mathrm{span}(\ones)$ with $\|Z\|_2 = 1$ and $\rho(T',T^\star) \leq \omega$. Then, assuming $\kappa^\star >{\omega}$, $Z$ can be decomposed uniquely as follows $Z = Z_1 + Z_2$ where $Z_1 \in T'$, $Z_2 \in \mathrm{span}(\ones)$ with $\max\{\|Z_1\|_2,\|Z_2\|_2\} \leq  \frac{2\sqrt{5h}}{1-\sqrt{1- ({\kappa^\star}-\omega)^2}}$.
\label{lemma:new}
\end{lemma}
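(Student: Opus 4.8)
The plan is to derive both the uniqueness and the norm bound from a lower bound on the angle between the subspace $T'$ and the line $\mathrm{span}(\ones)$, measured in the Frobenius inner product, and then to convert the resulting Frobenius-norm control of $Z_1,Z_2$ into spectral-norm control using that every element of $T'\oplus\mathrm{span}(\ones)$ has rank at most $2h+1$; here $h\ge 1$, which is the case of interest. For uniqueness, first I would observe that $T'\oplus\mathrm{span}(\ones)$ is a genuine direct sum: since $\omega\in(0,1)$ we have $\sqrt{\omega}\ge\omega$, so $\kappa^\star>\sqrt{\omega}$ gives $\kappa^\star>\omega$, and Lemma~\ref{lemma:6} then yields $\mathrm{span}(\ones)\cap(T'\oplus T^\star)=\{0\}$, hence $\mathrm{span}(\ones)\cap T'=\{0\}$. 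Thus $Z=Z_1+Z_2$ with $Z_1\in T'$, $Z_2\in\mathrm{span}(\ones)$ exists and is unique (equivalently, the bound $\|\proj_{{T'}^\perp}(\ones/p)\|_2\ge\kappa^\star-\omega>0$ derived below already excludes $\ones\in T'$).

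The crux is the angle estimate. Set $c:=\|\proj_{T'}(\ones/p)\|_F$ and $s:=\|\proj_{{T'}^\perp}(\ones/p)\|_F$, so that $c^2+s^2=\|\ones/p\|_F^2=1$ and, since $\mathrm{span}(\ones)$ is one-dimensional, $c$ is exactly the cosine of the minimal angle between $T'$ and $\mathrm{span}(\ones)$. The key observation is that $\ones/p=(\mathbf 1_p/\sqrt p)(\mathbf 1_p/\sqrt p)^\top$ has rank one, hence $\proj_{{T'}^\perp}(\ones/p)$ has rank at most one, so $s=\|\proj_{{T'}^\perp}(\ones/p)\|_F=\|\proj_{{T'}^\perp}(\ones/p)\|_2$. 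I would then compare with $T^\star$: using $\|\proj_{{T'}^\perp}-\proj_{{T^\star}^\perp}\|_2=\|\proj_{T'}-\proj_{T^\star}\|_2=\rho(T',T^\star)\le\omega$, the fact that $\|\ones/p\|_2=1$, and the definition $\kappa^\star=\|\proj_{{T^\star}^\perp}(\ones/p)\|_2$, the triangle inequality gives $s\ge\kappa^\star-\omega\ge{\kappa^\star}^2-\omega=:\tau_0>0$, where positivity is precisely the hypothesis $\kappa^\star>\sqrt{\omega}$. Hence $c=\sqrt{1-s^2}\le\sqrt{1-\tau_0^2}$. (One can equally route this through column spaces using Lemma~\ref{lemma:2}, but the argument above is cleaner.)

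From here the bound follows by a standard Hilbert-space decomposition estimate. For the unique decomposition, expanding $\|Z\|_F^2=\|Z_1\|_F^2+\|Z_2\|_F^2+2\langle Z_1,Z_2\rangle_F$ and using $\langle Z_1,Z_2\rangle_F\ge -c\,\|Z_1\|_F\|Z_2\|_F$ (by the definition of $c$ as the minimal-angle cosine) gives $\|Z\|_F^2\ge(1-c)(\|Z_1\|_F^2+\|Z_2\|_F^2)\ge(1-c)\max\{\|Z_1\|_F^2,\|Z_2\|_F^2\}$. With $1-c\ge 1-\sqrt{1-\tau_0^2}=:\tau\in(0,1]$ and $\tau^{-1/2}\le\tau^{-1}$, this yields $\max\{\|Z_1\|_F,\|Z_2\|_F\}\le\|Z\|_F/\tau$. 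Finally, $Z_1\in T'$ has rank at most $2h$ and $Z_2\in\mathrm{span}(\ones)$ has rank at most one, so $\mathrm{rank}(Z)\le 2h+1$, whence $\|Z\|_F\le\sqrt{\mathrm{rank}(Z)}\,\|Z\|_2\le\sqrt{2h+1}\le 2\sqrt{5h}$ using $\|Z\|_2=1$ and $h\ge 1$; since $\|Z_i\|_2\le\|Z_i\|_F$, we obtain $\max\{\|Z_1\|_2,\|Z_2\|_2\}\le 2\sqrt{5h}/(1-\sqrt{1-({\kappa^\star}^2-\omega)^2})$, as claimed.

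The delicate step is the angle bound: it hinges on reducing to the rank-one matrix $\ones/p$, so that the Frobenius and spectral norms of $\proj_{{T'}^\perp}(\ones/p)$ coincide, and on combining $\rho(T',T^\star)\le\omega$ with the definition of $\kappa^\star$ to certify that the minimal angle between $T'$ and $\mathrm{span}(\ones)$ stays bounded away from zero — this is exactly where the hypothesis $\kappa^\star>\sqrt{\omega}$ is consumed, and where the denominator $1-\sqrt{1-({\kappa^\star}^2-\omega)^2}$ comes from. The remaining bookkeeping — the passage from Frobenius to spectral norm and the explicit numerical constants $2$ and $5h$ — is routine, but should be carried out in a way consistent with how the lemma is invoked later in the analysis.
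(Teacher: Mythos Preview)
Your proof is correct and follows essentially the same approach as the paper: uniqueness via Lemma~\ref{lemma:6}, a rank bound on $Z$ to pass between Frobenius and spectral norms, and a lower bound on $\|\proj_{{T'}^\perp}(\ones/p)\|_2$ obtained by perturbing $\kappa^\star$ by $\omega$. The only difference is in the algebra---you use the clean one-line inequality $\|Z\|_F^2\ge(1-c)(\|Z_1\|_F^2+\|Z_2\|_F^2)$, whereas the paper applies Pythagoras and two reverse triangle inequalities to reach the same bound; your rank bound $\mathrm{rank}(Z)\le 2h+1$ is also sharper than the paper's $5h$, though you relax it at the end to match the stated constant.
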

\begin{proof}[Proof of Lemma~\ref{lemma:new}] The unique decomposition follows from Lemma~\ref{lemma:6}. Since $\omega<1$, we have that $T'$ and $T^\star$ have the same dimension. Since $Z_1 \in T' \oplus T^\star$, then $\text{rank}(Z_1) \leq 4h$ (this follows from noting that every matrix inside $T'$ or $T^\star$ has rank at most $2h$ and rank of a sum of matrices is less than the sum of the ranks). Further, $\text{rank}(Z_2) \leq 1$, so that $\text{rank}(Z) \leq 5h$. Therefore, $\|Z\|_F \leq \sqrt{5h}$. Notice that: $\|Z\|_F^2 = \|Z_1 + \proj_{T'}(Z_2) + \proj_{{T'}^\perp}(Z_2)\|_F^2 = \|Z_1+\proj_{T'}(Z_2)\|_F^2+ \|\proj_{{T'}^\perp}(Z_2)\|_F^2$. Thus, $\|Z_1+\proj_{T'}(Z_2)\|_F \leq \sqrt{5h}$. Using reverse triangle inequality, we conclude that $\|Z_1\|_F \leq \sqrt{5h}+\|\proj_{T'}(Z_2)\|_F$. Now notice that: $\|Z_2\|_F^2 = \|\proj_{{T'}^\perp}(Z_2)\|_F^2+ \|\proj_{T'}(Z_2)\|_F^2$, so that: $\sqrt{\|Z_2\|_F^2 -  \|\proj_{{T'}^\perp}(Z_2)\|_F^2} = \|\proj_{T'}(Z_2)\|_F$. Since $Z_2$ is rank-1, we have then that: $\|\proj_{T'}(Z_2)\|_F = \|Z_2\|_2\sqrt{1- \|\proj_{{T'}^\perp}(\ones/p)\|_2^2}$. Combining things, we conclude that $\|Z_1\|_F \leq \sqrt{5h} +\|Z_2\|_2\sqrt{1- \|\proj_{{T'}^\perp}(\ones/p)\|_2^2}$. Notice that $\|\proj_{{T'}^\perp}(\ones/p)\|_2 \geq \|\proj_{{T^\star}^\perp}(\ones/p)\|_2-\omega = {\kappa^\star}-\omega $. Reverse triangle inequality also gives $\|Z_2\|_F \leq \|Z_1\|_F+\sqrt{5h}$. Putting the last bounds together, we have that: $\|Z_2\|_F \leq \frac{2\sqrt{5h}}{1-\sqrt{1- ({\kappa^\star}^2-\omega)^2}}$. Plugging this into a previous bound, we also find that $\|Z_1\|_F \leq  \frac{2\sqrt{5h}}{1-\sqrt{1- ({\kappa^\star}^2-\omega)^2}}$.

\end{proof}
\begin{lemma}
Let $T'\subseteq \mathbb{R}^{p \times p}$ be a tangent space to a low-rank variety. Then:\\ $\max_{N \in T' \oplus \mathrm{span}(\ones), \|N\|_2=1}\allowbreak \|N\|_\infty \leq {3\xi(T')}.$
    \label{lemma:3p}
\end{lemma}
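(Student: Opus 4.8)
The plan is to split an arbitrary $N\in T'\oplus\mathrm{span}(\ones)$ with $\|N\|_2=1$ into its $T'$-component and the orthogonal ``defect'' direction, and to bound each against $\xi(T')$. First I would write $N=N_1+c\,\ones$ with $N_1\in T'$ and apply $\proj_{{T'}^\perp}$: since $\proj_{{T'}^\perp}(N_1)=0$, the defect $\proj_{{T'}^\perp}(N)=c\,\proj_{{T'}^\perp}(\ones)$ is a scalar multiple of $\proj_{{T'}^\perp}(\ones)$. Because $\ones$ has rank one and $T'$ is a symmetric tangent space with column space $\mathcal C'$, expanding $\proj_{T'}$ shows $\proj_{{T'}^\perp}(\ones)=w\,w^\top$ with $w:=\proj_{{\mathcal C'}^\perp}(\mathbf{1}_p)$. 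Hence $\proj_{{T'}^\perp}(N)=b\,\widehat w\widehat w^\top$, where $\widehat w:=w/\|w\|_2$ and $|b|=\|\proj_{{T'}^\perp}(N)\|_2\le\|N\|_2=1$ (orthogonal projection onto a tangent space is contractive in the spectral norm).

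Next I would bound the two pieces. For the tangent component, $\|\proj_{T'}(N)\|_2\le\|N\|_2+\|\proj_{{T'}^\perp}(N)\|_2\le 2$, so by homogeneity of $\xi$ we get $\|\proj_{T'}(N)\|_\infty\le\xi(T')\,\|\proj_{T'}(N)\|_2\le 2\xi(T')$. For the defect component, $\|\proj_{{T'}^\perp}(N)\|_\infty=|b|\,\|\widehat w\|_\infty^2\le\|\widehat w\|_\infty^2$. The triangle inequality then gives $\|N\|_\infty\le 2\xi(T')+\|\widehat w\|_\infty^2$, so it suffices to establish $\|\widehat w\|_\infty^2\le\xi(T')$; taking the supremum over all admissible $N$ yields the claimed bound $3\xi(T')$.

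I expect the estimate $\|\widehat w\|_\infty^2\le\xi(T')$ to be the main obstacle. Entrywise, $w_i=1-\langle\proj_{\mathcal C'}(\mathbf{e}_i),\mathbf{1}_p\rangle$, so $\|w\|_\infty\le 1+\mu[\mathcal C']\sqrt p\le 1+\sqrt p\,\xi(T')$, using $\mu[\mathcal C']\le\xi(T')$ (the lower half of $\xi(T')\in[\mu[\mathcal C'],2\mu[\mathcal C']]$; see Appendix~\ref{sec:notation_defn} and \cite{Chand2012}). The delicate ingredient is a lower bound on $\|w\|_2=\|\proj_{{\mathcal C'}^\perp}(\mathbf{1}_p)\|_2$, i.e.\ a guarantee that $\mathrm{span}(\ones)$ is not almost contained in $T'$: this deviation is controlled by the parameter $\kappa^\star$ measuring how far $\mathrm{span}(\ones)$ lies from $T^\star$, which transfers to $T'$ through $\rho(T',T^\star)$ and Lemma~\ref{lemma:2} and forces $\|w\|_2\gtrsim\sqrt p$. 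Plugging $\|w\|_\infty\lesssim 1+\sqrt p\,\xi(T')$, $\|w\|_2\gtrsim\sqrt p$, and $\xi(T')\ge\mu[\mathcal C']\ge p^{-1/2}$ into $\|\widehat w\|_\infty=\|w\|_\infty/\|w\|_2$ yields $\|\widehat w\|_\infty\lesssim\xi(T')$, and since $\|\widehat w\|_\infty\le\|\widehat w\|_2=1$ and $\xi(T')\le1$ this gives $\|\widehat w\|_\infty^2\le\xi(T')$ (up to the universal constant absorbed into the $3$). In short, all the difficulty sits in lower-bounding $\|\proj_{{\mathcal C'}^\perp}(\mathbf{1}_p)\|_2$; the remaining steps are routine norm bookkeeping.
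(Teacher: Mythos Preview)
Your route differs from the paper's. The paper does not split $N$ into $\proj_{T'}(N)+\proj_{{T'}^\perp}(N)$; instead it observes that $T'\oplus\mathrm{span}(\ones)$ is itself a low-rank tangent space with column space $\tilde{\mathcal C}=\mathcal C'\oplus\mathrm{span}(\mathbf 1_p)$, writes $N=\proj_{\tilde{\mathcal C}}N\proj_{\tilde{\mathcal R}^\perp}+N\proj_{\tilde{\mathcal R}}$, and bounds $\|N\|_\infty\le \max_i\|\proj_{\tilde{\mathcal C}}(e_i)\|_2+\max_j\|\proj_{\tilde{\mathcal R}}(e_j)\|_2$. It then invokes Lemma~\ref{lemma:3} to claim $\mu[\tilde{\mathcal C}]\le 2/\sqrt p+\mu[\mathcal C']$, and finishes using $\xi(T')\ge\max\{\mu[\mathcal C'],1/\sqrt p\}$. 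This sidesteps your lower bound on $\|w\|_2$ entirely.

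That said, the obstacle you flag is real, and in fact exposes a genuine issue. The paper's step through Lemma~\ref{lemma:3} is not valid when $\mathcal C'$ and $\mathrm{span}(\mathbf 1_p)$ are nearly aligned: take $\mathcal C'=\mathrm{span}(\mathbf 1_p+\epsilon e_1)$ with small $\epsilon$, so that $\mu[\mathcal C']\approx 1/\sqrt p$ and $\xi(T')\approx 2/\sqrt p$, yet $\tilde{\mathcal C}=\mathrm{span}(\mathbf 1_p,e_1)$ contains $e_1$ and hence $\mu[\tilde{\mathcal C}]=1$. The matrix $N=\widehat w\widehat w^\top$ (your defect direction) then lies in $T'\oplus\mathrm{span}(\ones)$ with $\|N\|_2=1$ and $\|N\|_\infty\approx 1\gg 3\xi(T')$. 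So the lemma as stated---for an \emph{arbitrary} tangent space $T'$---is false, and your instinct to invoke $\kappa^\star$ together with $\rho(T',T^\star)\le\omega$ to lower-bound $\|\proj_{{\mathcal C'}^\perp}(\mathbf 1_p)\|_2$ is exactly the missing hypothesis. In the paper's downstream uses (Proposition~\ref{prop:sufficient_identifiability}) only such $T'$ appear, so your plan proves the statement that is actually needed; the paper's cleaner-looking argument gets there only by an incorrect application of Lemma~\ref{lemma:3}.
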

\begin{proof}[Proof of Lemma~\ref{lemma:3p}] Let $(\mathcal{R}',\mathcal{C}')$ be the row/column space pair associated with $T'$. Let $\tilde{C}= \mathcal{C}' \oplus \mathrm{span}(\textbf{1}_p)$ and $\tilde{R}= \mathcal{R}' \oplus \mathrm{span}(\textbf{1}_p)$. Since $Z \in T' \oplus \mathrm{span}(\textbf{1})$, it is straightforward to show that $Z = \proj_{\tilde{\mathcal{C}}}Z\proj_{\tilde{\mathcal{R}}^\perp} + Z\proj_{\tilde{\mathcal{R}}}$. Therefore, $\|Z\|_\infty \leq \max_{i}\|\proj_{\tilde{\mathcal{C}}}(e_i)\|_2 + \max_{i}\|\proj_{\tilde{\mathcal{R}}}(e_i)\|_2$. Letting $\mathcal{C}_1 = \mathcal{C}'$ and $\mathcal{C}_2 = \mathrm{span}(\textbf{1}_p)$, and appealing to Lemma~\ref{lemma:3}, we have that:
\begin{eqnarray*}
\begin{aligned}
\max_{i}\|\proj_{\tilde{\mathcal{C}}}(e_i)\|_2 &\leq 2\max_i\max_{\substack{u_1 \in \mathrm{span}(\textbf{1}),\|u_1\|_2=1}} 2|u_1^Te_i|+\max_i\max_{\substack{u_2 \in \mathcal{C}',\|u_2\|_2=1}}|u_2^Te_i|\leq 2/\sqrt{p} + \inc[\mathcal{C}'].
\end{aligned}
\end{eqnarray*}
Analogously, letting $\mathcal{C}_1 = \mathcal{R}'$ and $\mathcal{C}_2 = \mathrm{span}(\textbf{1})$, and appealing to Lemma~\ref{lemma:3}, we have that:
\begin{eqnarray*}
\begin{aligned}
\max_{i}\|\proj_{\tilde{\mathcal{R}}}(e_i)\|_2 &\leq 2/\sqrt{p} + \inc[\mathcal{R}'].
\end{aligned}
\end{eqnarray*}
Since $\xi(T') \geq \max\{[\inc[\mathcal{C}'],\inc[\mathcal{R}']\}$ and $2\xi(T') \geq \frac{2}{\sqrt{p}}$, we conclude the desired result.
\end{proof}

\begin{lemma}Let $T'$ be a tangent space to the low-rank matrix variety with $\rho(T',T^\star)\leq \omega$ for some $\omega \in (0,1)$. Let $\mathbb{H}' = \Omega^\star \times T'$ and $\mathbb{Q}' = \Omega^\star \times (T' + \mathrm{span}(\ones))$. Then for any matrix $N \in \mathbb{R}^{p \times p}$, we have that $|\|\mathcal{P}_{\mathbb{H}'}(N)\|_2 - \|\mathcal{P}_{{\mathbb{Q}'}}(N)\|_2| \leq 2(\kappa^\star+\omega)$ and $|\|\mathcal{P}_{{\mathbb{H}'}^\perp}(N)\|_2 - \|\mathcal{P}_{{\mathbb{Q}'}^\perp}(N)\|_2| \leq 2(\kappa^\star+\omega)$.
\label{lemma:TQ_relation}
\end{lemma}
\begin{proof}
Decompose $N = N_1 + N_2$ where $N_1 \in \mathbb{Q}'$ and $N_2 \in {\mathbb{Q}'}^\perp$. Thus, $\mathcal{P}_{\mathbb{Q}'}(N) = N_1$. Furthermore, since $\mathbb{H}' \subseteq \mathbb{Q}'$, $\mathcal{P}_{\mathbb{H}'}(N) = \mathcal{P}_{\mathbb{H}'}(N_1)$. From Lemma \ref{lemma:4}, we have that $\|\mathcal{P}_{\mathbb{H}'}(N_1)\|_2 \geq \|N_1\|_2(1-2(\kappa^\star+\omega))$. Thus, $|\|\mathcal{P}_{\mathbb{Q}'}(N)\|_2 - \|\mathcal{P}_{\mathbb{H}'}(N)\|_2| \leq 2(\kappa^\star+\omega)$. Since for any tangent space to a low-rank variety $F \subseteq \mathbb{R}^{p \times p}$, $\|\mathcal{P}_{F^\perp}(N)\|_2 = \|N\|_2-\|\mathcal{P}_F(N)\|_2$, we can also conclude that $|\|\mathcal{P}_{{\mathbb{H}'}^\perp}(N)\|_2 - \|\mathcal{P}_{{\mathbb{Q}'}^\perp}(N)\|_2| \leq 2(\kappa^\star+\omega)$.
\end{proof}

\begin{lemma}Let $\mathbb{H}' = \Omega^\star \times T'$ where $\rho(T',T^\star)\leq \omega$. Let $ \kappa^\star = \|\proj_{{T^\star}^\perp}(1/p\ones)\|_2$. Suppose that $\min_{\mathbb{Q}'\in U(\omega)}\chi(\mathbb{Q}',\|\cdot\|_{\Phi_\gamma}) > 2(\kappa^\star+\omega)$. Let $F= \proj_{{T^\star}^\perp}(1/p\ones)/\|\proj_{{T^\star}^\perp}(1/p\ones)\|_2$. Then, we have the following results:
\begin{eqnarray*}
\begin{aligned}
&\min_{\substack{Z \in \mathbb{H}'\\\|Z\|_{\Phi_\gamma} = 1\\\rho(T',T^\star)\leq \omega}}\|\mathcal{P}_{\mathbb{H}'}\mathcal{A}^\dagger\mathbb{I}^\star\mathcal{A}\mathcal{P}_{\mathbb{H}'}(Z)\|_{\Phi_\gamma} \geq \min_{\mathbb{Q}'\in U(\omega)}\chi(\mathbb{Q}',\|\cdot\|_{\Phi_\gamma})-2(\kappa^\star+\omega),\\
&\max_{\substack{Z \in \mathbb{H}'\\\|Z\|_{\Phi_\gamma} = 1\\\mathbb{Q}'\in U(\omega)}}\|\mathcal{P}_{{\mathbb{H}'}^\perp}\mathcal{A}^\dagger\mathbb{I}^\star\mathcal{A}\mathcal{P}_{\mathbb{Q}'}(\mathcal{P}_{{\mathbb{H}'}}\mathcal{A}^\dagger\mathbb{I}^\star\mathcal{A}\mathcal{P}_{\mathbb{Q}'})^{-1}(Z)\|_{\Phi_\gamma} \leq \max_{\mathbb{Q}'\in U(\omega)}\varphi(\mathbb{Q}',\|\cdot\|_{\Phi_\gamma})+2(\kappa^\star+\omega),\\
&\max_{\substack{Z \in \mathbb{Q}'\\\|Z\|_{\Phi_\gamma} = 1\\\mathbb{Q}'\in U(\omega)}}\|(\mathcal{P}_{{\mathbb{H}'}}\mathcal{A}^\dagger\mathbb{I}^\star\mathcal{A}\mathcal{P}_{\mathbb{H}'})^{-1}\mathcal{P}_{{\mathbb{H}'}}\mathcal{A}^\dagger\mathbb{I}^\star\mathcal{A}\mathcal{P}_{{\mathbb{H}'}^\perp}(Z)\|_{\Phi_\gamma} \leq \frac{4(\kappa^\star+\omega)\max\{\gamma,1\}(\|\mathbb{I}^\star(F)\|_2+\|\mathbb{I}^\star\|_2\omega)}{\min_{\mathbb{Q}'\in U(\omega)}\chi(\mathbb{Q}',\|\cdot\|_{\Phi_\gamma})-2(\kappa^\star+\omega)}.
\end{aligned}
\end{eqnarray*}
where the linear operators $\mathcal{A},\mathcal{A}^\dagger$, the norm $\Phi_\gamma$, the set $U(\omega)$, and the functions $\chi$ and $\varphi$ are defined in Section \ref{sec:consistency}.
\label{lemma:Hessian_cond}
\end{lemma}
\begin{proof}
To prove the first part, consider $Z \in \mathbb{H}'$. Thus, $\mathcal{P}_{\mathbb{Q}'}(Z) = Z$. Combining this with Lemma~\ref{lemma:TQ_relation} and noting that the first components of $\mathbb{H}'$ and $\mathbb{Q}'$ are identical, we find that $|\|\mathcal{P}_{\mathbb{H}'}\mathcal{A}^\dagger\mathbb{I}^\star\mathcal{A}\mathcal{P}_{\mathbb{H}'}(Z)\|_{\Phi_\gamma}-\|\mathcal{P}_{\mathbb{Q}'}\mathcal{A}^\dagger\mathbb{I}^\star\mathcal{A}\mathcal{P}_{\mathbb{Q}'}(Z)\|_{\Phi_\gamma}| \leq 2(\kappa^\star+\omega)$. This results allows us to conclude the first paper. 

To prove the second part, let $Z \in \mathbb{H}'$ with $\|Z\|_{\Phi_\gamma} \leq 1$. We first notice that by appealing to Lemma~\ref{lemma:TQ_relation}, we have that: $\|\mathcal{P}_{{\mathbb{H}'}}\mathcal{A}^\dagger\mathbb{I}^\star\mathcal{A}\mathcal{P}_{\mathbb{Q}'}(Z)\|_{\Phi_\gamma} \geq \|\mathcal{P}_{{\mathbb{Q}'}}\mathcal{A}^\dagger\mathbb{I}^\star\mathcal{A}\mathcal{P}_{\mathbb{Q}'}(Z)\|_{\Phi_\gamma}-2(\kappa^\star+\omega) > 0$. Thus, the operator $\mathcal{P}_{{\mathbb{H}'}}\mathcal{A}^\dagger\mathbb{I}^\star\mathcal{A}\mathcal{P}_{\mathbb{Q}'}$ is invertible. Furthermore, suppose that there exists $N_1 \in \mathbb{Q}',N_2\in \mathbb{Q}'$ such that $\mathcal{P}_{{\mathbb{H}'}}\mathcal{A}^\dagger\mathbb{I}^\star\mathcal{A}\mathcal{P}_{\mathbb{Q}'}(N_1) = \mathcal{P}_{{\mathbb{Q}'}}\mathcal{A}^\dagger\mathbb{I}^\star\mathcal{A}\mathcal{P}_{\mathbb{Q}'}(N_2)$. Since $\mathbb{H}' \subseteq \mathbb{Q}'$, we have that then: $\mathcal{P}_{{\mathbb{H}'}}\mathcal{A}^\dagger\mathbb{I}^\star\mathcal{A}\mathcal{P}_{\mathbb{Q}'}(N_1-N_2) = 0$, which allows us to conclude that for any $Z \in \mathbb{H}'$,  $(\mathcal{P}_{{\mathbb{H}'}}\mathcal{A}^\dagger\mathbb{I}^\star\mathcal{A}\mathcal{P}_{\mathbb{Q}'})^{-1}(Z) = (\mathcal{P}_{{\mathbb{Q}'}}\mathcal{A}^\dagger\mathbb{I}^\star\mathcal{A}\mathcal{P}_{\mathbb{Q}'})^{-1}(Z)$. Appealing to Lemma~\ref{lemma:TQ_relation}, we have that for any $N \in \mathbb{Q}'$, $|\mathcal{P}_{{\mathbb{H}'}^\perp}\mathcal{A}^\dagger\mathbb{I}^\star\mathcal{A}\mathcal{P}_{\mathbb{Q}'}(N)-\mathcal{P}_{{\mathbb{Q}'}^\perp}\mathcal{A}^\dagger\mathbb{I}^\star\mathcal{A}\mathcal{P}_{\mathbb{Q}'}(N)| \leq 2(\kappa^\star+\omega)$, which allows us to conclude the desired result. 

To prove the third part, Consider any $Z \in \mathbb{Q}'$ with $Z_2$ denoting its second component which is contained in $T' \oplus \mathrm{span}(\ones)$. Let $Z_2 = Z_{21}+Z_{22}$ where $Z_{21} \in T'$ and $Z_{22} \in \mathrm{span}(\ones)$. Notice that $\proj_{{H'}^\perp}(Z) = \proj_{{{T}'}^\perp}(Z_{22})$. By Lemma~\ref{lemma:4}, $\|Z_{22}\|_2 \leq \kappa^\star+\omega$. Furthermore, $\proj_{{T'}^\perp}(Z_{22}) = (\proj_{{T'}^\perp}-\proj_{{T^\star}^\perp})(Z_{22})+\proj_{{T^\star}^\perp})(Z_{22})$. Thus, using the fact that $\|\proj_{T'}(M)\|_2 \leq 2\|M\|_2$ for any matrix $M$, we have that: $\mathcal{P}_{{\mathbb{H}'}}\mathcal{A}^\dagger\mathbb{I}^\star\mathcal{A}\mathcal{P}_{{\mathbb{H}'}^\perp}(Z)\|_{\Phi_\gamma} \leq 4(\kappa^\star+\omega)\max\{\gamma,1\}(\|\mathbb{I}^\star(F)\|_2+\|\mathbb{I}^\star\|_2\omega)$. Then, appealing to the first part of the Lemma, we have the desired result.

\end{proof}

\begin{lemma}Let $\mathbb{Q}^\star = \Omega^\star \times (T^\star + \mathrm{span}(\ones))$ and $\mathbb{Q}'= \Omega^\star \times (T'+ \mathrm{span}(\ones))$ where $\rho(T',T^\star)\leq \omega)$. Then, for any $Z \in \mathbb{R}^{p \times p} \times \mathbb{R}^{p \times p}$ with $\|Z\|_{\Phi_\gamma}=1$, 
$$|\|\proj_{\mathbb{Q}'}(Z)\|_{\Phi_\gamma}-\|\proj_{\mathbb{Q}^\star}(Z)\|_{\Phi_\gamma}| \leq 5\omega+4\kappa^\star.$$
\begin{proof}
Notice that:
\begin{eqnarray*}
\begin{aligned}
\|\proj_{\mathbb{Q}^\star}(Z)\|_{\Phi_\gamma} - \|(\proj_{\mathbb{Q}'}(Z)-\proj_{\mathbb{Q}^\star})(Z)\|_{\Phi_\gamma} \leq \|\proj_{\mathbb{Q}'}(Z)\|_{\Phi_\gamma} \leq  \|\proj_{\mathbb{Q}^\star}(Z)\|_{\Phi_\gamma} +\|(\proj_{\mathbb{Q}'}(Z)-\proj_{\mathbb{Q}^\star})(Z)\|_{\Phi_\gamma}.
\end{aligned}
\end{eqnarray*}
Further, letting $Z = (Z_1,Z_2)$ with $\|Z_2\|_2/\gamma \leq 1$:
\begin{eqnarray*}
\begin{aligned}
\|(\proj_{\mathbb{Q}'}-\proj_{\mathbb{Q}^\star})(Z)\|_{\Phi_\gamma} &= \frac{1}{\gamma}\|(\proj_{T'\oplus \mathrm{span}(\ones)}-\proj_{T^\star\oplus \mathrm{span}(\ones)})(Z_2)\|_{2}\\
&\leq 4(\kappa^\star+\omega) + \frac{1}{\gamma}\|(\proj_{T'}-\proj_{T^\star})(Z_2)\|_{2} \leq 4\kappa^\star+5\omega.
\end{aligned}
\end{eqnarray*}
Combining the results proves our result. 
\end{proof}
\label{lemma:Q_Q'}
\end{lemma}

\begin{lemma}Let $\mathbb{Q}^\star = \Omega^\star \times (T^\star + \mathrm{span}(\ones))$ and $\mathbb{Q}'= \Omega^\star \times (T'+ \mathrm{span}(\ones))$ where $\rho(T',T^\star)\leq \omega)$. Suppose
\begin{eqnarray*}
\min_{Z \in \mathbb{Q}^\star, \|Z\|_{\Phi_\gamma} = 1}\|\proj_{\mathbb{Q}^\star}\Gla\I\Gl\proj_{\mathbb{Q}^\star}(Z)\|_{\Phi_\gamma} \geq \beta.
\end{eqnarray*}
Then, 
\begin{eqnarray*}
\begin{aligned}
\min_{Z \in \mathbb{Q}', \|Z\|_{\Phi_\gamma} = 1}\|\proj_{\mathbb{Q}'}\Gla\I\Gl\proj_{\mathbb{Q}'}(Z)\|_{\Phi_\gamma} &\geq \beta(1-(4\kappa^\star+5\omega)) \\&- 2(5\omega+4\kappa^\star)\|\I\|_2\max\{\gamma,1\}\\&-(4\kappa^\star+5\omega)\|\I\|_2(d^\star/\gamma+1).
\end{aligned}
\end{eqnarray*}
Additionally, for any $Z \in \mathbb{R}^{p \times p} \times \mathbb{R}^{p \times p}$ with $\|Z\|_{\Phi_\gamma} = 1$:
\begin{eqnarray*}
\begin{aligned}
|\|\proj_{\mathbb{Q}'}\Gla\I\Gl\proj_{\mathbb{Q}'}(Z)\|_{\Phi_\gamma}-
\|\proj_{\mathbb{Q}^\star}\Gla\I\Gl\proj_{\mathbb{Q}^\star}(Z)\|_{\Phi_\gamma}| &\leq  2(5\omega+4\kappa^\star)\|\I\|_2\max\{\gamma,1\}\\&+(4\kappa^\star+5\omega)\|\I\|_2(d^\star/\gamma+1).
\end{aligned}
\end{eqnarray*}
\label{lemma:Q'_Qstar}
\end{lemma}

\begin{proof}Consider any $Z$ with $\|Z\|_{\Phi_\gamma} = 1$. Then, 
\begin{eqnarray*}
\begin{aligned}
\|\proj_{\mathbb{Q}'}\Gla\I\Gl\proj_{\mathbb{Q}'}(Z)\|_{\Phi_\gamma} &\geq \|\proj_{\mathbb{Q}'}\Gla\I\Gl\proj_{\mathbb{Q}^\star}(Z)\|_{\Phi_\gamma}-\|\proj_{\mathbb{Q}'}\Gla\I\Gl(\proj_{\mathbb{Q}'}-\proj_{\mathbb{Q}^\star})(Z)\|_{\Phi_\gamma}\\
&\geq \|\proj_{\mathbb{Q}^\star}\Gla\I\Gl\proj_{\mathbb{Q}^\star}(Z)\|_{\Phi_\gamma}-\|\proj_{\mathbb{Q}'}\Gla\I\Gl(\proj_{\mathbb{Q}'}-\proj_{\mathbb{Q}^\star})(Z)\|_{\Phi_\gamma}\\&-\|(\proj_{\mathbb{Q}'}-\proj_{\mathbb{Q}^\star})\Gla\I\Gl\proj_{\mathbb{Q}^\star}(Z)\|_{\Phi_\gamma}
\end{aligned}
\end{eqnarray*} 
Some algebra and appealing to Lemma~\ref{lemma:Q_Q'} leads to the conclusions that
\begin{eqnarray*}
\begin{aligned}
\|\proj_{\mathbb{Q}'}\Gla\I\Gl(\proj_{\mathbb{Q}'}-\proj_{\mathbb{Q}^\star})(Z)\|_{\Phi_\gamma} &\leq 2(5\omega+4\kappa^\star)\|\I\|_2\max\{\gamma,1\},
\end{aligned}
\end{eqnarray*}
and that $\|\proj_{\mathbb{Q}^\star}(Z)\|_{\Phi_\gamma} \geq (1-(4\kappa^\star+5\omega))$. Furthermore, denote $Z_1 = \proj_{\Omega^\star}(Z)$ and $Z_2 =  \proj_{T^\star\oplus \mathrm{span}(\ones)}(Z)$. Notice that $\|Z_1\|_2 \leq \|Z_1\|_\infty\theta(\Omega^\star)\leq \|Z_1\|_\infty{d^\star}$. Again appealing to Lemma~\ref{lemma:Q_Q'}:
\begin{eqnarray*}
\begin{aligned}
\|(\proj_{\mathbb{Q}'}-\proj_{\mathbb{Q}^\star})\Gla\I\Gl\proj_{\mathbb{Q}^\star}(Z)\|_{\Phi_\gamma} &\leq (4\kappa^\star+5\omega)\|\I\|_2(d^\star/\gamma+1)
\end{aligned}
\end{eqnarray*}
Putting things together, we have the first desired result. The second desired result follows from a similar analysis as the first part.
\end{proof}

\begin{lemma}We begin with the following lemmas where we let $\mathbb{Q}^\star = \Omega^\star \times (T^\star + \mathrm{span}(\ones))$ and $\mathbb{Q}'= \Omega^\star \times (T'+ \mathrm{span}(\ones))$ where $\rho(T',T^\star)\leq \omega)$. Suppose
\begin{eqnarray*}
\begin{aligned}
\min_{Z \in \mathbb{Q}^\star, \|Z\|_{\Phi_\gamma} = 1}\|\proj_{\mathbb{Q}^\star}\Gla\I\Gl\proj_{\mathbb{Q}^\star}(Z)\|_{\Phi_\gamma} \geq \beta &> \frac{2(5\omega+4\kappa^\star)\|\I\|_2\max\{\gamma,1\}}{1-(4\kappa^\star+5\omega)}\\&+\frac{(d^\star+\gamma)(4\kappa^\star+5\omega)\|\I\|_2\max\{\gamma,1\}}{1-(4\kappa^\star+5\omega)},
\end{aligned}
\end{eqnarray*}
and 
\begin{eqnarray*}
\max_{Z \in \mathbb{Q}^\star, \|Z\|_{\Phi_\gamma} = 1}\|\proj_{{\mathbb{Q}^\star}^\perp}\Gla\I\Gl\proj_{\mathbb{Q}^\star}(\proj_{{\mathbb{Q}^\star}}\Gla\I\Gl\proj_{\mathbb{Q}^\star})^{-1}(Z)\|_{\Phi_\gamma} \leq \zeta,
\end{eqnarray*}
and 
\begin{eqnarray*}
\max_{Z \in \mathbb{Q}^\star, \|Z\|_{\Phi_\gamma} = 1}\|\proj_{{\mathbb{Q}^\star}^\perp}\Gla\I\Gl\proj_{\mathbb{Q}^\star}(Z)\|_{\Phi_\gamma} \leq \delta.
\end{eqnarray*}
Then, 
\begin{eqnarray*}
\begin{aligned}
&\max_{Z \in \mathbb{Q}', \|Z\|_{\Phi_\gamma} = 1}\|\proj_{{\mathbb{Q}'}^\perp}\Gla\I\Gl\proj_{\mathbb{Q}^\star}(\proj_{{\mathbb{Q}'}}\Gla\I\Gl\proj_{\mathbb{Q}'})^{-1}(Z)\|_{\Phi_\gamma} \leq \\& (2\delta + 2\|\I\|\max\{\gamma,1\}(4\kappa^\star+5\omega)+\|\I\|(d^\star/\gamma+1)(4\kappa^\star+5\omega))\frac{\|\Delta\|_{\Phi_\gamma}+4\kappa^\star+5\omega}{\beta}\\&+ \zeta(1+5\omega+4\kappa^\star)+ \|\I\|_2\max\{\gamma,1\}(5\omega+4\kappa^\star)\frac{1+5\omega+4\kappa^\star}{\beta}\\&+  \|\I\|_2\max\{\gamma,1\}(5\omega+4\kappa^\star)(1+d^\star/\gamma)\frac{1+5\omega+4\kappa^\star}{\beta},\\
\end{aligned}
\end{eqnarray*}
where, 
$$\|\Delta\|_{\Phi_\gamma} \leq \frac{1}{\tilde{\beta}}\left(2(5\omega+4\kappa^\star)\|\I\|_2\max\{\gamma,1\}+(d^\star+\gamma)(4\kappa^\star+5\omega)\|\I\|_2\max\{\gamma,1\}\right),$$
and 
$$\tilde{\beta}:= \beta-{2(5\omega+4\kappa^\star)\|\I\|_2\max\{\gamma,1\}}-{(d^\star+\gamma)(4\kappa^\star+5\omega)\|\I\|_2\max\{\gamma,1\}}.$$
\label{lemma:Q'_Q'perp}
\end{lemma} 

\begin{proof}
Take $Z \in \mathbb{Q}'$. Let $Z = \proj_{{\mathbb{Q}'}}\Gla\I\Gl\proj_{\mathbb{Q}'}(A)$. Define $\Delta:= \proj_{{\mathbb{Q}^\star}}\Gla\I\Gl\proj_{\mathbb{Q}^\star}(A)-Z$. Then, by Lemma~\ref{lemma:Q'_Qstar}
$$\|\Delta\|_{\Phi_\gamma} \leq \|A\|_{\Phi_\gamma}\left(2(5\omega+4\kappa^\star)\|\I\|_2\max\{\gamma,1\}+(4\kappa^\star+5\omega)\|\I\|_2(d^\star/\gamma+1)\right),$$
and that $\|A\|_{\Phi_\gamma} \leq 1/\tilde{\beta}$ where 
\begin{eqnarray*}
\begin{aligned}
\tilde{\beta}&:= \beta(1-(4\kappa^\star+5\omega)) \\&- 2(5\omega+4\kappa^\star)\|\I\|_2\max\{\gamma,1\}\\&-(4\kappa^\star+5\omega)\|\I\|_2(d^\star/\gamma+1).
\end{aligned}
\end{eqnarray*}
Let $B = (\proj_{{\mathbb{Q}^\star}}\Gla\I\Gl\proj_{\mathbb{Q}^\star})^{-1}\proj_{\mathbb{Q}^\star}(Z)$. Then, $\proj_{{\mathbb{Q}^\star}}\Gla\I\Gl\proj_{\mathbb{Q}^\star}(A-B) = \Delta+\proj_{{\mathbb{Q}^\star}^\perp}(Z)$, which implies that:
$$\|A-B\|_{\Phi_\gamma} \leq \frac{\|\Delta\|_{\Phi_\gamma}+\|\proj_{{\mathbb{Q}^\star}^\perp}(Z)\|_{\Phi_\gamma}}{\beta} \leq \frac{\|\Delta\|_{\Phi_\gamma}+4\kappa^\star+5\omega}{\beta}.$$
Note that:
\begin{eqnarray*}
\begin{aligned}
&\|\proj_{{\mathbb{Q}'}^\perp}\Gla\I\Gl\proj_{\mathbb{Q}'}(\proj_{{\mathbb{Q}'}}\Gla\I\Gl\proj_{\mathbb{Q}'})^{-1}(Z)\|_{\Phi_\gamma} \leq\\
&\underbrace{\|\proj_{{\mathbb{Q}'}^\perp}\Gla\I\Gl\proj_{\mathbb{Q}'}(\proj_{{\mathbb{Q}^\star}}\Gla\I\Gl\proj_{\mathbb{Q}^\star})^{-1}\proj_{\mathbb{Q}^\star}(Z)\|_{\Phi_\gamma}}_{T1}\\
&+ \underbrace{\|\proj_{{\mathbb{Q}'}^\perp}\Gla\I\Gl\proj_{\mathbb{Q}'}\left((\proj_{{\mathbb{Q}'}}\Gla\I\Gl\proj_{\mathbb{Q}'})^{-1}(Z) - (\proj_{{\mathbb{Q}^\star}}\Gla\I\Gl\proj_{\mathbb{Q}^\star})^{-1}\proj_{\mathbb{Q}^\star}(Z)\right)\|_{\Phi_\gamma}}_{T2}.
\end{aligned}
\end{eqnarray*}
Notice that for any $M \in \mathbb{R}^{p \times p} \times \mathbb{R}^{p \times p}$, appealing to Lemma~\ref{lemma:Q_Q'}
\begin{eqnarray*}
\begin{aligned}
\|\proj_{{\mathbb{Q}'}^\perp}\Gla\I\Gl\proj_{\mathbb{Q}'}(M)\|_{\Phi_\gamma} &\leq \|\proj_{{\mathbb{Q}^\star}^\perp}\Gla\I\Gl\proj_{\mathbb{Q}^\star}(M)\|_{\Phi_\gamma},\\&+  \|\proj_{{\mathbb{Q}'}^\perp}\Gla\I\Gl(\proj_{\mathbb{Q}'}-\proj_{\mathbb{Q}^\star})(M)\|_{\Phi_\gamma}\\&+\|(\proj_{{\mathbb{Q}'}^\perp}-\proj_{{\mathbb{Q}^\star}^\perp})\Gla\I\Gl\proj_{\mathbb{Q}^\star}(M)\|_{\Phi_\gamma},\\
&\leq 2\delta + 2\|\I\|\max\{\gamma,1\}(4\kappa^\star+5\omega)+\|\I\|(d^\star/\gamma+1)(4\kappa^\star+5\omega).
\end{aligned}
\end{eqnarray*}

\begin{eqnarray*}
T_2 \leq (2\delta + 2\|\I\|\max\{\gamma,1\}(4\kappa^\star+5\omega)+\|\I\|(d^\star/\gamma+1)(4\kappa^\star+5\omega))\|A-B\|_{\Phi_\gamma}.
\end{eqnarray*}
To control $T_1$, notice that for any $M \in \mathbb{R}^{p \times p} \times \mathbb{R}^{p \times p}$, appealing to Lemma~\ref{lemma:Q_Q'}
\begin{eqnarray*}
\begin{aligned}
\|\proj_{{\mathbb{Q}'}^\perp}\Gla\I\Gl\proj_{\mathbb{Q}'}(M)\|_{\Phi_\gamma} &\leq \|\proj_{{\mathbb{Q}'}^\perp}\Gla\I\Gl\proj_{\mathbb{Q}^\star}(M)\|_{\Phi_\gamma} + \|\I\|_2\max\{\gamma,1\}(5\omega+4\kappa^\star)\\
&\leq \|\proj_{{\mathbb{Q}^\star}^\perp}\Gla\I\Gl\proj_{\mathbb{Q}^\star}(M)\|_{\Phi_\gamma}+ \|(\proj_{{\mathbb{Q}'}^\perp}-\proj_{{\mathbb{Q}^\star}^\perp})\Gla\I\Gl\proj_{\mathbb{Q}^\star}(M)\|_{\Phi_\gamma} \\ &+ \|\I\|_2\max\{\gamma,1\}(5\omega+4\kappa^\star)\|\proj_{\mathbb{Q}^\star}(M)\|_{\Phi_\gamma}\\
&\leq \|\proj_{{\mathbb{Q}^\star}^\perp}\Gla\I\Gl\proj_{\mathbb{Q}^\star}(M)\|_{\Phi_\gamma}\\&+ \|\I\|_2\max\{\gamma,1\}(5\omega+4\kappa^\star)(1+d^\star/\gamma)\|\proj_{\mathbb{Q}^\star}(M)\|_{\Phi_\gamma}\\ &+ \|\I\|_2\max\{\gamma,1\}(5\omega+4\kappa^\star)\|\proj_{\mathbb{Q}^\star}(M)\|_{\Phi_\gamma}
\end{aligned}
\end{eqnarray*}
Setting $M = (\proj_{{\mathbb{Q}^\star}}\Gla\I\Gl\proj_{\mathbb{Q}^\star})^{-1}\proj_{\mathbb{Q}^\star}(Z)$ and noting that $\|M\|_{\Phi_\gamma} \leq \frac{1+5\omega+4\kappa^\star}{\beta}$, we have the following bound for $T_1$:
\begin{eqnarray*}
\begin{aligned}
T_1 &\leq \zeta(1+5\omega+4\kappa^\star)+ \|\I\|_2\max\{\gamma,1\}(5\omega+4\kappa^\star)\frac{1+5\omega+4\kappa^\star}{\beta}\\&+  \|\I\|_2\max\{\gamma,1\}(5\omega+4\kappa^\star)(1+d^\star/\gamma)\frac{1+5\omega+4\kappa^\star}{\beta}
\end{aligned}
\end{eqnarray*}
Combining the bounds on $T_1$ and $T_2$, we have the desired result. 
\end{proof}

{\section{{A numerical approach to verifying Assumptions~\ref{ass:1}-\ref{ass:3}}}}
\label{sec:numerical_approach}
In our numerical approach, we obtain lower bound for
\begin{eqnarray}
\begin{aligned}
&\min_{Z \in \mathbb{Q}^\star, \|Z\|_{\Phi_\gamma}=1}\|\proj_{{\mathbb{Q}}^\star}\Gla\I\Gl\proj_{\mathbb{Q}^\star}(Z)\|_{\Phi_\gamma},
\end{aligned}
\label{eqn:temp1_numerical_approach}
\end{eqnarray}
and and an upper bound for
\begin{eqnarray}
\begin{aligned}
\max_{Z \in \mathbb{Q}^\star, \|Z\|_{\Phi_\gamma}=1}\|\proj_{{\mathbb{Q}^\star}^\perp}\Gla\I\Gl\proj_{\mathbb{Q}^\star}(\proj_{{\mathbb{Q}^\star}}\Gla\I\Gl\proj_{\mathbb{Q}^\star})^{-1}(Z)\|_{\Phi_\gamma}. 
\end{aligned}
\label{eqn:temp2_numerical_approach}
\end{eqnarray}
We then can appeal to Lemmas~\ref{lemma:Q'_Qstar}-\ref{lemma:Q'_Q'perp} to quantify the quantities in Assumptions~\ref{ass:1}-\ref{ass:3}. To evaluate \eqref{eqn:temp1_numerical_approach}, consider $Z = (Z_1,Z_2)$ where $Z_1 \in \Omega^\star$ with $\|Z_1\|_\infty = 1$ and $Z_2 \in T^\star \oplus \mathrm{span}(\ones)$ with $\|Z_2\|_2 \leq \gamma$. Then, 
\begin{eqnarray*}
\begin{aligned}
\|\proj_{{\mathbb{Q}}^\star}\Gla\I\Gl\proj_{\mathbb{Q}^\star}(Z)\|_{\Phi_\gamma} &\geq \|\proj_{\Omega^\star}\I(Z_1)\|_{\infty} - \|\proj_{\Omega^\star}\I(Z_2)\|_{\infty} \\&\geq \min_{Z_1 \in\Omega^\star, \|Z_1\|_\infty=1}\|\proj_{\Omega^\star}\I(Z_1)\|_{\infty} - \gamma\max_{Z_2 \in T^\star\oplus \mathrm{span}(\ones), \|Z_2\|_2=1}\|\proj_{\Omega^\star}\I(Z_2)\|_{\infty}.  
\end{aligned}
\end{eqnarray*}
Now consider $Z = (Z_1,Z_2)$ where $Z_1 \in \Omega^\star$ with $\|Z_1\|_\infty \leq 1$ and $Z_2 \in T^\star \oplus \mathrm{span}(\ones)$ with $\|Z_2\|_2 = \gamma$. Then, 
\begin{eqnarray*}
\begin{aligned}
\|\proj_{{\mathbb{Q}}^\star}\Gla\I\Gl\proj_{\mathbb{Q}^\star}(Z)\|_{\Phi_\gamma} &\geq \frac{1}{\gamma}\|\proj_{T^\star \oplus \mathrm{span}(\ones)}\I(Z_2)\|_{2} - \frac{1}{\gamma}\|\proj_{T^\star \oplus \mathrm{span}(\ones)}\I(Z_1)\|_{2} \\&\geq \min_{\substack{Z_2 \in T^\star \oplus \mathrm{span}(\ones)\\ \|Z_2\|_\infty=1}}\|\proj_{T^\star \oplus \mathrm{span}(\ones)}\I(Z_2)\|_{2} \\&- \frac{1}{\gamma}\max_{\substack{Z_1 \in \Omega^\star\\ \|Z_1\|_\infty = 1}}\|\proj_{T^\star \oplus \mathrm{span}(\ones)}\I(Z_1)\|_{2}.  
\end{aligned}
\end{eqnarray*}
Thus, we obtain the following lower bound for \eqref{eqn:temp1_numerical_approach}:
\begin{eqnarray*}
\begin{aligned}
&\min_{Z \in \mathbb{Q}^\star, \|Z\|_{\Phi_\gamma}=1}\|\proj_{{\mathbb{Q}}^\star}\Gla\I\Gl\proj_{\mathbb{Q}^\star}(Z)\|_{\Phi_\gamma} \geq\\
& \min\Bigg\{\min_{Z_1 \in\Omega^\star, \|Z_1\|_\infty=1}\|\proj_{\Omega^\star}\I(Z_1)\|_{\infty} - \gamma\max_{Z_2 \in T^\star\oplus \mathrm{span}(\ones), \|Z_2\|_2=1}\|\proj_{\Omega^\star}\I(Z_2)\|_{\infty},\\
&~~~\min_{\substack{Z_2 \in T^\star \oplus \mathrm{span}(\ones)\\ \|Z_2\|_\infty=1}}\|\proj_{T^\star \oplus \mathrm{span}(\ones)}\I(Z_2)\|_{2}- \frac{1}{\gamma}\max_{\substack{Z_1 \in \Omega^\star\\ \|Z_1\|_\infty = 1}}\|\proj_{T^\star \oplus \mathrm{span}(\ones)}\I(Z_1)\|_{2} \Bigg\}.
\end{aligned}
\end{eqnarray*}
The individual terms above are computed approximately by sampling. To obtain an upper bound for \eqref{eqn:temp2_numerical_approach}, consider $Z = (Z_1,Z_2)$ where $Z_1 \in \Omega^\star$ with $\|Z_1\|_\infty = 1$ and $Z_2 \in T^\star \oplus \mathrm{span}(\ones)$ with $\|Z_2\|_2 \leq \gamma$. Then, 
\begin{eqnarray*}
\begin{aligned}
&\|\proj_{{\mathbb{Q}^\star}^\perp}\Gla\I\Gl\proj_{\mathbb{Q}^\star}(\proj_{{\mathbb{Q}^\star}}\Gla\I\Gl\proj_{\mathbb{Q}^\star})^{-1}(Z)\|_{\Phi_\gamma} \\&\leq\|\proj_{{\Omega^\star}^\perp}\I\Gl\proj_{\mathbb{Q}^\star}(\proj_{{\mathbb{Q}^\star}}\Gla\I\Gl\proj_{\mathbb{Q}^\star})^{-1}(Z_1,0)\|_\infty \\
&+ \|\proj_{{\Omega^\star}^\perp}\Gla\I\Gl\proj_{\mathbb{Q}^\star}(\proj_{{\mathbb{Q}^\star}}\Gla\I\Gl\proj_{\mathbb{Q}^\star})^{-1}(0,Z_2)\|_\infty \\
&\leq \max_{Z_1 \in \Omega^\star, \|Z_1\|_\infty = 1}\|\proj_{{\Omega^\star}^\perp}\I\Gl\proj_{\mathbb{Q}^\star}(\proj_{{\mathbb{Q}^\star}}\Gla\I\Gl\proj_{\mathbb{Q}^\star})^{-1}(Z_1,0)\|_\infty \\
&+ \gamma\max_{Z_2 \in T^\star \oplus \mathrm{span}(\ones), \|Z_2\|=1} \|\proj_{{\Omega^\star}^\perp}\Gla\I\Gl\proj_{\mathbb{Q}^\star}(\proj_{{\mathbb{Q}^\star}}\Gla\I\Gl\proj_{\mathbb{Q}^\star})^{-1}(0,Z_2)\|_\infty  
\end{aligned}
\end{eqnarray*}
Now consider $Z = (Z_1,Z_2)$ where $Z_1 \in \Omega^\star$ with $\|Z_1\|_\infty \;e 1$ and $Z_2 \in T^\star \oplus \mathrm{span}(\ones)$ with $\|Z_2\|_2 = \gamma$. Then, 
\begin{eqnarray*}
\begin{aligned}
&\|\proj_{{\mathbb{Q}^\star}^\perp}\Gla\I\Gl\proj_{\mathbb{Q}^\star}(\proj_{{\mathbb{Q}^\star}}\Gla\I\Gl\proj_{\mathbb{Q}^\star})^{-1}(Z)\|_{\Phi_\gamma} \\&\leq\frac{1}{\gamma}\|\proj_{{T^\star\oplus\mathrm{span}(\ones)}^\perp}\I\Gl\proj_{\mathbb{Q}^\star}(\proj_{{\mathbb{Q}^\star}}\Gla\I\Gl\proj_{\mathbb{Q}^\star})^{-1}(0,Z_2)\|_2 \\
&+ \frac{1}{\gamma}\|\proj_{{T^\star\oplus\mathrm{span}(\ones)}^\perp}\Gla\I\Gl\proj_{\mathbb{Q}^\star}(\proj_{{\mathbb{Q}^\star}}\Gla\I\Gl\proj_{\mathbb{Q}^\star})^{-1}(Z_1,0)\|_2\\
&\leq \max_{Z_2 \in T^\star\oplus\mathrm{span}(\ones), \|Z_2\|_2 = 1}\|\proj_{{T^\star\oplus\mathrm{span}(\ones)}^\perp}\I\Gl\proj_{\mathbb{Q}^\star}(\proj_{{\mathbb{Q}^\star}}\Gla\I\Gl\proj_{\mathbb{Q}^\star})^{-1}(0,Z_2)\|_2 \\
&+ \frac{1}{\gamma}\max_{Z_1 \in \Omega^\star, \|Z_1\|_\infty=1} \|\proj_{{T^\star\oplus\mathrm{span}(\ones)}^\perp}\Gla\I\Gl\proj_{\mathbb{Q}^\star}(\proj_{{\mathbb{Q}^\star}}\Gla\I\Gl\proj_{\mathbb{Q}^\star})^{-1}(Z_1,0)\|_2
\end{aligned}
\end{eqnarray*}
Thus, we obtain the following upper bound for \eqref{eqn:temp2_numerical_approach}:
\begin{eqnarray*}
\begin{aligned}
&\max_{Z \in \mathbb{Q}^\star, \|Z\|_{\Phi_\gamma}=1}\|\proj_{{\mathbb{Q}^\star}^\perp}\Gla\I\Gl\proj_{\mathbb{Q}^\star}(\proj_{{\mathbb{Q}^\star}}\Gla\I\Gl\proj_{\mathbb{Q}^\star})^{-1}(Z)\|_{\Phi_\gamma}\\
&\leq \max\Bigg\{\max_{Z_1 \in \Omega^\star, \|Z_1\|_\infty = 1}\|\proj_{{\Omega^\star}^\perp}\I\Gl\proj_{\mathbb{Q}^\star}(\proj_{{\mathbb{Q}^\star}}\Gla\I\Gl\proj_{\mathbb{Q}^\star})^{-1}(Z_1,0)\|_\infty \\
&+ \gamma\max_{Z_2 \in T^\star \oplus \mathrm{span}(\ones), \|Z_2\|=1} \|\proj_{{\Omega^\star}^\perp}\Gla\I\Gl\proj_{\mathbb{Q}^\star}(\proj_{{\mathbb{Q}^\star}}\Gla\I\Gl\proj_{\mathbb{Q}^\star})^{-1}(0,Z_2)\|_\infty\\&, \max_{Z_2 \in T^\star\oplus\mathrm{span}(\ones), \|Z_2\|_2 = 1}\|\proj_{{T^\star\oplus\mathrm{span}(\ones)}^\perp}\I\Gl\proj_{\mathbb{Q}^\star}(\proj_{{\mathbb{Q}^\star}}\Gla\I\Gl\proj_{\mathbb{Q}^\star})^{-1}(0,Z_2)\|_2 \\
&+ \frac{1}{\gamma}\max_{Z_1 \in \Omega^\star, \|Z_1\|_\infty=1} \|\proj_{{T^\star\oplus\mathrm{span}(\ones)}^\perp}\Gla\I\Gl\proj_{\mathbb{Q}^\star}(\proj_{{\mathbb{Q}^\star}}\Gla\I\Gl\proj_{\mathbb{Q}^\star})^{-1}(Z_1,0)\|_2\Bigg\}
\end{aligned}
\end{eqnarray*}
Again, the individual terms above are computed approximately by sampling. Finally, we note that appealing to Lemmas~\ref{lemma:Q'_Q'perp} involves computing an upper-bound for:
\begin{eqnarray}
\max_{Z \in \mathbb{Q}^\star, \|Z\|_{\Phi_\gamma}=1}\|\proj_{{{\mathbb{Q}}^\star}^\perp}\Gla\I\Gl\proj_{\mathbb{Q}^\star}(Z)\|_{\Phi_\gamma}. 
\label{eqn:temp3_numerical_approach}
\end{eqnarray} 
To obtain an upper bound, consider $Z = (Z_1,Z_2)$ where $Z_1 \in \Omega^\star$ with $\|Z_1\|_\infty = 1$ and $Z_2 \in T^\star \oplus \mathrm{span}(\ones)$ with $\|Z_2\|_2 \leq \gamma$. Then, 
\begin{eqnarray*}
\begin{aligned}
\|\proj_{{{\mathbb{Q}}^\star}^\perp}\Gla\I\Gl\proj_{\mathbb{Q}^\star}(Z)\|_{\Phi_\gamma} &\leq \|\proj_{{\Omega^\star}^\perp}\I(Z_1)\|_{\infty} + \|\proj_{{\Omega^\star}^\perp}\I(Z_2)\|_{\infty} \\&\leq \max_{Z_1 \in\Omega^\star, \|Z_1\|_\infty=1}\|\proj_{{\Omega^\star}^\perp}\I(Z_1)\|_{\infty} + \gamma\max_{\substack{Z_2 \in T^\star\oplus \mathrm{span}(\ones)\\ \|Z_2\|_2=1}}\|\proj_{{\Omega^\star}^\perp}\I(Z_2)\|_{\infty}.  
\end{aligned}
\end{eqnarray*}
Now consider $Z = (Z_1,Z_2)$ where $Z_1 \in \Omega^\star$ with $\|Z_1\|_\infty \leq 1$ and $Z_2 \in T^\star \oplus \mathrm{span}(\ones)$ with $\|Z_2\|_2 = \gamma$. Then, 
\begin{eqnarray*}
\begin{aligned}
\|\proj_{{{\mathbb{Q}}^\star}^\perp}\Gla\I\Gl\proj_{\mathbb{Q}^\star}(Z)\|_{\Phi_\gamma} &\leq \frac{1}{\gamma}\|\proj_{{T^\star \oplus \mathrm{span}(\ones)}^\perp}\I(Z_2)\|_{2} +\frac{1}{\gamma}\|\proj_{{T^\star \oplus \mathrm{span}(\ones)}^\perp}\I(Z_1)\|_{2} \\&\geq \max_{\substack{Z_2 \in T^\star \oplus \mathrm{span}(\ones)\\ \|Z_2\|_\infty=1}}\|\proj_{{T^\star \oplus \mathrm{span}(\ones)}^\perp}\I(Z_2)\|_{2} \\&+ \frac{1}{\gamma}\max_{\substack{Z_1 \in \Omega^\star\\ \|Z_1\|_\infty = 1}}\|\proj_{{T^\star \oplus \mathrm{span}(\ones)}^\perp}\I(Z_1)\|_{2}.  
\end{aligned}
\end{eqnarray*}
Thus, we obtain the following lower bound for \eqref{eqn:temp3_numerical_approach}:
\begin{eqnarray*}
\begin{aligned}
&\max_{Z \in \mathbb{Q}^\star, \|Z\|_{\Phi_\gamma}=1}\|\proj_{{\mathbb{Q}^\star}^\perp}\Gla\I\Gl\proj_{\mathbb{Q}^\star}(Z)\|_{\Phi_\gamma} \geq\\
& \max\Bigg\{\max_{Z_1 \in\Omega^\star, \|Z_1\|_\infty=1}\|\proj_{{\Omega^\star}^\perp}\I(Z_1)\|_{\infty} + \gamma\max_{\substack{Z_2 \in T^\star\oplus \mathrm{span}(\ones)\\ \|Z_2\|_2=1}}\|\proj_{{\Omega^\star}^\perp}\I(Z_2)\|_{\infty},\\&\max_{\substack{Z_2 \in T^\star \oplus \mathrm{span}(\ones)\\ \|Z_2\|_\infty=1}}\|\proj_{{T^\star \oplus \mathrm{span}(\ones)}^\perp}\I(Z_2)\|_{2} + \frac{1}{\gamma}\max_{\substack{Z_1 \in \Omega^\star\\ \|Z_1\|_\infty = 1}}\|\proj_{{T^\star \oplus \mathrm{span}(\ones)}^\perp}\I(Z_1)\|_{2} \Bigg\}.
\end{aligned}
\end{eqnarray*}

\section{Sufficient Hessian conditions and choice of $\gamma$ that satisfies Assumptions~\ref{ass:1}-\ref{ass:3}}
\label{sufficient_hessian}
\emph{Behavior of $\I$ with respect to $\Omega^\star$}. Let
\begin{eqnarray*}
\begin{aligned}
\alpha_\Omega &:=\min_{N \in \Omega^\star, \|N\|_\infty = 1}\|\proj_{\Omega^\star}\I\proj_{\Omega^\star}(N)\|_\infty, \\
\delta_{\Omega^\perp} &: \max_{N \in \Omega^\star, \|N\|_\infty = 1}\|\proj_{{\Omega^\star}^\perp}\I\proj_{\Omega^\star}(N)\|_\infty,\\
\beta_\Omega &:=\max_{N \in \Omega^\star, \|N\|_2 = 1}\|\I(N)\|_2,
%\alpha_T &:=\min_{N \in T', \rho(T',T^\star) \leq \frac{\min\{\kappa,\xi(T^\star)\}}{2}}\|\proj_{T^\star}\I\proj_{T^\star}(N)\|_2\\
%\delta_{T^\perp} &:=\max_{N \in T', \rho(T',T^\star) \leq \frac{\min\{\kappa,\xi(T^\star)\}}{2}, \|N\|_2=1}\max\left\{\|\proj_{{T'}^\perp}\I\proj_{{T'}}(N)\|_2, \|\proj_{{T'}}\I\proj_{{T'}^\perp}(N)\|_2\right\}\\
%\beta_T &:=\max_{N \in T', \rho(T',T^\star) \leq \frac{\min\{\kappa,\xi(T^\star)\}}{2}, \|N\|_\infty = 1}\|\I(N)\|_\infty\\
%\alpha'_{T+} &:=\min_{N \in T' \oplus T(\ones), \rho(T',T^\star) \leq \frac{\min\{\kappa,\xi(T^\star)\}}{2}}\|\proj_{T' \oplus T(\ones)}\I\proj_{T' \oplus T(\ones)}(N)\|_2 
\end{aligned}
\end{eqnarray*}
be functions $\I$ with respect to $\Omega^\star$. Here, $\alpha_\Omega$ quantifies the {minimum gain} of $\I$ restricted to subspace $\Omega^\star$ and with respect to the $\ell_\infty$ norm (the minimum gain of a matrix $M$ restricted to subspace $S$ and with respect to norm $\|\cdot\|$ is $\min_{x \in S, \|x\|=1}\|P_S{M}P_S(x)\|$); the quantity $\delta_\Omega$ computes the inner-product between elements in $\Omega^\star$ and ${\Omega^\star}^\perp$ as quantified by the metric induced by $\I$; and finally, $\beta_\Omega$ quantifies the behavior of $\I$ restricted to $\Omega^\star$ in spectral norm.

\emph{Behavior of $\I$ with respect to $T^\star$}. Similar to $\Omega^\star$, we control the behavior of $\I$ associated with the subspace $T^\star$. We control the behavior of $\I$ for tangent spaces $T'$ close to the tangent space $T^\star$:
\begin{eqnarray*}
\begin{aligned}
\alpha_T &:=\min_{N \in T' \oplus \mathrm{span}(\ones), \rho(T',T^\star) \leq \omega,\|N\|_2=1}\|\proj_{T'\oplus \mathrm{span}(\ones)}\I\proj_{T'\oplus \mathrm{span}(\ones)}(N)\|_2,\\
\delta_{T^\perp} &:= \max_{\substack{N \in T'\oplus\mathrm{span}(\ones), \rho(T',T^\star) \leq \omega,\\ \|N\|_2\leq{1}}}\|\proj_{{T'\oplus\mathrm{span}(\ones)}^\perp}\I\proj_{{T'\oplus\mathrm{span}(\ones)}}(N)\|_2,\\
\beta_T &:=\max_{N \in T' \oplus \mathrm{span}(\ones), \rho(T',T^\star) \leq \omega, \|N\|_\infty = 1}\|\I(N)\|_\infty.
\end{aligned}
\end{eqnarray*}
Here, $\alpha_T$ quantifies the minimum gain of $\I$ restricted to tangent spaces $T'\oplus \mathrm{span}(\ones)$ that are close to $T^\star$ with respect to the spectral norm; the quantify $\delta_T$ computes the inner-product between elements in $T'$ and ${T'}^\perp$ as quantified by the metric induced by $\I$; and finally, $\beta_T$ quantifies the behavior of $\I$ restricted to $T'\oplus\mathrm{span}(\ones)$ and ${T'}^\perp$ in infinity norm. 

With these above quantities defined, and letting $\tilde{\alpha} := \min\{\alpha_\Omega,\alpha_T\}$, $\tilde{\delta} := \max\{\delta_{\Omega^\perp},\delta_{T^\perp}\}$, and $\tilde{\beta} := \max\{\beta_\Omega,\beta_T\}$, the main assumptions are the following. Recall that $d^\star := \max_{i} \sum_{j=1}^p \mathbb{I}[|S^\star_{ij|>0}$ represents the maximal degree of the conditional graphical structure of the observed variables conditioned on the latent variables and $\inc^\star := \max_{i} \|\mathcal{P}_{\text{col-space}(L^\star)}e_i\|_2$ represents the denseness of the latent effects with $e_i$ denoting a standard coordinate basis element. 
\begin{assumption}$\tilde{\alpha} > 0$.
\label{ass:min_gain}
\end{assumption}

\begin{assumption} There exists $\tilde{\nu} \in (2\omega,1/2)$ such that $\tilde{\delta}/\tilde{\alpha} \leq 1-2\tilde{\nu}$.
\label{assumption:interpretable_main}
\end{assumption}
{
\begin{assumption}
\label{assumption:prod_inc_degree}
The product of degree of sparsity of $S^\star$, $d^\star$, and the diffuseness of the latent effects, $\inc^\star$, is bounded as follows: $d^\star(6\inc^\star+\omega) \leq \frac{\tilde{\nu}^2\tilde{\alpha}^2}{2\tilde{\beta}^2(2-\tilde{\nu})^2}$ where $\tilde{\nu}$ and $d^\star \leq \frac{\tilde{\alpha}^2\tilde{\nu}}{32\omega\tilde{\beta}(2-\tilde{\nu}(\|\I(F)\|_2 + \|\I\|_2\omega)}$.
\end{assumption}
}

\begin{assumption} The regularization parameter $\gamma$ chosen in the following range:
$$\gamma \in \left[\frac{2\tilde{\beta}{d^\star}(2-\tilde{\nu})}{\tilde{\nu}\tilde{\alpha}},\min\left\{\frac{\tilde{\nu}\tilde{\alpha}(1-\omega)}{\tilde{\beta}(6\mu^\star+\omega)(2-\tilde{\nu})},\frac{\tilde{\alpha}}{16\omega(\|\mathbb{I}^\star(F)\|_2+\|\mathbb{I}^\star\|_2\omega+1)}\right\}\right]$$
\label{assumption:choice_gamma}
\end{assumption}

\begin{assumption} $\kappa^\star:= \|\proj_{{T^\star}^\perp}(\ones/p)\|_2 \in \left(\omega,\min\left\{2\tilde{\nu}, \frac{\tilde{\alpha}}{16\max\{\gamma,1\}(\|\mathbb{I}^\star(F)\|_2+\|\mathbb{I}^\star\|_2\omega+1)}-\omega\right\}\right)$.
\label{assumption:kappa}
\end{assumption}

Assumptions~\eqref{ass:min_gain}-\ref{assumption:prod_inc_degree} are akin to conditions imposed in \cite{Chand2012}, although our conditions the subspace $T' \oplus \mathrm{span}(\ones)$ that arises from the additional zero row sum constraint in our estimator. Assumption~\ref{assumption:prod_inc_degree} ensures that $d^\star$ and $\inc^\star$ are not simultaneously large, and this type of condition was shown to be sufficient for recovering a sparse and low-rank matrix from their sum using mix of $\ell_1$ and nuclear norm regularization \citep{Chandrasekaran2011RankSparsityIF}. Assumption~\ref{assumption:kappa} is a new condition relative to \cite{Chand2012} to deal with the dual parameter $t\ones$ that arises from the zero sum constraint.

\begin{lemma}Under Assumptions~\ref{assumption:interpretable_main}-\ref{assumption:kappa}, we have that Hessian assumptions~\ref{ass:1}-\ref{ass:3} for some $\alpha = \tilde{\alpha}/2$, $\nu = 2\tilde{\nu}$.
\end{lemma} 

\begin{proof}
Our analysis will depend on the following quantities for any pair of subspaces $\Omega,T \subseteq \mathbb{R}^{p \times p}$:
\begin{eqnarray*}
\begin{aligned}
\theta(\Omega) :=\max_{N \in \Omega, \|N\|_\infty = 1}\|N\|_2~~~;~~~ \xi(T) :=\max_{N \in T, \|N\|_2 = 1}\|N\|_\infty.
\end{aligned}
\end{eqnarray*}
When $\Omega = \Omega^\star$ and $T = T^\star$, these quantities are closely connected to the maximal degree $d^\star$ and the incoherence parameter $\inc^\star$ (defined in Section~\ref{sec:fisher_conds}). In particular, \cite{Chand2012} showed that $\mu(\Omega^\star) \in [0,d^\star]$ and $\xi(T^\star) \in [\inc^\star,2\inc^\star]$. 

 We consider the quantity  $\min_{Z\in\mathbb{Q}',\Phi_\gamma(Z)=1} \allowbreak\|\proj_{\mathbb{H}'}\Gla\I\Gl\proj_{\mathbb{Q}'}(Z)\|_{\Phi_\gamma}$. Let $Z = (Z_1,Z_2)$ where $\|Z\|_{\Phi_\gamma} = 1$, $Z \in \mathbb{Q}'$. Suppose $\|Z_1\|_\infty = 1$. Then using Lemmas~\ref{lemma:1} and \ref{lemma:3p}, we have that:
\begin{eqnarray*}
\begin{aligned}
\|\proj_{\Omega^\star}\Gla\I\Gl(Z)\|_{\infty
} &\geq \|\proj_{\Omega^\star}\I\proj_{\Omega^\star}(Z_1)\|_\infty - \|\proj_{\Omega^\star}\I(Z_2)\|_\infty \\
&\geq \alpha' - \|\I(Z_2)\|_\infty\\ &\geq \tilde{\alpha}-\gamma\tilde{\beta}\xi(T'\oplus \mathrm{span}(\ones))\geq\tilde{\alpha}-3\gamma\tilde{\beta}\xi(T')\geq \tilde{\alpha}-\frac{(3\xi(T^\star)+\omega)}{1-\omega}\tilde{\beta}\gamma\\& \geq \tilde{\alpha}-\frac{\tilde{\nu}\tilde{\alpha}}{2-\tilde{\nu}}.
\end{aligned}
\end{eqnarray*}
Now suppose that $\|Z_2\|_2 = \gamma$. Then, we have using Lemma~\ref{lemma:4} that:
\begin{eqnarray*}
\begin{aligned}
\|\proj_{T'\oplus\mathrm{span}(\ones)}\I\Gl(Z)\|_2 &\geq \|\proj_{T'\oplus\mathrm{span}(\ones)}\I\proj_{T'\oplus\mathrm{span}(\ones)}(Z_2)\|_2 \\&-\|\proj_{T' \oplus \mathrm{span}(\ones)}\I(Z_1)\|_2 \\
&\geq \tilde{\alpha}\gamma-2\tilde{\beta}\theta(\Omega^\star) \geq \tilde{\alpha}\gamma - \frac{\tilde{\nu}\tilde{\alpha}\gamma}{2-\tilde{\nu}}.
\end{aligned}
\end{eqnarray*}

Putting the previous bounds together, we have that:
\begin{eqnarray}
\min_{Z\in\mathbb{Q}',\Phi_\gamma(Z)=1} \|\proj_{{\mathbb{Q}'}}\Gla\I\Gl\proj_{\mathbb{Q}'}(Z)\|_{\Phi_\gamma} \geq \tilde{\alpha}-\frac{\tilde{\nu}\tilde{\alpha}}{2-\tilde{\nu}} \geq \tilde{\alpha}/2.
\label{eqn:bound_Q_Q}
\end{eqnarray}

Now we consider the quantity $\max_{Z\in\mathbb{Q}',\Phi_\gamma(Z)=1} \allowbreak\|\proj_{{\mathbb{Q}'}^\perp}\Gla\I\Gl\proj_{\mathbb{Q}'}(Z)\|_{\Phi_\gamma}$. Let $Z = (Z_1,Z_2)$ where $\|Z\|_{\Phi_\gamma} = 1$, $Z \in \mathbb{Q}'$. Suppose $\|Z_1\|_\infty = 1$.
\begin{eqnarray*}
\begin{aligned}
\|\proj_{{\Omega^\star}^\perp}\Gla\I\Gl(Z)\|_{\infty
} &\leq \|\proj_{{\Omega^\star}^\perp}\I\proj_{\Omega^\star}(Z_1)\|_\infty +\|\proj_{{\Omega^\star}^\perp}\I(Z_2)\|_\infty \\
&\leq \tilde{\delta} + \|\I(Z_2)\|_\infty \leq \tilde{\delta}+\frac{\tilde{\beta}\gamma(3\xi(T^\star)+\omega)}{1-\omega} \leq \tilde{\delta} + \frac{\tilde{\nu}\tilde{\alpha}}{(2-\tilde{\nu})} .
\end{aligned}
\end{eqnarray*}
Now suppose that $\|Z_2\|_2 = \gamma$. Then, we have using Lemma~\ref{lemma:4}:
\begin{eqnarray*}
\begin{aligned}
\|\proj_{{T' \oplus \mathrm{span}(\ones)}^\perp}\I\Gl(Z)\|_2 &\leq \|\proj_{{T' \oplus\mathrm{span}(\ones)}^\perp}\I\proj_{T'\oplus\mathrm{span}(\ones)}(Z_2)\|_2 +\|\proj_{{T'\oplus\mathrm{span}(\ones)}^\perp}\I(Z_1)\|_2 \\
&\leq \tilde{\delta}\gamma+\tilde{\beta}\theta(\Omega^\star) \leq \tilde{\delta}\gamma + \frac{\tilde{\nu}\tilde{\alpha}\gamma}{(2-\tilde{\nu})} .
\end{aligned}
\end{eqnarray*}
Combining the last two inequalities, we have: 
\begin{eqnarray}
\max_{Z\in\mathbb{Q}',\Phi_\gamma(Z)=1} \allowbreak\|\proj_{{\mathbb{Q}'}^\perp}\Gla\I\Gl\proj_{\mathbb{Q}'}(Z)\|_{\Phi_\gamma} \leq \tilde{\delta}+\frac{\tilde{\nu}\tilde{\alpha}}{(2-\tilde{\nu})}.
\label{eqn:bound_Qperp_Q}
\end{eqnarray}

Combining \eqref{eqn:bound_Q_Q} and \eqref{eqn:bound_Qperp_Q}, we have that:
\begin{eqnarray*}
\max_{\substack{Z \in \mathbb{Q}'\\\|Z\|_{\Psi}=1}}\|\mathcal{P}_{{\mathbb{Q}'}^\perp}\mathcal{A}^\dagger\mathbb{I}^\star\mathcal{A}\mathcal{P}_{\mathbb{Q}'}(\mathcal{P}_{\mathbb{Q}'}\mathcal{A}^\dagger\mathbb{I}^\star\mathcal{A}\mathcal{P}_{\mathbb{Q}'})^{-1}(Z)\|_{\Psi} \leq \frac{\tilde{\delta}+\frac{\tilde{\nu}\tilde{\alpha}}{(2-\tilde{\nu})}}{\tilde{\alpha}-\frac{\tilde{\nu}\tilde{\alpha}}{2-\tilde{\nu}}} \leq 1-\tilde{\nu}.
\end{eqnarray*}

\end{proof}

\section{Finite sample convergence guarantees of the empirical variogram matrix}
\label{sec:finite_sample_variogram}
In addition to the identifiability assumptions, following \cite{engelke2022b}, we impose conditions to characterize the convergence rate of the empirical variogram matrix to the population variogram matrix. Throughout, we suppose that the random vector $X = (X_O, X_H)$ is in the domain of attraction of the multivariate Pareto distribution $Y$ following a latent \HR{} distribution with parameter matrix $\Gamma$; for details see Section~\ref{sec:mevt} and~\ref{sec:latent_hr_graphs}.
\begin{assumption}\label{assumption2} The marginal distribution functions $F_i$ of $X_i$, $i\in O$,  are continuous and there exists constants $\xi > 0$,  $K < \infty$ such that for all triples of distinct indices $J = (i,j,m) \subset O$ and $q \in (0,1]$,
        \[\sup_{{x} \in [0,q^{-1}]^2 \times [0,1]} \left |q^{-1}\mathbb{P}(F_J({X}_J)>1-q{x})-\frac{\mathbb{P}({Y}_J > 1/{x})}{\mathbb{P}(Y_1>1)}\right| \leq Kq^\xi,\]
        where $F_J({x}) = (F_i({x_i}),F_j({x}_j),F_m(x_m))$.
\end{assumption}
Assumption~\ref{assumption2} is a second-order condition that essentially controls the speed of
convergence of the sample variogram matrix to the population variogram matrix. 

\begin{corollary}\cite[][Theorem 1]{engelke2022b} Let Assumption~\ref{assumption2} hold. Let $\ell\in (0,1]$ be arbitrary. Suppose that $n^{\ell} \leq k \leq n/2$ where $k$ is the effective sample size in computing the sample variogram matrix (see Section~\ref{sec:exmpirical_variogram}). Let $\vartheta \geq 0$ be any scalar satisfying $\vartheta \leq \sqrt{k}/(\log{n})^4$. Then, there exists positive constants $c_5,C_5,\tilde{C}_5$ only depending on $K,\xi,\ell,\epsilon,$ and $G(z)$ such that:
$$\mathbb{P}\left(\|\hat{\Gamma}_O - \Gamma^{\star}_O\|_\infty > C_5\left\{\left(\frac{k}{n}\right)^{\xi}(\log(n/k))^2 + \frac{1+\vartheta}{\sqrt{k}}\right\}\right) \leq \tilde{C}_5p^3e^{-c_5\vartheta^2}.$$
Further, if the random vector $X$ is in the domain of attraction of a max-stable distribution, then $\xi = 1$.
\label{cor:sample_variogram}
\end{corollary}

\section{Proof of Theorem~\ref{thm:main}}
\label{sec:proof_thm2}

\subsection{Implied Hessian conditions}
Combining Lemma~\ref{lemma:Hessian_cond} with Assumptions~\ref{ass:1}-\ref{ass:3}, and letting $m = \max\{\gamma,1\}$, we have that the following three properties:
\begin{eqnarray*}
\begin{aligned}
&\min_{\substack{Z \in \mathbb{H}'\\\|Z\|_{\Phi_\gamma} = 1\\\rho(T',T^\star)\leq \omega}}\|\mathcal{P}_{\mathbb{H}'}\mathcal{A}^\dagger\mathbb{I}^\star\mathcal{A}\mathcal{P}_{\mathbb{H}'}(Z)\|_{\Phi_\gamma} \geq \alpha-2(\kappa^\star+\omega) \in (0,\infty),\\
&\max_{\substack{Z \in \mathbb{H}'\\\|Z\|_{\Phi_\gamma} = 1\\\mathbb{Q}'\in U(\omega)}}\|\mathcal{P}_{{\mathbb{H}'}^\perp}\mathcal{A}^\dagger\mathbb{I}^\star\mathcal{A}\mathcal{P}_{\mathbb{Q}'}(\mathcal{P}_{{\mathbb{H}'}}\mathcal{A}^\dagger\mathbb{I}^\star\mathcal{A}\mathcal{P}_{\mathbb{Q}'})^{-1}(Z)\|_{\Phi_\gamma} \leq 1-(\nu-2(\kappa^\star+\omega)) \in [0,1) ,\\
&\max_{\substack{Z \in \mathbb{Q}'\\\|Z\|_{\Phi_\gamma} = 1\\\mathbb{Q}'\in U(\omega)}}\|(\mathcal{P}_{{\mathbb{H}'}}\mathcal{A}^\dagger\mathbb{I}^\star\mathcal{A}\mathcal{P}_{\mathbb{H}'})^{-1}\mathcal{P}_{{\mathbb{H}'}}\mathcal{A}^\dagger\mathbb{I}^\star\mathcal{A}\mathcal{P}_{{\mathbb{H}'}^\perp}(Z)\|_{\Phi_\gamma} \leq 1-\frac{4(\kappa^\star+\omega)m(\|\mathbb{I}^\star(F)\|_2+\|\mathbb{I}^\star\|_2\omega)}{\alpha-2({\kappa}^\star+\omega)} \in [0,1).
\end{aligned}
\end{eqnarray*}
The first property follows from $\kappa^\star < \frac{\alpha}{4}$ and $\alpha > 4\omega$. The second property follow from $1-\nu + 2(\kappa^\star+\omega) <1$ since $\nu > 2(\kappa^\star+\omega)$. The final property follows from having $\frac{4(\kappa^\star+\omega)\max\{\gamma,1\}(\|\mathbb{I}^\star(F)\|_2+\|\mathbb{I}^\star\|_2\omega+1)}{\alpha} < 1$ or equivalently that $\kappa^\star \leq \frac{\alpha}{8\max\{\gamma,1\}(\|\mathbb{I}^\star(F)\|_2+\|\mathbb{I}^\star\|_2\omega+1)}-\omega$ with $\alpha > 8\omega\max\{\gamma,1\}(\|\mathbb{I}^\star(F)\|_2+\|\mathbb{I}^\star\|_2\omega+1)$. For notational simplicity and with slight abuse of notation, we let:
\begin{eqnarray*}
\begin{aligned}
\alpha' &:=  \alpha - 2(\kappa^\star+\omega),\\
\zeta &:= \max\left\{\frac{1}{\nu-2(\kappa^\star+\omega)},\frac{\alpha-2(\kappa^\star+\omega)}{4(\kappa^\star+\omega)m(\|\mathbb{I}^\star(F)\|_2+\|\mathbb{I}^\star\|_2\omega)}\right\}.
\end{aligned}
\end{eqnarray*}
Then, we have the following Hessian conditions:
\begin{eqnarray}
\begin{aligned}
&p1) \min_{\substack{Z \in \mathbb{H}'\\\|Z\|_{\Phi_\gamma} = 1\\\rho(T',T^\star)\leq \omega}}\|\mathcal{P}_{\mathbb{H}'}\mathcal{A}^\dagger\mathbb{I}^\star\mathcal{A}\mathcal{P}_{\mathbb{H}'}(Z)\|_{\Phi_\gamma} \geq\alpha' \in (0,\infty),\\
&p2) \max_{\substack{Z \in \mathbb{H}'\\\|Z\|_{\Phi_\gamma} = 1\\\mathbb{Q}'\in U(\omega)}}\|\mathcal{P}_{{\mathbb{H}'}^\perp}\mathcal{A}^\dagger\mathbb{I}^\star\mathcal{A}\mathcal{P}_{\mathbb{Q}'}(\mathcal{P}_{{\mathbb{H}'}}\mathcal{A}^\dagger\mathbb{I}^\star\mathcal{A}\mathcal{P}_{\mathbb{Q}'})^{-1}(Z)\|_{\Phi_\gamma} \leq 1-\frac{1}{\zeta} \in [0,1) ,\\
&p3) \max_{\substack{Z \in \mathbb{Q}'\\\|Z\|_{\Phi_\gamma} = 1\\\mathbb{Q}'\in U(\omega)}}\|(\mathcal{P}_{{\mathbb{H}'}}\mathcal{A}^\dagger\mathbb{I}^\star\mathcal{A}\mathcal{P}_{\mathbb{H}'})^{-1}\mathcal{P}_{{\mathbb{H}'}}\mathcal{A}^\dagger\mathbb{I}^\star\mathcal{A}\mathcal{P}_{{\mathbb{H}'}^\perp}(Z)\|_{\Phi_\gamma} \leq 1-\frac{1}{\zeta} \in [0,1),
\end{aligned}
\label{eqn:Hessian_prop}
\end{eqnarray}

\subsection{Full theoretical statement}
Let $c_5,C_5,\tilde{C}_5$ be constants that ensure Corollary~\ref{cor:sample_variogram} is satisfied. Let $\psi= \max\{1,\|(S^\star-L^\star)^{+}\|_2\}$, $C_0 = 8+\frac{32\sqrt{5h}}{\alpha'(1-\sqrt{1- ({\kappa^\star}^2-\omega)^2})(\frac{1}{\zeta}-2(\kappa^\star+\omega))}\allowbreak\left[1+\frac{1}{3\zeta}\right]$, $C_1 = \psi(m+d^\star)$, and $C_2 ={m}\max\{\left(\frac{4C_0}{\alpha'}+\frac{1}{\psi}\right),1\}$. We also define, 
\begin{equation*}
\begin{aligned}
C_4 &=  \min\Bigg\{\min\left\{\frac{8\alpha'}{C_1}, \frac{\min\{\alpha',1\}(\frac{1}{\zeta} - 2(\kappa^\star+\omega))}{16m\psi{C}_2^2}\right\}\frac{\alpha'(\frac{1}{\zeta}-2(\kappa^\star+\omega))}{4(1+\frac{1}{3\zeta})}\\&,\frac{\alpha'(\frac{1}{\zeta}-2(\kappa^\star+\omega))}{64C_1(1+\frac{1}{3\zeta})}, \frac{\alpha'^2(\frac{1}{\zeta}-2(\kappa^\star+\omega))^2}{6144\zeta(1+\frac{1}{3\zeta})^2}\Bigg\}.
\end{aligned}
\end{equation*}

\begin{theorem} Suppose that there exists $\alpha >0$, $\nu \in (0,1]$, $\omega\in (0,1)$ and the choice of the parameter $\gamma$ so that the Hessian $\mathbb{I}^\star$ satisfies Assumptions~\ref{ass:1}-\ref{ass:3}. Let $m := \max\{1,{1}/{\gamma}\}$ and $\bar{m} := \max\{1,\gamma\}$. Let the effective sample size $k$ be chosen such that $k = o(\lfloor n^{2\xi/(1+2\xi)} \rfloor)$. Furthermore, suppose that:
\begin{equation*}
\begin{aligned}
k \geq \max\Bigg\{\frac{C_5^21152m^2\zeta^2p^2\log(\tilde{C}_5p)}{C_4^2c_5^2} + \frac{72m^2\zeta^2}{C_4^2},\left(\frac{2C_5}{0.1^2\sqrt{c_5}}\sqrt{\log(\tilde{C}_5p)}\right)^{-2/(3/2-(2\xi+1)/(2\xi))},\\\log(k)^{2/(3/2-(2\xi+1)/(2\xi))}, 4(3/2-(2\xi+1)/(2\xi))^8\frac{\log(\tilde{C}_5p)}{c_5}\log(k)^8\Bigg\}
\end{aligned}
\end{equation*}
and
\begin{enumerate}
    \item $\lambda_n =C_5\left[\frac{24m\zeta}{\sqrt{c_5}}\sqrt{\frac{p^2\log(\tilde{C}_5p)}{k}}+\frac{6m\zeta}{\sqrt{k}}\right]$,
    \item $\sigma_\mathrm{min}(L^\star) \geq \max\left\{16m\bar{m}\frac{\lambda_n{C}_2}{\omega},\frac{2\psi{C}_2^2\lambda_n}{C_0},\left(mC_2 + \frac{\alpha'(\frac{1}{\zeta}-2(\kappa^\star+\omega))}{4\left[1+\frac{1}{3\zeta}\right]}\right)\lambda_n\right\}$,
    \item $|S^\star_{ij}| \geq 12m\bar{m}{\lambda_n{C}_2}$ whenever $|S^\star_{ij}|>0$.
\end{enumerate}
Then, the estimate $(\hat{S},\hat{L})$ is the unique minimizer of \eqref{eqn:estimator} with
$$ \mathbb P\left(\mathrm{sign}(\hat{S}) = \mathrm{sign}(S^\star), \mathrm{rank}(\hat{L}) = \mathrm{rank}(L^\star), \|(\hat{S}-\hat{L})-\tilde{\Theta}^\star\|_2 \leq 2mC_2\lambda_n \right) \geq 1-\frac{1}{p}.$$
\label{thm:main_full}
\end{theorem}

To arrive at the scalings provided in Theorem~\ref{thm:main}, note that, $\zeta = \mathcal{O}(1/\nu)$, $\zeta = \mathcal{O}(1/\nu)$, $C_0  = \mathcal{O}(\sqrt{h}\nu/{\alpha'})$, $C_1 = \mathcal{O}(md^\star)$, $C_2 = \mathcal{O}(m\sqrt{h}/{\alpha'}^2)$, $C_4 = \mathcal{O}(\alpha'^3\nu/(d^\star{m}^3\sqrt{h})$. This scaling allows us to conclude that: $k \gtrsim \frac{m^3h{d^\star}^2}{\alpha'^6\nu^2}m\nu{p}\log(p)$, $\lambda_n  = \frac{m}{\nu}\sqrt{\frac{p^2\log(p)}{k}}$, $\sigma_\mathrm{min}(L^\star) \gtrsim \frac{m^4h\bar{m}}{\nu\alpha'^4}\sqrt{\frac{p^2\log(p)}{k}}$, $S^\star_{ij} \gtrsim \frac{m^3\bar{m}\sqrt{h}}{\nu\alpha'^2}\sqrt{\frac{p^2\log(p)}{k}}$, and finally $\|(\hat{S}-\hat{L})-\tilde{\Theta}^\star\|_2 \lesssim \frac{m^3\sqrt{h}}{\nu\alpha'^2}\sqrt{\frac{p^2\log(p)}{k}}$.

\subsection{Proof strategy}
\label{sec:proof_strategy}
The high-level proof strategy is similar in spirit to the proofs of consistency results for sparse graphical model recovery and latent variable graphical model recovery \citep{Chand2012}, although our convex program and the conditions required for its success are different from these previous results. Consider the following convex program
\begin{equation}
\begin{aligned}
    (\hat{S},\hat{L}) = \arg\min_{S,L\in\mathbb{S}^p}&~~-\log{\det}(U^T(S-L)U) - \mathrm{tr}((S-L)\hat{\Gamma}_O/2) + \lambda_n(\|S\|_{1} + \gamma\|L\|_\star).\\
    \text{subject-to}&~~~S-L \in \mathrm{span}(\ones)
\end{aligned}
\label{eqn:estimator_no_psd}
\end{equation}
Comparing \eqref{eqn:estimator_no_psd} with the convex program \eqref{eqn:estimator}, the differences are: i) we have removed the positive-definite constraints, ii) we have replaced $\mathrm{tr}(L)$ with $\|L\|_\star$ which is valid for positive semi-definite $L$, iii) we have replaced the constraint $(S-L)\textbf{1}_p = 0$ with $S-L \in \mathrm{span}(\ones)$ which is equivalent since the matrices $S,L$ are symmetric. Regarding item i), the positive definiteness of $\hat{S}-\hat{L}$ is automatically met due to the log-det term. We show with high probability that $\hat{L} \succeq 0$. 

Note that due to the log-det term, we have that $UU^T(S-L)UU^T = S-L$. Appealing to Lemma \ref{lemma:new_2}, we conclude that $U(U^T(S-L)U)^{-1}U^T$, which is the gradient of the negative log-determinate term with respect to $S$ is equivalent to $(S-L)^+$. Similarly, since $\mathrm{tr}((S-L)\hat{\Gamma}_0/2) =\mathrm{tr}(UU^T(S-L)UU^T\hat{\Gamma}_0/2) = \mathrm{tr}((S-L)UU^T\hat{\Gamma}_0/2UU^T)$, the gradient of the trace term in the objective with respect to $S$ is given by $UU^T\hat{\Gamma}_0/2UU^T$. Standard convex analysis states that $(\hat{S},\hat{L})$ is the solution of the convex program \eqref{eqn:estimator_no_psd} if there exists a dual variable $t \in \mathbb{R}$ with the following conditions being satisfied:
\begin{equation}
\begin{aligned}
-UU^T(\hat{\Gamma}_O/2){U}U^T - (\hat{S}-\hat{L})^{+} + t\ones &= -\lambda\partial\|\hat{S}\|_{1},\\
UU^T(\hat{\Gamma}_O/2){U}U^T + (\hat{S}-\hat{L})^{+}-t\ones &= -\lambda\gamma\partial\|\hat{L}\|_{\star},\\
\hat{S}-\hat{L} &\in \mathrm{span}(\ones).
\label{eqn:optimality_temp}
\end{aligned}
\end{equation}

Recall that elements of the subdifferential with respect to nuclear norm at a matrix $M$
have the key property that they decompose with respect to the tangent space $T(M)$.
Specifically, the subdifferential with respect to the nuclear norm at a matrix $M$ with
(reduced) SVD given by $M = U_lQU_r^T$ is as follows:
$$N \in \partial\|M\|_\star \Leftrightarrow \proj_{T(M)}(N) = U_lV_r^T, \|\proj_{T(M)^\perp}(N)\|_2 \leq 1,$$
where $\proj$ denotes a projection operator. Similarly, we have the following for the subdifferential of $\ell_1$ norm:
$$N \in \partial\|M\|_{1} \Leftrightarrow \proj_{\Omega(M)}(N) = \text{sign}(N), \|\proj_{\Omega(M)^\perp}(N)\|_\infty \leq 1.$$
Let SVD of $\hat{L}$ be $\hat{U}\hat{D}\hat{V}^T$ and let $Z = (-\lambda\text{sign}(\hat{S}), -\lambda\gamma\hat{U}\hat{V}^T)$. Then, letting $\Hs = \Omega(\hat{S})\times T(\hat{L})$ the optimality conditions of \eqref{eqn:estimator_no_psd} reduce to:
\begin{equation}
\begin{aligned}
\proj_{\Hs}\Gla(-UU^T\hat{\Gamma}_O/2UU^T -(\hat{S}-\hat{L})^+-t\ones) &= Z,\\
\Phi_\gamma(\proj_{\Hs^\perp}\Gla(-UU^T\hat{\Gamma}_O/2UU^T - (\hat{S}-\hat{L})^+-t\ones)) &\leq \lambda_n,\\
\hat{S}-\hat{L} &\in \mathrm{span}(\ones).
\end{aligned}
\end{equation}
To ensure that the estimates $(\hat{S},\hat{L})$ are close to their respective population parameters, the quantity $\Delta_S = \hat{S}-S^\star$ and $\Delta_L=\hat{L}-L^\star$ must be small. Since the optimality conditions of \eqref{eqn:estimator_no_psd} are stated in terms of $(\hat{S}-\hat{L})^+$, we bound the deviation between $(\hat{S}-\hat{L})^+$ and $({S}^\star-{L}^\star)^+$. Specifically, the Taylor Series expansion of $(\hat{S}-\hat{L})^+$ around $({S}^\star-{L}^\star)^+$ is:
$$(\hat{S}-\hat{L})^+ = (S^\star-L^\star + \Gl(\Delta_S,\Delta_L))^+ = (S^\star-L^\star)^{+} + (S^\star-L^\star)^{+}\Gl(\Delta_S,\Delta_L)(S^\star-L^\star)^++\mathcal{R}_{\Gamma^\star_0}\Gl(\Delta_S,\Delta_L).$$
where some algebra yields the following representation for the remainder term $\mathcal{R}_{\Gamma^\star_0}(\Gl(\Delta_S,\Delta_L))$:
\begin{equation}
\mathcal{R}_{\Gamma^\star_0}(\Gl(\Delta_S,\Delta_L))=U(S^\star-L^\star+\ones/p)^{-1}\left[\sum_{k=2}^\infty (-\Gl(\Delta_S,\Delta_L)(S^\star-L^\star+\ones/p)^{-1})^k\right]U^T.
\label{remainder:eq}
\end{equation}
From Theorem~\ref{thm:main1}, we have that $(S-L)^{+} = UU^T(-\Gamma^\star/2)UU^T$. Since $UU^T(S^\star-L^\star)UU^T = S^\star-L^\star$, we appeal to Lemma~\ref{lemma:new_2} to conclude that $(U^T(S^\star-L^\star)U)^{-1} = U^T(-\Gamma^\star_O)U$. Let $E_n := UU^T(\hat{\Gamma}_O-\Gamma^\star)/2UU^T$. Then, we have the following equivalent characterization of the optimality conditions \eqref{eqn:optimality_temp}:
\begin{eqnarray}
\begin{aligned}
&\proj_{\Hs}\Gla((S^\star-L^\star)^{+}\Gl(\Delta_S,\Delta_L)(S^\star-L^\star)^++\mathcal{R}_{\Gamma^\star_0}\Gl(\Delta_S,\Delta_L)+E_n+t\ones)= Z,\\
&\Phi_\gamma(\proj_{\Hs^\perp}\Gla((S^\star-L^\star)^{+}\Gl(\Delta_S,\Delta_L)(S^\star-L^\star)^++\mathcal{R}_{\Gamma^\star_0}\Gl(\Delta_S,\Delta_L)+E_n+t\ones)) \leq \lambda_n,\\
\hat{S}-\hat{L} &\in \mathrm{span}(\ones).
\end{aligned}
\label{eqn:optimality_temp2}
\end{eqnarray}
Finally, Since $(S^\star-L^\star)\ones = 0$ and $\Gl(\Delta_S,\Delta_L)\ones = 0$, we have the following formulation of the optimality condition \eqref{eqn:optimality_temp2} in terms of the matrix $\I$
\begin{eqnarray}
\begin{aligned}
&\proj_{\Hs}\Gla(\I(\Gl(\Delta_S,\Delta_L+t\ones))+\mathcal{R}_{\Gamma^\star_0}\Gl(\Delta_S,\Delta_L+t\ones)+E_n) = Z,\\
&\Phi_\gamma(\proj_{\Hs^\perp}\Gla(\I(\Gl(\Delta_S,\Delta_L+t\ones))+\mathcal{R}_{\Gamma^\star_0}\Gl(\Delta_S,\Delta_L+t\ones)+E_n)) \leq \lambda_n,\\
\hat{S}-\hat{L} &\in \mathrm{span}(\ones).
\end{aligned}
\label{eqn:optimality_orig}
\end{eqnarray}
It is straightforward to show that if for some $(\hat{S},\hat{L})$, the second condition in \eqref{eqn:optimality_orig} is satisfied with strict inequality, that is: 
$$\Phi_\gamma(\proj_{\Hs^\perp}\Gla(\I(\Gl(\Delta_S,\Delta_L+t\ones))+\mathcal{R}_{\Gamma^\star_0}\Gl(\Delta_S,\Delta_L+t\ones)+E_n)) < \lambda_n.$$

%Throughout, we let $\psi := \|\I\|_2$.

\subsection{Constrained optimization problem}
We consider the following non-convex optimization problem:
\begin{equation}
\begin{aligned}
    \argmin_{S \in \mathbb{S}^{p},L \in \mathbb{S}^{p}} &~~-\log{\det}(U^T(S-L)U) - \mathrm{tr}((S-L)\hat{\Gamma}_O/2) + \lambda_n(\|S\|_{1} + \gamma\|L\|_\star),\\
    \text{subject-to}&~~~S-L \in \mathrm{span}(\ones)~;~ (S,L) \in \mathcal{M},
\end{aligned}
\label{eqn:estimator_nonconvex}
\end{equation}
where:
\begin{eqnarray*}
\begin{aligned}
    \mathcal{M} &= \Bigg\{S,L \in \mathbb{S}^p: S \in \Omega^\star, \text{rank}(L) \leq \text{rank}(L^\star)\\ &~~~~~~~~~~~~\|\proj_{{T^\star}^\perp}(L-L^\star)\|_2 \leq \frac{C_0\lambda_n}{\psi}~~;~~ \Phi_\gamma(\Gla\I\Gl(S-S^\star,L-L^\star)) \leq C_0\lambda_n \Bigg\},
\end{aligned}
\end{eqnarray*}
with $C_0 = 10+\frac{32\sqrt{5h}}{\alpha'(1-\sqrt{1- ({\kappa^\star}^2-\omega)^2})(\frac{1}{\zeta}-2(\kappa^\star+\omega))}\left[1+\frac{1}{3\zeta}\right]$. The optimization program \eqref{eqn:estimator_nonconvex} is non-convex due to the rank constraint $\text{rank}(L) \leq \text{rank}(L^\star)$ in the set $\mathcal{M}$. These constraints ensure that the matrix $L$ belongs to an appropriate variety. The constraints in $\mathcal{M}$ along ${T^\star}^\perp$ ensure that the tangent space $T(L)$ is close to $T^\star$. Finally, the last condition roughly controls the error. We begin by proving the following useful proposition:
\begin{proposition}
Let $(S,L)$ be a set of feasible variables of \eqref{eqn:estimator_nonconvex}. Let $\Delta = (S-S^\star,L-L^\star)$. Then, $\Phi_\gamma(\Delta) \leq {C}_2\lambda_n$ where $C_2 ={m}\max\{\left(\frac{4C_0}{\alpha'}+\frac{1}{\psi}\right),1\}$.
\label{prop:nonconvex}
\end{proposition}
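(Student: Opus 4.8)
The plan is to derive the bound by pitting the three constraints defining $\mathcal{M}$ that actually enter here — the sparsity constraint $S\in\Omega^\star$, the bound $\|\proj_{{T^\star}^\perp}(L-L^\star)\|_2\le C_0\lambda_n/\psi$, and the bound $\Phi_\gamma(\Gla\I\Gl(S-S^\star,L-L^\star))\le C_0\lambda_n$ — against the minimum-gain estimate for the modified Hessian $\I$ on the combined tangent space, which is exactly what Property~\ref{assumption_1} of Proposition~\ref{prop:sufficient_identifiability} supplies (the quantity $\chi$ satisfies $\chi\ge\alpha$).

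\emph{Setup.} Write $\Delta=(\Delta_S,\Delta_L)$ with $\Delta_S=S-S^\star$ and $\Delta_L=L-L^\star$. Since $S$ and $S^\star$ have the same support, $\Delta_S\in\Omega^\star$. Split $\Delta_L=\Delta_L^{\parallel}+\Delta_L^{\perp}$ with $\Delta_L^{\parallel}=\proj_{T^\star}(\Delta_L)\in T^\star$ and $\Delta_L^{\perp}=\proj_{{T^\star}^\perp}(\Delta_L)$; feasibility of $(S,L)$ gives $\|\Delta_L^{\perp}\|_2\le C_0\lambda_n/\psi$ at once. I would also record, via Lemma~\ref{lemma:pseudo} applied to the orthogonal pieces $S^\star-L^\star=\tilde{\Theta}^\star$ (whose range is $\perp\mathbf{1}_p$) and $\ones/p$, that $(S^\star-L^\star+\ones/p)^{-1}=(S^\star-L^\star)^++\ones/p$, so this matrix has spectral norm $\psi$ and $\I$, viewed as the map $N\mapsto(S^\star-L^\star+\ones/p)^{-1}N(S^\star-L^\star+\ones/p)^{-1}$, is an operator of spectral-to-spectral norm at most $\psi^2$.

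\emph{Main step.} The pair $(\Delta_S,\Delta_L^{\parallel})$ lies in $\Hs^\star=\Omega^\star\times T^\star$, which belongs to the collection $U(\omega)$ over which $\chi$ is a minimum (take $T'=T^\star$, for which $\rho(T^\star,T^\star)=0\le\omega$), so the minimum-gain inequality gives $\alpha\,\Phi_\gamma(\Delta_S,\Delta_L^{\parallel})\le\Phi_\gamma(\proj_{\Hs^\star}\Gla\I\Gl(\Delta_S,\Delta_L^{\parallel}))$. Since $\Gl(\Delta_S,\Delta_L^{\parallel})=\Gl(\Delta_S,\Delta_L)+\Delta_L^{\perp}$, I would write $\Gla\I\Gl(\Delta_S,\Delta_L^{\parallel})=\Gla\I\Gl(\Delta_S,\Delta_L)+\Gla\I(\Delta_L^{\perp})$ and split with the triangle inequality for $\Phi_\gamma\circ\proj_{\Hs^\star}$: the first term is controlled by the constraint $\Phi_\gamma(\Gla\I\Gl(\Delta_S,\Delta_L))\le C_0\lambda_n$, up to the bounded expansion incurred by $\proj_{\Hs^\star}$ — here one uses that the low-rank tangent-space projection is not a spectral-norm contraction but satisfies $\|\proj_{T^\star}(N)\|_2\le2\|N\|_2$ — and the second term is controlled by combining the operator-norm estimate for $\I$ with $\|\Delta_L^{\perp}\|_2\le C_0\lambda_n/\psi$. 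Collecting these yields $\Phi_\gamma(\Delta_S,\Delta_L^{\parallel})\le(\text{const})\cdot C_0\lambda_n/\alpha$. Finally $\Phi_\gamma(\Delta_S,\Delta_L)\le\Phi_\gamma(\Delta_S,\Delta_L^{\parallel})+\|\Delta_L^{\perp}\|_2/\gamma\le\Phi_\gamma(\Delta_S,\Delta_L^{\parallel})+m\|\Delta_L^{\perp}\|_2$ (using $1/\gamma\le m$), and substituting the bound on $\|\Delta_L^{\perp}\|_2$ lands at $\Phi_\gamma(\Delta)\le C_2\lambda_n$: the $4C_0/\alpha$ term is the output of the minimum-gain step, the additive $1/\psi$ term is the direct contribution of $\Delta_L^{\perp}$, the factor $m$ comes from converting spectral-norm bounds on $L$-type components into $\Phi_\gamma$ through the $1/\gamma$ weight, and the outer $\max\{\cdot,1\}$ simply records that the weaker of two available bounds is retained.

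\emph{Main obstacle.} The conceptual content is light — one minimum-gain inequality plus triangle inequalities — so the real work is the arithmetic: tracking the factor-$2$ expansions of $\proj_{\Hs^\star}$ in the $\Phi_\gamma$ norm, keeping the $\psi$ coming from the operator-norm estimate on $\I$ in balance with the $1/\psi$ coming from the $\Delta_L^{\perp}$ constraint, and converting consistently between $\|\cdot\|_2$, $\|\cdot\|_\infty$, and $\Phi_\gamma$ through the weight $\gamma$, so that the final constant matches the stated $C_2=m\max\{4C_0/\alpha+1/\psi,\,1\}$ rather than merely up to an absolute factor.
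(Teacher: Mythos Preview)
Your approach is correct and essentially identical to the paper's: split $\Delta_L$ into its $T^\star$ and ${T^\star}^\perp$ components, use the constraint $\|\proj_{{T^\star}^\perp}(\Delta_L)\|_2\le C_0\lambda_n/\psi$ to control the perpendicular piece, apply the minimum-gain bound $\chi\ge\alpha$ (Property~\ref{assumption_1}) to the $\Hs^\star$ component, and reassemble via the triangle inequality, picking up the factor-$2$ from $\proj_{\Hs^\star}$ and the factor $m$ from the $\Phi_\gamma$ weight. The paper carries out exactly this computation in three lines, so your only remaining task is the arithmetic you flagged.
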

\begin{proof}[Proof of Proposition~\ref{prop:nonconvex}]
Let $\mathbb{H}^\star = \Omega^\star \times T^\star$. Then:
\begin{eqnarray*}
\begin{aligned}
\Phi_\gamma[\Gla\I\Gl\proj_{\Hs^\star}(\Delta)] &\leq \Phi_\gamma[\Gla\I\Gl(\Delta)]+\Phi_\gamma[\Gla\I\Gl\proj_{{\Hs^\star}^\perp}(\Delta)] \\
&\leq C_0\lambda_n + mC_0\lambda_n \leq 2mC_0\lambda_n.
\end{aligned}
\end{eqnarray*}
Since $\Phi_\gamma[\proj_{\Hs^\star}(\cdot)] \leq 2\Phi_\gamma[\cdot]$, we have that: $\Phi_\gamma[\proj_{\Hs^\star}\Gla\I\Gl\proj_{\Hs^\star}(\Delta)] \leq 4mC_0\lambda_n$. Then, appealing to Property $p1$ in \eqref{eqn:Hessian_prop}, we have that: $\Phi_\gamma[\proj_{\Hs^\star}(\Delta)]\leq \frac{4C_0\lambda_n}{\alpha'}$. Moreover, $\Phi_\gamma(\Delta) \leq \Phi_\gamma[\proj_{\Hs^\star}(\Delta)] +\Phi_\gamma[\proj_{{\Hs^\star}^\perp}(\Delta)] \leq  \lambda_n{m}\left(\frac{4C_0}{\alpha'}+\frac{1}{\psi}\right)$.
\end{proof}
Proposition~\ref{prop:nonconvex} leads to powerful implications. In particular, under additional conditions on the minimum nonzero singular values of $L^\star$, any feasible set of variables $(S,L)$ of \eqref{eqn:estimator_nonconvex} has two key properties: (a) The variables $(S,L)$ are smooth points of their underlying varieties with $L \succeq 0$ and $S-L \succeq 0$, and (b) The constraints in $\mathcal{M}$ along ${T^\star}^\perp$ are locally inactive at $L$. These properties, among others, are proved in the following corollary. 
\begin{corollary}Consider any feasible variables $(S,L)$ of \eqref{eqn:estimator_nonconvex}. Let $T' = T(L)$. Let $\sigma$ be the smallest nonzero singular value of $L^\star$ and $s$ be the smallest in magnitude nonzero value of $S^\star$. Let $\Hs' = \Omega^\star \times T'$, $C_{T'} = \proj_{{T'}^\perp}(L^\star)$ and $C_{T'\oplus\mathrm{span}(\ones)} = \proj_{({T' \oplus \mathrm{span}(\ones)})^\perp}(L^\star)$. Suppose that the following inequalities are met: $\sigma \geq \max\left\{16m\bar{m}\frac{\lambda_n{C}_2}{\omega},\frac{2\psi{C}_2^2\lambda_n}{C_0},\left(mC_2 + \frac{\alpha'(\frac{1}{\zeta}-2(\kappa^\star+\omega))}{4\left[1+\frac{1}{3\zeta}\right]}\right)\lambda_n\right\}$ and $s \geq 12m\bar{m}\lambda_n{C}_2$. Then, 
\begin{enumerate}
\item $L$ and $S$ are smooth points of their underlying varieties so that $\mathrm{support}(\hat{S}) = \mathrm{support}(S^\star)$ and $\mathrm{rank}(\hat{L})= \mathrm{rank}(L^\star)$. Furthermore, $L \succeq 0$, and $S-L \succeq 0$
\item $\|\proj_{{T^\star}^\perp}(\hat{L}-L^\star)\|_2 \leq \frac{C_0\lambda_{n}}{2\psi}$,
\item $\rho(T',T^\star) \leq \omega$,
\item $\max\{\Phi_\gamma(\Gla\I{C}_{T'}),\Phi_\gamma(\Gla\I{C}_{T'\oplus\ones})\} \leq \frac{\lambda_n}{6\zeta}$,
\item $\Phi_\gamma[\Gla{C}_{T'}]\leq \frac{4\lambda_n}{\alpha'(\frac{1}{\zeta}-2(\kappa^\star+\omega))}\left[1+\frac{1}{3\zeta}\right]$.
\end{enumerate}
\label{corr:nonconvex}
\end{corollary}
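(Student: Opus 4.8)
The plan is to adapt the argument of \cite{Chand2012} for latent-variable Gaussian graphical models to the present zero-row-sum setting, with Proposition~\ref{prop:nonconvex} as the sole quantitative input. Fix a feasible pair $(S,L)$ of the non-convex program \eqref{eqn:estimator_nonconvex} and set $\Delta_S = S - S^\star$, $\Delta_L = L - L^\star$. By Proposition~\ref{prop:nonconvex} we have $\Phi_\gamma(\Delta_S,\Delta_L)\le C_2\lambda_n$, that is $\|\Delta_S\|_\infty \le C_2\lambda_n$ and $\|\Delta_L\|_2 \le \gamma C_2\lambda_n \le \bar m C_2\lambda_n$. All five items then follow by feeding these two perturbation bounds, together with the lower bounds imposed on $\sigma = \sigma_{\min}(L^\star)$ and $s = \min_{S^\star_{ij}\neq 0}|S^\star_{ij}|$, into standard perturbation facts for the sparse and low-rank determinantal varieties.

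\emph{Items 1 and 3.} For the sparse block, $\|\Delta_S\|_\infty\le C_2\lambda_n<s$ (using $s\ge 12m\bar m\lambda_n C_2$), so $|S_{ij}-S^\star_{ij}|<|S^\star_{ij}|$, hence $\mathrm{sign}(S_{ij})=\mathrm{sign}(S^\star_{ij})$, for every $(i,j)$ with $S^\star_{ij}\ne 0$; together with $S\in\Omega^\star$ built into $\mathcal M$ this gives $\mathrm{support}(S)=\mathrm{support}(S^\star)$ and makes $S$ a smooth point of the sparse variety. For the low-rank block, the first lower bound on $\sigma$ forces $\|\Delta_L\|_2<\sigma/2$, so Weyl's inequality yields $\sigma_r(L)\ge\sigma-\|\Delta_L\|_2>0$ with $r=\mathrm{rank}(L^\star)$; combined with the constraint $\mathrm{rank}(L)\le r$ in $\mathcal M$ this forces $\mathrm{rank}(L)=r$, making $L$ a smooth point of the determinantal variety, and since the nonzero eigenvalues of $L^\star\succeq0$ equal its singular values and are $\ge\sigma$, the matrix $L$ has exactly $r$ positive eigenvalues and $p-r$ zero eigenvalues, hence $L\succeq0$. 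Feasibility of \eqref{eqn:estimator_nonconvex} also forces $U^T(S-L)U\succ0$ (from the $-\log\det$ term) and $S-L\in\mathrm{span}(\ones)^\perp$, so with $UU^T(S-L)UU^T=S-L$ we conclude $S-L\succeq0$. For Item 3 I would invoke the curvature bound for tangent spaces to the determinantal variety from \cite{Chand2012}, namely $\rho(T',T^\star)\lesssim\|\Delta_L\|_2/\sigma$ with $T'=T(L)$, which the first lower bound on $\sigma$ brings below $\omega$.

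\emph{Items 2, 4 and 5.} These all rest on the observation that $C_{T'}=\proj_{{T'}^\perp}(L^\star)$ is \emph{second order} in the perturbation: since $L$ and $L^\star$ both lie on the rank-$r$ variety and $\sigma_r(L)\ge\sigma/2$, the local smoothness of that variety \citep{Chand2012} gives $\|C_{T'}\|_2\lesssim\|\Delta_L\|_2^2/\sigma\lesssim(\bar m C_2\lambda_n)^2/\sigma$. For Item 5, $\Phi_\gamma(\Gla C_{T'})=\max\{\|C_{T'}\|_\infty,\|C_{T'}\|_2/\gamma\}\le m\|C_{T'}\|_2$, and the third lower bound on $\sigma$ is calibrated to bring this below $\frac{4\lambda_n}{\alpha(2/\zeta_1-2(\kappa^\star+\omega))}\bigl[1+\frac{1}{3\zeta_2}\bigr]$. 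For Item 2, I would decompose $\proj_{{T^\star}^\perp}(\Delta_L)=(\proj_{{T^\star}^\perp}-\proj_{{T'}^\perp})(\Delta_L)+\proj_{{T'}^\perp}(\Delta_L)$ and use $\proj_{{T'}^\perp}(L)=0$ (so $\proj_{{T'}^\perp}(\Delta_L)=-C_{T'}$) together with $\proj_{{T^\star}^\perp}(L^\star)=0$, giving $\|\proj_{{T^\star}^\perp}(\Delta_L)\|_2\le\rho(T',T^\star)\|\Delta_L\|_2+\|C_{T'}\|_2$; the first two lower bounds on $\sigma$ shrink each summand below $C_0\lambda_n/(4\psi)$, so the constraint $\|\proj_{{T^\star}^\perp}(\cdot)\|_2\le C_0\lambda_n/\psi$ in $\mathcal M$ is locally inactive. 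For Item 4, I would use that the operator norm of $\I$ is controlled by $\psi$, so $\|\I(M)\|_\infty,\|\I(M)\|_2\lesssim\|M\|_2$, together with Lemma~\ref{lemma:5} giving $\|C_{T'\oplus\ones}\|_2\le\|C_{T'}\|_2$; then $\Phi_\gamma(\Gla\I C_{T'})$ and $\Phi_\gamma(\Gla\I C_{T'\oplus\ones})$ are both $\lesssim m\|C_{T'}\|_2\lesssim m(\bar m C_2\lambda_n)^2/\sigma$, which the third lower bound on $\sigma$ brings below $\lambda_n/(6\zeta_2)$.

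\textbf{Main obstacle.} The crux is the second-order estimate $\|\proj_{T(L)^\perp}(L^\star)\|_2\lesssim\|\Delta_L\|_2^2/\sigma$: because tangent spaces to low-rank varieties are only locally smooth, a naive first-order bound (which would carry an unwanted factor of $\|L^\star\|_2$) will not do, so one must use the curvature estimate and carefully track the degradation of $\sigma_r(L)$ relative to $\sigma$. The remaining work is constant bookkeeping --- verifying that the explicit thresholds on $\sigma$ (in terms of $C_0,C_2,\zeta_1,\zeta_2,\psi,m,\bar m,\omega,\alpha$) and on $s$ are exactly strong enough to convert the quadratic-in-$\lambda_n$ quantities above into the stated bounds --- and this is where essentially all the care goes. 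A secondary subtlety, absent from the Gaussian setting of \cite{Chand2012}, is the claim $S-L\succeq0$, which needs the interaction between the domain constraint coming from $-\log\det$ and the subspace constraint $S-L\in\mathrm{span}(\ones)^\perp$.
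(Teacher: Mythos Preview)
Your overall strategy is the same as the paper's: feed the error bound $\Phi_\gamma(\Delta_S,\Delta_L)\le C_2\lambda_n$ from Proposition~\ref{prop:nonconvex} into the perturbation/curvature results of \cite{Chand2012} (their Propositions~2.1 and~2.2), together with Lemma~\ref{lemma:5} for the $\ones$-augmented tangent space. The identification of the second-order estimate $\|C_{T'}\|_2\lesssim\|\Delta_L\|_2^2/\sigma$ as the crux is exactly right, and Items~3--5 match the paper essentially line by line.

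There is one genuine gap. Your argument for $S-L\succeq 0$ appeals to the implicit domain of $-\log\det(U^T(S-L)U)$, but the corollary is stated for \emph{any} feasible $(S,L)$, and feasibility of \eqref{eqn:estimator_nonconvex} is defined only by the explicit constraints $S-L\in\mathrm{span}(\ones)^\perp$ and $(S,L)\in\mathcal M$; nothing there forces $U^T(S-L)U\succ 0$. The paper instead proves $S-L\succeq 0$ by a perturbation argument: $\sigma_{\min}(S^\star-L^\star)$ is bounded below in terms of $\psi$, and $\|(S-L)-(S^\star-L^\star)\|_2\le 2mC_2\lambda_n$ is smaller than that gap (this last step uses the upper bound on $\lambda_n$ coming from the sample-size condition in Theorem~\ref{thm:main_full}, which is implicitly in force). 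Since both $S-L$ and $S^\star-L^\star$ annihilate $\mathbf 1_p$, the perturbation stays on $\mathrm{span}(\mathbf 1_p)^\perp$ and positive semidefiniteness follows.

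A minor simplification you are missing for Item~2: there is no need for your decomposition via $\rho(T',T^\star)$ and $C_{T'}$. The paper applies the second-order bound of \cite{Chand2012} directly at $T^\star$ rather than $T'$: since $L$ and $L^\star$ are both rank~$r$, $\|\proj_{{T^\star}^\perp}(\Delta_L)\|_2=\|\proj_{T(L^\star)^\perp}(L)\|_2\le \|\Delta_L\|_2^2/\sigma$, which is already below $C_0\lambda_n/(2\psi)$ by the second lower bound on $\sigma$.
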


\begin{proof}[Proof of Corollary~\ref{corr:nonconvex}] We appeal to the results regarding the perturbation analysis of the low-rank matrix variety. %Throughout, we will let $\sigma = \sigma_\mathrm{min}(L^\star)$.
\begin{enumerate}
\item Based on assumptions regarding the minimum nonzero singular value of $L^\star$ and minimum nonzero entry in magnitude of $S^\star$, one can check that since $\omega\leq 1$
\begin{eqnarray*}
\begin{aligned}
\sigma &\geq 12m\bar{m}\frac{\lambda_n{C}_2}{\omega} \geq 12m\bar{m}{\lambda_n{C}_2} \geq 8\|L-L^\star\|_2,\\
s&\geq 12m\bar{m}{\lambda_n{C}_2} \geq 12m\bar{m}{\lambda_n{C}_2} \geq 2\|S-S^\star\|_2.
\end{aligned}
\end{eqnarray*}
Combining these results, we conclude that $S,L$ are smooth points of their varieties, namely that $\text{rank}(L) = \text{rank}(L^\star)$ and $\text{support}(S) = \text{support}(S^\star)$. The fact that $L \succeq 0$ follows from $\sigma \geq 2\|L-L^\star\|_2$. Furthermore, to check that $S-L \succeq 0$, first note that $\sigma_\text{min}(S^\star-L^\star) \geq \frac{1}{\sqrt{\psi}}$. Then, $\|{S}-L-(S^\star-L^\star)\|_2 \leq 2mC_2\lambda_n$. From the choice of $\lambda_n$ and the condition on the sample size, we have that $4mC_2\lambda_n < \frac{1}{\sqrt{\psi}}$. Thus, $S-L \succeq 0$. 
\item Since $\sigma \geq  8\|L-L^\star\|_2$, we can appeal to Proposition 2.2 of \cite{Chand2012} to conclude that the constrain5 in $\mathcal{M}$ along $\proj_{{T^\star}^\perp}$ is strictly feasible:
$$\|\proj_{{T^\star}^\perp}(L-L^\star)\|_2 \leq \frac{\|L-L^\star\|_2^2}{\sigma} \leq \frac{C_2^2\lambda_n^2}{\sigma} < \frac{C_0\lambda_n}{\psi}.$$
\item Appealing to Proposition 2.1 of \cite{Chand2012}, we prove that the tangent space $T'$ is close to $T^\star$:
$$\rho(T',T^\star) \leq \frac{2\|L-L^\star\|_2}{\sigma} \leq \frac{2m\bar{m}\lambda_n{C}_2\omega}{12{m}\bar{m}\lambda_n{C}_2} \leq \omega.$$
\item Letting $\sigma'$ be the minimum nonzero singular value of $L$. One can check that:
$$\sigma' \geq \sigma - \|L-L^\star\|_2 \geq \sigma - mC_2\lambda_n \geq 10mC_2\lambda_n \geq 8\|L-L^\star\|_2.$$
One can also obtain the following lower bounds for $\sigma'$:
\begin{eqnarray*}
\begin{aligned}
\sigma' &\geq \sigma - \|L-L^\star\|_2 \geq \sigma - mC_2\lambda_n \geq 6\zeta{m}C_2^2\psi\lambda_n - mC_2\lambda_n \geq 6\zeta{m}\psi{C}_2^2\lambda_n\\
\sigma' &\geq \sigma - \|L-L^\star\|_2 \geq \sigma - mC_2\lambda_n \geq \frac{\alpha'(\frac{1}{\zeta}-2(\kappa^\star+\omega))\lambda_n}{4\left[1+\frac{1}{3\zeta}\right]}
\end{aligned}
\end{eqnarray*}
where we have used $C_2\psi \geq 1$. Once again appealing to Proposition 2.2 of \cite{Chand2012} and simple algebra, we have:
\begin{eqnarray*}
\Phi_\gamma[\Gla\I{C}_{T'}] \leq m\psi\|C_{T'}\|_2\leq m\psi\frac{\|L-L^\star\|_2^2}{\sigma'} \leq m\psi\frac{C_2^2\lambda_n^2}{6\zeta{m}\psi{C}_2^2\lambda_n} \leq \frac{\lambda_n}{6\zeta}.
\end{eqnarray*}
From Lemma~\ref{lemma:5}, we have that $\|{C}_{T'\oplus\ones}\|_2 \leq \|C_{T'}\|_2$. Following the same logic as above, we can then show that:
$\Phi_\gamma[\Gla\I{C}_{T'\oplus\ones}] \leq \frac{\lambda_n}{6\zeta}$.
\item Finally, we show that:
\begin{eqnarray*}
\Phi_\gamma[C_{T'}] \leq m\|\proj_{{T'}^\perp}(L-L^\star)\|_2 \leq m\frac{\|L-L^\star\|_2^2}{\sigma'} \leq \frac{mC_2^2\lambda_n^2}{\sigma'} \leq \frac{4\lambda_n}{\alpha'(\frac{1}{\zeta}-2(\kappa^\star+\omega))}\left[1+\frac{1}{3\zeta}\right].
\end{eqnarray*}
\end{enumerate}
\end{proof}

Consider any optimal solution $(\hat{S}^\mathcal{M},\hat{L}^\mathcal{M})$ of \eqref{eqn:estimator_nonconvex}. We will show that $(\hat{S}^\mathcal{M},\hat{L}^\mathcal{M})$ is the unique solution of the nonconvex program \eqref{eqn:estimator_nonconvex}, as well as the unique solution of \eqref{eqn:estimator_no_psd}.

\subsection{Variety constrained program to tangent space constrained program}
Let $(\hat{S}^\mathcal{M},\hat{L}^\mathcal{M})$ be any optimal solution of \eqref{eqn:estimator_nonconvex}. In Corollary~\ref{corr:nonconvex}, we conclude that the variables  $(\hat{S}^\mathcal{M},\hat{L}^\mathcal{M})$ are smooth points of their respective varieties. As a result, the rank constraint $\text{rank}(L) \leq \text{rank}(L^\star)$ can be linearized to $L \in T(\hat{L}^\mathcal{M})$. Since all the remaining constraints are convex, the optimum of the linearized program is also the optimum of \eqref{eqn:estimator_nonconvex}. Moreover, we once more appeal to Corollary~\ref{corr:nonconvex} to conclude that the constraints in $\mathcal{M}$ along ${T^\star}^\perp$ are strictly feasible at $\hat{L}^\mathcal{M}$. As a result, these constraints are inactive and can be removed in this ``linearized program". We now argue that the constraint $\Phi_\gamma[\Gla\I\Gl(\hat{S}^\mathcal{M}-S^\star,\hat{L}^\mathcal{M}-L^\star)]$ is inactive. For notational simplicity, we let $T' = T(\hat{{L}}^\M)$ and $\Hs' = \Omega^\star \times T'$, we consider the following optimization problem:
\begin{equation}
\begin{aligned}
    (\tilde{S},\tilde{L}) = \argmin_{S \in \mathbb{S}^{p},L \in \mathbb{S}^{p}} &~~-\log{\det}(U^T(S-L)U) - \mathrm{tr}((S-L)\hat{\Gamma}_O/2) + \lambda_n(\|S\|_{1} + \gamma\|L\|_\star),\\
    \text{subject-to}&~~~ (S,L) \in \Hs', S-L \in \mathrm{span}(\ones).
\end{aligned}
\label{eqn:estimator_tangent}
\end{equation}
We prove that under conditions imposed on the regularization parameter $\lambda_n$, the pair of variables $(\hat{S}^\mathcal{M},\hat{L}^\mathcal{M})$ is the unique optimum of \eqref{eqn:estimator_tangent}. First, note that the optimum of \eqref{eqn:estimator_tangent} is unique since it is a strictly convex program convex because the negative log-likelihood terms have a strictly positive-definite Hessian due to property $p1)$ in \eqref{eqn:Hessian_prop}. To show that $(\hat{S}^\mathcal{M},\hat{L}^\mathcal{M})$ is the optimum of \eqref{eqn:estimator_tangent}, it suffices to show strict feasibility of the constraint, that is: $\Phi_\gamma[\Gla\I\Gl(\tilde{S}-S^\star,\tilde{L}-L^\star)] < C_0\lambda_n$. 

From optimality conditions of \eqref{eqn:estimator_tangent}, there exists $Q_\Omega \in {\Omega^\star}^\perp$, $Q_{T} \in {T'}^\perp$, $t \in \mathbb{R}$ such that:
\begin{equation}
\begin{aligned}
-\hat{\Gamma}_O/2 - (\tilde{S}-\tilde{L})^{+} + t\ones+Q_\Omega &= -\lambda\partial\|\tilde{S}\|_{1},\\
\hat{\Gamma}_O/2 + (\tilde{S}-\tilde{L})^{+}-t\ones+Q_{T} &= -\lambda\gamma\partial\|\tilde{L}\|_{\star},\\
 \tilde{S}-\tilde{L} &\in \mathrm{span}(\ones).
\label{eqn:optimality_tangent_temp}
\end{aligned}
\end{equation}
Let the reduced SVD of $\tilde{L}$ be given by $\tilde{L} = \bar{U}\bar{D}\bar{V}^T$ and $Z = (\lambda\text{sign}(\tilde{S}),\lambda\gamma\bar{U}\bar{V}^T)$. Following a similar logic as in Section \ref{sec:proof_strategy} and restricting the optimality conditions to the space of $\Hs$, we have the following equivalent characterization of the optimality conditions:
\begin{eqnarray}
\begin{aligned}
\proj_{\Hs'}\Gla(\I(\Gl(\Delta_S,\Delta_{L}+t\ones))+\mathcal{R}_{\Gamma^\star_0}\Gl(\Delta_S,\Delta_{L}+t\ones)+E_n) &= Z,\\
 \tilde{S}-\tilde{L} &\in \mathrm{span}(\ones).
\end{aligned}
\label{eqn:optimality}
\end{eqnarray}
Here, $\Delta_S = \tilde{S}-S^\star$, $\Delta_{L} = \tilde{L}-L^\star$. In the remaining, we will denote $\Delta_{L+} = \tilde{L}-L^\star+t\ones$. Our result relies on the following propositions to control the remainder term. 
\begin{proposition}
Suppose $\Phi_\gamma(\Delta_S,\Delta_{L+}) \leq \frac{1}{2C_1}$ for $C_1 = \psi(m+d^\star)$ and any $\Delta_S \in \Omega^\star$. Then, $\Phi_\gamma[\Gla\mathcal{R}_{\Gamma^\star_0}(\Gl(\Delta_S,\Delta_{L+}))] \leq {2m\psi{C}_1^2\Phi_\gamma(\Delta_S,\Delta_{L+})^2}$.
\label{remainder_control}
\end{proposition}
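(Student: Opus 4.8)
\textbf{Proof plan for Proposition~\ref{remainder_control}.} The plan is to estimate the remainder directly from the Neumann-series representation \eqref{remainder:eq}. Write $W := (S^\star-L^\star+\tfrac1p\ones)^{-1}$ and $N := \Gl(\Delta_S,\Delta_{L+}) = \Delta_S - \Delta_{L+}$, so that $\mathcal{R}_{\Gamma^\star_0}(N) = UW\bigl[\sum_{k\ge 2}(-NW)^k\bigr]U^T$. The first ingredient is the spectral-norm bound $\|W\|_2 = \psi$. By Theorem~\ref{thm:main1} and Lemma~\ref{lemma: schur.Comp}, $S^\star-L^\star = \tilde\Theta^\star$ is positive semidefinite with kernel exactly $\mathrm{span}(\mathbf{1}_p)$; adding $\tfrac1p\ones$ therefore leaves the eigenvalues on $\mathbf{1}_p^\perp$ unchanged and assigns eigenvalue $1$ to the direction $\mathbf{1}_p$, whence $\|W\|_2 = \max\{1,\|(S^\star-L^\star)^+\|_2\} = \psi$. (The same computation shows $W\mathbf{1}_p=\mathbf{1}_p$, which is also useful for checking that the $t\ones$ part of $\Delta_{L+}$ interacts cleanly with $W$.)

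The second ingredient is a bound on $\|N\|_2$ in terms of $\Phi_\gamma(\Delta_S,\Delta_{L+})$. Here I would use $\|N\|_2 \le \|\Delta_S\|_2 + \|\Delta_{L+}\|_2$, then the sparsity estimate $\|\Delta_S\|_2 \le \theta(\Omega^\star)\|\Delta_S\|_\infty \le d^\star\|\Delta_S\|_\infty$ (valid since $\Delta_S\in\Omega^\star$ and $\theta(\Omega^\star)\le d^\star$, as recalled in Appendix~\ref{sec:notation_defn}), together with $\|\Delta_S\|_\infty \le \Phi_\gamma(\Delta_S,\Delta_{L+})$ and $\|\Delta_{L+}\|_2 \le \gamma\,\Phi_\gamma(\Delta_S,\Delta_{L+})$ straight from the definition of $\Phi_\gamma$. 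Absorbing the $\gamma$-scaling into the constant (via the factors $m,\bar m$ and the admissible range of $\gamma$) yields $\|N\|_2 \le \tfrac{C_1}{\psi}\,\Phi_\gamma(\Delta_S,\Delta_{L+})$ with $C_1=\psi(m+d^\star)$. In particular the hypothesis $\Phi_\gamma(\Delta_S,\Delta_{L+})\le \tfrac{1}{2C_1}$ gives $\|W\|_2\|N\|_2 = \psi\|N\|_2 \le C_1\,\Phi_\gamma(\Delta_S,\Delta_{L+}) \le \tfrac12$, which is exactly the convergence condition for the Neumann series.

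The third and final step sums the geometric series and converts to the $\Phi_\gamma$ norm. Since $\|W\|_2\|N\|_2\le\tfrac12$,
\[
\|\mathcal{R}_{\Gamma^\star_0}(N)\|_2 \;\le\; \|W\|_2\sum_{k\ge2}\bigl(\|W\|_2\|N\|_2\bigr)^k \;=\; \psi\,\frac{(\psi\|N\|_2)^2}{1-\psi\|N\|_2} \;\le\; 2\psi^3\|N\|_2^2 .
\]
Because $\Gla(M)=(M,M)$ and $\|M\|_\infty\le\|M\|_2$ for any symmetric $M$, we have $\Phi_\gamma[\Gla(M)] = \max\{\|M\|_\infty,\|M\|_2/\gamma\} \le m\,\|M\|_2$; applying this with $M=\mathcal{R}_{\Gamma^\star_0}(N)$ and then inserting $\psi\|N\|_2 \le C_1\,\Phi_\gamma(\Delta_S,\Delta_{L+})$ gives $\Phi_\gamma[\Gla\mathcal{R}_{\Gamma^\star_0}(N)] \le 2m\psi^3\|N\|_2^2 \le 2m\psi\,C_1^2\,\Phi_\gamma(\Delta_S,\Delta_{L+})^2$, which is the claim. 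I expect the only delicate point to be the bookkeeping in the second step: $\Delta_S$ is controlled only in $\ell_\infty$ while $\Delta_{L+}$ is controlled in spectral norm, and collapsing the two into a single bound of the precise form $\tfrac{C_1}{\psi}\Phi_\gamma$ is what forces the sparsity estimate $\theta(\Omega^\star)\le d^\star$ and the careful treatment of the $\gamma$-dependent constants; everything after that is the routine Neumann-series estimate above.
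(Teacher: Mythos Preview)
Your argument is correct and follows exactly the same three-step route as the paper's proof: bound $\|\Gl(\Delta_S,\Delta_{L+})\|_2$ by $(d^\star+\gamma)\Phi_\gamma\le(m+d^\star)\Phi_\gamma=\tfrac{C_1}{\psi}\Phi_\gamma$, sum the Neumann series using $\psi\|N\|_2\le\tfrac12$ to get $\|\mathcal{R}\|_2\le 2\psi^3\|N\|_2^2$, and pass to $\Phi_\gamma$ via $\Phi_\gamma[\Gla M]\le m\|M\|_2$. Your explicit justification that $\|W\|_2=\psi$ (by splitting the spectrum along $\mathbf{1}_p$ and $\mathbf{1}_p^\perp$) is in fact a bit more careful than the paper, which uses this bound implicitly.
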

\begin{proof}[Proof of Proposition~\ref{remainder_control}] We have that:
\begin{eqnarray*}
\begin{aligned}
\|\Gl(\Delta_{S},\Delta_{L+})\|_2 &\leq \|\Delta_S\|_2 + \|\Delta_{L+}\|_2\leq \theta(\Omega^\star)\|\Delta_S\|_\infty + \gamma\frac{\|\Delta_{L+}\|_2}{\gamma}\leq (\gamma+\theta(\Omega^\star))\Phi_\gamma(\Delta_S,\Delta_{L+}) \\&\leq (m+d^\star)\Phi_\gamma(\Delta_S,\Delta_{L+}) \leq \frac{1}{2\psi}.
\end{aligned}
\end{eqnarray*}
Therefore, 
\begin{eqnarray*}
\begin{aligned}
\|\mathcal{R}_{\Gamma^\star_0}(\Gl(\Delta_S,\Delta_{L+}))\|_2 &\leq \psi\sum_{k=2}^\infty(\|\Delta_S+\Delta_{L+}\|_2\psi)^k \leq \psi^3\|\Delta_S+\Delta_{L+}\|_2^2\frac{1}{1-\|\Delta_S+\Delta_{L+}\|_2\psi} \\
&\leq 2\psi^3\left(1+\frac{\alpha'}{6\zeta}\right)^2\Phi_\gamma(\Delta_S,\Delta_{L+})^2 = 2\psi{C}_2^2\Phi_\gamma(\Delta_S,\Delta_{L+})^2.
\end{aligned}
\end{eqnarray*}
Putting everything together, we have the desired result.
\end{proof}
Notice that the bound on the remainder term is dependent on the error term $\Phi_\gamma(\Delta_S,\Delta_{L+})$. In the following proposition, we bound this error so we can control the remainder term.  %Let $C_{T} = \proj_{{T'}^\perp}(L^\star)$.
\begin{proposition} Let $\tilde{S},\tilde{L}$ be the solution of convex program \eqref{eqn:estimator_tangent}. Define 
$$ r = \max\left\{\frac{4}{\alpha'(\frac{1}{\zeta}-2(\kappa^\star+\omega))}[\Phi_\gamma(\Gla{E}_n) + \Phi_\gamma(\Gla\I{C}_{T'}) + \lambda_n],\Phi_\gamma[(0,C_{T'})]\right\}.$$
If we have that $r \leq \min\left\{\frac{8\alpha'}{C_1}, \frac{\min\{\alpha',1\}(\frac{1}{\zeta} - 2(\kappa^\star+\omega))}{16m\psi{C}_2^2}\right\}$, then $\Phi_\gamma(\Delta_{S},\Delta_{L}) \leq \frac{4r\sqrt{5h}}{1-\sqrt{1- ({\kappa^\star}^2-\omega)^2}}$\\ and $\Phi_\gamma(0,t\ones) \leq \frac{4r\sqrt{5h}}{1-\sqrt{1- ({\kappa^\star}^2-\omega)^2}}$.% and $\max\{t\ones,\|\Delta_{L}\|_2 \leq \frac{8r{m}}{\kappa^\star}\}.$
\label{brouwer}
\end{proposition}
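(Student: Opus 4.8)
The plan is to prove Proposition~\ref{brouwer} by a \textbf{Brouwer fixed-point argument} (whence the label): the feasible errors of the tangent-constrained program \eqref{eqn:estimator_tangent} are exactly the fixed points of an explicit continuous map, which I would show carries a small $\Phi_\gamma$-ball into itself. To set this up, first I would write $\Delta_L=\proj_{T'}(\Delta_L)-C_{T'}$ with $C_{T'}=\proj_{{T'}^\perp}(L^\star)$ and put $\tilde\Delta:=\big(\Delta_S,\ \proj_{T'}(\Delta_L)+t\ones\big)\in\mathbb{Q}'=\Omega^\star\times(T'\oplus\mathrm{span}(\ones))$, so that the dual scalar $t$ forced into the stationarity conditions by the constraint $\tilde S-\tilde L\in\mathrm{span}(\ones)$ is carried along inside $\mathbb{Q}'$. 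Using $\Gl(\Delta_S,\Delta_L+t\ones)=\Gl(\tilde\Delta)+C_{T'}$ and Lemma~\ref{lemma:new_2}, the optimality system \eqref{eqn:optimality} becomes
\[
\proj_{\Hs'}\Gla\I\Gl\proj_{\mathbb{Q}'}(\tilde\Delta)=Z-\proj_{\Hs'}\Gla\I(C_{T'})-\proj_{\Hs'}\Gla E_n-\proj_{\Hs'}\Gla\,\mathcal{R}_{\Gamma^\star_0}\!\big(\Gl(\Delta_S,\Delta_L+t\ones)\big).
\]
By Proposition~\ref{prop:sufficient_identifiability} (the lower bound \eqref{eqn:bound_1} on $\chi_+$ and Property~\ref{assumption_3}, with $\omega=\mu^\star/4$), the map $\proj_{\Hs'}\Gla\I\Gl\proj_{\mathbb{Q}'}$ has minimum $\Phi_\gamma$-gain bounded below by a positive constant multiple of $\alpha\big(\tfrac{2}{\zeta_1}-2(\kappa^\star+\omega)\big)$, so solving the display for $\tilde\Delta$ through this well-conditioned operator defines a map $F$ whose fixed points are precisely the feasible errors of \eqref{eqn:estimator_tangent}. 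On the ball $\{\Phi_\gamma(\Delta_S,\Delta_L+t\ones)\le 1/(2C_1)\}$ the map $F$ is continuous, since there its only nonlinear term $\mathcal{R}_{\Gamma^\star_0}$ is given by the convergent power series \eqref{remainder:eq}.

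The heart of the proof is to show $F$ maps the $\Phi_\gamma$-ball of radius $\rho_0:=\tfrac{4r\sqrt{5h}}{1-\sqrt{1-({\kappa^\star}^2-\omega)^2}}$ into itself. The two-sided smallness hypothesis on $r$ keeps $\tilde\Delta$ inside the region where Proposition~\ref{remainder_control} applies, so that $\Phi_\gamma[\Gla\mathcal{R}_{\Gamma^\star_0}(\cdot)]\le 2m\psi C_2^2\,\Phi_\gamma(\Delta_S,\Delta_L+t\ones)^2$, and the branch $r\le\tfrac{\min\{\alpha,1\}(\frac{2}{\zeta_1}-2(\kappa^\star+\omega))}{16m\psi C_2^2}$ forces this quadratic piece to be at most half of the linear budget. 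Combining $\Phi_\gamma(Z)\le\lambda_n$ (because $\|\mathrm{sign}(\tilde S)\|_\infty\le 1$ and $\|\bar U\bar V^\top\|_2\le 1$ for the $\ell_1$/nuclear subdifferentials), the curvature bounds $\Phi_\gamma(\Gla\I C_{T'})\le\lambda_n/(6\zeta_2)$ and $\Phi_\gamma[(0,C_{T'})]\le r$ from Corollary~\ref{corr:nonconvex}, and the definition of $r$, the right-hand side of the display has $\Phi_\gamma\le\tfrac{\alpha}{4}\big(\tfrac{2}{\zeta_1}-2(\kappa^\star+\omega)\big)r$; applying the inverse operator, the $\mathbb{Q}'$-component of $F(\tilde\Delta)$ then has $\Phi_\gamma\le r$, up to the universal constant absorbed into $\rho_0$. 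The final step is to convert this $\mathbb{Q}'$-bound into separate bounds on $\Delta_L$ and on $t\ones$: since ${\kappa^\star}^2\ge\mu^\star/2$ gives $\kappa^\star>\sqrt{\omega}$, Lemma~\ref{lemma:6} makes the splitting $\proj_{T'}(\Delta_L)+t\ones$ unique and Lemma~\ref{lemma:new} bounds each summand in spectral norm by $\tfrac{2\sqrt{5h}}{1-\sqrt{1-({\kappa^\star}^2-\omega)^2}}\,\gamma\,\Phi_\gamma(\tilde\Delta)$; adding back the $C_{T'}$ contribution then yields $\Phi_\gamma(\Delta_S,\Delta_L)\le\rho_0$ and $\Phi_\gamma(0,t\ones)\le\rho_0$, so $F$ self-maps the (compact, convex) ball. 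Brouwer's theorem supplies a fixed point, which is feasible and hence optimal for \eqref{eqn:estimator_tangent}; strict convexity of the objective — its Hessian is positive definite by Property~\ref{assumption_1} — makes it the unique optimum, which therefore satisfies the claimed bounds.

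The hard part will be the dual parameter $t\ones$, and this is precisely where the analysis departs from the Gaussian latent-variable treatment of \cite{Chand2012}: because of the zero-row-sum constraint the error cannot be controlled on $\Hs'=\Omega^\star\times T'$ alone, so one must (i) verify that $\I$ stays well conditioned on the enlarged space $\mathbb{Q}'$, which is the role of the coupled quantities $\chi_+,\varepsilon_+$ and Property~\ref{assumption_3}, and (ii) cleanly disentangle the rank-one direction $t\ones$ from the low-rank tangent directions. Part (ii) is where the hypotheses ${\kappa^\star}^2\ge\mu^\star/2$ and $2\kappa^\star+\mu^\star/2<1$ are used — the first keeps $\mathrm{span}(\ones)$ at a positive angle from $T^\star$ (and from nearby $T'$), so that the decomposition is unique and stable via Lemmas~\ref{lemma:6} and~\ref{lemma:new}, while the second keeps $t$ bounded — and it is the source of the $\sqrt{5h}/\big(1-\sqrt{1-({\kappa^\star}^2-\omega)^2}\big)$ factor in the final bound. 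The remaining ingredients — bounding the curvature term via the low-rank perturbation estimates of \cite{Chand2012} and bounding $E_n$ via Corollary~\ref{cor:sample_variogram}, both of which feed into the definition of $r$ — are routine.
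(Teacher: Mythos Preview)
Your high-level plan is exactly the paper's: a Brouwer fixed-point argument on $\mathbb{Q}'=\Omega^\star\times(T'\oplus\mathrm{span}(\ones))$, with the combined variable $\tilde\Delta=(\Delta_S,\proj_{T'}(\Delta_L)+t\ones)$, then a final splitting via Lemma~\ref{lemma:new}. You also correctly single out Proposition~\ref{remainder_control} for the quadratic remainder and the auxiliary program \eqref{eqn:estimator_auxillary} (Lemma~\ref{lemma:auxillary}) to attach the dual scalar $t$. Where you diverge from the paper, however, is in the construction of the fixed-point map, and that step as written does not work.

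You propose to ``solve the display for $\tilde\Delta$'' by inverting the operator $\proj_{\Hs'}\Gla\I\Gl\proj_{\mathbb{Q}'}$, invoking the positive lower bound on $\chi_+$ from Proposition~\ref{prop:sufficient_identifiability}. But this operator maps $\mathbb{Q}'$ into $\Hs'$, and whenever $\ones\notin T'$ (which is exactly the regime $\kappa^\star>\omega$ enforced here via Lemma~\ref{lemma:6}) one has $\dim\mathbb{Q}'=\dim\Hs'+1$, so the map cannot be a bijection and you cannot simply ``apply the inverse operator'' to define $F$. The paper sidesteps this by inverting \emph{only} on $\Hs'$: it sets $\mathcal{B}:=(\proj_{\Hs'}\Gla\I\Gl\proj_{\Hs'})^{-1}$, which is well defined by Property~\ref{assumption_1}, and writes
\[
F(\delta_S,\delta_{L+})\;=\;(\delta_S,\delta_{L+})-\mathcal{B}\big\{\proj_{\Hs'}\Gla\big[\I\Gl(\delta_S,\delta_{L+})+\mathcal{R}_{\Gamma^\star_0}(\Gl(\delta_S,\delta_{L+}+C_{T'}))+\I C_{T'}+E_n\big]-Z\big\}.
\]
Because $\mathcal{B}$ lands in $\Hs'\subset\mathbb{Q}'$, this $F$ genuinely maps $\mathbb{Q}'$ to itself. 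After the algebraic identity $(\delta_S,\delta_{L+})-\mathcal{B}\proj_{\Hs'}\Gla\I\Gl(\delta_S,\delta_{L+})=\proj_{{\Hs'}^\perp}(\delta_S,\delta_{L+})-\mathcal{B}\proj_{\Hs'}\Gla\I\Gl\proj_{{\Hs'}^\perp}(\delta_S,\delta_{L+})$, the extra $\mathrm{span}(\ones)$ direction shows up only through $\proj_{{\Hs'}^\perp}(\delta_S,\delta_{L+})$, which Lemma~\ref{lemma:4} bounds by $2r(\kappa^\star+\omega)$, and through its image under $\mathcal{B}\proj_{\Hs'}\Gla\I\Gl$, which Property~\ref{assumption_2} bounds by $r(1-2/\zeta_1)$. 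This is the mechanism you are missing; note in particular that the paper's proof of this proposition uses only Properties~\ref{assumption_1} and~\ref{assumption_2}, \emph{not} Property~\ref{assumption_3}, which is reserved for Proposition~\ref{prop:removing_tangent_space}.

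A second, related confusion: the self-mapped ball has radius $r$, not $\rho_0$. The paper runs Brouwer on $\mathbb{B}_r=\{(\delta_S,\delta_{L+})\in\mathbb{Q}':\Phi_\gamma\le r\}$, obtains a fixed point with $\Phi_\gamma\le r$, then deduces $\Phi_\gamma[(\Delta_S,\Delta_L)+(0,t\ones)]\le 2r$ (adding back the $C_{T'}$ piece), and \emph{only then} invokes Lemma~\ref{lemma:new} to split off $t\ones$ and obtain the factor $4\sqrt{5h}/\big(1-\sqrt{1-({\kappa^\star}^2-\omega)^2}\big)$. In your sketch the roles of $r$ and $\rho_0$ are mixed, and Lemma~\ref{lemma:new} is placed inside the self-mapping verification rather than after it.
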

The proof of the proposition relies on the following lemma which we state and prove first.
\begin{lemma}Consider the following optimization:
\begin{equation}
\begin{aligned}
    \argmin_{S \in \mathbb{S}^{p},L \in \mathbb{S}^{p}} &\log{\det}(U^T(S-L)U) - \mathrm{tr}((S-L)\hat{\Gamma}_O/2) + \mathrm{tr}(\ones(S-L))+\lambda_n(\|S\|_{1} + \gamma\|L\|_\star).\\
    \text{subject-to}&~~~(S,L) \in \mathbb{H}'
\end{aligned}
\label{eqn:estimator_auxillary}
\end{equation}
Then, the solution of \eqref{eqn:estimator_auxillary} is unique and is equal to $\tilde{S},\tilde{L}$ (i.e., the solution of \eqref{eqn:estimator_tangent}).
\label{lemma:auxillary}
\end{lemma}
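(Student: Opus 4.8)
The plan is to show that $(\tilde S,\tilde L)$, the solution of the tangent-space-constrained program~\eqref{eqn:estimator_tangent}, is itself a minimizer of~\eqref{eqn:estimator_auxillary}, and then to rule out other minimizers by strict convexity. The point is that~\eqref{eqn:estimator_auxillary} is exactly the \emph{partial dualization} of~\eqref{eqn:estimator_tangent}: it is obtained by dropping the constraint $S-L\in\mathrm{span}(\ones)$ (which, as the optimality system~\eqref{eqn:optimality_tangent_temp} records, enters through a single scalar multiplier $t\in\mathbb R$) and replacing it by the linear penalty $\mathrm{trace}(t\ones(S-L))=t\langle\ones,S-L\rangle$ evaluated at exactly that optimal $t$. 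Since $(\tilde S,\tilde L)$ already satisfies the dropped constraint, it remains feasible for~\eqref{eqn:estimator_auxillary}, and the Lagrangian trade-off suggests it should remain optimal.

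Concretely, I would first observe that~\eqref{eqn:estimator_auxillary} is a convex program over the subspace $\Hs'=\Omega^\star\times T'$: its smooth part differs from that of~\eqref{eqn:estimator_tangent} only by the linear term $t\langle\ones,S-L\rangle$, and the regularizer $\lambda_n(\|S\|_1+\gamma\|L\|_\star)$ is unchanged. Writing down its first-order optimality conditions over $\Hs'$, and using $\nabla_S\bigl[-\log\det(U^T(S-L)U)\bigr]=-(S-L)^+$ (via Lemma~\ref{lemma:new_2}, which applies since the log-det term forces $UU^T(S-L)UU^T=S-L$), one gets: there exist $Q_\Omega\in{\Omega^\star}^\perp$, $Q_T\in{T'}^\perp$, $g_S\in\partial\|S\|_1$, $g_L\in\partial\|L\|_\star$ with $-\hat\Gamma_O/2-(S-L)^+ + t\ones + Q_\Omega = -\lambda_n g_S$ and $\hat\Gamma_O/2+(S-L)^+ - t\ones + Q_T = -\lambda_n\gamma g_L$. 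These are precisely equations~\eqref{eqn:optimality_tangent_temp}, the $\pm t\ones$ contributions now coming from the explicit linear penalty rather than from a constraint multiplier. By hypothesis $(\tilde S,\tilde L)$ satisfies~\eqref{eqn:optimality_tangent_temp} for this $t$, and it lies in $\Hs'$ since it is feasible for~\eqref{eqn:estimator_tangent}; hence it is a KKT point of the convex program~\eqref{eqn:estimator_auxillary}, and therefore a global minimizer. Uniqueness then follows exactly as for~\eqref{eqn:estimator_tangent}: the Hessian of the smooth part restricted to $\Gl(\Hs')$ is $(S-L)^+\otimes(S-L)^+$ composed with $\Gl$, which is bounded below by Property~\ref{assumption_1} ($\chi\ge\alpha>0$), and the added linear term does not change the Hessian, so the objective is strictly convex on $\Hs'$ and the minimizer is unique. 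Thus $(\tilde S,\tilde L)$ is the unique optimum of~\eqref{eqn:estimator_auxillary}.

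The only place that requires care is matching the optimality system of~\eqref{eqn:estimator_auxillary} term for term with~\eqref{eqn:optimality_tangent_temp}: one must check that the subgradients of $\|\cdot\|_1$ and $\|\cdot\|_\star$ and the subspace multipliers $Q_\Omega,Q_T$ are the same objects in both problems (they are, since neither the regularizer nor the constraint $(S,L)\in\Hs'$ has changed), and that the gradient of $t\langle\ones,S-L\rangle$ reproduces exactly the $+t\ones$ and $-t\ones$ appearing in the two lines of~\eqref{eqn:optimality_tangent_temp}. None of this is deep, and I do not expect a genuine obstacle here — the lemma is essentially a restatement of Lagrangian duality for the single scalar constraint $\langle\ones,S-L\rangle=0$ — but the bookkeeping around the subspace $\Hs'$ together with the pseudoinverse identity $\nabla_S[-\log\det(U^T(S-L)U)]=-(S-L)^+$ is what the proof actually spends its words on.
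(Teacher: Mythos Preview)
Your proposal is correct and follows essentially the same route as the paper: both arguments identify the first-order optimality system of~\eqref{eqn:estimator_auxillary} with the first line of~\eqref{eqn:optimality_tangent_temp}/\eqref{eqn:optimality} (the $t\ones$ term now arising from the explicit linear penalty rather than from a constraint multiplier), observe that $(\tilde S,\tilde L)$ already satisfies it, and invoke strict convexity on $\Hs'$ via Property~\ref{assumption_1} to conclude uniqueness. Your framing via partial dualization is a bit more explicit than the paper's presentation, but the substance is identical.
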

\begin{proof}[Proof of Lemma~\ref{lemma:auxillary}] Note that by property $p1)$ in \eqref{eqn:Hessian_prop}, the estimator \eqref{eqn:estimator_auxillary} is strictly convex. We will denote the optimal solution of \eqref{eqn:estimator_auxillary} by $(\tilde{S},\tilde{L})$. We are using the same notation as the optimal solution of \eqref{eqn:estimator_tangent} as we will show momentarily that these optimal solutions are identical. Specifically, define $Z$ as is done before Proposition ~\ref{remainder_control}. Let $\Delta_S = \tilde{S}-S^\star$ and $\Delta_L= \tilde{L}-L^\star$. The optimality condition of \eqref{eqn:estimator_auxillary} is given by:
\begin{eqnarray}
\begin{aligned}
\proj_{\Hs'}\Gla(\I\Gl(\Delta_S,\Delta_{L}+t\ones)+\mathcal{R}_{\Gamma^\star_0}\Gl(\Delta_S,\Delta_{L}+t\ones)+E_n) &= Z.%,\\
%(\tilde{S}-\tilde{L})\ones &= 0.
\end{aligned}
\label{eqn:optimality_aux}
\end{eqnarray}
Notice that the optimality condition \eqref{eqn:optimality_aux} is identical to the first condition in \eqref{eqn:optimality}. Since \eqref{eqn:estimator_auxillary} has a unique solution, then, the optimal solutions of \eqref{eqn:estimator_tangent} and \eqref{eqn:estimator_auxillary} coincide. 
\end{proof}
\begin{proof}[Proof of Proposition~\ref{brouwer}] 
Since $T'$ is a tangent space such that $\rho(T',T^\star)\leq \omega$, we have from Property~$p1$ in \eqref{eqn:Hessian_prop} that the operator $\mathcal{B} = (\proj_{\mathbb{H}'}\Gla\I\Gl\proj_{\mathbb{H}'})^{-1}$ is bijective and is well-defined. Consider the following function taking as input $(\delta_S,\delta_{L+}) \in \mathbb{Q}'$ where $\mathbb{Q}' = \Omega^\star \times (T' \oplus t\ones)$:
\begin{equation*}
F(\delta_S,\delta_{L+}) = (\delta_S,\delta_{L+})-\mathcal{B}\left\{\proj_{\Hs'}\Gla[\I\Gl(\delta_S,\delta_{L+})+\mathcal{R}_{\Gamma^\star_0}(\Gl(\delta_S,\delta_{L+}+C_{T'}))+ \I{C}_{T'}+E_n-Z\right\}.
\end{equation*}
Here, $C_{T'} = \proj_{{T'}^\perp}(L^\star)$. Now a point $(\delta_S,\delta_{L+})$ is a fixed point of $F$ if and only if $\proj_{\Hs'}\Gla[\I\Gl(\delta_S\allowbreak,\delta_{L+})+\mathcal{R}_{\Gamma^\star_0}(\Gl(\delta_S,\delta_{L+}+C_{T'}))+\I{C}_{T'}+E_n]=Z$. Further, a fixed point $(\delta_S,\delta_{L+})$ provides certificates of optimality for \eqref{eqn:estimator_auxillary}. Specifically, let $\tilde{S} = S^\star+\delta_S$. By Lemma~\ref{lemma:6}, find a unique decomposition of $\delta_{L+} = L+t\ones$ where $L \in T'$. Then, let $\tilde{L} = \proj_{T'}(L^\star)+L$. By construction, the parameters $(\tilde{S},\tilde{L})$ then satisfy the optimality condition for \eqref{eqn:optimality_aux} and thus also the optimality condition of \eqref{eqn:optimality} after appealing to Lemma~\ref{lemma:auxillary}. In other words, the fixed point of the function $F$ is $\proj_{\Hs'}(\Delta_S,\Delta_{L})+(0,t\ones)$. 

Next, using Brouwer's fixed point theorem, we show that $F$ has a fixed point that lies in the ball $\mathbb{B}_r = \{(\delta_S,\delta_{L+}) \in \mathbb{Q}' | \Phi_\gamma(\delta_S,\delta_{L+}) \leq r\}$. An equivalent formulation of $F$ is:
\begin{eqnarray*}
F(\delta_S,\delta_{L+}) = \proj_{{\Hs'}^\perp}(\delta_S,\delta_{L+}) - \mathcal{B}\Big\{\proj_{\Hs'}\Gla[\mathcal{R}_{\Gamma^\star_0}(\Gl(\delta_S,\delta_{L+}+C_{T'}))+\I[C_{T'}+\Gl\proj_{{\Hs'}^\perp}(\delta_S,\delta_{L+})]\\+E_n-Z\Big\}.
\end{eqnarray*}
First, note that by appealing to Lemma~\ref{lemma:4}, we have that: $\Phi_\gamma\left[\proj_{{\Hs'}^\perp}(\delta_S,\delta_{L+})\right] \leq 2r({\kappa^\star}+\omega).$
Similarly, we have from Property~$p3$ in \eqref{eqn:Hessian_prop} that: $\Phi_\gamma\left[\mathcal{B}\left\{\proj_{\Hs'}\Gla{I}\Gl\proj_{{\Hs'}^\perp}(\delta_S,\delta_{L+}) \right\}\right] \leq r\left(1-\frac{1}{\zeta}\right)$. Finally, we note that:
\begin{eqnarray*}
\begin{aligned}
&\Phi_\gamma\left[\mathcal{B}\left\{\proj_{\Hs'}\Gla[\mathcal{R}_{\Gamma^\star_0}(\Gl(\delta_S,\delta_{L+}+C_{T'}))+\I{C}_T'+E_n-Z \right\}\right]\\
&\leq \frac{2}{\alpha'}\left(\Phi_\gamma[\Gla\mathcal{R}_{\Gamma^\star_0}(\Gl(\delta_S,\delta_{L+}+C_{T'}))] + \Phi_\gamma[\I{C}_{T'}] + \Phi_\gamma[E_n] + \lambda_n\right)\\
&\leq \frac{r(\frac{1}{\zeta} - 2(\kappa^\star-\omega))}{2} + \frac{2}{\alpha'}\left(\Phi_\gamma[\Gla\mathcal{R}_{\Gamma^\star_0}(\Gl(\delta_S,\delta_{L+}+C_{T'}))]\right)
\end{aligned}
\end{eqnarray*}
where the last inequality is by the definition of $r$. By the assumption on $r$, we have that $\Phi_\gamma((\delta_S,\delta_{L+})+(0,{C}_{T'})) \leq \frac{1}{2C_1}$. And so we can appeal to Proposition~\ref{remainder_control} to conclude that:
\begin{eqnarray*}
\frac{2}{\alpha'}\Phi_\gamma[\Gla\mathcal{R}_{\Gamma^\star_0}(\Gl((\delta_S,\delta_{L+}+{C}_{T'}) \leq \frac{8m\psi{C}_1^2{r}^2}{\alpha'} \leq \frac{16m\psi{C}_2^2{r}}{\alpha'(\frac{1}{\zeta} - 2(\kappa^\star+\omega))}\frac{r(\frac{1}{\zeta} - 2(\kappa^\star+\omega))}{2} \leq r/2,
\end{eqnarray*}
where the last inequality uses the bound on $r$. So by Brouwer's fixed point theorem, we conclude that:  $\Phi_\gamma[\proj_{\mathbb{H}'}(\Delta_{S},\Delta_{L})+(0,t\ones)]\leq r$. Finally, note that: $\Phi_\gamma[\proj_{{\mathbb{H}'}^\perp}(\Delta_{S},\Delta_{L})] \leq r$. Thus, $\Phi_\gamma[(\Delta_{S},\Delta_{L})+(0,t\ones)]\leq 2r$. Finally, appealing to Lemma~\ref{lemma:new} and some manipulations, we have the bound $\max\{\Phi_\gamma(\Delta_S,\Delta_L),t\ones\} \leq \frac{4r\sqrt{5h}}{1-\sqrt{1- ({\kappa^\star}^2-\omega)^2}}$.
\end{proof}
\begin{proposition}Suppose that $\Phi_\gamma[\Gla{E}_n] \leq \frac{\lambda_n}{6\zeta}$ and suppose that:
\begin{eqnarray*}
    \begin{aligned}
\lambda_n \leq \min\Bigg\{\min\left\{\frac{8\alpha'}{C_1}, \frac{\min\{\alpha',1\}(\frac{1}{\zeta} - 2(\kappa^\star+\omega))}{16m\psi{C}_2^2}\right\}\frac{\alpha'(\frac{1}{\zeta}-2(\kappa^\star+\omega))}{4(1+\frac{1}{3\zeta})}\\,\frac{\alpha'(\frac{1}{\zeta}-2(\kappa^\star+\omega))}{64C_1(1+\frac{1}{3\zeta})}, \frac{\alpha'^2(\frac{1}{\zeta}-2(\kappa^\star+\omega))^2}{6144\zeta(1+\frac{1}{3\zeta})^2}\Bigg\}.
\end{aligned}
\end{eqnarray*}
Then, we have that: $\tilde{S} = \hat{S}^\mathcal{M}$, $\tilde{L} = \hat{L}^\mathcal{M}$. 
\label{prop:tangent_space_non_convex}
\end{proposition}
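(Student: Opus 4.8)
The plan is to show that the unique minimizer $(\tilde{S},\tilde{L})$ of the tangent-space program \eqref{eqn:estimator_tangent} is feasible for the linearized variety program discussed above (the one obtained from \eqref{eqn:estimator_nonconvex} by replacing the rank constraint with $L\in T'=T(\hat{L}^{\mathcal{M}})$ and deleting the now-inactive ${T^\star}^\perp$-constraint), which amounts to establishing the single strict inequality
\[
\Phi_\gamma\bigl(\Gla\I\Gl(\tilde{S}-S^\star,\,\tilde{L}-L^\star)\bigr)<C_0\lambda_n .
\]
Once this holds, $(\tilde{S},\tilde{L})$ is feasible for that linearized program (its other constraints $\tilde{S}\in\Omega^\star$, $\tilde{L}\in T'$, $\tilde{S}-\tilde{L}\in\mathrm{span}(\ones)$ hold by construction), so its objective is at least the optimum there, which equals the value at $(\hat{S}^{\mathcal{M}},\hat{L}^{\mathcal{M}})$; conversely $(\tilde{S},\tilde{L})$ minimizes the less-constrained \eqref{eqn:estimator_tangent} while $(\hat{S}^{\mathcal{M}},\hat{L}^{\mathcal{M}})$ is feasible for it, so the two objective values coincide, and strict convexity of \eqref{eqn:estimator_tangent} (Property~\ref{assumption_1}) gives $(\tilde{S},\tilde{L})=(\hat{S}^{\mathcal{M}},\hat{L}^{\mathcal{M}})$. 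Corollary~\ref{corr:nonconvex}(1) then transfers $\mathrm{support}(\tilde{S})=\mathrm{support}(S^\star)$ and $\mathrm{rank}(\tilde{L})=\mathrm{rank}(L^\star)$.

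First I would invoke Proposition~\ref{brouwer} to get quantitative error bounds on $(\tilde{S},\tilde{L})$. Its hypothesis $\Phi_\gamma(\Gla E_n)\le\lambda_n/(6\zeta_2)$ is granted; Corollary~\ref{corr:nonconvex}(4),(5), which apply with $T'=T(\hat{L}^{\mathcal{M}})$ since $\hat{L}^{\mathcal{M}}$ is feasible for \eqref{eqn:estimator_nonconvex}, bound $\Phi_\gamma(\Gla\I C_{T'})$ and $\Phi_\gamma[(0,C_{T'})]$, and assembling these yields $r\le\frac{4(1+\frac{1}{3\zeta_2})}{\alpha(\frac{2}{\zeta_1}-2(\kappa^\star+\omega))}\lambda_n$; the remaining smallness requirement on $r$ follows from the first clause in the admissible interval for $\lambda_n$. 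Proposition~\ref{brouwer} then gives $\Phi_\gamma(\Delta_S,\Delta_L)$ and $\Phi_\gamma(0,t\ones)$ each at most $\frac{4r\sqrt{5h}}{1-\sqrt{1-({\kappa^\star}^2-\omega)^2}}$, and substituting the bound on $r$ and comparing with the definition of $C_0$ produces $\Phi_\gamma(\Delta_S,\Delta_L),\ \Phi_\gamma(0,t\ones)\le\frac{1}{2}(C_0-10)\lambda_n$, where $\Delta_S=\tilde{S}-S^\star$, $\Delta_L=\tilde{L}-L^\star$, and $t$ is the multiplier in the optimality relation \eqref{eqn:optimality}. The lower bounds assumed on $\sigma_\mathrm{min}(L^\star)$ and on the nonzero entries of $S^\star$ keep these estimates in the smooth regime.

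Next I would bound $\Phi_\gamma(\Gla\I\Gl(\Delta_S,\Delta_L))$ by peeling off the two subspace-mismatch pieces. Writing $\Delta_L=\Delta_L'-C_{T'}$ with $\Delta_L'=\tilde{L}-\proj_{T'}(L^\star)\in T'$ and using $\I(\ones)=\ones$ gives
\[
\Gla\I\Gl(\Delta_S,\Delta_L)=\Gla\I\Gl(\Delta_S,\Delta_L'+t\ones)+\Gla\I C_{T'}+\Gla(t\ones),
\]
where $\Phi_\gamma(\Gla\I C_{T'})\le\lambda_n/(6\zeta_2)$ and $\Phi_\gamma(\Gla(t\ones))\le\max\{1,\gamma/p\}\,\Phi_\gamma(0,t\ones)$, a bounded multiple of $\frac12(C_0-10)\lambda_n$ in the admissible range of $\gamma$. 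For the main term, $(\Delta_S,\Delta_L'+t\ones)\in\mathbb{Q}'=\Omega^\star\times(T'\oplus\mathrm{span}(\ones))$, so I would split it along $\Hs'$ and ${\Hs'}^\perp$: the $\Hs'$-component is pinned by \eqref{eqn:optimality}, giving $\Phi_\gamma[\proj_{\Hs'}\Gla\I\Gl(\Delta_S,\Delta_L'+t\ones)]\le\Phi_\gamma(Z)+\Phi_\gamma[\proj_{\Hs'}\Gla\mathcal{R}_{\Gamma^\star_0}]+\Phi_\gamma[\proj_{\Hs'}\Gla E_n]$ with $\Phi_\gamma(Z)\le\lambda_n$, the remainder controlled by Proposition~\ref{remainder_control} (whose hypothesis holds because $(C_0-10)\lambda_n$ is small by the later clauses in the $\lambda_n$-interval) at order $\lambda_n^2$, and $\Phi_\gamma[\proj_{\Hs'}\Gla E_n]\le\lambda_n/(3\zeta_2)$; the ${\Hs'}^\perp$-component is controlled by the coupled off-diagonal Hessian bounds of Proposition~\ref{prop:sufficient_identifiability} (properties~\ref{assumption_2} and~\ref{assumption_3}, i.e.\ $\varepsilon/\chi\le1-2/\zeta_1$ and $\varepsilon_+/\chi_+\le1-2/\zeta_2$) together with the error bound on $\Phi_\gamma(\Delta_S,\Delta_L)$ from the previous step. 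Summing, every contribution is $O(\lambda_n)$ with constants arranged so that the total stays strictly below $10\lambda_n+(C_0-10)\lambda_n=C_0\lambda_n$, the ``$10$'' in the definition of $C_0$ absorbing exactly the subgradient, remainder, noise, and $t\ones$ contributions.

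The principal obstacle is this final assembly: one must simultaneously track four sources of error --- the subgradient $Z$, the statistical fluctuation $E_n$, the Taylor remainder $\mathcal{R}_{\Gamma^\star_0}$, and the two subspace corrections $t\ones$ and $C_{T'}$ --- in the mixed norm $\Phi_\gamma$, split them consistently into $\Hs'$ and ${\Hs'}^\perp$ parts, and check that the precise constant $C_0$ and admissible interval for $\lambda_n$ in the hypotheses leave exactly enough room. The fact that $\Phi_\gamma$ couples an $\ell_\infty$/sparse side to an $\ell_2$/low-rank side --- bridged via $\theta(\Omega^\star)\le d^\star$, $\xi(T^\star)\in[\mu^\star,2\mu^\star]$ and the choice of $\gamma$ in Theorem~\ref{thm:main_full} --- is what makes lining up these constants delicate rather than routine; the equivalence-of-programs bookkeeping of the first paragraph and the $E_n$ bound (which follows from Corollary~\ref{cor:sample_variogram} and norm equivalence) are comparatively mechanical.
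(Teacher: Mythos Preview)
Your proposal is correct and follows essentially the same route as the paper: verify the hypothesis of Proposition~\ref{brouwer} via Corollary~\ref{corr:nonconvex}(4),(5) and the first clause of the $\lambda_n$-interval, extract the bound $r\le\frac{4(1+1/(3\zeta_2))}{\alpha(\frac{2}{\zeta_1}-2(\kappa^\star+\omega))}\lambda_n$ and hence $\Phi_\gamma(\Delta_S,\Delta_L),\,\Phi_\gamma(0,t\ones)\le\frac12(C_0-10)\lambda_n$, control the Taylor remainder through Proposition~\ref{remainder_control}, and assemble the strict inequality $\Phi_\gamma(\Gla\I\Gl(\Delta_S,\Delta_L))<C_0\lambda_n$ from the optimality relation~\eqref{eqn:optimality} and the coupled Hessian bounds.

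The only organizational difference is in that last assembly. You peel off $t\ones$ explicitly via the identity $\I(\ones)=\ones$, work with $(\Delta_S,\Delta_L'+t\ones)\in\mathbb{Q}'$, and control the ${\Hs'}^\perp$-component through Property~\ref{assumption_3} (the $\varepsilon_+/\chi_+$ bound). The paper instead projects $(\Delta_S,\Delta_L)$ onto $\Hs'$, absorbs $t\ones$ as an additive $2\Phi_\gamma(0,t\ones)$ term inside the $\proj_{\Hs'}$ bound, and controls the ${\Hs'}^\perp$-component with Property~\ref{assumption_2} alone. Both arrangements reach the same numerical target. One small omission in your $\proj_{\Hs'}$ step: the optimality identity also produces a $-\proj_{\Hs'}\Gla\I C_{T'}$ contribution (distinct from the outer $\Gla\I C_{T'}$ you already listed), but this is again bounded by $\lambda_n/(6\zeta_2)$ and fits comfortably in the slack.
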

\begin{proof}
From Corollary~\ref{corr:nonconvex}, we have that $\Phi_\gamma[\Gla\I{C}_{T'}] \leq \frac{\lambda_n}{6\zeta}$. We then have that:
\begin{eqnarray*}
\begin{aligned}
&\frac{4}{\alpha'(\frac{1}{\zeta}-2(\kappa^\star+\omega))}[\Phi_\gamma(\Gla{E}_n) + \Phi_\gamma(\Gla\I{C}_{T'}) + \lambda_n]\\ &\leq \frac{4\lambda_n}{\alpha'(\frac{1}{\zeta}-2(\kappa^\star+\omega))}\left[1+\frac{1}{3\zeta}\right] \leq \min\left\{\frac{8\alpha'}{C_1}, \frac{\min\{\alpha',1\}(\frac{1}{\zeta} - 2(\kappa^\star+\omega))}{16m\psi{C}_2^2}\right\}.
\end{aligned}
\end{eqnarray*}
We also have from Corollary~\ref{corr:nonconvex} that $\Phi_\gamma(\Gla{C}_{T'}) \leq \frac{4\lambda_n}{\alpha'(\frac{1}{\zeta}-2(\kappa^\star+\omega))}\left[1+\frac{1}{3\zeta}\right]$. Let $r = \frac{4\lambda_n}{\alpha'(\frac{1}{\zeta}-2(\kappa^\star+\omega))}\allowbreak\left[1+\frac{1}{3\zeta}\right]$. We can appeal to Proposition~\ref{brouwer} to conclude that: 
$$\Phi_\gamma[\Delta_S,\Delta_{L}] \leq \frac{16\lambda_n\sqrt{5h}}{\alpha'(1-\sqrt{1- ({\kappa^\star}^2-\omega)^2})(\frac{1}{\zeta}-2(\kappa^\star+\omega))}\left[1+\frac{1}{3\zeta}\right].$$ From the bound on $\lambda_n$, we have that: $\Phi_\gamma[\Delta_S,\Delta_{L}]\leq \frac{1}{2C_1}$. So we can appeal to Proposition~\ref{remainder_control} to conclude that: 
\begin{eqnarray}
\Phi_\gamma[\Gla\mathcal{R}_{\Gamma_O^\star}\Gl(\Delta_S,\Delta_{L})] \leq 2m\psi{C}_1^2\Phi_\gamma[\Delta_S,\Delta_{L}]^2 \leq \frac{\lambda_n}{6\zeta},
\label{eqn:remainder_bound_final}
\end{eqnarray}
where here again we use the bound on $\lambda_n$. 
Note that $\Delta_{L+} = \Delta_{L} + t\ones$. We have from Corollary~\ref{corr:nonconvex} that $\Phi_\gamma[\Gla\I{C}_{T'}] \leq \frac{\lambda_n}{6\zeta}$. From the optimality conditions of \eqref{eqn:estimator_tangent}, we have that:
\begin{eqnarray*}
\begin{aligned}
&\Phi_\gamma(\proj_{\Hs'}\Gla\I\Gl\proj_{\Hs'}(\Delta_S,\Delta_{L}))\\ &\leq 2\lambda_n + 2\Phi_\gamma(0,t\ones) + \Phi_\gamma[\Gla\mathcal{R}_{\Gamma_O^\star}\Gl(\Delta_S,\Delta_{L})]+ \Phi_\gamma[\proj_{\Hs'}\Gla\I{C}_{T'}] + \Phi_\gamma[\Gla{E}_n], \\
&\leq  2\lambda_n+\frac{\lambda_n}{2\zeta} + \frac{16\lambda_n\sqrt{5h}}{\alpha'(1-\sqrt{1- ({\kappa^\star}^2-\omega)^2})(\frac{1}{\zeta}-2(\kappa^\star+\omega))}\left[1+\frac{1}{3\zeta}\right] ,
\end{aligned}
\end{eqnarray*}
where the second inequality follows from bound on $\Phi_\gamma((0,t\ones))$ in Proposition~\ref{brouwer}. Appealing to property $p2$ in \eqref{eqn:Hessian_prop}: $\Phi_\gamma(\proj_{{\Hs'}^\perp}\Gla\I\Gl\proj_{\Hs'}(\Delta_S,\Delta_{L})) \leq \Phi_\gamma(\proj_{{\Hs'}^\perp}\Gla\I\Gl\proj_{\Hs'}(\Delta_S,\Delta_{L}))$. Thus
\begin{eqnarray*}
\begin{aligned}
\Phi_\gamma(\Gla\I\Gl(\Delta_S,\Delta_{L})) &\leq \Phi_\gamma(\proj_{\Hs'}\Gla\I\Gl\proj_{\Hs'}(\Delta_S,\Delta_{L})) +\Phi_\gamma(\proj_{{\Hs'}^\perp}\Gla\I\Gl\proj_{\Hs'}(\Delta_S,\Delta_{L}))\\&+\Phi_\gamma[\Gla\I{C}_{T'}] \leq  8\lambda_n+\frac{32\lambda_n\sqrt{5h}}{\alpha'(1-\sqrt{1- ({\kappa^\star}^2-\omega)^2})(\frac{1}{\zeta}-2(\kappa^\star+\omega))}\left[1+\frac{1}{3\zeta}\right]\\
&<  C_0\lambda_n.
\end{aligned}
\end{eqnarray*}
\end{proof}

\subsection{Removing the tangent space constraint}
It remains to connect the estimator \eqref{eqn:estimator_tangent} with \eqref{eqn:estimator}. In particular, we check that $\tilde{S} = \hat{S}$ and $\tilde{L} = \hat{L}$ where $(\tilde{S},\tilde{L})$ is the solution of \eqref{eqn:estimator_tangent} and $(\hat{S},\hat{L})$ is the solution of \eqref{eqn:estimator}. We formalize this in the following proposition.
\begin{proposition}Suppose that $\Phi_\gamma[\Gla{E}_n] \leq \frac{\lambda_n}{6\zeta}$. Then, $\tilde{S} = \hat{S}$ and $\tilde{L} = \hat{L}$.
\label{prop:removing_tangent_space}
\end{proposition}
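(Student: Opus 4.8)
The plan is to certify that the tangent-constrained minimizer $(\tilde S,\tilde L)$ already satisfies the optimality conditions of the convex program \eqref{eqn:estimator_no_psd}, and hence --- by Corollary~\ref{corr:nonconvex}(1), which supplies $\tilde L\succeq0$ and $\tilde S-\tilde L\succeq0$ for free --- of \eqref{eqn:estimator}. As in Section~\ref{sec:proof_strategy}, it is enough to produce a dual scalar $t$ so that $(\tilde S,\tilde L)$ satisfies \eqref{eqn:optimality_orig} with \emph{strict} inequality in the dual-feasibility line; strict convexity of \eqref{eqn:estimator_no_psd} in $S-L$ (Property~\ref{assumption_1}) then forces $\hat S=\tilde S$, $\hat L=\tilde L$ and uniqueness.

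First I would dispose of the first and third lines of \eqref{eqn:optimality_orig}. By Proposition~\ref{prop:tangent_space_non_convex} and Corollary~\ref{corr:nonconvex}(1), $\operatorname{support}(\tilde S)=\operatorname{support}(S^\star)$ and $\operatorname{rank}(\tilde L)=\operatorname{rank}(L^\star)$, so the tangent space $\Hs=\Omega(\tilde S)\times T(\tilde L)$ entering \eqref{eqn:optimality_orig} coincides with $\Hs'=\Omega^\star\times T'$, $T'=T(\tilde L)$, and $\rho(T',T^\star)\le\omega=\mu^\star/4$ by Corollary~\ref{corr:nonconvex}(3). Hence the first line of \eqref{eqn:optimality_orig} is exactly the optimality identity \eqref{eqn:optimality} of \eqref{eqn:estimator_tangent}, which holds with the scalar $t$ and the subgradient $Z=(\lambda_n\operatorname{sign}(\tilde S),\lambda_n\gamma\,\bar U\bar V^{\!\top})$, $\Phi_\gamma(Z)=\lambda_n$; the third line is a constraint of \eqref{eqn:estimator_tangent}.

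The heart of the proof is verifying strict dual feasibility. Writing $\Delta_{L+}=\Delta_L+t\ones$, the condition to check is
\[
\Phi_\gamma\!\big(\proj_{\Hs^\perp}\Gla(\I\Gl(\Delta_S,\Delta_{L+})+\mathcal{R}_{\Gamma^\star_0}\Gl(\Delta_S,\Delta_{L+})+E_n)\big)<\lambda_n .
\]
I would first note $\mathcal{R}_{\Gamma^\star_0}\Gl(\Delta_S,\Delta_{L+})=\mathcal{R}_{\Gamma^\star_0}\Gl(\Delta_S,\Delta_L)$, since $\ones$ lies in the kernel of every factor in the expansion \eqref{remainder:eq}, so the bound \eqref{eqn:remainder_bound_final} (established in the proof of Proposition~\ref{prop:tangent_space_non_convex}) applies verbatim. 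Next, using $\Delta_S\in\Omega^\star$ and $\tilde L\in T'$, one checks $(\Delta_S,\Delta_{L+})=W+(0,-C_{T'})$ with $W\in\mathbb{Q}'=\Omega^\star\times(T'\oplus\operatorname{span}(\ones))$ and $C_{T'}=\proj_{{T'}^\perp}(L^\star)$, so $\Gla\I\Gl(\Delta_S,\Delta_{L+})=\Gla\I\Gl(W)+\Gla\I C_{T'}$. Solving the $\Hs'$-restricted identity \eqref{eqn:optimality} for $\proj_{\Hs'}\Gla\I\Gl(W)$ and plugging in $\Phi_\gamma(Z)=\lambda_n$, $\Phi_\gamma(\Gla\I C_{T'})\le\lambda_n/(6\zeta_2)$ (Corollary~\ref{corr:nonconvex}(4)), $\Phi_\gamma(\Gla\mathcal{R}_{\Gamma^\star_0}\Gl(\Delta_S,\Delta_L))\le\lambda_n/(6\zeta_2)$ (\eqref{eqn:remainder_bound_final}), $\Phi_\gamma(\Gla E_n)\le\lambda_n/(6\zeta_2)$ (hypothesis), together with $\Phi_\gamma(\proj_{\Hs'}(\,\cdot\,))\le2\Phi_\gamma(\,\cdot\,)$, gives $\Phi_\gamma(\proj_{\Hs'}\Gla\I\Gl(W))\le\lambda_n+\mathcal{O}(\lambda_n/\zeta_2)$. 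The irrepresentability estimate $\Phi_\gamma(\proj_{{\Hs'}^\perp}\Gla\I\Gl(W))\le(1-2/\zeta_2)\,\Phi_\gamma(\proj_{\Hs'}\Gla\I\Gl(W))$, valid for $W\in\mathbb{Q}'$ by Proposition~\ref{prop:sufficient_identifiability} (Property~\ref{assumption_3} in the form $\varepsilon_+/\chi_+\le1-2/\zeta_2$), then yields --- after adding back $\Gla\I C_{T'}$, $\Gla\mathcal{R}_{\Gamma^\star_0}$ and $\Gla E_n$ to $\proj_{{\Hs'}^\perp}\Gla\I\Gl(W)$ --- a total of the shape $(1-2/\zeta_2)(\lambda_n+\mathcal{O}(\lambda_n/\zeta_2))+\mathcal{O}(\lambda_n/\zeta_2)$, strictly below $\lambda_n$ because the $2/\zeta_2$ slack absorbs the accumulated error; this is exactly what the interval for $\lambda_n$ in Theorem~\ref{thm:main_full} and the values $\omega=\mu^\star/4$, $\zeta_2$ secure.

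Given strict dual feasibility, $(\tilde S,\tilde L)$ is the unique minimizer of \eqref{eqn:estimator_no_psd}; adding $\tilde L\succeq0$, $\tilde S-\tilde L\succeq0$ from Corollary~\ref{corr:nonconvex}(1) makes it the unique minimizer of \eqref{eqn:estimator}, so $\hat S=\tilde S$ and $\hat L=\tilde L$, as claimed. I expect the main obstacle to be the bookkeeping in the third step: peeling the correction $C_{T'}$ and the dual term $t\ones$ off $\Gl(\Delta_S,\Delta_{L+})$, carrying them through the optimality identity and the Neumann expansion \eqref{remainder:eq} of $(\tilde S-\tilde L)^+$, and checking that every lower-order piece genuinely fits inside the $2/\zeta_2$ budget under the prescribed scaling of $\lambda_n$.
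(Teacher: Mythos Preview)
Your proposal is correct and follows essentially the same approach as the paper's proof: verify that $(\tilde S,\tilde L)$ satisfies the optimality conditions \eqref{eqn:optimality_orig} of the unconstrained program with strict dual feasibility, by splitting $(\Delta_S,\Delta_{L+})$ into a $\mathbb{Q}'$-part and a small correction, bounding the $\Hs'$-component of the $\mathbb{Q}'$-part via the tangent-space optimality identity \eqref{eqn:optimality}, passing to the $\Hs'^\perp$-component via the irrepresentability ratio $\varepsilon_+/\chi_+\le 1-2/\zeta_2$, and absorbing the remainder, $E_n$, and correction terms (each $\le\lambda_n/(6\zeta_2)$) in the resulting slack. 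The only cosmetic difference is that you peel off $C_{T'}$ while the paper peels off $C_{T'\oplus\ones}$; both are controlled by Corollary~\ref{corr:nonconvex}(4), so the bookkeeping closes identically.
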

\begin{proof}[Proof of Proposition~\ref{prop:removing_tangent_space}]
We must show that $(\tilde{S},\tilde{L})$ satisfy the optimality conditions of \eqref{eqn:estimator_no_psd}
 in \eqref{eqn:optimality_orig}, namely that there exists a dual variable $t$ such that
\begin{eqnarray}
\begin{aligned}
\proj_{\Hs}\Gla(\I(\Gl(\Delta_S,\Delta_L+t\ones))+\mathcal{R}_{\Gamma^\star_0}\Gl(\Delta_S,\Delta_L)+E_n) &= Z,\\
\Phi_\gamma(\proj_{\Hs^\perp}\Gla(\I(\Gl(\Delta_S,\Delta_L+t\ones))+\mathcal{R}_{\Gamma^\star_0}\Gl(\Delta_S,\Delta_L)+E_n)) &<1,\\
\tilde{S}-\tilde{L} &\in \mathrm{span}(\ones),
\end{aligned}
\label{eqn:optimality_orig_again}
\end{eqnarray}
where $\Delta_S = \tilde{S}-S^\star$ and $\Delta_{L} = \tilde{L}-L^\star$. Notice that the first and third optimality conditions are the same as \eqref{eqn:optimality}. It remains to show the second inequality where the strict inequality is to ensure that $(\tilde{S},\tilde{L})$ is the unique solution. It suffices to show that:
\begin{eqnarray}
\begin{aligned}
&\Phi_\gamma(\proj_{\Hs^\perp}\Gla(\I\proj_{{\mathbb{Q}'}}(\Gl(\Delta_S,\Delta_L+t\ones)) \\&<\lambda_n - \Phi_\gamma[\mathcal{R}_{\Gamma^\star_0}\Gl(\Delta_S,\Delta_L)] - \Phi_\gamma[\Gla\I{C}_{T'\oplus\ones}] - \Phi_\gamma[\Gla{E}_n].
\end{aligned}
\label{eqn:what_to_show}
\end{eqnarray}
Manipulating the first optimality condition, we have that:
\begin{eqnarray*}
\begin{aligned}
\Phi_\gamma(\proj_{\Hs}\Gla(\I\proj_{{\mathbb{Q}'}}(\Gl(\Delta_S,\Delta_L+t\ones)) &\leq\lambda_n + 2(\Phi_\gamma[\mathcal{R}_{\Gamma^\star_0}\Gl(\Delta_S,\Delta_L)] +\Phi_\gamma[\Gla\I{C}_{T'\oplus\ones}] \\&+\Phi_\gamma[\Gla{E}_n]) \leq \lambda_n + \frac{\lambda_n}{\zeta} = \lambda_n\left(1+\frac{1}{\zeta}\right),
\end{aligned}
\end{eqnarray*}
where we have here used the bound $\Phi_\gamma[\Gla\I{C}_{T'\oplus\ones}] \leq \frac{\lambda_n}{6\zeta}$ from Corollary~\ref{corr:nonconvex} and the bounds $\Phi_\gamma[\mathcal{R}_{\Gamma^\star_0}\Gl(\Delta_S,\Delta_L)] \leq \frac{\lambda_n}{6\zeta}$ from \eqref{eqn:remainder_bound_final} and $\Phi_\gamma[\Gla{E}_n] \leq \frac{\lambda_n}{6\zeta}$ from proposition statement. Appealing to property $p2$ in \eqref{eqn:Hessian_prop}, we then have that:
\begin{eqnarray*}
\begin{aligned}
\Phi_\gamma(\proj_{\Hs^\perp}\Gla(\I\proj_{{\mathbb{Q}'}}(\Gl(\Delta_S,\Delta_L+t\ones)) &\leq \lambda_n\left(1+\frac{1}{\zeta}\right)\left(1-\frac{1}{\zeta}\right)\\&= \lambda_n\left(1-\frac{1}{\zeta^2}\right) < \lambda_n\left(1-\frac{1}{2\zeta}\right). 
\end{aligned}
\end{eqnarray*}
Since $\Phi_\gamma[\mathcal{R}_{\Gamma^\star_0}\Gl(\Delta_S,\Delta_L)] +\Phi_\gamma[\Gla\I{C}_{T'\oplus\ones} + \Phi_\gamma[\Gla{E}_n] \leq \frac{\lambda_n}{2\zeta}$, \eqref{eqn:what_to_show} holds. 

\end{proof}

\subsection{Bounding the error term $\Phi_\gamma[\Gla{E}_n]$} 
Let $\lambda_n = C_5\left[\frac{24m\zeta}{\sqrt{c_5}}\sqrt{\frac{p^2\log(\tilde{C}_5p)}{k}}+\frac{6m\zeta}{\sqrt{k}}\right]$ where $c_5,C_5,\tilde{C}_5$ are defined in Theorem~\ref{cor:sample_variogram}. 
\begin{lemma} Under the conditions of Theorem~\ref{thm:main}, we have:
$$\mathbb{P}\left(\Phi_\gamma[\Gla{E}_n] \leq \frac{\lambda_n}{6\zeta}\right) \geq 1-p^{-1}.$$
%where $M$ is the constant from Corollary~\ref{cor:sample_variogram}.   
\end{lemma}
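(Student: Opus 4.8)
The plan is to reduce the claim to the $\ell_\infty$ concentration bound for the empirical variogram matrix in Corollary~\ref{cor:sample_variogram}. Recalling that $E_n = UU^T(\hat{\Gamma}_O - \Gamma^\star_O)UU^T/2$ and $UU^T = \tilde{\Pi}$, I would first write $E_n = \tfrac12\tilde{\Pi}(\hat{\Gamma}_O - \Gamma^\star_O)\tilde{\Pi}$. Since $\Gla E_n = (E_n,E_n)$, we have $\Phi_\gamma[\Gla E_n] = \max\{\|E_n\|_\infty,\ \gamma^{-1}\|E_n\|_2\}$, so it suffices to control $\|E_n\|_\infty$ and $\|E_n\|_2$ separately in terms of $\|\hat{\Gamma}_O - \Gamma^\star_O\|_\infty$. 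For the spectral norm, $\|\tilde{\Pi}\|_2 = 1$ gives $\|E_n\|_2 \le \tfrac12\|\hat{\Gamma}_O - \Gamma^\star_O\|_2 \le \tfrac{p}{2}\|\hat{\Gamma}_O - \Gamma^\star_O\|_\infty$, using $\|A\|_2\le\|A\|_F\le p\|A\|_\infty$. For the $\ell_\infty$ norm, each entry of $\tilde{\Pi}A\tilde{\Pi}$ is a signed average of entries of $A$, so $\|\tilde{\Pi}A\tilde{\Pi}\|_\infty\le 4\|A\|_\infty$ and hence $\|E_n\|_\infty\le 2\|\hat{\Gamma}_O - \Gamma^\star_O\|_\infty$. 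With $m = \max\{1,\gamma^{-1}\}$ this yields $\Phi_\gamma[\Gla E_n] \le C\, p\, m\, \|\hat{\Gamma}_O - \Gamma^\star_O\|_\infty$ for a universal constant $C$; note that the factor $p$ here, rather than $\sqrt p$, is exactly the source of the $p^2$ scaling in the sample complexity flagged in the remark after Theorem~\ref{thm:main}.

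Next I would apply Corollary~\ref{cor:sample_variogram} with the choice $\vartheta = 2\sqrt{\log(\tilde{C}_5 p)/c_5}$. One must verify admissibility, i.e.\ $\vartheta \le \sqrt{k}/(\log n)^4$: this follows from the first sample-size lower bound in Theorem~\ref{thm:main_full} together with the estimate $\log n \lesssim \tfrac{1+\xi}{\xi}\log k$ implied by $k \ge (n/2)^{\xi/2(1+\xi)}$. With this $\vartheta$ the failure probability is $\tilde{C}_5 p^3 e^{-c_5\vartheta^2} = \tilde{C}_5 p^3(\tilde{C}_5 p)^{-4}\le p^{-1}$, using $\tilde{C}_5\ge 1$. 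Hence on an event of probability at least $1-p^{-1}$,
\[ \|\hat{\Gamma}_O - \Gamma^\star_O\|_\infty \le C_5\Big\{(k/n)^\xi(\log(n/k))^2 + \tfrac{1+\vartheta}{\sqrt{k}}\Big\}. \]

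On that event I would bound the right-hand side. The choice $k \le (n/2)^{\xi/(1+\xi)}$ gives $(k/n)^\xi \le 2^{-\xi}k^{-1}\le k^{-1}$, while $k \ge (n/2)^{\xi/2(1+\xi)}$ gives $n \le 2k^{2(1+\xi)/\xi}$ and hence $\log(n/k) \le \log 2 + \tfrac{2(1+\xi)}{\xi}\log k$; the remaining sample-size hypotheses of Theorem~\ref{thm:main_full} are chosen exactly so that $(k/n)^\xi(\log(n/k))^2 \lesssim k^{-1/2}$. Combining this with $\vartheta = 2\sqrt{\log(\tilde{C}_5 p)/c_5}$ and $\log(\tilde{C}_5 p)/c_5 \gtrsim 1$, one obtains $\|\hat{\Gamma}_O - \Gamma^\star_O\|_\infty \lesssim C_5\sqrt{\log(\tilde{C}_5 p)/c_5}\,/\sqrt{k}$. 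Multiplying through by $Cpm$ and comparing with
\[ \frac{\lambda_n}{6\zeta_2} = C_5 m\Big[\tfrac{4p\sqrt{\log(\tilde{C}_5 p)}}{\sqrt{c_5}\,\sqrt{k}} + \tfrac{1}{\sqrt{k}}\Big], \]
whose prefactors $24$ and $6$ in the definition of $\lambda_n$ were selected precisely for this comparison, gives $\Phi_\gamma[\Gla E_n] \le \lambda_n/(6\zeta_2)$ on the good event, which is the assertion.

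The \emph{main obstacle} is the arithmetic bookkeeping: verifying the admissibility constraint $\vartheta\le\sqrt{k}/(\log n)^4$ and absorbing the domain-of-attraction bias term $(k/n)^\xi(\log(n/k))^2$ into $k^{-1/2}$ both require threading the sample-size assumptions of Theorem~\ref{thm:main_full} (with their exotic exponents $16/7$, $8/7$ and powers of $(1+\xi)/\xi$) through crude bounds of the form $(\log k)^a \le k^{1/8}$ valid for $k$ larger than an absolute constant, all while keeping the universal constants compatible with the prefactors in $\lambda_n$ and $\zeta_2$. By contrast, the conceptual step — passing from $\ell_\infty$ to spectral norm via $\|A\|_2\le p\|A\|_\infty$ and controlling $\tilde{\Pi}A\tilde{\Pi}$ — is routine.
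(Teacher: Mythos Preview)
Your proposal is correct and follows essentially the same route as the paper: reduce $\Phi_\gamma[\Gla E_n]$ to $pm\|\hat{\Gamma}_O-\Gamma^\star_O\|_\infty$ via norm equivalences, then invoke Corollary~\ref{cor:sample_variogram} with $\vartheta = 2\sqrt{\log(\tilde{C}_5 p)/c_5}$ and absorb the bias term $(k/n)^\xi(\log(n/k))^2$ using the assumed range of $k$. The paper's write-up is terser (it bounds $\|E_n\|_\infty\le\|E_n\|_2$ directly rather than splitting the two norms as you do), but the argument is the same and you are in fact more careful about the failure-probability arithmetic and the admissibility check $\vartheta\le\sqrt{k}/(\log n)^4$.
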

\begin{proof}
Note that  $\Phi_\gamma[\Gla{E}_n] \leq m\|\Gamma_O^\star-\hat{\Gamma}_O\|_2 \leq pm\|\Gamma_O^\star-\hat{\Gamma}_O\|_\infty$. To show that, $\Phi_\gamma[\Gla{E}_n] \leq \frac{\lambda_n}{6\zeta}$, it suffices to show that 
\begin{equation}
\|\Gamma_O^\star - \hat{\Gamma}_O\|_\infty \leq \frac{4C_5}{\sqrt{c_5}}\sqrt{\frac{\log(\tilde{C}_5p)}{k}} + \frac{C_5}{\sqrt{k}}.
\label{eqn:desired_for_sample}
\end{equation}
Based on the condition on $k$, it is straightforward to show that:
$$C_5\left\{\left(\frac{k}{n}\right)^\xi (\log(n/k))^2 + \frac{1+\vartheta}{\sqrt{k}}\right\} \leq \frac{4C_5}{\sqrt{c_5}}\sqrt{\frac{\log(\tilde{C}_5p)}{k}} + \frac{C_5}{\sqrt{k}}.$$
for $\vartheta = 2\sqrt{\log(\tilde{C}_5p)}/\sqrt{c_5}$. Note that $\vartheta \leq \sqrt{k}/\log(n)^4$. Furthermore, $k \leq n/2$. Appealing to Corollary~\ref{cor:sample_variogram}, we have that with probability greater than $1-\tilde{C}_5p^3e^{-c_5\vartheta^2} = 1-p^{-1}$ that the bound in \eqref{eqn:desired_for_sample} is satisfied. 
\end{proof}
\subsection{Summary and putting things together}
Combining Propositions \ref{prop:tangent_space_non_convex}-\ref{prop:removing_tangent_space}, we conclude that under the conditions of Theorem~\ref{thm:main}, with probability greater than $1-1/p$, the optimal solution $(\hat{S},\hat{L})$ of \eqref{eqn:estimator_no_psd} is unique and equal to an optimal solution $(\hat{S}^\mathcal{M},\hat{L}^{\mathcal{M}})$ of \eqref{eqn:estimator_nonconvex}. From Corollary~\ref{corr:nonconvex}, we have that $\hat{S}-\hat{L} \succeq 0$, $\hat{L}\succeq 0$. Thus, $(\hat{S},\hat{L}) = (\hat{S}^\mathcal{M},\hat{L}^{\mathcal{M}})$ is also the unique minimizer of \eqref{eqn:estimator}. The guarantees on the closeness of $(\hat{S},\hat{L})$ to the population parameters $(S^\star,L^\star)$ follow from Corollary~\ref{corr:nonconvex} and Proposition~\ref{prop:nonconvex}.

\section{Refitting for \texttt{eglatent}}
\label{sec:refit}
Suppose $(\hat{S}, \hat{L})$ is the solution of \eqref{eqn:estimator} in the first step. We then obtain refitted parameters $(\tilde{S},\tilde{L})$ as the second step by solving the following convex optimization program:
\begin{equation*}
\begin{aligned}
    (\tilde{S},\tilde{L}) = \argmin_{S \in \mathbb{S}^{p}, L \in \mathbb{S}^{p}} &~~-\log{\det}(U^T(S-L)U) - \mathrm{tr}((S-L)\hat{\Gamma}_O/2),\\
    \text{s.t.}&~~~S-L \succeq 0, L \succeq 0, (S-L)\mathbf{1}_p = 0, \\
    &~~~\text{support}(S) \subseteq \text{support}(\hat{S}), \text{col-space}(L) \subseteq \text{col-space}(\hat{L}).
\end{aligned}
\label{eqn:estimator_refitted}
\end{equation*}
Here, the constraint $\text{support}(S) \subseteq \text{support}(\hat{S})$ restricts the graph structure of our refitted solution to be contained in the graph estimated in the first step. Similarly, the constraint $\text{col-space}(L) \subseteq \text{col-space}(\hat{L})$ restricts the row/column space of the refitted low-rank term to be contained in the row/column space estimated in the first step. 

\section{Additional experimental results} 

\subsection{Synthetic experiments on different graph structure}
\label{sec:additional_exp_different_graph_structure}
We consider the exact same setup as in the simulation study in Section~\ref{sec:struc_recov}. The only difference is that we specify the sub-graph $\mathcal{G}_0 = (E_O,O)$ among the observed variables to be an Erd\H{o}s--R{\"e}nyi with edge probability $0.08$ and set $\Theta_{ij}^\star$ to $-2$ for every $(i,j) \in E_O$ and zero otherwise. The rest of the simulation study is carried out as described in Section~\ref{sec:struc_recov}. Figure~\ref{fig:ads3} summarizes the performance of all the methods on 50 independent results. We again observe that our approach outperforms \texttt{eglearn}, and accurately recovers the graphical structure among the observed variables as well as the number of latent variables. In terms of validation likelihood, \texttt{eglatent} is a bit weaker than in the simulation with the cycle graph.
\FloatBarrier

\begin{figure}
 \centering {\includegraphics[width = 1\textwidth]{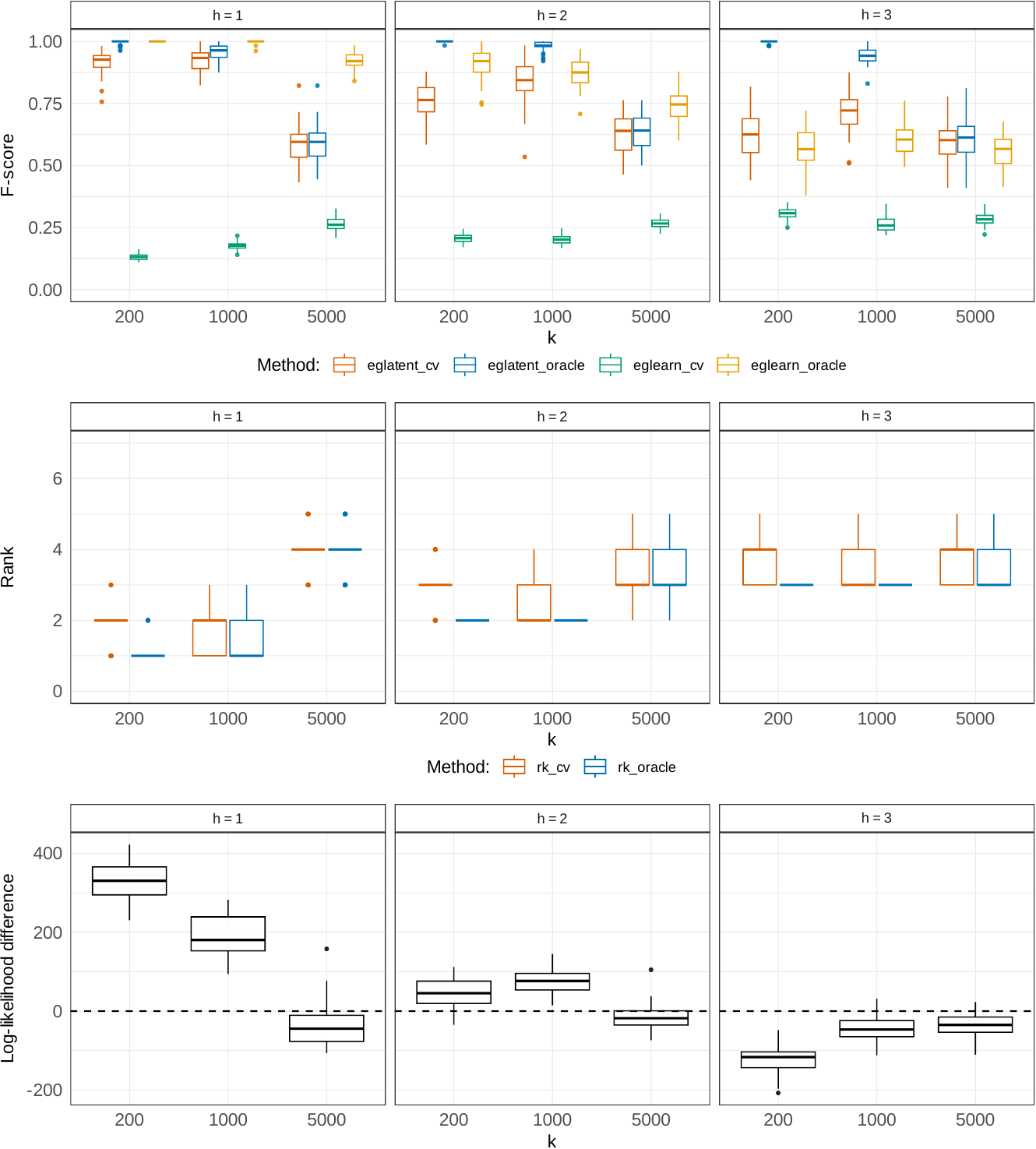}}
       \caption{$F$-score (top row) and estimated number of latent variables (middle row) of \texttt{eglatent} method with the selection of the tuning parameter based on the oracle and validation on the $F$-score for the random graph with $h=1,2,3$ latent variables and different effective sample sizes $k=200,1000,5000$. The bottom row shows the difference between best \texttt{eglatent} and best \texttt{eglearn} log-likelihoods on the validation set.}

    \label{fig:ads3}
\end{figure}
\FloatBarrier

\subsection{Synthetic experiments on different values of $\gamma$}
\label{sec:additional_exp_gamma}
{
We consider the exact same setup as in the simulation study in Section~\ref{sec:struc_recov}. The only difference is the values of $\gamma$ that are used in the \texttt{eglatent} estimator. We generate $k = 1000$ effective samples. Figure~\ref{fig:ads4} shows the performance of \texttt{eglatent} for $\gamma \in \{2,4,6\}$. We observe that the performance of \texttt{eglatent} does not vary drastically with changes in $\gamma$, and continues to perform better than  \texttt{eglearn}, especially for $h \in \{1,2\}$. We also notice that $\gamma = 4$ yields the best-validated model for $h \in \{1,2,3\}$, hence why this value was chosen in our experiments in Section~\ref{sec:struc_recov}.}
\FloatBarrier
\begin{figure}
    \centering {\includegraphics[width = 1\textwidth]{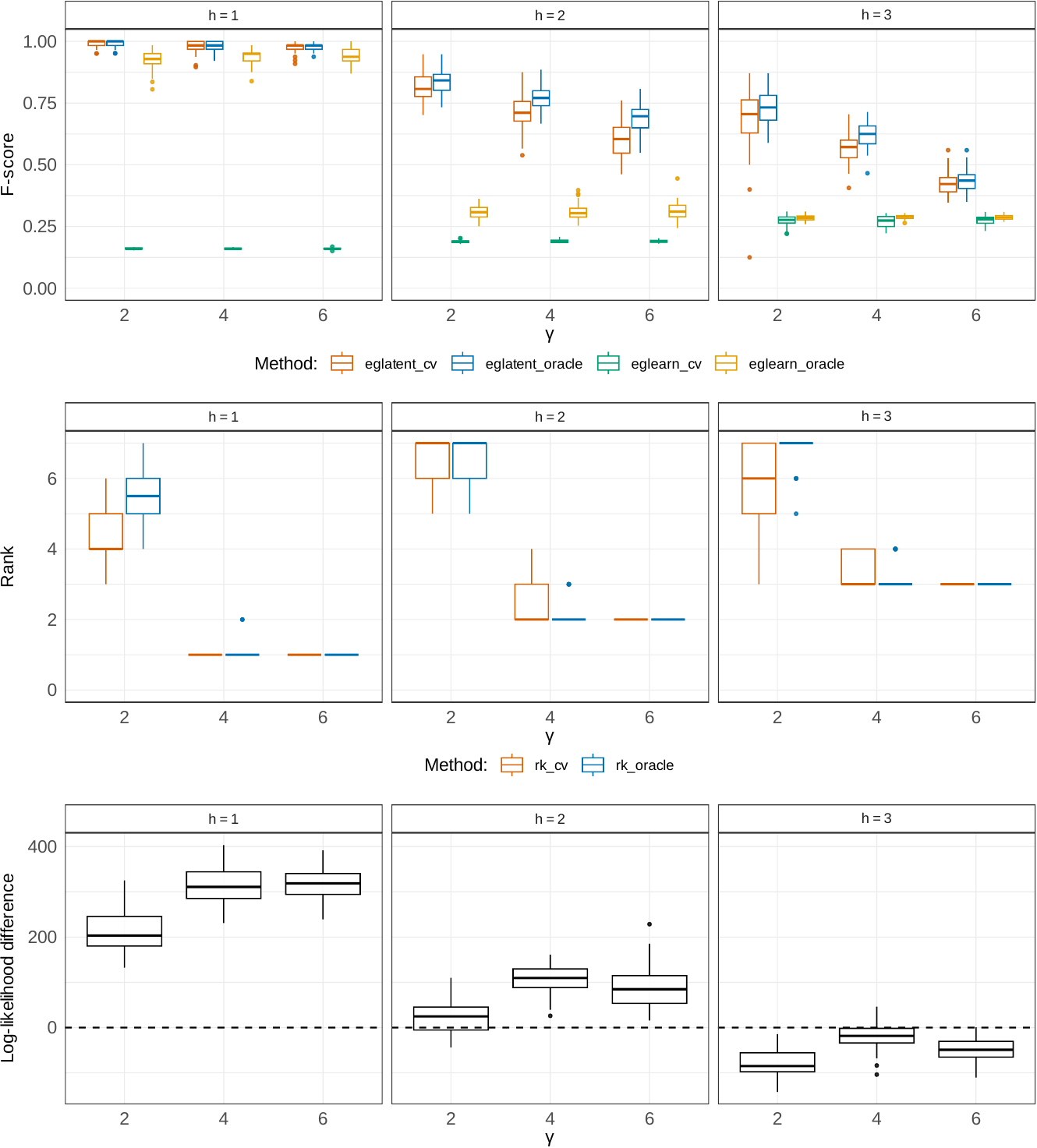}}
    \caption{$F$-score (top row) and estimated number of latent variables (middle row) of \texttt{eglatent} method with the selection of the tuning parameter based on the oracle and validation on the $F$-score for the cycle graph with $h=1,2,3$ latent variables and different regularization parameter $\gamma = 2,4,6$. The bottom row shows the difference between best \texttt{eglatent} and best \texttt{eglearn} log-likelihoods on the validation set. The effective sample size is set to $k = 1000$.}
    \label{fig:ads4}
\end{figure}
\FloatBarrier

\subsection{Synthetic experiments on comparison to the performance of Gaussian latent variable graphical model estimator}
\label{sec:additional_exp_gaussian}
{
We compare the performance of our \texttt{eglatent} estimator to the Gaussian latent variable graphical model estimator in \cite{Chand2012} (denoted by \texttt{LVGM}). We generate the data according to the setting in Appendix~\ref{sec:additional_exp_different_graph_structure}. As the approach in \cite{Chand2012} assumes Gaussian data, we transform the marginal distributions of each variable to standard normal distribution, before supplying the data to the Gaussian estimator. The following table compares the performance of the two estimators, where `CV' is when the regularization parameters are chosen via the validation set, and `Oracle' is when the regularization parameters are chosen to obtain the best $F$-score. 
}
\begin{table}[ht]
\caption{Perfomance of \texttt{eglatent} compared with Gaussian estimator in \cite{Chand2012}}
    \centering
    \resizebox{\columnwidth}{!}{\begin{tabular}{|c||cc||cc||cc||cc|}
    
    \hline
        & \multicolumn{2}{c||}{Oracle \texttt{eglatent}} & \multicolumn{2}{c||}{CV \texttt{eglatent}} & \multicolumn{2}{c||}{Oracle \texttt{LVGM}} & \multicolumn{2}{c|}{CV \texttt{LVGM}}\\
        \hline
      $\#$ latents ($h$) & $F$-score  & $\hat{h}$ & $F$-score  & $\hat{h}$ &$F$-score  & $\hat{h}$ &$F$-score  & $\hat{h}$ \\
       \hline
       $h=1$ & $0.94(\pm 0.02)$ & $1.58(\pm 0.53)$ & $0.92(\pm 0.04)$ & $1.68(\pm 0.55)$ & $0.08 (\pm 0.04)$ & $1.68 (\pm 1.88)$ & $0.06 (\pm 0.03)$ & $8.1 (\pm 0.83)$\\
              $h=2$ & $0.97(\pm 0.01)$ & $2(\pm 0)$ & $0.84(\pm 0.07)$ & $2.48(\pm 0.54)$ & $0.07 (\pm 0.04)$ & $4.94 (\pm 3.01)$ & $0.05 (\pm 0.04)$ & $7.94 (\pm 0.86)$\\
       $h=3$ & $0.93(\pm 0.03)$ & $3(\pm 0)$ & $0.70(\pm 0.08)$ & $3.42(\pm 0.57)$ & $0.06 (\pm 0.03)$ & $4.58(\pm 2.39)$ & $0.05 (\pm 0.03)$ & $8.41 (\pm 0.94)$\\
       \hline
    \end{tabular}}
    \label{tab:compare_benchmarks}
\end{table}

%{\subsection{Smaller effective size for robustness to zero latent variables}}
%\label{sec:additional_zero_latent_vars}
%We consider the setup in Section~\ref{sec:robust}, but let the effective sample size be $k = \lfloor n^{0.65}\rfloor = 200$. The following figure compares the performance of \texttt{eglatent} to \texttt{eglearn}. Similar to the $k = 2000$ case, we observe that \texttt{eglatent} is robust to the presence of no latent variables.

\subsection{{Additional results concerning the application}}
\label{sec:additional_application}
{
We report here the results of the application in Section~\ref{sec:application}. For thresholds $q=0.85$ and $q=0.95$,  Figures~\ref{fig:likelihood_85} and~\ref{fig:likelihood_95} show the number of edges of \texttt{eglatent} and of \texttt{eglearn} and the validation log-likelihood values as a function of the tuning parameter $\lambda_n$.
Figure~\ref{fig:graphs_comparison} compares the different estimated graphs among the observed variables for the three thresholds. }
\FloatBarrier
\begin{figure}
    \centering
    \includegraphics[width = .9\textwidth]{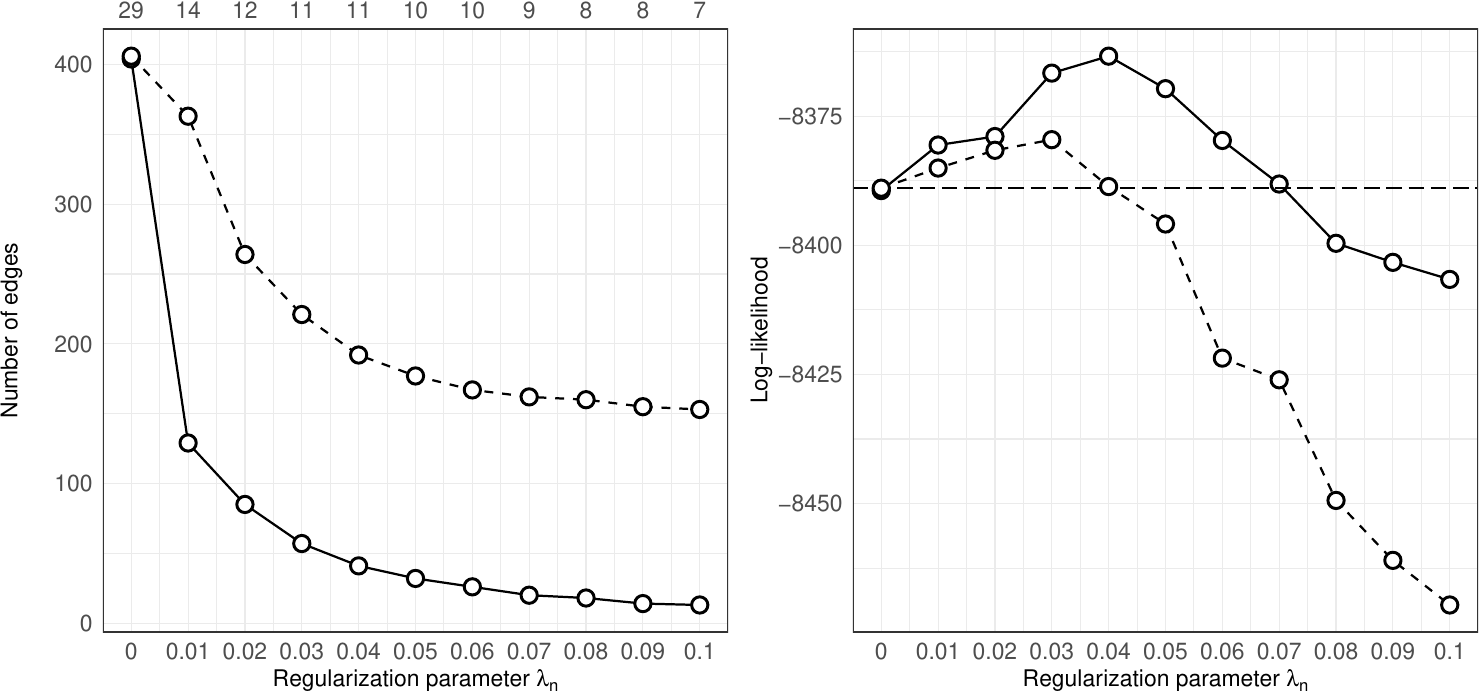}
    \caption{Results for threshold $q=0.85$. Left: number of edges of the estimated graph of \texttt{eglearn} (dashed line) and the estimated sub-graph of observed variables of \texttt{eglatent} (solid line) as functions of the regularization parameter $\rho$; top axis shows the number of latent variables in \texttt{eglatent}. Right: corresponding log-likelihoods; horizontal line is the validation log-likelihood of the fully connected graph.}
    \label{fig:likelihood_85}
\end{figure}

\begin{figure}
    \centering
    \includegraphics[width = .9\textwidth]{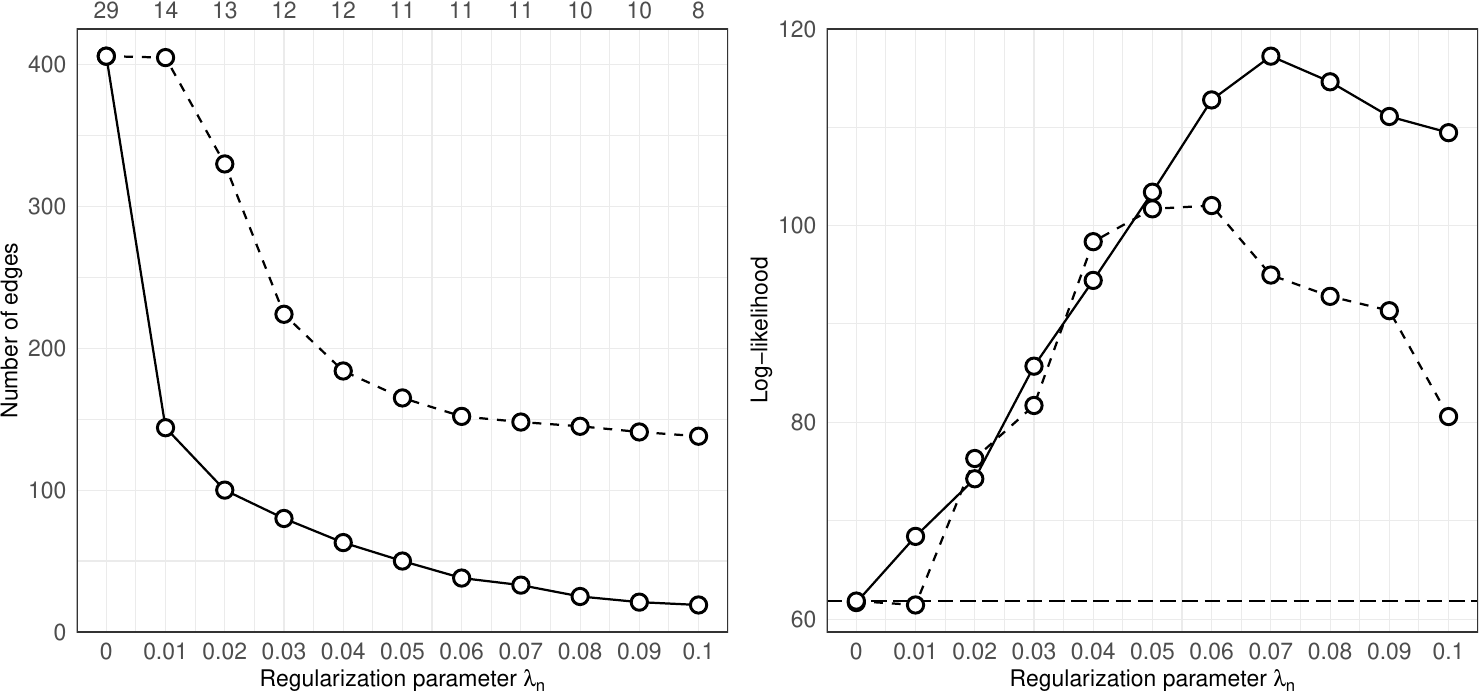}
    \caption{Results for threshold $q=0.95$. Left: number of edges of the estimated graph of \texttt{eglearn} (dashed line) and the estimated sub-graph of observed variables of \texttt{eglatent} (solid line) as functions of the regularization parameter $\rho$; top axis shows the number of latent variables in \texttt{eglatent}. Right: corresponding log-likelihoods; horizontal line is the validation log-likelihood of the fully connected graph.}
    \label{fig:likelihood_95}
\end{figure}

\begin{figure}
    \centering
    \includegraphics[width = .32\textwidth]{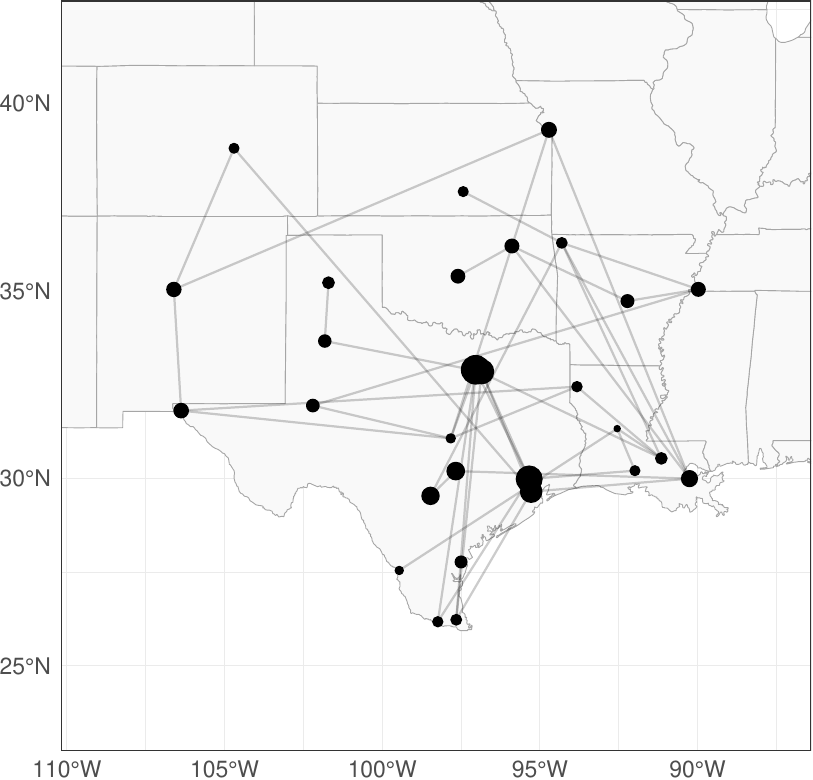}
    \includegraphics[width = .32\textwidth]{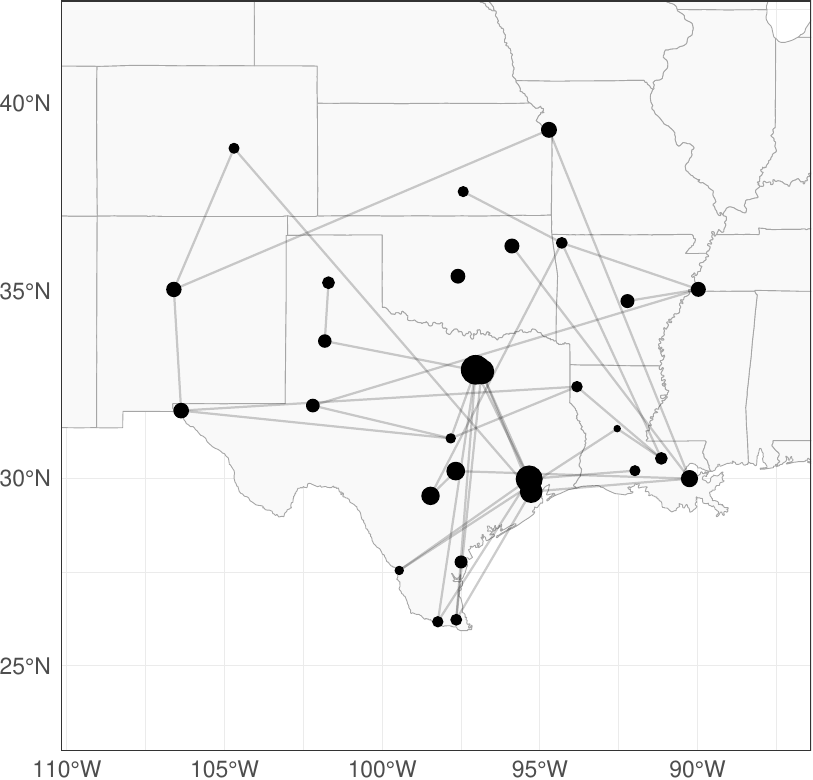}
    \includegraphics[width = .32\textwidth]{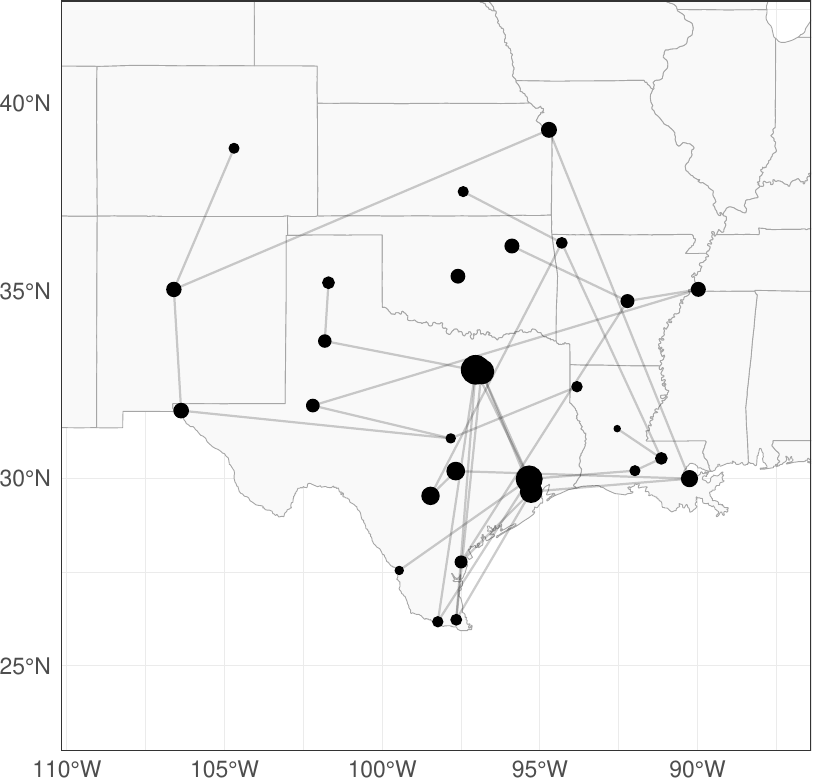}
    \caption{Airports in the Southern U.S.~(dots) and flight connections, where the thickness of the nodes indicates the average number of daily flights at the airports. Estimated sub-graphs corresponding to observed variables of optimal \texttt{eglatent} models for exceendance thresholds 0.85 (left), 0.90 (center) and 0.95 (right).}
    \label{fig:graphs_comparison}
\end{figure}
\FloatBarrier

\end{document}